\tikzstyle{block}=[draw opacity=0.7,line width=1.4cm]
\newcommand{\details}[1]{}
\newcommand{\QPTL}{\text{\sffamily QPTL}}
\newcommand{\CTLStar}{\text{\sffamily CTL$^{*}$}}
\newcommand{\MCTLStar}{\text{\sffamily CTL$^{*}_{lp}$}}
\newcommand{\LTL}{\text{\sffamily LTL}}
\newcommand{\PLTL}{\text{\sffamily PLTL}}
\newcommand{\NWA}{\text{\sffamily NWA}}
\newcommand{\SNWA}{\text{\sffamily SNWA}}
\newcommand{\HCTLStar}{\text{\sffamily HyperCTL$^{*}$}}
\newcommand{\MHCTLStar}{\text{\sffamily HyperCTL$^{*}_{lp}$}}
\newcommand{\KCTLStarS}{\text{\sffamily KCTL$^{*}$}}
\newcommand{\KLTL}{\text{\sffamily KLTL}}
\newcommand{\HLTL}{\text{\sffamily HyperLTL}}
\newcommand{\KLTLS}{\text{\sffamily KLTL}}
\newcommand{\HAA}{\text{\sffamily HAA}}
\newcommand{\Know}{\textsf{K}}
\newcommand{\GExists}{\ensuremath{\exists^{G}}}
\newcommand{\GForall}{\ensuremath{\forall^{G}}}
\newcommand{\Until}{\textsf{U}}
\newcommand{\PUntil}{\textsf{U}^{-}}
\newcommand{\Release}{\textsf{R}}
\newcommand{\PRelease}{\textsf{R}^{-}}
\newcommand{\Next}{\textsf{X}}
\newcommand{\PNext}{\textsf{X}^{-}}
\newcommand{\Always}{\textsf{G}}
\newcommand{\PAlways}{\textsf{G}^{-}}
\newcommand{\Eventually}{\textsf{F}}
\newcommand{\PEventually}{\textsf{F}^{-}}
\newcommand{\Agents}{\textsf{Agts}}
\newcommand{\agent}{\ensuremath{a}}
\newcommand{\AP}{\textsf{AP}}
\newcommand{\Var}{\textsf{VAR}}
\newcommand{\Obs}{\textit{Obs}}
\newcommand{\Obsequiva}{\Obs_\agent}
\newcommand{\sad}{\textit{sad}}
\newcommand{\dir}{\textit{dir}}
\newcommand{\Acc}{\textit{Acc}}
\newcommand{\finite}{\textit{fin}}
\newcommand{\Inf}{\textit{Inf}}
\newcommand{\Sub}{\textit{Sub}}
\newcommand{\IN}{\textit{init}}
\newcommand{\NoIN}{\textit{no-init}}
\newcommand{\pTag}{\textit{tag}}
\newcommand{\Lang}{\mathcal{L}}
\newcommand{\PosBool}{\ensuremath{\mathcal{B}^{+}}}
\newcommand{\FamStratum}{\ensuremath{\mathsf{F}}}
\newcommand{\Stratum}{\ensuremath{\mathcal{S}}}
\newcommand{\Region}{\ensuremath{\mathcal{R}}}
\newcommand{\Regions}{\ensuremath{\textit{REG}}}
\newcommand{\Bu}{\ensuremath{\textsf{B}}}
\newcommand{\Co}{\ensuremath{\textsf{C}}}
\newcommand{\Tower}{\mathsf{Tower}}
\newcommand{\Bl}{\textit{bl\,}}
\newcommand{\Nodes}{\textit{Nodes}}
\newcommand{\SUCC}{\textit{succ}}
\newcommand{\join}{\textit{join}}
\newcommand{\EmphInt}{\mathfrak{m}}
\newcommand{\Alert}{\textit{alert}}
\newcommand{\Trans}{\texttt{t}}
\newcommand{\dual}[1]{\ensuremath{\widetilde{#1}}} 
\def\Op{{\cal O}}
\def\A{{\cal A}}
\def\N{{\mathbb{N}}}
\newcommand \tpl[1]{\langle #1 \rangle}
\newcommand{\true}{\texttt{true}}
\newcommand{\false}{\texttt{false}}
\def\PSPACE{{\sffamily PSPACE}}
\def\EXPSPACE{{\sffamily EXPSPACE}}
\def\NLOGSPACE{{\sffamily NLOGSPACE}}
\newif\ifdraft\drafttrue
\newcommand\modedraft[1]{#1}
\newcommand\todo[1]{{\color{purple}
[\textbf{To do:} #1]}}
\newcommand\lb[1]{\marginpar[{\color{cyan}\small\dbend}]{\color{cyan}\small\dbend}{\footnotesize \color{cyan}[#1 - \textbf{Laura}]}}
\newcommand\bm[1]{\marginpar[{\color{blue}\small\dbend}]{\color{blue}\small\dbend}{\footnotesize \color{blue}[#1 - \textbf{Bastien}]}}
\renewcommand\sp[1]{\marginpar[{\color{orange}}]{\color{orange}\small\dbend}{\footnotesize \color{orange}[#1 - \textbf{Sophie}]}}
\newcommand\modedraft[1]{}
\newcommand\todo[1]{}
\newcommand\lb[1]{}
\newcommand\bm[1]{}
\renewcommand\sp[1]{}
\title{Unifying Hyper  and Epistemic Temporal  Logics}
\author[1]{Laura Bozzelli}
\affil[1]{UPM,
  Madrid, Spain -- \texttt{laura.bozzelli@fi.upm.es}}
\author[2]{Bastien Maubert}
\affil[2]{IRISA, Universit\'e de Rennes 1, France --
  \texttt{bastien.maubert@irisa.fr}}
\author[3]{Sophie Pinchinat}
\affil[3]{IRISA, Universit\'e de Rennes 1,  France --
  \texttt{sophie.pinchinat@irisa.fr}}
\begin{document}

\maketitle

\begin{abstract}
  In the literature, two powerful temporal logic formalisms have been
  proposed for expressing information flow security requirements, that
  in general, go beyond regular properties. One is classic, based on
  the knowledge modalities of epistemic logic. The other one, the so
  called hyper logic, is more recent and subsumes many proposals from
  the literature; it is based on explicit and simultaneous
  quantification over multiple paths. In an attempt to better
  understand how these logics compare with each other, we consider the
  logic KCTL* (the extension of CTL* with knowledge modalities and synchronous perfect recall
  semantics) and HyperCTL*. We first establish that KCTL* and
  HyperCTL* are expressively incomparable. Second, we introduce and
  study a natural linear past extension of HyperCTL* to
  unify KCTL* and HyperCTL*; indeed, we show that KCTL* can be easily
  translated in linear time into  the proposed logic. Moreover, we show
  that the 
  model-checking problem for this novel logic is
  decidable, and we provide its exact computational complexity in
  terms of a 
new measure of  path
  quantifiers' alternation. For this, we settle open complexity issues for
  unrestricted quantified propositional temporal logic.
\end{abstract}

\begin{bibunit}

\section{Introduction}
\label{sec-intro}

Temporal logics provide a fundamental framework for the description of the dynamic behavior
of reactive systems. Additionally, they support the successful model-checking technology
that allow complex finite-state systems to be verified automatically.

Classic \emph{regular} temporal logics
such as standard $\LTL$~\cite{Pnueli77} or the more expressive 
$\CTLStar$~\cite{EmersonH86} lack mechanisms
to relate distinct paths or executions of a system.
Therefore, they cannot
express information-flow security properties which specify how information
may propagate from inputs to outputs, 
 such as non-interference \cite{goguen1982security} or
  opacity \cite{DBLP:journals/ijisec/BryansKMR08}.

 In the literature, two powerful temporal logic formalisms have been
  proposed for expressing 
such  security requirements that,
  in general, go beyond regular properties.

  One is classical and is based on the extension of temporal logic
  with the knowledge modalities of epistemic
  logic~\cite{fagin1995reasoning}, which allow to relate 
  paths that are observationally equivalent for a
  given agent. We consider  $\KCTLStarS$, the extension
  of 
  $\CTLStar$ with  knowledge modalities under the
  synchronous perfect recall semantics (where an agent remembers the
  whole sequence of its 
  observations, and the observations are
  time-sensitive)~\cite{halpern2004complete,MeydenS99,ShilovG02,Dima08}.
  This logic and its linear-time fragment, $\KLTL$, can be
  used to specify secrecy 
  policies 
  \cite{alur2007model,HalpernO08,BalliuDG11}.

  The second framework is more recent~\cite{ClarksonS10} and allows to
  express properties of sets of execution traces, known as
  \emph{hyperproperties}, useful to formalize security policies, such
  as noninterference and observational determinism. The general hyper
  logical framework introduced in~\cite{ClarksonS10} is based on a
  second-order logic for which 
  model-checking 
  is undecidable.  More recently, fragments of this logic have been
  introduced~\cite{ClarksonFKMRS14}, namely the logics $\HCTLStar$ and
  $\HLTL$, which extend $\CTLStar$ and $\LTL$
  by allowing explicit and simultaneous quantification over multiple
  paths.  $\HCTLStar$ represents a simple and natural
  non-regular extension of $\CTLStar$ which admits a decidable
  model-checking problem and in which important information-flow
  security policies can be expressed. $\HCTLStar$ also generalizes a
  related temporal logic introduced in~\cite{DimitrovaFKRS12}.  Other
  logics for hyperproperties have been introduced
  in~\cite{MilushevC12}, but as pointed in~\cite{ClarksonFKMRS14}, no
  general approach to verifying such logics exists.

\noindent \textbf{Contribution.} Our first goal in this paper is to compare the expressive power of hyper temporal logics
and epistemic temporal logics. 
We establish by formal non-trivial arguments that $\HCTLStar$ and $\KCTLStarS$ are expressively incomparable.
More precisely, we prove that
$\HLTL$ (resp., $\KLTL$) cannot be expressed in $\KCTLStarS$ (resp., $\HCTLStar$),
even with respect to the restricted class of finite-state
Kripke structures. The main intuitions are as follows. On the one hand, 
differently from
$\HCTLStar$, $\KCTLStarS$ cannot express a linear-time requirement, simultaneously, on multiple paths. 
On the other hand, unlike $\KCTLStarS$, $\HCTLStar$ cannot express requirements
which relate at some timestamp an unbounded number of paths.

As a second contribution, we introduce and investigate a natural 
 linear past extension
of $\HCTLStar$, denoted by $\MHCTLStar$, to unify $\HCTLStar$ and
$\KCTLStarS$. This extension is strictly more expressive than $\HCTLStar$; indeed, we show
that $\KCTLStarS$ can be easily translated in linear time into $\MHCTLStar$.
Like $\HCTLStar$ and  $\KCTLStarS$, the finite-state model-checking problem for the novel logic is non-elementarily
decidable, and we provide the exact complexity of this problem in terms of a variant
of the standard alternation depth of path quantifiers. For this, we settle complexity issues
for satisfiability of unrestricted Quantified Propositional Temporal Logic ($\QPTL$)~\cite{SistlaVW87}. The optimal upper bounds
for full $\QPTL$ are obtained by a sophisticated generalization of the standard automata-theoretic approach
for $\QPTL$ in prenex normal form~\cite{SistlaVW87}, which  exploits  a subclass of parity two-way alternating word automata.
Our results also solve complexity issues for $\HCTLStar$ left open in~\cite{ClarksonFKMRS14}.
Due to lack of space some proofs are omitted and can be found in the
Appendix.

\begin{remark} In~\cite{ClarksonFKMRS14}, an extension of the semantics of $\HCTLStar$ is also considered. In this setting,
the path quantification can simulate  quantification over
propositional variables, 
and within this generalized semantics,
$\KLTL$ can be effectively expressed in  $\HCTLStar$~\cite{ClarksonFKMRS14}.
\end{remark}

\section{Preliminaries}
\label{sec-prelim}

For all $i,j\in\N$, let $[i,j] := \{h\in\N \mid i\leq h \leq j\}$.
Fix a \emph{finite} set
$\AP$ of atomic propositions. A \emph{trace} is a finite or infinite
word over $2^{\AP}$.  For a word  $w$ over some 
alphabet,  $|w|$ is the length of $w$ ($|w|=\infty$ if $w$ is infinite), and for each $0\leq i<|w|$,
$w(i)$ is the $i^{th}$ symbol of $w$.\vspace{0.2cm}

\noindent \textbf{Structures and tree structures.}  A \emph{Kripke
  structure $($over $\AP$$)$} is a tuple $K=\tpl{S,s_0,E,V}$, where
$S$ is a set of states, $s_0\in S$ is the initial state, $E\subseteq
S\times S$ is a transition relation 
such that for each $s\in S$, $(s,t)\in E$ for some $t\in S$,
and $V:S \rightarrow 2^{\AP}$ is an \emph{$\AP$-valuation} assigning
to each state $s$ the set of propositions in $\AP$ which hold at
$s$. The mapping $V$ can be extended to words over $S$ in the obvious
way. A path $\pi= t_0,t_1,\ldots$ of $K$ is an infinite word over 
$S$ such that for all $i\geq 0$, $(t_{i},t_{i+1})\in
E$. For each $i\geq 0$, $\pi[0,i]$ denotes the prefix of $\pi$ leading
to the $i^{th}$ state, and $\pi[i,\infty]$ the suffix of $\pi$ from
the $i^{th}$ state. A finite path of $K$ is a prefix of some path of
$K$. An \emph{initial path} of $K$ is a path starting from the initial
state. We say that $K=\tpl{S,s_0,E,V}$ is a \emph{tree structure} if
$S$ is a prefix-closed subset of $\N^{*}$, $s_0=\varepsilon$ (the root
of $K$), and $(\tau,\tau')\in E$ $\Rightarrow$ $\tau'=\tau\cdot i$ for some
$i\in\N$. States of a tree structure are also called \emph{nodes}. For
a Kripke structure $K$, $\textit{Unw}(K)$ 
is the tree unwinding of $K$ from the initial state.
A \emph{tree structure} is \emph{regular} if it is the
unwinding of some finite Kripke structure.


\subsection{Temporal Logics with knowledge modalities}

We recall the \emph{non-regular} extensions,  denoted by $\KCTLStarS$ and $\KLTLS$, of standard
$\CTLStar$ and $\LTL$  obtained by adding  the knowledge modalities of
epistemic logic under the \emph{synchronous} perfect recall semantics \cite{halpern2004complete,MeydenS99,ShilovG02,Dima08}.
Differently from the asynchronous setting,
 the synchronous setting
can be considered time sensitive in the sense that it can model an
observer who knows that a transition has occurred even if the
observation has not changed. 

For a finite set $\Agents$ of agents, formulas $\varphi$ of
$\KCTLStarS$  over $\Agents$ and $\AP$  are defined as:
\[
\varphi ::= \top \ | \ p \ | \ \neg \varphi \ | \ \varphi \vee \varphi \ | \ \Next \varphi\ | \ \varphi \Until \varphi \ | \ \exists  \varphi \ | \ \Know_\agent  \varphi
\]
where $p\in \AP$, $\agent\in \Agents$, $\Next$ and $\Until$ are the
``next'' and ``until'' temporal modalities, $\exists$ is the
$\CTLStar$ existential path quantifier, and $\Know_\agent$ is the
knowledge modality for agent $\agent$.
 We also use standard shorthands: $\forall\varphi:=\neg\exists\neg\varphi$ (``universal path quantifier''),
$\Eventually\varphi:= \top \Until \varphi$ (``eventually'') and its
dual $\Always \varphi:=\neg \Eventually\neg\varphi$ (``always'').
 A formula $\varphi$ is a
\emph{sentence} if each temporal/knowledge modality is in the scope of
a path quantifier.  The logic $\KLTLS$ is the $\LTL$-like fragment of $\KCTLStarS$
consisting of sentences of the form $\forall\varphi$, where $\varphi$
does not contain path quantifiers.

The logic $\KCTLStarS$ is interpreted over \emph{extended} Kripke structures $(K,\Obs)$, i.e., Kripke structures $K$  equipped
with an \emph{observation map} $\Obs: \Agents \rightarrow 2^{\AP}$
associating to each agent $\agent\in\Agents$, the set $\Obs(\agent)$
of propositions which are observable by agent $\agent$. For an agent
$\agent$ and a finite trace $w\in (2^{\AP})^{*}$, the
$\agent$-observable part $\Obs_\agent(w)$ of $w$ is the finite trace
of length $|w|$ such that for all $0\leq i<|w|$,
$\Obs_\agent(w)(i)=w(i)\cap \Obs(\agent)$.  Two finite traces
$w$ and $w'$ are \emph{(synchronously)
  $\Obs_\agent$-equivalent} if $\Obs_\agent(w) =\Obs_\agent(w')$ (note
that  $|w|=|w'|$). Intuitively, an agent $\agent$ does
not distinguish  prefixes of paths
 whose traces are $\Obs_\agent$-equivalent.

Given a $\KCTLStarS$ formula $\varphi$, an extended Kripke
structure $\Lambda=(K,\Obs)$, an \emph{initial} path $\pi$ of
$K$, and a position $i$ along $\pi$, the
satisfaction relation $\pi,i \models_{\Lambda} \varphi$ for
$\KCTLStarS$ is inductively defined as follows (we omit the clauses
for the Boolean connectives which are standard):
%
\[ \begin{array}{ll}
  \pi, i \models_{\Lambda} p \quad &   \Leftrightarrow \quad p \in V(\pi(i))\\
  \pi, i \models_{\Lambda} \Next \varphi \quad & \Leftrightarrow \quad  \pi, i+1 \models_{\Lambda} \varphi \\
  \pi, i \models_{\Lambda} \varphi_1\Until \varphi_2 \quad &
  \Leftrightarrow \quad \text{for some $j\geq i$}: \pi, j
  \models_{\Lambda} \varphi_2
  \text{ and } \pi, k \models_{\Lambda}  \varphi_1 \text{ for all }k\in [i,j-1] \\
  \pi, i \models_{\Lambda} \exists \varphi \quad & \Leftrightarrow \quad  \text{for some \emph{initial} path $\pi'$ of $K$ such that }\pi'[0,i]=\pi[0,i],\,
  \pi', i \models_{\Lambda} \varphi\\
  \pi, i \models_{\Lambda} \Know_\agent \varphi \quad & \Leftrightarrow \quad  \text{for all \emph{initial} paths $\pi'$ of $K$  such that }\\
  & \phantom{ \Leftrightarrow \quad \,\,} \text{$V(\pi[0,i])$ and
    $V(\pi'[0,i])$ are $\Obs_\agent$-equivalent, $\pi', i
    \models_{\Lambda} \varphi$}
\end{array} \]
%
$(K,\Obs)$ \emph{satisfies} $\varphi$, written
$(K,\Obs)\models \varphi$, if there is an initial path $\pi$ of $K$  such that $\pi, 0 \models_{(K,\Obs)} \varphi$.  Note
that if $\varphi$ is a \emph{sentence}, then the satisfaction relation
$\pi, 0 \models_{(K,\Obs)} \varphi$ is independent of $\pi$.  
One can easily show that  $\KCTLStarS$ is bisimulation invariant  and satisfies the tree-model property. In particular,
$(K,\Obs)\models \varphi$ iff $(\textit{Unw}(K),\Obs)\models \varphi$.


\begin{example}\label{example:witnessFormulaKnowledge} Let us consider the  $\KLTLS$ sentence $ \varphi_p:= \forall\Next\Eventually\Know_\agent\,\neg  p$.\\
  For all observation maps $\Obs$ such that $ \Obs(\agent)=\emptyset$,
$(K,\Obs)\models\varphi_p$ means that there is some non-root level in
the unwinding of $K$ at which \emph{no} node satisfies $p$.
  This requirement represents a
  well-known non-regular context-free branching temporal property (see
  e.g. \cite{AlurCZ06}).
\end{example}

\subsection{Hyper Logics}\label{SubSec:HyperLogics}

In this subsection, first, we recall the hyper logics $\HCTLStar$ and
$\HLTL$~\cite{ClarksonFKMRS14} which are non-regular extensions of
standard $\CTLStar$ and $\LTL$, respectively, with a restricted form
of explicit first-order quantification over paths of a Kripke
structure.  Intuitively, path variables are used
to express a linear-temporal requirement,  simultaneously, on multiple paths.
Then, we
introduce a linear-time past extension of 
$\HCTLStar$, denoted by
$\MHCTLStar$.  In this novel logic, 
path quantification is `memoryful',
i.e., it ranges over paths that start at the root of the computation
tree (the unwinding of the Kripke structure) and either visit the
current node $\tau$ (\emph{regular} path quantification), or visit a
node $\tau'$ at the same level as $\tau$ (\emph{non-regular} path
quantification).
\vspace{0.2cm}

\noindent \textbf{The logic $\HCTLStar$~\cite{ClarksonFKMRS14}.}
For  a finite set $\Var$  of \emph{path variables},
the syntax of $\HCTLStar$ formulas $\varphi$ over $\AP$ and $\Var$  is defined as follows:
\[\varphi ::= \top \ | \ p[x] \ | \ \neg \varphi \ | \ \varphi \wedge
\varphi \ | \ \Next \varphi\ | \ \varphi \Until \varphi \ | \ \exists
x. \varphi
\]
where $p\in \AP$, $x\in \Var$,  and $\exists x$ is the
\emph{hyper} existential path quantifier for variable $x$.  Informally, formula
$\exists x. \varphi$ requires that there is an initial path $\pi$  such that $\varphi$ holds with $x$ mapped to $\pi$, and the atomic formula $p[x]$ assert that $p$ holds at the current position of the
path bound by $x$.
The hyper universal path quantifier $\forall x$ is defined as: $\forall x.\varphi := \neg \exists
x. \neg\varphi$.   A
$\HCTLStar$ formula $\varphi$ is a \emph{sentence} if
each temporal modality occurs in the scope of a path quantifier and for each atomic formula
$p[x]$, $x$ is bound by a path quantifier.
The logic $\HLTL$ is the fragment of  $\HCTLStar$ consisting of formulas in prenex
form, i.e., of the form $Q_1 x_1. \ldots .Q_n x_n. \varphi$, where
$Q_1,\ldots,Q_n\in\{\exists,\forall\}$ and $\varphi$ does not contain
path quantifiers.

We give a semantics for $\HCTLStar$ which is equivalent to that given
in~\cite{ClarksonFKMRS14}, but is more suitable for a
linear-past memoryful generalization.
$\HCTLStar$ formulas $\varphi$ are interpreted over Kripke structures
$K=\tpl{S,s_0,E,V}$ equipped with a \emph{path assignment} $\Pi: \Var \rightarrow S^{\omega}$ associating to each variable $x\in
\Var$ an \emph{initial path} of $K$, a
variable $y\in \Var$ ($\Pi(y)$ represents the current path), and a
position $i\geq 0$ (denoting
the current position along the paths in
$\Pi$).  The satisfaction relation $\Pi,y,i \models_{K} \varphi$ is
 defined as follows (we omit the clauses for the Boolean
connectives which are standard):
%
\[ \begin{array}{ll}
\Pi, y, i \models_K p[x] \quad &   \Leftrightarrow \quad p \in
V(\Pi(x)(i))\\
\Pi, y, i \models_K \Next \varphi \quad & \Leftrightarrow  \quad \Pi, y, i+1 \models_K \varphi \\
\Pi, y,i \models_K  \varphi_1\Until \varphi_2 \quad & \Leftrightarrow  \quad  \text{for some $j\geq i$}: \Pi, y,j \models_K  \varphi_2
                                                                  \text{ and } \Pi, y, k \models_K  \varphi_1 \text{ for all }k\in [i,j-1]\\
\Pi, y, i \models_K \exists x. \varphi \quad & \Leftrightarrow  \quad \text{for some initial path $\pi$ of $K$ such that }  \pi[0,i]=\Pi(y)[0,i],\,
\\
& \phantom{\Leftrightarrow \quad  }\,\,\,
 \Pi[x \leftarrow \pi],x,i \models \varphi
\end{array} \]
where $\Pi[x \leftarrow \pi](x)=\pi$
and $\Pi[x \leftarrow \pi](y)=\Pi(y)$ for all $y\neq x$.
$K$ \emph{satisfies} $\varphi$, written $K\models \varphi$, if there
is a path assignment $\Pi$ of $K$ and $y\in \Var$ such that $\Pi,
y,0 \models_K \varphi$. If $\varphi$ is a \emph{sentence},
then the satisfaction relation $\Pi, y,0 \models_K \varphi$ is
independent of $y$ and $\Pi$.  
Note that $\CTLStar$ corresponds to the set of sentences in
the one-variable fragment of $\HCTLStar$.

\begin{example}\label{example:witnessFormulaHyper} The $\HLTL$ sentence
  $\varphi_p := \exists x.\exists y.\,\, p[x]\,\,\Until\,\,\Bigl((p[x]\wedge \neg p[y])\wedge \Next\Always(p[x] \leftrightarrow p[y])\Bigr)$\\
 asserts that there are $\ell>0$ and two distinct initial paths $\pi$ and $\pi'$ such that $p$ always holds along the prefix $\pi[0,\ell]$, $p$ does not hold at position $\ell$ of $\pi'$, and for all $j>\ell$, the valuations of $p$ at position $j$ along $\pi$ and $\pi'$ coincide.
This requirement is clearly non-regular.  
\end{example}

\noindent \textbf{The novel logic $\MHCTLStar$.}
The syntax of $\MHCTLStar$ formulas $\varphi$  is as follows:
\[\varphi ::= \top  \ | \ p[x] \ | \ \neg \varphi \ | \ \varphi \wedge \varphi \ | \ \Next \varphi \ | \ \PNext \varphi\ | \ \varphi \Until \varphi \ | \ \varphi \PUntil \varphi \ | \ \exists x. \varphi \ | \ \GExists x. \varphi
\]
where $\PNext$ and $\PUntil$ are the past counterparts of the temporal
modalities $\Next$ and $\Until$, respectively, and $\GExists x$ is the
\emph{general} (hyper) existential quantifier for variable $x$.  We also use some
shorthands: $\GForall x.\,\varphi:=\neg \GExists x.\,\neg\varphi$ (``general
universal path quantifier''), $\PEventually\varphi:= \top \PUntil
\varphi$ (``past eventually'') and its dual $\PAlways \varphi:=\neg
\PEventually\neg\varphi$ (``past always''). The notion of sentence is
defined as for $\HCTLStar$. The semantics of the modalities $\PNext$,
$\PUntil$, and $\GExists x$ is as follows.
\[ \begin{array}{ll}
\Pi, y, i \models_K \PNext \varphi \quad & \Leftrightarrow \quad i>0 \text{ and }  \Pi, y, i-1 \models_K \varphi \\
\Pi, y,i \models_K  \varphi_1\PUntil \varphi_2 \quad & \Leftrightarrow \quad  \text{for some $j\leq i$}: \Pi, y,j \models_K  \varphi_2
                                                                  \text{ and } \Pi, y, k \models_K  \varphi_1 \text{ for all }k\in [j+1,i]\\
\Pi, y, i \models_K \GExists x. \varphi \quad & \Leftrightarrow \quad  \text{for some \emph{initial} path  $\pi$ of $K$, }
 \Pi[x \leftarrow \pi],x,i \models \varphi
\end{array} \]
The model-checking problem for $\MHCTLStar$ is checking given a \emph{finite} Kripke structure $K$ and a $\MHCTLStar$ sentence $\varphi$, whether
$K\models \varphi$. 
It is plain to see that $\MHCTLStar$ is bisimulation invariant  and satisfies the tree-model property. Hence, $K\models \varphi$ iff
  $\textit{Unw}(K)\models \varphi$.
Note that the set of sentences of the
$\GExists$-free one-variable
fragment of $\MHCTLStar$  corresponds to the well-known equi-expressive
linear-time  memoryful
extension $\MCTLStar$ of $\CTLStar$
\cite{KupfermanPV12}.

We consider now two relevant examples from the literature which
demonstrate the expressive power of $\MHCTLStar$. Both
  examples rely on the ability to express observational equivalence in
  the logic: for an agent $\agent \in \Agents$ and given two paths variables $x$ and
  $y$ in $\Var$, define $\Obsequiva(x,y):=\PAlways(\bigwedge_{p \in
    \Obs(\agent)}p[x] \leftrightarrow p[y])$.

The first example shows that the logic can express \emph{distributed
  knowledge}, a notion extensively investigated in
\cite{fagin1995reasoning}: a group of agents $A \subseteq \Agents$ has
distributed knowledge of $\varphi$, which we will write $D_A \varphi$,
if the combined knowledge of the members of $A$ implies $\varphi$. It
is well known that the modality $D_A$ cannot be expressed by means of modalities $K_\agent$ \cite{fagin1995reasoning}. 
Also, since $\HCTLStar$ cannot express the modality
$\Know_\agent$ (see Section~\ref{SubSec:FromKnoweldgeToHyper}), it
cannot either express $D_A$. However, $D_A$ is expressible in $\MHCTLStar$: given a group of agents $A \subseteq
\Agents$ and a formula $\varphi \in \MHCTLStar$, we define $\Pi, x,i
\models_K D_A \varphi$ by $\Pi, x,i \models_K \GForall
y. \,[(\bigwedge_{a \in A} \Obsequiva(x,y)) \rightarrow \varphi$].


%
%
%


The second example, inspired from
   \cite{alur2007model},
is an opacity requirement that we conjecture cannot be
  expressed neither in $\HCTLStar$ nor in $\KCTLStarS$. Assume that
  agent $\agent$ 
 can observe the   low-security (boolean)
  variables $p$ 
  (i.e., $p\in\Obs(a)$),  but not the
  high-security variables $p$ 
  (i.e., $p\notin
  \Obs(a)$). Consider the case of a secret represented by the value
  $\true$ of a high variable $p_s$. Then, the requirement $\forall
  x.\Always(p_s \rightarrow \GForall y. \Obs_\agent(x,y))$ says that
  whenever $p_s$ holds at a node in the computation tree, all the
  nodes at the same level have the same valuations of low
  variables. Hence, the observer $\agent$ cannot infer that the secret
  has been revealed.

\section{Expressiveness issues}
\label{sec:ExpressivenessIssues}

In this section, we establish that $\HCTLStar$ and $\KCTLStarS$ are expressively incomparable. 
Moreover, we show that $\KCTLStarS$ can be easily translated in linear time into $\MHCTLStar$. As a consequence,
$\MHCTLStar$ turns to be more expressive than both $\HCTLStar$ and $\KCTLStarS$.

Let $\Lang$ be a logic interpreted over Kripke structures,
$\Lang'$ be a logic interpreted over \emph{extended} Kripke structures, and $C$ be a class of Kripke structures.
For a sentence $\varphi$ of $\Lang$, a sentence
$\varphi'$ of $\Lang'$, and an observation map $\Obs$, $\varphi$ and $\varphi'$ \emph{are equivalent
  w.r.t. $C$ and $\Obs$}, written $\varphi \equiv_{C,\Obs} \varphi'$ if for all Kripke structures $K\in C$,
$K\models \varphi$ iff $(K,\Obs)\models \varphi'$.
$\Lang'$ is \emph{at least as expressive as} $\Lang$ w.r.t.
$C$, written $\Lang \leq_C \Lang'$, if for every sentence $\varphi$ of
$\Lang$, there is an observation map $\Obs$ and a sentence $\varphi'$ of $\Lang'$ such that
$\varphi \equiv_{C,\Obs} \varphi'$.
$\Lang$ is \emph{at least as expressive as} $\Lang'$ w.r.t. the class
$C$, written $\Lang' \leq_C \Lang$, if for every sentence $\varphi'$ of
$\Lang'$ and for every observation map $\Obs$, there is a sentence $\varphi$ of $\Lang$ such that
$\varphi \equiv_{C,\Obs} \varphi'$. Note the obvious asymmetry in the above two definitions due to the fact that for evaluating a sentence
in $\Lang'$, we need to fix an observation map.
If $\Lang \not\leq_C \Lang'$ and $\Lang' \not\leq_C \Lang$, then $\Lang$ and $\Lang'$ are \emph{expressively incomparable w.r.t. $C$}.
 We write
$\leq_\finite$ instead of $\leq_C$ if $C$ is the class of finite
 Kripke structures.

In order to prove that a
given formula $\varphi$ cannot be expressed in a logic $\Lang$, the naive technique is to build
two models   that $\varphi$ can distinguish  (i.e., $\varphi$ evaluates to
true on one model and to false on the other one), and prove that no formula of $\Lang$
can distinguish those two models.
A more involved technique, that we will use in the sequel in the expressiveness comparison between $\HCTLStar$ and $\KCTLStarS$, consists in building
two families of models  $(K_n)_{n\geq 1}$ and $(M_n)_{n\geq 1}$  such that $\varphi$ distinguishes between $K_n$
and $M_n$ for all $n$, and for every formula $\psi$ in $\Lang$, there is $n\geq 1$ such that $\psi$ does not distinguish
between $K_n$ and $M_n$.




\subsection{$\HCTLStar$ is not subsumed by $\KCTLStarS$}\label{SubSec:FromHyperToKnoweldge}

In this subsection, we show that $\HCTLStar$ (and the $\LTL$-like fragment $\HLTL$ as well) is not subsumed by
$\KCTLStarS$ even if we restrict ourselves to the class of finite   Kripke structures.

\begin{theorem}\label{theorem:HypernotSubsumedByKCTL} $\HLTL \not \leq_\finite \KCTLStarS$.
\end{theorem}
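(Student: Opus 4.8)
The plan is to exploit the single difference highlighted in the introduction: a $\HLTL$ sentence binds two initial paths \emph{once} and then imposes a joint linear constraint on them at \emph{all} future positions, whereas the semantics of $\KCTLStarS$ always evaluates along a single current path, the operators $\exists$ and $\Know_\agent$ merely switching to another initial path sharing a prefix (resp.\ an $\Obsequiva$-equivalent prefix) with the current one and then evaluating the subformula along the \emph{new} path alone. Thus $\KCTLStarS$ has no mechanism to relate the futures of two distinct paths, and the $\HLTL$ sentence $\varphi_p$ of Example~\ref{example:witnessFormulaHyper}, which asserts the existence of two paths coinciding on $p$ from some point onwards after one controlled divergence, is the natural witness. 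Since $\varphi_p$ uses only $p$, we may take $\AP=\{p\}$ (extra propositions can be frozen to a constant in all constructed models, trivialising the knowledge they induce).

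A single pair of models cannot work, which is exactly why a family argument is needed. Indeed $\KCTLStarS$ extends $\CTLStar$, and on finite structures $\CTLStar$-equivalence coincides with bisimulation; hence two finite structures that agree on all $\KCTLStarS$ sentences are in particular bisimilar, thus trace equivalent, and since the truth of any $\HLTL$ sentence depends only on the set of initial traces they then agree on every $\HLTL$ sentence as well. I would therefore build finite structures $(K_n)_{n\ge 1}$ and $(M_n)_{n\ge 1}$ over $\{p\}$ with $K_n\models\varphi_p$ and $M_n\not\models\varphi_p$ for every $n$, but such that $K_n$ and $M_n$ agree on all $\KCTLStarS$ sentences of rank at most $n$, for every observation map. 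Concretely, $\textit{Unw}(K_n)$ carries an all-$p$ spine together with, deep in the tree, a sibling branch holding $\neg p$ at one position and then copying the spine forever, i.e.\ a $\varphi_p$-witness; $M_n$ is the same construction in which this copying is perturbed (say, the deviating branches disagree with every all-$p$ path infinitely often), so that no $\varphi_p$-witness survives. The parameter $n$ governs the depth and width of the gadget, which must outgrow the resources of the formula to be defeated.

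For the bounded indistinguishability I would set up an Ehrenfeucht--Fra\"{\i}ss\'e game for $\KCTLStarS$ on $\textit{Unw}(K_n)$ versus $\textit{Unw}(M_n)$. A configuration is a pair of (current path, position), one in each tree, and the moves mirror the modalities: a temporal move advancing along, or selecting an $\Until$-witness position on, the current path; an $\exists$-move replacing the current path by one with the same prefix; and, for each agent, a $\Know_\agent$-move replacing it by an $\Obsequiva$-equivalent same-level path. One shows by induction that if Duplicator answers $n$ such moves then the two configurations satisfy the same $\KCTLStarS$ formulas of rank $\le n$, and one exhibits a winning Duplicator strategy for $n$ rounds on $K_n,M_n$. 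Over $\AP=\{p\}$ each $\Know_\agent$ is either `blind' (a universal quantifier over all same-level paths, as in Example~\ref{example:witnessFormulaKnowledge}) or `$p$-perfect' (universal over same-level paths with the same $p$-history), so only these two regimes must be matched, uniformly in $\Obs$.

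Assembling the quantifiers: fix any observation map $\Obs$ and any $\KCTLStarS$ sentence $\varphi'$, and let $n$ exceed the rank of $\varphi'$; then $\varphi'$ takes the same value on $K_n$ and $M_n$ while $\varphi_p$ separates them, so $\varphi_p\not\equiv_{\finite,\Obs}\varphi'$, and as $\Obs,\varphi'$ were arbitrary this gives $\HLTL\not\leq_\finite\KCTLStarS$. The main obstacle is the indistinguishability step, and within it the treatment of $\Know_\agent$: one must turn the informal single-timeline limitation into a concrete invariant preserved under the universal, same-level, observation-filtered switch, while simultaneously keeping the two families $\LTL$-equivalent up to the residual temporal rank at the leaves of the game. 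This last point is delicate, since $\Until$ and $\Always$ see arbitrarily far, so look-ahead is not bounded by rank and must instead be controlled through the ultimate periodicity of the constructed traces.
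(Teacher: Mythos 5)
Your high-level strategy is the paper's: the same witness $\varphi_p$, two families $(K_n)_{n}$, $(M_n)_{n}$ with the parameter $n$ chosen to exceed the resources of the candidate $\KCTLStarS$ sentence, a treatment of observation maps that is uniform because only the two regimes $p\in\Obs(\agent)$ and $p\notin\Obs(\agent)$ matter (the former collapsing $\Obsequiva$-equivalence to identity, the latter to ``same level''), and a rank-indexed indistinguishability invariant --- your Ehrenfeucht--Fra\"{\i}ss\'e game is just the game-theoretic formulation of the paper's compatibility relations $R(h)$ on main nodes, with the paper's ``balanced'' normalization playing the role of your rank bookkeeping across $\Until$-splits. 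Your opening observation that no single pair of finite models can suffice (via $\CTLStar$-equivalence $=$ bisimilarity on finite structures, hence trace equivalence, hence $\HLTL$-equivalence) is correct and is a nice justification the paper leaves implicit.

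The genuine gap is in the one place where all the difficulty sits: the concrete construction of $M_n$. Your proposed perturbation --- make ``the deviating branches disagree with every all-$p$ path infinitely often'' --- fails outright, because it alters the branch-local $\omega$-regular behavior: the fixed, $n$-independent $\CTLStar$ sentence $\exists\,\Eventually\bigl(\neg p \wedge \Next\Always\, p\bigr)$ then separates your $K_n$ from your $M_n$ for \emph{every} $n$ (it holds in $K_n$ on the branch that deviates once and then copies the all-$p$ spine, and fails in $M_n$, where the spine never sees $\neg p$ and every deviating branch sees $\neg p$ infinitely often). So your Duplicator cannot win even a constant number of rounds, and no tuning of ``depth and width'' rescues this particular $M_n$. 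What the paper does instead is subtler: $K_n$ has $2n+1$ branches whose suffixes form a staircase $\emptyset^{2n-k}\{p\}^{k}\{p\}^{\omega}$, and $M_n$ is obtained by flipping a \emph{single} label, which turns the suffix of the witness branch $\pi(\xi_n)$ into an exact copy of the neighbouring branch $\pi(\xi_{n-1})$'s suffix. The only trace-level difference between $K_n$ and $M_n$ is then whether an $\emptyset$-block adjacent to the unique all-$p$-prefix path has length $n$ or $n{+}1$, which no (balanced) formula of size $<n$ can measure; this $\pm 1$ slack is precisely what the relations $R(h)$ encode (the clauses $|D(\tau)-D(\tau')|=1$ with $D\geq 2h$, and the analogous slack on orders). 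Relatedly, your closing hope that the unbounded look-ahead of $\Until$/$\Always$ is ``controlled through ultimate periodicity'' misses the actual mechanism: both families are ultimately periodic under any reasonable construction; what makes the invariant survive $\Until$-moves is that the two structures' branch traces differ only by this bounded shift, so the invariant degrades in a controlled way across an until-split (in the paper, from $R(h)$ to $R(\lfloor h/2\rfloor)$, whence the balancing), together with the interval-totality property of Proposition~\ref{prop:StructuralPropertiesRkTwo} that lets Duplicator match all intermediate positions, not just the witness.
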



In order to prove Theorem~\ref{theorem:HypernotSubsumedByKCTL}, as witness $\HLTL$ sentence, we use the $\HLTL$ sentence $\varphi_p$ of
Example~\ref{example:witnessFormulaHyper} given by $  \varphi_p := \exists x.\exists y.\,\, p[x]\,\,\Until\,\,\Bigl((p[x]\wedge \neg p[y])\wedge \Next\Always(p[x] \leftrightarrow p[y])\Bigr)$.

We exhibit two families of \emph{regular} tree structures $(K_n)_{n>
  1}$ and $(M_n)_{n> 1}$ such that: (i) for all $n>1$, $\varphi_p$
distinguishes between $K_n$ and $M_n$, and (ii) for every $\KCTLStarS$
sentence $\psi$, there is $n>1$  such that
$\psi$ does \emph{not} distinguish between $(K_n,\Obs)$ and $(M_n,\Obs)$ for all observation maps $\Obs$. Hence,
Theorem~\ref{theorem:HypernotSubsumedByKCTL} follows. In the following, we fix $n>1$.

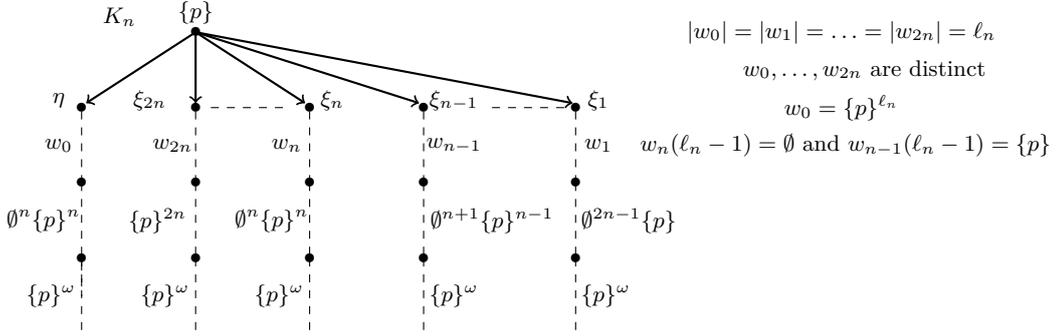
\begin{figure}[htp]
\begin{center}
\begin{tikzpicture}[node distance=15mm]
     \node(nodeTreeMn) at (1.5cm,2.0cm) {\footnotesize $\bullet$};
    \node() at (0.5cm,2.2cm) {\footnotesize \textbf{$K_n$}};
    \node() at (1.5cm,2.26cm) {\footnotesize \textbf{$\{p\}$}};
   \path[->, thick,black] (1.5cm,2.0cm) edge  (0.06cm,1.06cm);
    \path[->, thick,black] (1.5cm,2.0cm) edge  (1.50cm,1.06cm);
   \path[->, thick,black] (1.5cm,2.0cm) edge  (2.94cm,1.06cm);
    \path[->, thick,black] (1.5cm,2.0cm) edge  (4.44cm,1.06cm);
    \path[->, thick,black] (1.5cm,2.0cm) edge  (6.46cm,1.06cm);

       \node() at (10.0cm,2cm) {\footnotesize {$|w_0|=|w_1|=\ldots=|w_{2n}|=\ell_n$}};
    \node() at (10.3cm,1.5cm) {\footnotesize {$w_0,\ldots,w_{2n}$ are   distinct}};
    \node() at (10.0cm,1cm) {\footnotesize {$w_0=\{p\}^{\ell_n}$}};
     \node() at (10.0cm,0.5cm) {\footnotesize { $w_n(\ell_n-1)=\emptyset$ and $w_{n-1}(\ell_n-1)=\{p\}$}};

    \node() at (0cm,1cm) {\footnotesize $\bullet$};
     \node() at (-0.3cm,0.5cm) {\footnotesize \textbf{$w_0$}};
    \draw[dashed, black] (0cm,0cm) edge (0cm,1cm);
    \node(nodeX0) at (0cm,0cm) {\footnotesize $\bullet$};
    \node() at (-0.3cm,1.1cm) {\footnotesize \textbf{$\eta$}};
    \node(nodePX0) at (0cm,-1.0cm) {\footnotesize $\bullet$};
    \draw[dashed, black] (0cm,0cm) edge (0cm,-1.5cm);
    \node() at (-0.5cm,-0.5cm) {\footnotesize $ \emptyset^{n}\{p\}^{n}$};
    \node() at (-0.4cm,-1.5cm) {\footnotesize $\{p\}^{\omega}$};
    \draw[dashed, black] (0cm,-1.0cm) edge (0cm,-2cm);

    \node() at (1.5cm,1cm) {\footnotesize $\bullet$};
     \node() at (1.2cm,0.5cm) {\footnotesize \textbf{$w_{2n}$}};
    \draw[dashed, black] (1.5cm,0cm) edge (1.5cm,1cm);
    \node(nodeY0) at (1.5cm,0cm) {\footnotesize $\bullet$};
    \node() at (0.9cm,1.1cm) {\footnotesize \textbf{$\xi_{2n}$}};
    \node(nodePY0) at (1.5cm,-1.0cm) {\footnotesize $\bullet$};
     \draw[dashed, black] (1.5cm,0cm) edge (1.5cm,-1.0cm);
     \node() at (1.0cm,-0.50cm) {\footnotesize $\{p\}^{2n}$};
    \node() at (1.1cm,-1.5cm) {\footnotesize $\{p\}^{\omega}$};
    \draw[dashed, black] (1.5cm,-1.0cm) edge (1.5cm,-2cm);

        \node() at (3.0cm,1cm) {\footnotesize $\bullet$};
     \node() at (2.7cm,0.5cm) {\footnotesize \textbf{$w_{n}$}};
    \draw[dashed, black] (3.0cm,0cm) edge (3.0cm,1cm);
    \node(nodeZ0) at (3cm,0cm) {\footnotesize $\bullet$};
    \draw[dashed, black] (1.7cm,1cm) edge (2.7cm,1cm);
    \node() at (3.3cm,1.1cm) {\footnotesize \textbf{$\xi_{n}$}};
    \node(nodePZ0) at (3cm,-1.0cm) {\footnotesize $\bullet$};
    \draw[dashed, black] (3cm,0cm) edge (3cm,-1.0cm);
    \node() at (2.5cm,-0.50cm) {\footnotesize $  \emptyset^{n}\{p\}^{n}$};
    \node() at (2.6cm,-1.5cm) {\footnotesize $\{p\}^{\omega}$};
    \draw[dashed, black] (3cm,-1.0cm) edge (3cm,-2cm);

     \node() at (4.5cm,1cm) {\footnotesize $\bullet$};
     \node() at (4.9cm,0.5cm) {\footnotesize \textbf{$w_{n-1}$}};
    \draw[dashed, black] (4.5cm,0cm) edge (4.5cm,1cm);
    \node(nodeZ1) at (4.5cm,0cm) {\footnotesize $\bullet$};
    \node() at (4.9cm,1.1cm) {\footnotesize \textbf{$\xi_{n-1}$}};
    \node(nodePZ1) at (4.5cm,-1.0cm) {\footnotesize $\bullet$};
    \draw[dashed, black] (4.5cm,0cm) edge (4.5cm,-1.0cm);
    \node() at (5.4cm,-0.50cm) {\footnotesize $  \emptyset^{n+1}\{p\}^{n-1}$};
    \node() at (4.9cm,-1.5cm) {\footnotesize $\{p\}^{\omega}$};
    \draw[dashed, black] (4.5cm,-1.0cm) edge (4.5cm,-2cm);

     \node() at (6.5cm,1cm) {\footnotesize $\bullet$};
     \node() at (6.8cm,0.5cm) {\footnotesize \textbf{$w_{1}$}};
    \draw[dashed, black] (6.5cm,0cm) edge (6.5cm,1cm);
    \node(nodeZN) at (6.5cm,0cm) {\footnotesize $\bullet$};
    \node() at (6.8cm,1.1cm) {\footnotesize \textbf{$\xi_1$}};
    \node(nodePZN) at (6.5cm,-1.0cm) {\footnotesize $\bullet$};
     \draw[dashed, black] (6.5cm,0cm) edge (6.5cm,-1.0cm);
    \draw[dashed, black] (6.5cm,-1.0cm) edge (6.5cm,-2cm);
    \node() at (7.2cm,-0.50cm) {\footnotesize $ \emptyset^{2n-1}\{p\}$};
    \node() at (6.9cm,-1.5cm) {\footnotesize $\{p\}^{\omega}$};
    \draw[dashed, black] (5.4cm,1cm) edge (6.5cm,1cm);

 \end{tikzpicture}
\end{center}
\caption{The regular tree structure $K_n$ for the witness $\HLTL$ formula $\varphi_p$}
\label{fig-RegularTreeKNForHyper}
\end{figure}

\begin{definition}[The regular tree structures $K_n$ and $M_n$]  $K_n$, which is  illustrated in
    Fig.~\ref{fig-RegularTreeKNForHyper}, is any regular tree
    structure over $2^{\{p\}}$ satisfying the following for some
    $\ell_n>1$:
\begin{compactenum}
  \item The root has label $\{p\}$ and  $2n+1$  successors $\eta,\xi_1\ldots,\xi_{2n}$, and    there is a \emph{unique} initial path  visiting $\eta$ (resp., $\xi_k$ with $k\in [1,2n]$). We denote such a path  by $\pi(\eta)$ (resp., $\pi(\xi_k)$).
  \item There are $2n+1$ \emph{distinct} finite words $w_0,\ldots,w_{2n}$ over $2^{\{p\}}$ of length $\ell_n$ such that:
      \begin{compactitem}
        \item $w_0=\{p\}^{\ell_n}$, $w_n(\ell_n-1)=\emptyset$ and $w_{n-1}(\ell_n-1)=\{p\}$;
        \item  the trace of $\pi(\eta)$ is  $\{p\}\,w_0 \, \emptyset^{n}\{p\}^{n}\{p\}^{\omega}$;
        \item for all $k\in [1,2n]$, the trace of $\pi(\xi_k)$ is  $\{p\}\,w_k \, \emptyset^{2n-k}\{p\}^{k}\{p\}^{\omega}$.
      \end{compactitem}
\end{compactenum}\vspace{0.2cm}

$M_n$ is obtained from $K_n$ by replacing the label $\{p\}$ of the node  $\pi(\xi_n)(\ell_n+1+n)$ with $\emptyset$. Note that in $M_n$, the traces of $\pi(\xi_n)[\ell_n+1,\infty]$  and $\pi(\xi_{n-1})[\ell_n+1,\infty]$ coincide.
\end{definition}

In the regular tree structure $K_n$,   the trace
of the finite path $\pi(\eta)[0,\ell_n]$ is $\{p\}^{\ell_n+1}$, the label
of $\pi(\xi_n)$ at position $\ell_n $ is $\emptyset$, and the traces
of $\pi(\eta)[\ell_n+1,\infty]$ and $\pi(\xi_n)[\ell_n+1,\infty]$
coincide, which make  $\pi(\eta)$ and  $\pi(\xi_n)$ good candidates
to fulfill $\varphi_p$. Hence:

\begin{proposition}\label{prop:PropertiesOfKnHyper} $ K_n \models \varphi_p$.
\end{proposition}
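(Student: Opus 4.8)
The plan is to exhibit explicit witnesses for the two existential path quantifiers and then verify the temporal body directly against the prescribed traces. Following the hint given just before the statement, I would bind $x$ to $\pi(\eta)$ and $y$ to $\pi(\xi_n)$, and claim the $\Until$ is witnessed at position $\ell_n$. Concretely, starting from any initial assignment, the clause for $\exists$ evaluated at position $0$ only requires each newly chosen initial path to agree with the current path on the prefix $[0,0]$; since $\pi(\eta)$ and $\pi(\xi_n)$ are initial paths, both beginning at the root, these side-conditions are trivially met. Hence it suffices to show that $\Pi[x\leftarrow\pi(\eta)][y\leftarrow\pi(\xi_n)],\,y,\,0 \models_{K_n} p[x]\Until\psi$, where $\psi := (p[x]\wedge\neg p[y])\wedge\Next\Always(p[x]\leftrightarrow p[y])$.

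Next I would check the three constraints at the witness position $j=\ell_n$. From the definition, the trace of $\pi(\eta)$ is $\{p\}\,w_0\,\emptyset^{n}\{p\}^{n}\{p\}^{\omega}$ with $w_0=\{p\}^{\ell_n}$; since the root supplies the symbol at position $0$ and $w_0$ occupies positions $1,\dots,\ell_n$, the label of $\pi(\eta)$ is $\{p\}$ at every position in $[0,\ell_n]$. This yields both the left argument of the $\Until$, namely $p[x]$ at all $k\in[0,\ell_n-1]$, and $p[x]$ at position $\ell_n$. For $\neg p[y]$ at $\ell_n$, observe that in $\pi(\xi_n)$ the factor $w_n$ occupies positions $1,\dots,\ell_n$, so the label at position $\ell_n$ is $w_n(\ell_n-1)=\emptyset$, as required. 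Finally, $\Next\Always(p[x]\leftrightarrow p[y])$ at position $\ell_n$ amounts to the valuations of $p$ along $\pi(\eta)$ and along $\pi(\xi_n)$ agreeing at every position $\ge\ell_n+1$; this holds because the two suffixes from $\ell_n+1$ coincide: that of $\pi(\eta)$ is $\emptyset^{n}\{p\}^{n}\{p\}^{\omega}$, and that of $\pi(\xi_n)$ is $\emptyset^{2n-n}\{p\}^{n}\{p\}^{\omega}=\emptyset^{n}\{p\}^{n}\{p\}^{\omega}$.

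Combining the three items, $\psi$ holds at position $\ell_n$ and $p[x]$ holds throughout $[0,\ell_n-1]$, so $p[x]\Until\psi$ holds at position $0$ under the chosen assignment, and therefore $K_n\models\varphi_p$. I expect the only delicate point to be the index bookkeeping: because the root sits at position $0$ while the words $w_0,\dots,w_{2n}$ begin at position $1$, the constraint $w_n(\ell_n-1)=\emptyset$ lands exactly at path position $\ell_n$, and the common suffix starts at $\ell_n+1$. Once this offset-by-one is tracked correctly, the argument is a direct reading of the prescribed traces and needs no further structural reasoning about $K_n$.
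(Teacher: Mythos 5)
Your proof is correct and follows exactly the argument the paper sketches in the remark preceding the proposition: bind $x$ to $\pi(\eta)$ and $y$ to $\pi(\xi_n)$, witness the $\Until$ at position $\ell_n$ (where $\pi(\eta)$ still carries $\{p\}$ but $w_n(\ell_n-1)=\emptyset$), and use the coinciding suffixes from position $\ell_n+1$ for the $\Next\Always$ conjunct. Your explicit offset-by-one bookkeeping and the observation that the $\exists$ side-condition is vacuous at position $0$ merely make precise what the paper leaves implicit.
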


\begin{proposition}\label{prop:PropertiesOfMnHyper}  $M_n \not\models \varphi_p$.
\end{proposition}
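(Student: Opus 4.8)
The plan is to unfold $\varphi_p$ into a purely combinatorial condition on pairs of initial paths and then show this condition is unsatisfiable in $M_n$. First I would record the meaning of a witness: by the semantics of $\Until$, $\Next$ and $\Always$, $M_n \models \varphi_p$ holds iff there are two initial paths $\pi,\pi'$ (assigned to $x$ and $y$ respectively) and a position $\ell \ge 0$ such that (a) $p$ holds along $\pi$ at every position of $[0,\ell]$; (b) $p$ fails along $\pi'$ at position $\ell$; and (c) $\pi$ and $\pi'$ carry the same value of $p$ at every position $>\ell$ (so in particular $\pi\neq\pi'$, as they differ at $\ell$). Since $M_n$ has the same underlying tree as $K_n$, its initial paths are exactly $\pi(\eta),\pi(\xi_1),\dots,\pi(\xi_{2n})$, with traces as in $K_n$ except that $p$ is deleted at position $\ell_n+1+n$ of $\pi(\xi_n)$.

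The first step bounds $\ell$. Every initial path loses $p$ no later than position $\ell_n+1$: $\pi(\eta)$ first drops $p$ at $\ell_n+1$, and each $\pi(\xi_k)$ already drops it at some position in $[1,\ell_n]$ inside $w_k$, since $w_k\neq w_0=\{p\}^{\ell_n}$ forces $w_k$ to contain at least one $\emptyset$. By (a) the path $\pi$ assigned to $x$ is all-$p$ on $[0,\ell]$, hence $\ell\le\ell_n$. Consequently every position $>\ell_n$ lies in the range $(\ell,\infty)$ constrained by (c), so $\pi$ and $\pi'$ must agree on $p$ throughout the \emph{deep tail}, i.e.\ at all positions $>\ell_n$.

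The crux is that the deep tail of each initial path has the form $\emptyset^{m}\{p\}^{\omega}$, so two deep tails coincide iff their block lengths $m$ are equal. Reading off the traces gives $m=n$ for $\pi(\eta)$, $m=2n-k$ for $\pi(\xi_k)$ with $k\neq n$, and $m=n+1$ for $\pi(\xi_n)$ — the last value being exactly the effect of the relabelling, which lengthens the block from $n$ to $n+1$. The lengths contributed by $\pi(\eta)$ and by the $\pi(\xi_k)$ with $k\neq n$ realize each element of $\{0,1,\dots,2n-1\}$ exactly once, while $\pi(\xi_n)$ adds a second copy of $n+1$. Hence the only pair of distinct initial paths with equal deep-tail block length is $\{\pi(\xi_{n-1}),\pi(\xi_n)\}$, so this pair must be $\{\pi,\pi'\}$.

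It then remains to refute this pair, which I would do at position $\ell_n$: there $\pi(\xi_{n-1})$ satisfies $p$ (since $w_{n-1}(\ell_n-1)=\{p\}$) whereas $\pi(\xi_n)$ does not (since $w_n(\ell_n-1)=\emptyset$), so the two disagree on $p$ at $\ell_n$. On the other hand, whichever of the two plays the role of $\pi$ has its first $\neg p$ position $\le\ell_n$, so (a) forces $\ell<\ell_n$; then $\ell_n\in(\ell,\infty)$ and (c) demands agreement at $\ell_n$, contradicting the mismatch just noted. As every triple $(\pi,\pi',\ell)$ is thus excluded, $M_n\not\models\varphi_p$. The main obstacle I anticipate is making the case analysis genuinely exhaustive while using only the data that the definition of $K_n$ actually fixes (namely $w_0$, the last letters of $w_{n-1}$ and $w_n$, distinctness, and $\ell_n$); the deep-tail block-length invariant is what makes this possible, since it ignores the unspecified interiors of the $w_k$ and depends only on the suffix shape $\emptyset^{m}\{p\}^{\omega}$ and the single relabelled letter. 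The whole argument thus reduces to verifying the uniqueness of the collision at $n+1$ and the fixed mismatch at $\ell_n$.
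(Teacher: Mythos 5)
Your proof is correct and is in substance the same argument as the paper's: both rest on the same structural facts, namely that any witness must have $\ell\le\ell_n$, that in $M_n$ the suffix traces from position $\ell_n+1$ (your ``deep tails'' $\emptyset^{m}\{p\}^{\omega}$) are pairwise distinct except for the single collision $\{\pi(\xi_{n-1}),\pi(\xi_n)\}$ created by the relabelling, and that this pair is separated at position $\ell_n$ by $w_{n-1}(\ell_n-1)=\{p\}$ versus $w_n(\ell_n-1)=\emptyset$. The only difference is organizational: you first pin down the unique tail-colliding pair and refute it at $\ell_n$, whereas the paper cases on $\ell$ (witnesses with $\ell<\ell_n$ killed by suffix-trace distinctness, and $\ell=\ell_n$ forcing $\pi=\pi(\eta)$, whose tail from $\ell_n+1$ no other path matches) --- the same facts either way.
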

\begin{proof}

The construction ensures that for all distinct initial paths $\pi$ and $\pi'$ and $\ell\in [0,\ell_n]$, the  traces of  $\pi[\ell,\infty]$ and $\pi'[\ell,\infty]$ in $M_n$ are distinct (recall that $\pi(\xi_n)(\ell_n)$ and $\pi(\xi_{n-1})(\ell_n)$ have distinct labels). Moreover, $\pi(\eta)$ is the  unique initial path   of $M_n$ such that  for all $i\in [0,\ell_n]$, $p$ holds at position $i$. Thus, since $\pi(\eta)(\ell_{n}+1)$ has label $\emptyset$ and  there is no distinct initial path $\pi''$ of  $M_n$ such that the traces of  $\pi(\eta)[\ell_{n}+1,\infty]$ and $\pi''[\ell_{n}+1,\infty]$ coincide, by construction of $\varphi_p$, the result easily follows.
\end{proof}

A  $\KCTLStarS$ formula $\psi$ is  \emph{balanced} if for every until subformula $\psi_1\Until\psi_2$ of $\psi$, it holds that $|\psi_1|=|\psi_2|$. By using the atomic formula $\top$, it is trivial to convert a $\KCTLStarS$ sentence $\psi$ into an \emph{equivalent}
balanced $\KCTLStarS$ sentence of size at most $|\psi|^{2}$. This observation together with Propositions~\ref{prop:PropertiesOfKnHyper} and~\ref{prop:PropertiesOfMnHyper}, and the following non-trivial result provide a proof of
Theorem~\ref{theorem:HypernotSubsumedByKCTL}.

\newcounter{theo-MainTheoremExpressivenessHyper}
\setcounter{theo-MainTheoremExpressivenessHyper}{\value{theorem}}

 \begin{theorem}\label{theorem:MainTheoremExpressivenessHyper} Let $\psi$ be a \emph{balanced} $\KCTLStarS$ sentence such that $|\psi|<n$. Then, for all observation maps
 $\Obs$,   $(K_n,\Obs)\models \psi \Leftrightarrow (M_n,\Obs)\models \psi$.
\end{theorem}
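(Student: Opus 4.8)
The plan is to prove the equivalence directly by an Ehrenfeucht--Fra\"{\i}ss\'e-style induction, taking full advantage of the very constrained shape of $K_n$ and $M_n$. Both are \emph{brooms}: a root carrying $\{p\}$ with $2n+1$ pairwise node-disjoint infinite rays $\pi(\eta),\pi(\xi_1),\dots,\pi(\xi_{2n})$, so that branching occurs only at the root. Two structural facts will drive everything. First, because there is no branching below the root, the quantifier $\exists$ is vacuous at every position $i\geq 1$: the constraint $\pi'[0,i]=\pi[0,i]$ forces $\pi'=\pi$, hence $\exists$ can only rebind the current ray at the root, and even then it leaves the current position unchanged. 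Second, viewed as labelled transition systems, $K_n$ and $M_n$ are \emph{identical} except at the single configuration $(\pi(\xi_n),d)$, with $d:=\ell_n+n+1$, where $p$ holds in $K_n$ but not in $M_n$; equivalently, every ray carries the same trace in both structures apart from $\pi(\xi_n)$, whose two traces differ \emph{only} at position $d$ and coincide everywhere else.

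I would then settle the knowledge modality uniformly in $\Obs$. Over $\AP=\{p\}$ every agent $\agent$ is of one of two kinds: either $\Obs(\agent)=\emptyset$, and then $\Know_\agent$ at level $i$ quantifies over \emph{all} $2n+1$ rays at level $i$; or $\Obs(\agent)=\{p\}$, and then $\Know_\agent$ at level $i$ quantifies over exactly the rays whose length-$(i+1)$ trace-prefix equals that of the current ray. The decisive point is that for every level $i<d$ all $2n+1$ trace-prefixes are the same in $K_n$ and $M_n$ (the sole discrepancy lives at level $d$), so the observation-equivalence classes at levels $i<d$ coincide in the two structures; moreover, since $w_0,\dots,w_{2n}$ are pairwise distinct, for $i\geq \ell_n$ these classes are singletons under full observation. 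Consequently a knowledge step never brings the discrepancy closer: it keeps the level fixed, so under empty observation it cannot isolate $\pi(\xi_n)$ (it ranges over all rays at once), and under full observation reaching $\pi(\xi_n)$ at level $d$ requires already sitting on $\pi(\xi_n)$ there; either way, exploiting the flipped atom is deferred to a subformula evaluated at $(\pi(\xi_n),d)$, whose cost is controlled by the counting lemma below.

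The combinatorial heart is a counting lemma for balanced formulas. Reading the two traces of $\pi(\xi_n)$ from level $\ell_n$, the whole discrepancy reduces to separating $\emptyset^{n+1}\{p\}^{\omega}$ (in $K_n$) from $\emptyset^{n+2}\{p\}^{\omega}$ (in $M_n$): the two agree except on whether the switch to $p$ occurs at offset $n+1$ or $n+2$, i.e.\ they differ at a single position and are constant beyond it. I would prove, by induction on balanced formulas, that two infinite traces differing at a single position $q$ and constant afterwards cannot be separated, from any position $i$ with $q-i\geq m$, by a balanced formula of size at most $m$. The Boolean and $\Next$ cases are immediate, a $\Next$ spending exactly one unit of budget for one position. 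The crucial case is $\varphi_1\Until\varphi_2$: here balancedness $|\varphi_1|=|\varphi_2|$ lets me split the budget evenly, so both the search for a $\varphi_2$-witness and the verification of $\varphi_1$ in the meantime run with budget $(|\varphi_1\Until\varphi_2|-1)/2$; this caps the resolution with which nested untils can localise the single flipped position, so that a size-$m$ formula started at least $m$ steps before $q$ selects the same witness, or fails for the same reason, in both traces. Applied to the two local traces above, no balanced formula of size $<n$ separates them.

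Finally I would package these facts into a family of relations $\equiv_m$ between configurations of $(K_n,\Obs)$ and $(M_n,\Obs)$, with invariant combining (i) agreement of the forward trace of the current ray on a window of length at least the remaining budget and (ii) a level-wise matching of the observation classes, and prove that $\equiv_m$-related configurations satisfy the same balanced formulas of size at most $m$: the temporal operators are handled by the counting lemma, and $\exists$ and $\Know_\agent$ by the two structural observations. Since the discrepancy sits at level $d=\ell_n+n+1$ while every initial configuration sits at level $0$ with forward agreement of length $d>n$ and identical observation classes at all levels below $d$, the root configurations are $\equiv_n$-related; as $\psi$ is a balanced sentence with $|\psi|<n$, this yields $(K_n,\Obs)\models\psi \Leftrightarrow (M_n,\Obs)\models\psi$. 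The hard part will be exactly the interaction, in this last induction, between $\Know_\agent$ and the $\Until$ case of the counting lemma: one must verify, uniformly over all $\Obs$, that observation jumps cannot be chained to isolate $\pi(\xi_n)$ at level $d$ with a sub-$n$ budget, and that balancedness genuinely bounds the counting resolution of alternating until/knowledge nestings---this is where the pairwise distinctness of the $w_k$ and the hypothesis $|\psi|<n$ are used together.
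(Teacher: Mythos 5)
Your structural observations (broom shape, vacuity of $\exists$ below the root, the two kinds of agents, coincidence of observation classes below the discrepancy level $d=\ell_n+n+1$) are all correct, and your overall skeleton resembles what a proof must do. But there are two genuine gaps, and they sit exactly where the paper's technical machinery lives. First, your counting lemma is false as stated: ``differing at a single position $q$ and constant afterwards'' does not prevent separation by small balanced formulas, because the common part of the traces can contain a marker just before $q$. For instance, if both traces carry an auxiliary atom $c$ exactly at position $q-1$ and are otherwise empty, the constant-size balanced formula $\top\,\Until\,(c\wedge\Next p)$ separates them from arbitrarily far away. For the pure $\{p\}$-traces $\emptyset^{n+1}\{p\}^{\omega}$ versus $\emptyset^{n+2}\{p\}^{\omega}$ the conclusion does hold (the flanking context is homogeneous), but that is not the object your final induction evaluates: subformulas $\Know_\agent\varphi'$ with $p\notin\Obs(\agent)$ behave as \emph{level-dependent oracle atoms}, and these oracles genuinely vary with the level near $d$. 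Concretely, the existential-knowledge formula asserting ``some same-level path satisfies $\neg p\wedge\Next^{t-1}\neg p$'' holds at level $\ell_n+m$ iff $m\leq 2n-t$, i.e.\ a knowledge subformula of size roughly $t$ is a step function of the level with threshold $\ell_n+2n-t$; such an oracle acts precisely like the marker $c$ above, letting an until jump cheaply to a marked level and a short $\Next$-chain finish the job. Ruling this out for total size $<n$ requires a quantitative claim your invariant does not supply: every observable of sub-budget size must be \emph{constant on a window around $d$} whose width dominates the remaining budget. ``Level-wise matching of the observation classes'' only matches the domains of the knowledge quantifiers, not the level-resolution of what they can express.

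Second, your relation $\equiv_m$ cannot maintain level synchronization, and once levels shift the unproved claim above becomes unavoidable. Already $\Eventually p$ evaluated on $\pi(\xi_n)$ from level $\ell_n+1$ has its witness at $d$ in $K_n$ but at $d+1$ in $M_n$, so the until case must pair configurations at \emph{adjacent} levels; then the knowledge case must match the sets of ray-suffix behaviours at levels $j$ and $j+1$, which differ in exactly one extremal element and are distinguishable by formulas of size about $2n-m$ but not smaller. Controlling this degradation --- one unit of budget per knowledge step, halving at each until (this is where balancedness is actually consumed) --- is exactly the content of the paper's $h$-compatibility machinery: the relations $R(h)$ on main nodes with the thresholds $D(\tau),D(\tau')\geq 2h$ and $|D(\tau)-D(\tau')|=1$, the proposition that an $\Obs_\agent$-companion of $\tau$ can be answered by a companion of $\tau'$ up to $R(h-1)$, and the until proposition degrading $R(h)$ to $R(\lfloor h/2\rfloor)$. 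The paper moreover never compares $(\pi(\xi_n),t)$ across the two structures directly: it triangulates, showing in $M_n$ that $\pi(\xi_n)$ and $\pi(\xi_{n-1})$ are indistinguishable from level $\ell_n+1$ for formulas of \emph{any} size (same level, identical suffix traces), and in $K_n$ that the corresponding nodes are $R(n-1)$-compatible, then transferring along $\pi(\xi_{n-1})$, which is untouched by the modification. You do flag ``the interaction between $\Know_\agent$ and the $\Until$ case'' as the hard part to be verified, but you provide no mechanism for it --- and that mechanism \emph{is} the proof; without it, the argument establishes only the easy cases.
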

\begin{proof}
A full proof is in Appendix~\ref{APP:MainTheoremExpressivenessHyper}.
Let $\Obs$ be an observation map. 
Evidently, it suffices to show that for all  initial paths $\pi$ and positions $i\in [0,\ell_n]$, $\pi,i\models_{K_n,\Obs} \psi$ iff $\pi,i\models_{M_n,\Obs} \psi$. The key for obtaining this result is that since $|\psi|<n$, $\psi$ cannot distinguish the nodes $\pi(\xi_n)(\ell_n+1)$ and $\pi(\xi_{n-1})(\ell_n+1)$ both in $(K_n,\Obs)$ and in $(M_n,\Obs)$. For $M_n$, this indistinguishability easily follows from the construction and is independent of the size of $\psi$. For $K_n$, the indistinguishability is non-trivial and is formally proved by defining equivalence relations on the set of nodes at distance $d\in [\ell_n+1,\ell_n+2n]$ from the root, which are parameterized by a natural number $h\in [1,n]$, where $h$ intuitively represents
the size of the current balanced subformula of $\psi$ in the recursive evaluation of $\psi$ on $K_n$.
\end{proof}

\subsection{$\KCTLStarS$ is not subsumed by $\HCTLStar$}\label{SubSec:FromKnoweldgeToHyper}

In this Subsection, we show that $\KCTLStarS$ (and the $\LTL$-like fragment $\KLTLS$ as well) is not subsumed by
$\HCTLStar$ even with respect to the the class of finite  Kripke structures.

For $p\in \AP$, an observation map $\Obs$ is \emph{$p$-blind} if for all agents $\agent$, $p \notin\Obs(a)$.

\begin{theorem}\label{theorem:KCTLnotSubsumedByHyper} $\KLTLS \not \leq_\finite \HCTLStar$.
\end{theorem}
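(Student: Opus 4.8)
The plan is to reuse the witness sentence of Example~\ref{example:witnessFormulaKnowledge}. Fix $\AP=\{p\}$ and let $\Obs$ be the (necessarily $p$-blind) observation map with $\Obs(\agent)=\emptyset$, so that $\Obsequiva$-equivalence is trivial: for every Kripke structure $K$, $(K,\Obs)\models\varphi_p$ with $\varphi_p:=\forall\Next\Eventually\Know_\agent\,\neg p$ holds iff the unwinding of $K$ has some non-root level all of whose nodes are labelled $\emptyset$; call this property $\mathcal{P}$. By the definition of $\not\leq_\finite$ it suffices to exhibit this single pair $(\varphi_p,\Obs)$ and to show that \emph{no} $\HCTLStar$ sentence $\psi$ satisfies $\psi\equiv_{\finite,\Obs}\varphi_p$, i.e.\ that $\mathcal{P}$ is not $\HCTLStar$-definable over finite structures. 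Following the two-family technique described above, I would build regular tree structures $(K_n)_{n>1}$ and $(M_n)_{n>1}$ with $\mathcal{P}(K_n)$ holding and $\mathcal{P}(M_n)$ failing, and prove that every $\HCTLStar$ sentence $\psi$ with $|\psi|<n$ fails to distinguish $K_n$ from $M_n$; instantiating at $n=|\psi|+1$ then defeats every candidate $\psi$, since $\varphi_p$ distinguishes all $K_n$ from all $M_n$.

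For the families I would use \emph{stars of periodic columns}, which are unwindings of small finite Kripke structures (a root with several outgoing deterministic $n$-cycles). Over the single proposition $p$, let column $i$ be the initial path whose trace carries $p$ exactly at the positions $t$ with $t\equiv i\pmod{n}$. Take $M_n$ to consist of the $n$ columns of shifts $0,\dots,n-1$, and $K_n$ of the $n-1$ columns of shifts $0,\dots,n-2$. In $M_n$ every position $t$ is covered by the column of shift $t\bmod n$, so no level is uniformly $\emptyset$ and $\mathcal{P}(M_n)$ fails; in $K_n$ no column carries $p$ on the residue class $n-1$, so every level at a position $\equiv n-1\pmod{n}$ is uniformly $\emptyset$ and $\mathcal{P}(K_n)$ holds. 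Crucially, all columns are cyclic shifts of one and the same periodic word, so the construction is completely symmetric: the two families differ only in whether \emph{all} $n$ residue classes are realised, a genuinely global, counting-like discrepancy with no local or boundary anomaly that a single column, or a small tuple of columns, could locally expose.

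To prove indistinguishability I would first reduce $\HCTLStar$ on these stars to a bounded form of $\HLTL$ over the set of column traces. Since all columns share only the root, the prefix-agreement constraint of the $\HCTLStar$ quantifier forces every path re-quantified after the first step to coincide with the current column; hence only quantifiers evaluated at the root (before any temporal operator) can genuinely select distinct columns, and a sentence with $|\psi|<n$ selects fewer than $n$ of them and then evaluates an $\LTL$ matrix of temporal size $<n$ read synchronously along the chosen columns. Moreover, as $\HCTLStar$ atoms are of the form $p[x]$, the logic cannot separate two paths with equal traces, so it really sees only the \emph{set} of column traces. It then remains to play an Ehrenfeucht--Fra\"iss\'e/pebble game in which the number of pebbles (path variables) and the temporal resolution of the matrix are both bounded by $n$: whenever Spoiler pebbles fewer than $n$ columns in one family, Duplicator answers in the other with columns realising the same relative phases, which is always possible because at least $n-1$ shifts are available on each side and the whole family is cyclically symmetric. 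The \textbf{main obstacle} is exactly this game argument: one must show that distinguishing ``all $n$ residue classes occur'' from ``one class is missing''---equivalently, detecting the uniformly-$\emptyset$ level of $K_n$---provably requires either $n$ simultaneously tracked paths or the ability to count positions modulo $n$, both of which are beyond any formula of size $<n$. I would formalise this, as in the proof of Theorem~\ref{theorem:MainTheoremExpressivenessHyper}, through a family of equivalence relations on tuples of columns parameterised by the remaining formula size $h<n$, with $h$ controlling simultaneously how many columns may be compared and how far the $\LTL$ matrix can resolve positions. Once Duplicator's strategy is established, $\psi$ cannot distinguish $K_{|\psi|+1}$ from $M_{|\psi|+1}$ while $\varphi_p$ does, so no $\HCTLStar$ sentence is equivalent to $\varphi_p$ w.r.t.\ $\Obs$ over finite structures, and Theorem~\ref{theorem:KCTLnotSubsumedByHyper} follows.
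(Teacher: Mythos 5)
Your framing is sound up to a point: fixing the single pair $(\varphi_p,\Obs)$ with $\Obs(\agent)=\emptyset$, using the two-family technique, and observing that in a star of deterministic columns only root-level quantification can select distinct columns (so sentences of size $<n$ effectively bind $<n$ columns) all parallel what the paper actually does. But your concrete families break the key indistinguishability lemma, and the break is fatal rather than repairable by a cleverer game argument. Consider the constant-size $\HLTL$ sentence $\psi_{\mathrm{shift}} := \forall x.\,\exists y.\,\Next\Always\bigl(\Next\,p[x] \leftrightarrow p[y]\bigr)$. Its matrix says that the trace of $y$ is the trace of $x$ shifted back by one step. Since your columns carry $p$ on exactly one residue class modulo $n$, the only possible witness for $x$ of shift $i$ is the column of shift $i-1 \bmod n$. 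In $M_n$ all $n$ shifts are present, so $M_n \models \psi_{\mathrm{shift}}$; in $K_n$ the shift $n-1$ is missing, so the column of shift $0$ has no witness and $K_n \not\models \psi_{\mathrm{shift}}$. Thus a fixed formula with two path variables and no counting separates $K_n$ from $M_n$ for every $n$, directly refuting your central claim that exposing the missing residue class ``provably requires either $n$ simultaneously tracked paths or the ability to count positions modulo $n$.'' Duplicator loses with two pebbles: the binary shift relation is a local, positionwise constraint between just two columns, and a single $\forall\exists$ alternation converts it into a global detection of the missing class. Your intuition that the discrepancy is ``genuinely global'' with ``no local anomaly a small tuple of columns could expose'' is exactly where the proposal fails.

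This is precisely the difficulty the paper's construction is engineered to avoid. There, $M_n$ differs from $K_n$ by flipping a \emph{single} label, at the deep position $i_\Alert$ along $\pi(\xi_1)$, and the branches $\xi_1,\ldots,\xi_n$ carry the $n$-fractal arrangement of $p$-blocks (blocks of length at least $n+2$, groups of $n+4$ blocks at each scale) whose whole purpose is to make the flipped position interchangeable, up to the equivalence relations $R(h_*,\EmphInt)$, with nearby $p$-positions of type $\xi_{h_*}$ --- where $\xi_{h_*}$ is some branch left unbound by the current assignment, which exists by pigeonhole since $|\Pi|+|\psi|<n$. No fixed-size temporal relation between a bounded tuple of branches pins down the single-label difference, because the self-similar padding absorbs any comparison a formula of size $<n$ can make; by contrast, your rigid periodic shifts make every pair of columns stand in an expressible successor relation. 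So it is not only the Ehrenfeucht--Fra\"iss\'e argument that needs reworking: the families $(K_n),(M_n)$ themselves must be replaced by something with the paper's one-point-difference, fractal-padding character.
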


As witness $\KLTLS$ sentence for Theorem~\ref{theorem:KCTLnotSubsumedByHyper}, we use the
$\KLTLS$ sentence $\varphi_p$ of Example~\ref{example:witnessFormulaKnowledge} given by
$\varphi_p:= \forall\Next\Eventually\Know_\agent\,\neg  p$.
We exhibit two families of \emph{regular} tree  structures $(K_n)_{n> 1}$ and $(M_n)_{n> 1}$ such that the following holds for all $n> 1$: (i) for each $p$-blind observation map $\Obs$, $\varphi_p$ distinguishes between $(K_n,\Obs)$ and
and $(M_n,\Obs)$, and (ii)  no $\HCTLStar$  formula $\psi$ of size less than $n$ distinguishes between
$K_n$ and $M_n$. Hence, Theorem~\ref{theorem:KCTLnotSubsumedByHyper} follows.

Fix $n> 1$.  In order to define $K_n$ and $M_n$,  we need additional definitions.

 An \emph{$n$-block} is a word in $\{p\}\emptyset^{*}$ of length at least $n+2$. Given  finite words $w_1,\ldots,w_k$ over $2^{\{p\}}$ having the same length $\ell$,
 the join $\join(w_1,\ldots,w_k)$ of $w_1,\ldots,w_k$ is the finite word over $2^{\{p\}}$ of length $\ell$ such that for all $i\in [0,\ell-1]$, $\join(w_1,\ldots,w_k)(i)= w_1(i)\cup \ldots \cup w_k(i)$.  For a finite word $w$ over $2^{\{p\}}$, the \emph{dual $\widetilde{w}$} of $w$ is the finite word over $2^{\{p\}}$ of length $|w|$ such that for all $i\in [0,|w|-1]$, $p\in \widetilde{w}(i)$ iff $p\notin w (i)$.

 Given  $n$ finite words $w_1,\ldots,w_n$ over $2^{\{p\}}$ of the same length, the tuple $\tpl{w_1,\ldots,w_n}$
 \emph{satisfies the $n$-fractal requirement} if for all $k\in [1,n]$, $\join(w_1,\ldots,w_k)$ has the form
  \[
  \join(w_1,\ldots,w_k)=\Bl^{k}_1\ldots \Bl^{k}_{m_k}\cdot \{p\}
  \]
  where $\Bl^{k}_1\ldots \Bl^{k}_{m_k}$ are $n$-blocks. Moreover, $m_1=n+4$ and the following holds:
 \begin{compactitem}
   \item  if $k<n$, then $w_{k+1}$ is obtained from $\join(w_1,\ldots,w_k)$ by replacing the   last symbol with $\emptyset$, and by replacing each $n$-block $\Bl^{k}_i$    of $\join(w_1,\ldots,w_k)$ by a sequence of $n+4$ $n$-blocks preceded by a non-empty word in $\emptyset^{*}$ of length at least $n+2$.
 \end{compactitem}

 \begin{remark}\label{remark:fractalRequirement} Assume that $\tpl{w_1,\ldots,w_n}$ satisfies the $n$-fractal requirement and let $\ell$ be the common length of $w_1,\ldots,w_n$. Then, for all $i\in [0,\ell-1]$, there is at most one $k\in [1,n]$ such that $p\in w_k(i)$.
 Moreover, $p\in w_1(0)$ and $p\in w_1(\ell-1)$.
 \end{remark}

\begin{figure}[htp]
\begin{center}
\begin{tikzpicture}[node distance=15mm]
    \node(nodeTreeMn) at (1.5cm,0.5cm) {\footnotesize $\bullet$};
    \node() at (1.2cm,0.7cm) {\footnotesize \textbf{$K_n$}};
    \node(nodeX0) at (0cm,0cm) {\footnotesize $\bullet$};
    \node() at (-0.3cm,0cm) {\footnotesize \textbf{$\eta$}};
    \node(nodePX0) at (0cm,-1.0cm) {\footnotesize $\bullet$};
    \draw[dashed, black] (0cm,0cm) edge (0cm,-1.5cm);
    \node() at (-0.4cm,-0.5cm) {\footnotesize $w_0$};
    \node() at (-0.4cm,-1.4cm) {\footnotesize $\{p\}^{\omega}$};
    \draw[dashed, black] (0cm,-1.0cm) edge (0cm,-1.8cm);
    \node(nodeY0) at (1.5cm,0cm) {\footnotesize $\bullet$};
    \node() at (1.2cm,0cm) {\footnotesize \textbf{$\xi_{1}$}};
    \node(nodePY0) at (1.5cm,-1.0cm) {\footnotesize $\bullet$};
     \draw[dashed, black] (1.5cm,0cm) edge (1.5cm,-1.0cm);
     \node() at (1.2cm,-0.5cm) {\footnotesize $w_1$};
    \node() at (1.3cm,-1.4cm) {\footnotesize $\emptyset^{\omega}$};
    \draw[dashed, black] (1.5cm,-1.0cm) edge (1.5cm,-1.8cm);
    \node(nodeZ0) at (3cm,0cm) {\footnotesize $\bullet$};
    \draw[dashed, black] (1.7cm,0cm) edge (2.7cm,0cm);
    \node() at (3.3cm,0cm) {\footnotesize \textbf{$\xi_{n}$}};
    \node(nodePZ0) at (3cm,-1.0cm) {\footnotesize $\bullet$};
    \draw[dashed, black] (3cm,0cm) edge (3cm,-1.0cm);
    \node() at (2.7cm,-0.5cm) {\footnotesize $w_n$};
    \node() at (2.8cm,-1.4cm) {\footnotesize $\emptyset^{\omega}$};
    \draw[dashed, black] (3cm,-1.0cm) edge (3cm,-1.8cm);

    \path[->, thick,black] (1.5cm,0.5cm) edge  (0.06cm,0.06cm);
    \path[->, thick,black] (1.5cm,0.5cm) edge  (1.50cm,0.06cm);
    \path[->, thick,black] (1.5cm,0.5cm) edge  (2.94cm,0.06cm);

    \node() at (7.0cm,0.4cm) {\footnotesize {$|w_0|=|w_1|=\ldots=|w_n|=\ell_n$}};
    \node() at (7.5cm,-0.3cm) {\footnotesize {$\tpl{w_1,\ldots,w_n}$ satisfies the $n$-fractal requirement}};
    \node() at (7.0cm,-1cm) {\footnotesize {$w_0$ is the dual of $\join(w_1,\ldots,w_n)$}};

 \end{tikzpicture}
\end{center}
\caption{The regular tree structure $K_n$ for the witness $\KLTLS$ formula $\varphi_p:= \forall\Next\Eventually\Know_\agent\,\neg  p$}
\label{fig-RegularTreeTNForKLTL}
\end{figure}\vspace{-0.2cm}

\begin{definition}[The regular tree structures $K_n$ and $M_n$]   $K_n$, which is illustrated in Fig.~\ref{fig-RegularTreeTNForKLTL}, is any regular tree structure over   $2^{\{p\}}$ satisfying the following for some $\ell_n>1$:
\begin{compactenum}
  \item The root has $n+1$ distinct successors $\eta,\xi_1\ldots,\xi_{n}$ and    there is a \emph{unique} initial path  visiting $\eta$ (resp., $\xi_k$ with $k\in [1,n]$). We denote such a path  by $\pi(\eta)$ (resp., $\pi(\xi_k)$).
  \item There are $n+1$ finite words $w_0,\ldots,w_n$ of length $\ell_n$ such  that:
  \begin{compactitem}
  \item the trace of $\pi(\eta)$ is $\emptyset \, w_0\, \{p\}^{\omega}$ and for all $k\in [1,n]$, the trace of $\pi(\xi_k)$
  is $\emptyset \, w_k\, \emptyset^{\omega}$;
    \item $\tpl{w_1,\ldots,w_n}$  satisfies the $n$-fractal requirement and $w_0$ is the dual of $\join(w_1,\ldots,w_n)$.
  \end{compactitem}
\end{compactenum}\vspace{0.2cm}

A \emph{main position} is a position in $[1,\ell_n]$. Let $i_\Alert$ be the \emph{third} (in increasing order) main position $i$ along $\pi(\xi_1)$ such that the label of $\pi(\xi_1)(i)$ in $K_n$ is $\{p\}$ (note that $i_\Alert$ exists). Then,
$M_n$ is obtained from $K_n$ by replacing the label $\{p\}$ of $\pi(\xi_1)$ at position $i_\Alert$ with $\emptyset$.
\end{definition}

By construction, in the regular tree structure $K_n$, for each non-root level, there is a node where $p$ holds and a node where  $p$ does not hold. Hence:

\begin{proposition}\label{prop:PropertiesOfKn} For each $p$-blind observation map $\Obs$, $(K_n,\Obs)\not\models \varphi_p$.
\end{proposition}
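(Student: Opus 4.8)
The plan is to unwind the semantics of $\varphi_p=\forall\Next\Eventually\Know_\agent\,\neg p$ under a $p$-blind observation map and then exploit the duality of $w_0$ to show that $p$ cannot be avoided at any non-root level. First I would observe that since $K_n$ is a structure over $2^{\{p\}}$, a $p$-blind map $\Obs$ forces $\Obs(\agent)=\emptyset$ for the agent $\agent$ occurring in $\varphi_p$; hence $\Obs_\agent(w)=\emptyset^{|w|}$ for every finite trace $w$, and any two finite traces of equal length are $\Obs_\agent$-equivalent. Consequently, for any initial path $\pi'$ and any position $j$, the relation $\pi',j\models_{(K_n,\Obs)}\Know_\agent\,\neg p$ holds iff no node at level $j$ of $K_n$ carries $p$, a condition that does not depend on $\pi'$. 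Plugging this into the outermost $\forall\Next\Eventually$ (and using that $K_n$ has at least one initial path) yields, exactly as in Example~\ref{example:witnessFormulaKnowledge}, that $(K_n,\Obs)\models\varphi_p$ iff there is some non-root level of $K_n$ at which \emph{no} node satisfies $p$. Thus it suffices to prove the complementary structural fact: every non-root level of $K_n$ contains at least one node labelled $\{p\}$.

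To prove this fact I would use that $K_n$ has exactly the $n+1$ initial paths $\pi(\eta),\pi(\xi_1),\ldots,\pi(\xi_n)$, so that the nodes at a level $i\geq 1$ are among $\pi(\eta)(i),\pi(\xi_1)(i),\ldots,\pi(\xi_n)(i)$, with labels read off the traces $\emptyset\,w_0\,\{p\}^{\omega}$ and $\emptyset\,w_k\,\emptyset^{\omega}$. For a level $i>\ell_n$ the node $\pi(\eta)(i)$ lies in the suffix $\{p\}^{\omega}$ and is therefore labelled $\{p\}$, so such levels are immediately covered.

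The remaining and central case is a level $i\in[1,\ell_n]$; writing $j:=i-1\in[0,\ell_n-1]$, the labels at this level are $w_0(j),w_1(j),\ldots,w_n(j)$. Here I would invoke $w_0=\widetilde{\join(w_1,\ldots,w_n)}$: by definition of the dual and of the join, $p\in w_0(j)$ iff $p\notin\join(w_1,\ldots,w_n)(j)$ iff $p\notin w_k(j)$ for every $k\in[1,n]$. Hence either $p\in w_0(j)$, in which case $\pi(\eta)(i)$ satisfies $p$, or else $p\in w_k(j)$ for some $k\in[1,n]$, in which case $\pi(\xi_k)(i)$ satisfies $p$; in both cases level $i$ contains a node satisfying $p$. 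This establishes the structural fact and hence $(K_n,\Obs)\not\models\varphi_p$. I expect no serious obstacle in the combinatorics, which is forced by the duality of $w_0$; the only point requiring care is the first step, namely the correct reduction of the knowledge modality $\Know_\agent$ under $p$-blindness to a level-wise condition on $K_n$, since this is where the synchronous perfect-recall semantics and the $p$-blind hypothesis interact.
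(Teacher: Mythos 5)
Your proof is correct and takes essentially the same route as the paper, which settles the proposition in one line by observing that, by construction, every non-root level of $K_n$ contains a node where $p$ holds --- on levels in $[1,\ell_n]$ because $w_0$ is the dual of $\join(w_1,\ldots,w_n)$, and beyond $\ell_n$ via the suffix $\{p\}^{\omega}$ of $\pi(\eta)$ --- and then reads $\varphi_p$ as in Example~\ref{example:witnessFormulaKnowledge}. Your careful reduction of $\Know_\agent$ under $p$-blindness to the level-wise condition is precisely the content of that example, which the paper invokes implicitly, so you have simply made its one-sentence argument explicit.
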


By  Remark~\ref{remark:fractalRequirement}, for each main position $i$, there is at most one $k\in [1,n]$ such that  the label of
 $\pi(\xi_k)(i)$ in $K_n$
is $\{p\}$. If such a $k$ exists, we say that $i$ is a \emph{main $p$-position} and $\xi_k$ is the \emph{type} of $i$.
Now, for the level of $M_n$ at distance $i_\Alert$ from the root, $p$ \emph{uniformly} does not hold (i.e., there is no node of $M_n$ at distance $i_\Alert$ from the root where $p$ holds). Hence:

\begin{proposition}\label{prop:PropertiesOfMn} For each $p$-blind observation map $\Obs$, $(M_n,\Obs)\models \varphi_p$.
\end{proposition}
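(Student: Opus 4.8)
The plan is to first reduce the semantics of $\varphi_p$ under a $p$-blind observation map to a purely level-wise statement, and then exhibit the concrete witness level $i_\Alert$. Since the structures are over $2^{\{p\}}$ and $\Obs$ is $p$-blind, the only proposition is $p$ and $p\notin\Obs(\agent)$, so $\Obs(\agent)=\emptyset$ and $\Obs_\agent(w)=\emptyset^{|w|}$ for every trace $w$. Consequently, any two equal-length prefixes are $\Obs_\agent$-equivalent, and in particular the prefixes $V(\pi[0,j])$ and $V(\pi'[0,j])$ of two initial paths always are, both having length $j+1$. Hence $\Know_\agent\neg p$ holds at a position $j$ of any initial path exactly when \emph{no} node at distance $j$ from the root carries $p$, and $\varphi_p=\forall\Next\Eventually\Know_\agent\neg p$ reduces, as in Example~\ref{example:witnessFormulaKnowledge}, to the assertion that some non-root level of $M_n$ is uniformly $\neg p$. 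It therefore suffices to produce one such level.

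I would take the level at distance $i_\Alert$ from the root. By condition~1 of the construction the root has exactly the successors $\eta,\xi_1,\ldots,\xi_n$, each generating a unique initial path, so the nodes at any level $j\geq 1$ are precisely $\pi(\eta)(j),\pi(\xi_1)(j),\ldots,\pi(\xi_n)(j)$. The key claim is that in $K_n$ the only node at level $i_\Alert$ carrying $p$ is $\pi(\xi_1)(i_\Alert)$. Indeed, $i_\Alert$ is a main position with $p\in V(\pi(\xi_1)(i_\Alert))$ by its very definition; by Remark~\ref{remark:fractalRequirement} at most one of $w_1,\ldots,w_n$ carries $p$ at a fixed main position, so $\pi(\xi_k)(i_\Alert)$ does not carry $p$ for any $k\neq 1$; and since $w_0$ is the dual of $\join(w_1,\ldots,w_n)$, the node $\pi(\eta)(i_\Alert)$ carries $p$ iff none of the $\xi_k$ does, which fails here because $\xi_1$ does. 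Thus $\pi(\xi_1)(i_\Alert)$ is the unique $p$-node at that level in $K_n$.

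To finish, recall that $M_n$ is obtained from $K_n$ by relabelling exactly the node $\pi(\xi_1)(i_\Alert)$ from $\{p\}$ to $\emptyset$, leaving all other labels untouched. Combining this with the uniqueness claim, \emph{no} node at distance $i_\Alert$ from the root carries $p$ in $M_n$, i.e. level $i_\Alert$ is uniformly $\neg p$; since $i_\Alert\geq 1$ this is a non-root level. By the reduction of the first paragraph, $(M_n,\Obs)\models\varphi_p$ for every $p$-blind $\Obs$, as required.

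The proof is short, and the only genuinely delicate point is the uniqueness of the $p$-node at level $i_\Alert$ in $K_n$: this is where the $n$-fractal requirement (through Remark~\ref{remark:fractalRequirement}) and the choice of $w_0$ as the dual of the join must be combined. Everything else---the collapse of the knowledge modality under $p$-blindness and the single-bit relabelling from $K_n$ to $M_n$---is a direct reading of the definitions.
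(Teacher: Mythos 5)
Your proof is correct and follows essentially the same route as the paper: reduce $\varphi_p$ under a $p$-blind map to the existence of a uniformly $\neg p$ non-root level (as in Example~\ref{example:witnessFormulaKnowledge}), then observe via Remark~\ref{remark:fractalRequirement} that $\pi(\xi_1)(i_\Alert)$ is the only $p$-node at level $i_\Alert$ in $K_n$, so relabelling it makes that level uniformly $\neg p$ in $M_n$. You merely spell out two points the paper leaves implicit, namely the duality of $w_0$ forcing $\pi(\eta)(i_\Alert)$ to be $\emptyset$-labelled and the collapse of $\Know_\agent$ to a level-wise quantifier, both of which you handle correctly.
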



Theorem~\ref{theorem:KCTLnotSubsumedByHyper} directly follows from Propositions~\ref{prop:PropertiesOfKn} and~\ref{prop:PropertiesOfMn} and the following result.

\newcounter{theo-MainResultExpressivenessKCTL}
\setcounter{theo-MainResultExpressivenessKCTL}{\value{theorem}}

\begin{theorem}\label{theorem;MainResultExpressivenessKCTL} For all $\HCTLStar$ sentences $\psi$  such that $|\psi|<n$,
$K_n\models \psi \Leftrightarrow M_n\models \psi$.
\end{theorem}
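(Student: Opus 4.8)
The plan is to show that no $\HCTLStar$ sentence with fewer than $n$ subformulas can localize the single label change separating $M_n$ from $K_n$, by reducing the evaluation of $\psi$ to a bounded-resource pebble game on tuples of initial paths and then exploiting the $n$-fold self-similarity of the fractal construction.

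First I would record the decisive structural fact about the models. Since the root of $K_n$ (and of $M_n$) has $n+1$ successors, each carrying a \emph{unique} initial path, the set of initial paths is exactly $\{\pi(\eta),\pi(\xi_1),\ldots,\pi(\xi_n)\}$, and any two of these take distinct first steps out of the root. Consequently, the synchronous prefix-agreement constraint in the semantics of $\exists x$ forces every path quantifier evaluated at a position $i\geq 1$ to re-select the current path, because agreement of the node prefixes already up to position $1$ pins down the whole initial path. Hence the only genuine path choices happen at the root, and every configuration reachable while evaluating $\psi$ is a tuple of at most $|\psi|<n$ \emph{distinct} paths from $\{\pi(\eta),\pi(\xi_1),\ldots,\pi(\xi_n)\}$, all read at one common position; below the root the sentence behaves like an $\LTL$ formula over the product trace of the already-chosen tuple, inner quantifiers acting trivially. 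Two consequences guide the rest: (i) if a tuple omits $\pi(\xi_1)$, its product trace is literally identical in $K_n$ and $M_n$; (ii) by Remark~\ref{remark:fractalRequirement}, at each position at most one coordinate among $\xi_1,\ldots,\xi_n$ carries $p$, so the product trace of any such tuple is sparse and organized into long ($\geq n+2$) $\emptyset$-runs, \emph{independently} of which paths were chosen.

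Second, mirroring the method announced for Theorem~\ref{theorem:MainTheoremExpressivenessHyper}, I would define a family of equivalence relations $\equiv_h$, for $0\leq h\leq n-1$, relating a $K_n$-configuration to an $M_n$-configuration, where $h$ is the remaining formula-size budget; the intended invariant is that $\equiv_h$-related configurations satisfy exactly the same $\HCTLStar$ formulas of size at most $h$. I would then prove by induction on $h$ that $\equiv_h$ is preserved under the Boolean connectives, under $\Next$ and $\Until$, and under the quantifier $\exists x$. The quantifier clause is a back-and-forth step: for every path Spoiler adds at the root on one side, Duplicator answers on the other while dropping to $\equiv_{h-1}$, so that each nested quantification or temporal move consumes exactly one budget unit and, correspondingly, one level of the fractal. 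Since the two initial configurations are the empty tuple at the root of the same underlying tree, they are trivially $\equiv_{n-1}$-related, whence $K_n\models\psi\Leftrightarrow M_n\models\psi$ for every sentence $\psi$ with $|\psi|<n$.

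The main obstacle is defining $\equiv_h$ and verifying the temporal and quantifier clauses precisely in the case where the tuple contains $\pi(\xi_1)$, the only situation in which the product traces differ at all, namely at the single position $i_{\Alert}$. There, deleting the $p$ at the start of the third $n$-block of $\xi_1$ merely \emph{merges} two adjacent $n$-blocks into one longer block, so locally the two traces still look like a $p$ isolated by long $\emptyset$-runs; the only global effect is that the number of block boundaries along $\xi_1$ drops from $n+4$ to $n+3$. To separate the models a formula would thus have to count block boundaries up to $n+4$, or pinpoint the affected boundary across the $n$ nested scales of the fractal, or compare $\xi_1$ against enough of the other $\xi_k$ to witness the missing boundary; each of these needs at least $n$ nested temporal operators or $n$ simultaneously tracked paths, both excluded by $|\psi|<n$. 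This is where the two design parameters are spent: the block length $\geq n+2$ neutralizes the $\Next$-reach of the formula, while the $n$-level self-similar refinement with branching $n+4$ ensures, round by round in the game, that Duplicator can match positions within blocks and realign scales so that the merged block at $i_{\Alert}$ is never exposed. I expect this fractal-pumping argument for the $\Until$ clause — showing that within the budget Duplicator can always answer a ``reach the $j$-th block boundary'' move by a position preserving $\equiv_{h-1}$ — to be the technically heaviest part.
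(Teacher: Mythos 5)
Your structural observations are sound and match the paper's own starting point: in $K_n$ and $M_n$ every non-root node lies on a unique initial path, so a path quantifier evaluated at a position $i\geq 1$ is forced to re-select the current path (this is exactly how the paper dispatches the $\exists y$ case in Lemma~\ref{lemma:specularityForFormulasSatisfaction}); genuine choices happen only at position $0$; at most $|\Pi|+|\psi|<n$ paths are ever bound; and any tuple omitting $\pi(\xi_1)$ sees literally identical labellings in the two structures. The overall scaffolding --- budget-indexed equivalence relations preserved under Boolean connectives, $\Next$, $\Until$ and quantification, proved by induction on the remaining size budget --- is also the paper's. But the proposal stops where the proof actually lives: the relations $\equiv_h$ are never defined, and the $\Until$ clause (``Duplicator can realign scales so the merged block at $i_\Alert$ is never exposed'') is explicitly deferred as the ``technically heaviest part.'' Crucially, you are missing the one idea that makes that step go through: since $|\Pi|+|\psi|<n$, by pigeonhole there is $h_*\in[2,n]$ such that $\pi(\xi_{h_*})$ is \emph{not bound} by the current assignment, and in $M_n$ the altered position $i_\Alert$ is then \emph{re-typed} as a main $p$-position of type $\xi_{h_*}$. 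This works because, restricted to the bound paths, $i_\Alert$ in $M_n$ has exactly the label profile of a genuine type-$\xi_{h_*}$ position: $\emptyset$ on every bound $\xi_k$, and $\emptyset$ on $\eta$ as well, since $w_0$ is the dual of the join. Your ``merged block along $\xi_1$'' picture is a single-trace view; it does not address cross-path atoms $p[x]$ over a tuple containing $\eta$ and several $\xi_k$ --- which is precisely what a small formula can exploit, and precisely what the re-typing neutralizes.

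Concretely, the paper builds, for each $h_*$ and budget $\EmphInt$, relations $R(h_*,\EmphInt)$ on main positions out of $h_*$-types, $h_*$-macro-blocks, $h_*$-orders and $h_*$-low-ancestors, proves the totality and transfer properties your game would need (Propositions~\ref{prop:SpecularityPreliminary} and~\ref{prop:Specularity}), and derives two \emph{within-model} indistinguishability lemmas: compatible positions satisfy the same small formulas in $K_n$ when $h_*=1$, and in $M_n$ when $h_*\geq 2$ and $\pi(\xi_{h_*})$ is unbound. The cross-model transfer (your $\equiv_h$) is then a separate induction (Lemma~\ref{lemma:MainLemmaExpressivenessKCTL}) that chains these through the anchor positions $1$, $i_F$, $i_\Alert$, $i_N$, together with the observation that satisfaction at positions beyond $i_\Alert$ coincides outright because the logic is future-only; the choice of $i_\Alert$ as the \emph{third} type-$\xi_1$ $p$-position is what guarantees enough such positions on both sides of it. Until you supply analogues of these definitions and propositions, your $\Until$ clause is a genuine gap; in particular, your closing trichotomy (a distinguishing formula must count boundaries to $n+4$, or pinpoint a boundary across $n$ scales, or track $n$ paths) is asserted rather than proved, and it is exactly the kind of exhaustiveness claim that the compatibility relations exist to replace.
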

\begin{proof} A full proof is in Appendix~\ref{APP;MainResultExpressivenessKCTL}. The main idea  is that for a
$\HCTLStar$ sentence $\psi$ of size less than $n$, in the recursive evaluation of $\psi$ on the tree structure  $M_n$, there will be
$h_*\in [2,n]$ such that the initial path $\pi(\xi_{h_*})$ is not bound by the current path assignment. Then, the $n$-fractal requirement ensures that
in $M_n$, the main $p$-position $i_\Alert$ (which in $M_n$ has label $\emptyset$ along $\pi(\xi_1)$) is indistinguishable from the main $p$-positions $j$
of type $\xi_{h_*}$ which are sufficiently `near' to $i_\Alert$ (such positions $j$ have label $\emptyset$ along the initial paths $\pi(\xi_k)$ with $k\neq h*$). We formalize this intuition by defining equivalence relations on the set of main positions which are parameterized by $h_*$ and a natural number  $\EmphInt\in [0,n]$ and reflect the fractal structure of the main $p$-position displacement. Since the number of main $p$-positions of type $\xi_1$ following $i_\Alert$ is at least $n$, we then deduce that in all the positions $i$ such that $i\leq i_F$, where $i_F$ is the main $p$-position of type $\xi_1$ preceding $i_\Alert$, no $\HCTLStar$ formula $\psi$ can distinguish $M_n$ and $K_n$ with respect to path assignments such that $|\Pi|+|\psi|<n$, where $|\Pi|$ is the number of initial paths bound by $\Pi$. Hence, the result follows.
\end{proof}

\subsection{$\MHCTLStar$ unifies $\KCTLStarS$ and  $\HCTLStar$}\label{SubSec:FromKnoweldgeToMemoryfulHyper}

We show that $\KCTLStarS$ can be easily translated in linear time into the two-variable fragment of $\MHCTLStar$. Intuitively,
for a given  observation map, the knowledge modalities can be simulated by the general hyper path quantifiers combined with the temporal past
modalities. Hence, we obtain the following result (for a detailed proof see Appendix~\ref{APP:FromKCTLStarTOMemoryfulHyper}).

\newcounter{theo-FromKCTLStarTOExtHyper}
\setcounter{theo-FromKCTLStarTOExtHyper}{\value{theorem}}

\begin{theorem}\label{theo:FromKCTLStarTOMemoryfulHyper} Given a $\KCTLStarS$ sentence $\psi$ and an observation map $\Obs$, one can construct in \emph{linear time} a  $\MHCTLStar$ sentence $\varphi$ with just two path variables such that for each Kripke structure $K$, $K\models \varphi$ $\Leftrightarrow$ $(K,\Obs)\models \psi$.
\end{theorem}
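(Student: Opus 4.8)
The plan is to define, by structural induction on the $\KCTLStarS$ formula, a translation $\textit{tr}(\chi,z)$ into $\MHCTLStar$ parameterized by a \emph{current} path variable $z\in\{x,y\}$ (writing $z'$ for the other variable of $\{x,y\}$), and to output the sentence $\varphi:=\GExists x.\,\textit{tr}(\psi,x)$. The translation is homomorphic on the Booleans and the future temporal operators: $\textit{tr}(p,z)=p[z]$, $\textit{tr}(\Next\chi,z)=\Next\textit{tr}(\chi,z)$, $\textit{tr}(\chi_1\Until\chi_2,z)=\textit{tr}(\chi_1,z)\Until\textit{tr}(\chi_2,z)$, and likewise for $\neg,\vee$. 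The two interesting clauses encode the path quantifier and the knowledge modality, and both \emph{switch} the current variable to $z'$: set $\textit{tr}(\exists\chi,z):=\exists z'.\,\textit{tr}(\chi,z')$, using the \emph{regular} hyper quantifier, and $\textit{tr}(\Know_\agent\chi,z):=\GForall z'.\bigl(\Obsequiva(z,z')\rightarrow\textit{tr}(\chi,z')\bigr)$, using the \emph{general} hyper quantifier together with the observational-equivalence predicate $\Obsequiva$ defined earlier. Since each quantifier rebinds only the non-current variable and flips the roles of $x$ and $y$, two variables always suffice: we never need to compare more than the current path with the single path freshly introduced by the enclosing modality.

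The correctness is established by the following invariant, proved by induction on $\chi$: for every extended structure $(K,\Obs)$, every path assignment $\Pi$, every position $i$, and every current variable $z$, we have $\Pi,z,i\models_K\textit{tr}(\chi,z)$ iff $\Pi(z),i\models_{(K,\Obs)}\chi$. Crucially, the $\MHCTLStar$ ``current path'' component (the middle argument of $\models_K$) stays equal to $z$ along the whole recursion: the future temporal operators and the Booleans leave it unchanged, while both hyper quantifiers reset it to the freshly bound variable $z'$, exactly the variable with which we recurse. For the case $\exists\chi$, the regular quantifier $\exists z'$ ranges over initial paths $\pi$ with $\pi[0,i]=\Pi(z)[0,i]$ (because the current component is $z$), which is precisely the prefix-agreement constraint of the $\CTLStar$ quantifier; applying the induction hypothesis to the updated assignment $\Pi[z'\leftarrow\pi]$ closes this case. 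For the case $\Know_\agent\chi$, note first that $\PAlways\theta$ holds at $i$ iff $\theta$ holds at all $j\in[0,i]$ (since $\PEventually\theta=\top\PUntil\theta$); hence $\Obsequiva(z,z')$, evaluated at $i$ under $\Pi[z'\leftarrow\pi]$, asserts exactly that the length-$(i{+}1)$ prefixes of $\Pi(z)$ and $\pi$ are $\Obsequiva$-equivalent. Since $\GForall z'$ rebinds only $z'$, the subterm $p[z]$ inside $\Obsequiva$ still reads the old current path $\Pi(z)$, while $p[z']$ reads the universally quantified path $\pi$; combined with the induction hypothesis on $\textit{tr}(\chi,z')$, this matches the $\Know_\agent$ clause, which quantifies over all $\Obsequiva$-equivalent initial paths.

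It remains to settle the top level. Because $\psi$ is a sentence, $(K,\Obs)\models\psi$ iff $\pi,0\models_{(K,\Obs)}\psi$ for some (equivalently every) initial path $\pi$. Evaluating $\varphi=\GExists x.\,\textit{tr}(\psi,x)$ from any $\Pi,y$ at position $0$ first binds $x$ to an arbitrary initial path $\pi$ and makes $x$ current, after which the invariant gives $\Pi[x\leftarrow\pi],x,0\models_K\textit{tr}(\psi,x)$ iff $\pi,0\models_{(K,\Obs)}\psi$; quantifying existentially over $\pi$ yields $K\models\varphi$ iff $(K,\Obs)\models\psi$, and $\varphi$ is a genuine $\MHCTLStar$ sentence since every temporal modality and every atom $p[z]$ lies in the scope of its binding quantifier. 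For the complexity, each operator of $\psi$ is mapped to a gadget of bounded size, the only non-constant one being $\Obsequiva(z,z')$ of size $O(|\Obs(\agent)|)\le O(|\AP|)$; thus $\varphi$ is produced in time linear in $|\psi|$ (treating the fixed signature $\AP$ as a constant) and uses only the variables $x,y$. I expect the delicate point to be the bookkeeping of the current-path component under variable switching: one must verify that the regular quantifier's implicit prefix constraint always refers to the intended current path, and that rebinding $z'$ in the knowledge case does not disturb the reference to the old path $z$ inside $\Obsequiva$ — this is precisely where the past modality $\PAlways$ and the two-variable alternation do the work.
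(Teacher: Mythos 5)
Your proposal is correct and follows essentially the same route as the paper's proof: a compositional two-variable translation, homomorphic on the Booleans and future modalities, that renders $\Know_\agent$ as a $\GForall$ over the other variable guarded by the $\PAlways$-defined observational-equivalence formula $\Obsequiva$, with correctness proved by the same invariant tying the current variable's bound path to the $\KCTLStarS$ evaluation point. The only (harmless) deviations are that the paper's clause $f_\Obs(\exists\psi,h)=\exists x_h.\,f_\Obs(\psi,h)$ rebinds the current variable instead of switching to the other one, and that the paper omits your top-level $\GExists x$ wrapper.
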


By Theorems~\ref{theorem:HypernotSubsumedByKCTL}, \ref{theorem:KCTLnotSubsumedByHyper}, and~\ref{theo:FromKCTLStarTOMemoryfulHyper}, we obtain the following result.

\begin{corollary}  $\MHCTLStar$ is strictly more expressive than both $\HCTLStar$ and $\KCTLStarS$.
\end{corollary}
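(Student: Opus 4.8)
The plan is to derive the corollary purely by bookkeeping over the three preceding theorems, establishing four relations: the two inclusions $\HCTLStar \leq_\finite \MHCTLStar$ and $\KCTLStarS \leq_\finite \MHCTLStar$, together with the two non-inclusions $\MHCTLStar \not\leq_\finite \HCTLStar$ and $\MHCTLStar \not\leq_\finite \KCTLStarS$. For the inclusions I would first observe that $\HCTLStar$ is a syntactic fragment of $\MHCTLStar$: every $\HCTLStar$ construct ($p[x]$, the Boolean connectives, $\Next$, $\Until$, and the quantifier $\exists x$) is an $\MHCTLStar$ construct governed by the same satisfaction clause, so any $\HCTLStar$ sentence is literally an $\MHCTLStar$ sentence denoting the same property on every Kripke structure; this gives $\HCTLStar \leq_\finite \MHCTLStar$. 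The second inclusion $\KCTLStarS \leq_\finite \MHCTLStar$ is exactly Theorem~\ref{theo:FromKCTLStarTOMemoryfulHyper}, which produces, for each $\KCTLStarS$ sentence and observation map $\Obs$, an $\MHCTLStar$ sentence equivalent with respect to $\Obs$.

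For strictness over $\KCTLStarS$, I would take the witness $\HLTL$ sentence $\varphi_1$ of Theorem~\ref{theorem:HypernotSubsumedByKCTL}, which certifies $\HLTL \not\leq_\finite \KCTLStarS$, i.e.\ no $\KCTLStarS$ sentence is equivalent to $\varphi_1$ for any observation map over finite structures. Since $\HLTL$ is itself a syntactic fragment of $\MHCTLStar$, $\varphi_1$ is an $\MHCTLStar$ sentence, and the very same non-equivalence immediately yields $\MHCTLStar \not\leq_\finite \KCTLStarS$; combined with the inclusion above, this is strict dominance over $\KCTLStarS$.

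For strictness over $\HCTLStar$, I would start from Theorem~\ref{theorem:KCTLnotSubsumedByHyper}, which supplies a $\KLTLS$ sentence $\psi_0$ and an observation map $\Obs_0$ for which no $\HCTLStar$ sentence $\varphi$ satisfies $\varphi \equiv_{\finite,\Obs_0} \psi_0$. Applying Theorem~\ref{theo:FromKCTLStarTOMemoryfulHyper} to $\psi_0$ and this same $\Obs_0$ gives an $\MHCTLStar$ sentence $\varphi_0$ with $K \models \varphi_0 \Leftrightarrow (K,\Obs_0) \models \psi_0$ for all finite $K$. If some $\HCTLStar$ sentence $\varphi_H$ were equivalent to $\varphi_0$ over finite Kripke structures, composing the two equivalences would give $\varphi_H \equiv_{\finite,\Obs_0} \psi_0$, contradicting Theorem~\ref{theorem:KCTLnotSubsumedByHyper}; hence $\MHCTLStar \not\leq_\finite \HCTLStar$, and with $\HCTLStar \leq_\finite \MHCTLStar$ this is strict dominance over $\HCTLStar$.

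There is no genuinely hard step here, since all the heavy lifting is done by the cited theorems. The only point requiring care is the asymmetry built into the expressiveness ordering: because $\KCTLStarS$ is evaluated only relative to an observation map, the strictness-over-$\HCTLStar$ argument must keep the specific witness map $\Obs_0$ of Theorem~\ref{theorem:KCTLnotSubsumedByHyper} fixed and invoke the translation of Theorem~\ref{theo:FromKCTLStarTOMemoryfulHyper} for that very $\Obs_0$, rather than quantifying over observation maps independently; it is precisely this matching of the two maps that makes the final contradiction go through.
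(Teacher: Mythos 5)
Your proposal is correct and takes essentially the same route as the paper, which obtains the corollary directly by combining Theorems~\ref{theorem:HypernotSubsumedByKCTL}, \ref{theorem:KCTLnotSubsumedByHyper}, and~\ref{theo:FromKCTLStarTOMemoryfulHyper} in exactly the way you spell out (syntactic inclusion of $\HCTLStar$ in $\MHCTLStar$, the translation theorem for $\KCTLStarS$, and the two witness sentences for the non-inclusions). Your explicit care with the observation-map asymmetry---fixing the witness map $\Obs_0$ of Theorem~\ref{theorem:KCTLnotSubsumedByHyper} when invoking the translation of Theorem~\ref{theo:FromKCTLStarTOMemoryfulHyper}---is precisely the bookkeeping the paper leaves implicit.
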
 

\section{Model-checking against $\MHCTLStar$}
\label{sec-modelchecking}

In this section,
we address the model-checking problem for   $\MHCTLStar$. Similarly to the proof given
in \cite{ClarksonFKMRS14} for the less expressive logic $\HCTLStar$,
we show that the above problem is non-elementarily decidable by
linear time reductions from/to satisfiability of \emph{full}
Quantified Propositional Temporal Logic ($\QPTL$, for
short)~\cite{SistlaVW87}, which extends $\LTL$ with past ($\PLTL$) by quantification over propositions. As main contribution of
this section, we address complexity issues for the considered problem
by providing 
optimal complexity bounds in terms of a
parameter of the given $\MHCTLStar$ formula, we call \emph{strong
  alternation depth}.  For this, we first provide similar optimal
complexity bounds for satisfiability of full $\QPTL$. Our results also
solve complexity issues for $\HCTLStar$ left open in
\cite{ClarksonFKMRS14}.  With regard to $\QPTL$, well-known optimal
complexity bounds in terms of the alternation depth of existential and
universal quantifiers, concern the fragment of $\QPTL$ in prenex
normal form (quantifiers cannot occur in the scope of temporal
modalities)~\cite{SistlaVW87}. Unrestricted $\QPTL$ formulas can be
translated in polynomial time into equivalent (with respect to
satisfiability) $\QPTL$ formulas in prenex normal form, but in this
conversion, the nesting depth of temporal modalities in the original
formula (in particular, the alternation depth between always and
eventually modalities and the nesting depth of until modalities) lead
to an equal increasing in the quantifier alternation depth of the resulting formula. We
show that this can be avoided by \emph{directly} applying a non-trivial automatic
theoretic approach to unrestricted $\QPTL$ formulas. 
\vspace{0.2cm}

\noindent \textbf{Syntax and semantics of $\QPTL$.} $\QPTL$ formulas $\varphi$ over $\AP$ are defined as follows:
\[
\varphi ::= \top \ | \ p \ | \ \neg \varphi \ | \ \varphi \wedge \varphi \ | \ \Next \varphi\ | \ \PNext \varphi \ | \ \varphi \Until \varphi \ | \ \varphi \PUntil \varphi \ | \ \exists p\,.\varphi
\]
where $p\in \AP$.    The  positive normal form  of a $\QPTL$ formula $\varphi$ is obtained by pushing inward negations to propositional literals using De Morgan's laws and the duals $\Release$ (release), $\PRelease$ (past release), and $\forall p$ (propositional universal quantifier) of the modalities $\Until$, $\PUntil$, and $\exists p$.
A formula is \emph{existential}  if its positive normal form does not contain 
universal  quantifiers.

The semantics of $\QPTL$ is given w.r.t.~(infinite) \emph{pointed words} $(w,i)$ over $2^{\AP}$ consisting of an infinite word $w$ over $2^{\AP}$ and a position $i\geq 0$. All $\QPTL$ operators have the same semantics as in $\PLTL$ except for propositional quantification.
\[
(w,i) \models \exists p. \varphi \ \Leftrightarrow \  \text{ there is  } w'\in (2^{\AP})^{\omega} \text{ such that } w =_{\AP\setminus\{p\}} w' \text{ and } (w',i) \models \varphi
\]
where $w =_{\AP\setminus\{p\}} w'$ means that the projections of $w$ and $w'$ over $\AP\setminus \{p\}$ coincide.
For a $\QPTL$ formula $\varphi$, we denote by $\Lang_p(\varphi)$  the set of pointed words $(w,i)$ satisfying $\varphi$, and by $\Lang(\varphi)$ the set of infinite words $w$ such that
$(w,0)\in\Lang_p(\varphi)$;  $\varphi$ is satisfiable if $\Lang(\varphi)\neq\emptyset$.\vspace{0.2cm}

\noindent\textbf{Optimal bounds for satisfiability of $\QPTL$.} 
First, we provide a generalization of the standard notion of alternation depth between existential and universal quantifiers in a $\QPTL$ formula, we call \emph{strong alternation depth}. This notion takes into account also the presence  of temporal modalities occurrences between quantifier occurrences, but the nesting depth of temporal modalities is not considered (intuitively, it is collapsed to one).

\begin{definition} 
Let $\Op=\{\exists,\forall,\Until,\PUntil,\Release,\PRelease,\Always,\PAlways,\Eventually,\PEventually\}$. First, we define
the strong alternation length $\ell(\chi)$ of finite sequences $\chi\in \Op^{*}$: $\ell(\varepsilon)=0$,
$\ell(Q)=1$ for all $Q\in \Op$, and
\[
\ell(QQ'\chi)=
    \left\{
    \begin{array}{ll}
     \ell(Q'\chi)
      &    \text{ if \emph{either} }  Q,Q'\in\Op\setminus\{\exists,\forall\},  \text{ \emph{or} } Q\in \{\exists,\forall\} \text{ and }Q'\in\Op\setminus\{\exists,\forall\}
      \\
      \ell(Q'\chi)
      &   \text{ if either }  Q,Q'\in\{\exists,\Eventually,\PEventually\}  \text{ or } Q,Q'\in \{\forall,\Always,\PAlways\}
      \\
      1+   \ell(Q'\chi)
      &    \text{ otherwise }
    \end{array}
  \right.
  \]
Then, the strong alternation depth $\sad(\varphi)$ of a $\QPTL$ formula $\varphi$   is the maximum over the strong alternation lengths $\ell(\chi)$, where $\chi$ is the sequence of modalities in $\Op$ along a path in the tree encoding of the positive normal form of $\varphi$. 
\end{definition}

Note that for a $\QPTL$ formula $\varphi$ in prenex normal form, the strong alternation depth corresponds to the alternation depth of existential and universal quantifiers plus one.
For all $n,h\in\N$,  $\Tower(h,n)$ denotes a tower of exponential of height $h$ and argument $n$: $\Tower(0,n)=n$ and
          $\Tower(h+1,n)=2^{\Tower(h,n)}$. We establish the following result, 
          where $h$-\EXPSPACE\ is  the class of languages decided by  deterministic Turing machines bounded in space
          by functions of $n$ in $O(\Tower(h,n^c))$ for some constant $c\geq 1$.
.

\begin{theorem}\label{theorem:QPTLsatisfiability}
For all $h\geq 1$, satisfiability of $\QPTL$ formulas $\varphi$ with strong alternation depth at most $h$  is  $h$-\EXPSPACE-complete, and  $(h-1)$-\EXPSPACE-complete in case $\varphi$ is \emph{existential} (even if the allowed temporal modalities are in $\{\Next,\PNext,\Eventually,\PEventually,\Always,\PAlways\}$).
\end{theorem}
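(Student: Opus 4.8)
The plan is to prove matching upper and lower bounds, the upper bound coming from a \emph{direct} automata-theoretic translation that never passes through a prenex form. First I would translate a $\QPTL$ formula $\varphi$ in positive normal form along its syntax tree into a two-way alternating parity word automaton recognising $\Lang_p(\varphi)$: the temporal modalities are realised by the transition function, with two-wayness absorbing the past operators $\PNext,\PUntil,\PRelease,\PEventually,\PAlways$ and alternation absorbing the Boolean connectives together with the least/greatest fixpoints of $\Until$ and $\Release$, none of which costs an exponential. In particular the \emph{nesting} depth of temporal modalities becomes irrelevant, which is exactly the information prenexing would have spent on quantifier alternation. The only operation forcing a blow-up is propositional quantification: eliminating $\exists p$ requires projecting the $p$-track away, which is sound only on a \emph{nondeterministic} automaton, where a single run commits to one consistent track; complementation, by contrast, is free on alternating automata (dualise the transition function and shift the parity index), so $\forall p=\neg\exists p\neg$ reduces to a projection as well.

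The crux is to show that the number of alternation-and-two-wayness removals needed is governed by $\sad(\varphi)$. I would introduce a subclass of \emph{stratified} two-way alternating parity automata whose states split into linearly ordered strata, each purely existential or purely universal, with transitions non-increasing in the order, and prove that $\varphi$ compiles into such an automaton whose polarity structure mirrors $\ell(\cdot)$: a run of same-polarity operators ($\exists$ with $\Eventually,\PEventually$; $\forall$ with $\Always,\PAlways$) stays in one stratum, consecutive temporal modalities collapse, and $\Until,\Release,\PUntil,\PRelease$ force a switch exactly when they sit directly above a quantifier, reflecting their mixed fixpoint flavour. The key automata lemma I would need is that any such stratified automaton converts to an equivalent one-way nondeterministic parity automaton at a \emph{single} exponential cost per stratum (with only polynomially many priorities). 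The main obstacle is the genuinely non-prenex semantics: a subformula $\exists p.\psi$ under a temporal operator is re-evaluated at every position, each occurrence free to pick its own $p$-track. I would resolve this by compiling each quantified subformula into a self-contained nondeterministic sub-automaton that re-guesses $p$ afresh on every invocation, so that projections stay local and are never reconciled across positions; this is precisely what lets the cost track $\sad(\varphi)$ instead of the inflated quantifier-alternation depth of a prenex form.

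Given the stratified automaton, I would iterate the single-exponential removal from the innermost stratum outward. Since complementation is free, the only exponentials are these removals, and a charging argument aligned with the three clauses defining $\ell$ shows that along any branch their number is $\sad(\varphi)$ when $\varphi$ is \emph{existential} and $\sad(\varphi)+1$ in general, the extra level being the removal forced by a top-level universal quantifier (an existential formula has none). This yields a one-way nondeterministic parity automaton of size $\Tower(h,O(|\varphi|))$ resp.\ $\Tower(h+1,O(|\varphi|))$, whose nonemptiness is checkable in space logarithmic in its number of states, placing satisfiability in $(h-1)$-\EXPSPACE\ resp.\ $h$-\EXPSPACE; the base case $h=1$ recovers the classical \PSPACE\ bound for $\LTL$.

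For the lower bounds I would give a direct reduction encoding the accepting computations of Turing machines running in space $\Tower(h,n^{c})$ (resp.\ $\Tower(h-1,n^{c})$). Nested quantified propositions act as counters addressing tape cells, each additional alternation stratum multiplying the addressable space by one exponential, so a formula of strong alternation depth $h$ already suffices. I expect the hardest point here to be carrying out the encoding within the restricted modality set $\{\Next,\PNext,\Eventually,\PEventually,\Always,\PAlways\}$ and with an existential positive normal form while still forcing $(h-1)$-\EXPSPACE-hardness, using the \PSPACE-hardness of plain $\LTL$ as the base case. Matching the two bounds gives the stated completeness results, which in turn settle the complexity questions for $\HCTLStar$ and $\MHCTLStar$.
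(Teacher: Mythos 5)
Your plan is, in essence, the paper's own route: a direct compositional translation over \emph{pointed-word} languages that never prenexes, two-way alternating automata with a stratification mirroring the clauses of $\ell(\cdot)$, complementation for free by dualization, projection performed only on nondeterministic automata, one single-exponential denondeterminization per alternation level (yielding $\Tower(h,O(|\varphi|))$ for first-level existential formulas and $\Tower(h+1,O(|\varphi|))$ for first-level universal ones), on-the-fly nonemptiness for the space bound, and tower-encoded Turing-machine reductions for the matching lower bounds within the restricted modality set. The paper's stratified class is its two-way hesitant alternating automata ($\HAA$), with negative/B\"{u}chi/coB\"{u}chi strata and order-decreasing transitions, and your key removal lemma corresponds to its Theorem~\ref{theorem:FromHAAtoSNPA}, proved via odd ranking functions plus a generalized Miyano--Hayashi breakpoint construction.

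One step would fail as literally stated: the target of your removal lemma cannot be a \emph{one-way} nondeterministic parity automaton at intermediate levels. Since $\exists p$ in $\QPTL$ relabels the \emph{entire} word --- $(w,i)\models \exists p.\,\psi$ asks for $w'$ with $w =_{\AP\setminus\{p\}} w'$ globally --- each fresh invocation of a quantified sub-automaton at position $i$ must verify its guessed $p$-track on the prefix $w[0,i]$ as well as on the suffix (consider $\Always\,\exists p.\,(\PEventually\, p \wedge \ldots)$). A one-way device launched at $i$ cannot inspect the prefix, and a one-way device launched at $0$ does not accept a pointed language at all; no one-way automaton is ``equivalent'' over pointed words. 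This is precisely why the paper's nondeterministic normal form is the \emph{simple two-way} B\"{u}chi automaton ($\SNWA$), which at position $i$ splits into a forward copy and a backward copy: projection remains linear on $\SNWA$, and complementation routes back through two-way $\HAA$ (a linear conversion) before dualizing. Relatedly, your phrase ``projections stay local and are never reconciled across positions'' needs adjusting: each invocation's track is \emph{global} along the word (hence the backward copy); what requires no reconciliation is the choice of tracks across distinct \emph{invocations}. With the target class corrected to simple two-way nondeterministic automata --- only the final, top-level automaton collapses to a one-way B\"{u}chi $\NWA$ via $(w,0)$ --- your charging argument and the remainder of the plan, including the existential/universal asymmetry and the lower-bound encoding, match the paper's proof.
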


Here, we illustrate the upper bounds of Theorem~\ref{theorem:QPTLsatisfiability} (for the lower bounds see Appendix~\ref{APP:QPTLsatisfiability}).
In the   automata-theoretic approach for $\QPTL$ formulas $\varphi$ in prenex normal form, first, one converts the quantifier-free part $\psi$ of $\varphi$ into an equivalent B\"{u}chi nondeterministic automaton (B\"{u}chi $\NWA$) accepting $\Lang(\psi)$. Then, by using
the closure under language projection and complementation for B\"{u}chi $\NWA$, one obtains a   B\"{u}chi $\NWA$ accepting $\Lang(\varphi)$.  This approach does not work for unrestricted $\QPTL$ formulas $\varphi$, where quantifiers can occur in the scope of temporal modalities. In this case, for a   subformula $\varphi'$ of $\varphi$, we need to keep track of the full set $\Lang_p(\varphi')$ of pointed words $(w,i)$ satisfying $\varphi$, and not simply $\Lang(\varphi')$.

Thus, we need to use \emph{two-way} automata $\A$   accepting languages $\Lang_p(\A)$ of \emph{pointed words}. In particular, the proposed approach
is based on
a compositional translation of $\QPTL$ formulas into the so called class of
\emph{simple two-way  B\"{u}chi $($nondeterministic$)$ word automata}
(B\"{u}chi $\SNWA$). 
Essentially, given an input pointed word $(w,i)$,   a B\"{u}chi $\SNWA$, splits in two copies: the first one moves forwardly along the suffix $w[i,\infty]$ and the second one moves backwardly along the prefix $w[0,i]$ (see Appendix~\ref{APP:SNWAandHAA} for details).

Moreover, at each step of the translation into B\"{u}chi $\SNWA$, 
we use as an intermediate formalism, a two-way extension  of the class of one-way hesitant \emph{alternating} automata ($\HAA$, for short) over infinite words introduced in~\cite{KupfermanVW00}.
Like one-way $\HAA$,  the set of states $Q$ of a two-way $\HAA$   is partitioned into a set of components $Q_1,\ldots,Q_n$ such that moves from states in $Q_i$ lead to states in components $Q_j$ such that $j\leq i$. Moreover, each component is classified as either  negative, or B\"{u}chi,  or coB\"{u}chi:
in a negative (resp., B\"{u}chi/coB\"{u}chi) component $Q_i$, the unique allowed moves from $Q_i$ to $Q_i$ itself are backward (resp., forward). These syntactical requirements ensure that in a run     over a pointed word $(w,i)$, every infinite path  $\pi$ of the run  gets trapped in some B\"{u}chi or coB\"{u}chi component, and the path $\pi$ eventually use only forward moves. Moreover, the acceptance condition of a two-way $\HAA$ encodes a particular kind of parity condition of index 2:  a  B\"{u}chi/coB\"{u}chi component $Q_i$ has associated a subset $F_i\subseteq Q_i$ of accepting states. Then, a run is accepting if for every infinite path $\pi$, denoting with $Q_i$ the B\"{u}chi/coB\"{u}chi component in which $\pi$ get trapped, $\pi$ satisfies the
B\"{u}chi/coB\"{u}chi acceptance condition associated with $Q_i$. See Appendix~\ref{APP:SNWAandHAA} for a formal definition of two-way $\HAA$.

For two-way $\HAA$, we establish two crucial results. First, for a two-way $\HAA$ $\A$, the dual automaton $\widetilde{\A}$ obtained from $\A$ by dualizing the transition function, and by converting a B\"{u}chi (resp., coB\"{u}chi) component into a coB\"{u}chi (resp., B\"{u}chi) component is still a two-way $\HAA$. Thus, by standard arguments (see
e.g. \cite{Zielonka98}), we obtain the following.

\begin{lemma}[Complementation Lemma]
  The dual automaton $\dual{\A}$ of a two-way $\HAA$ $\A$ is a two-way $\HAA$ accepting the complement of
 $\Lang_p(\A)$.
\end{lemma}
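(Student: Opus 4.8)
The plan is to prove that the dual automaton $\dual{\A}$ of a two-way $\HAA$ $\A$ is again a two-way $\HAA$, and that it accepts exactly the complement (with respect to pointed words) of the language of $\A$. Since the statement explicitly tells us that $\dual{\A}$ is obtained by dualizing the transition function (swapping $\wedge$ and $\vee$, and $\top$ and $\bot$ in the positive-Boolean transition conditions) and by swapping the B\"{u}chi and coB\"{u}chi classification of each component, the work divides into two essentially independent parts: a \emph{syntactic} part showing $\dual{\A}$ satisfies the defining structural constraints of a two-way $\HAA$, and a \emph{semantic} part showing $\Lang_p(\dual{\A}) = (2^{\AP})^{\omega}\times\N \setminus \Lang_p(\A)$.

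First I would verify the syntactic closure. The partition of states into ordered components $Q_1,\ldots,Q_n$ with moves only to components $Q_j$ with $j\le i$ is untouched by dualization, since dualizing the transition function does not change which states appear in a transition, only the Boolean connectives combining them. The only genuine structural requirement to recheck concerns the self-moves within a component: a negative component must allow only backward self-moves, while B\"{u}chi and coB\"{u}chi components allow only forward self-moves. Dualizing the Boolean structure again preserves the set of target states of every self-move, so the forward/backward discipline is inherited verbatim; the negative components stay negative, and a component that was B\"{u}chi (forward self-moves only) remains a component with forward self-moves only after we relabel it coB\"{u}chi, and symmetrically. Hence $\dual{\A}$ meets every syntactic constraint and is a legitimate two-way $\HAA$, with the accepting sets $F_i$ carried over unchanged but reinterpreted under the dual (coB\"{u}chi vs.\ B\"{u}chi) acceptance condition.

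For the semantic part I would invoke the standard determinacy/duality argument for alternating automata on infinite objects. The acceptance of $\A$ on a pointed word $(w,i)$ is naturally cast as a two-player parity game (the ``membership game'') between the existential player, who resolves the $\vee$-choices in the transitions, and the universal player, who resolves the $\wedge$-choices; $(w,i)\in\Lang_p(\A)$ precisely when the existential player has a winning strategy. Dualizing the transition function exactly swaps the two players' roles, and swapping the B\"{u}chi/coB\"{u}chi classification complements the (index-2 parity) winning condition along every infinite path. Because the structural constraints of a two-way $\HAA$ force every infinite path of a run to eventually use only forward moves and to get trapped in a single B\"{u}chi or coB\"{u}chi component, the winning condition on each infinite path is a well-defined $\omega$-regular (indeed index-2 parity) condition whose dual is obtained exactly by the B\"{u}chi$\leftrightarrow$coB\"{u}chi swap. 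Since parity games are determined (see e.g.\ \cite{Zielonka98}), the membership game for $\dual{\A}$ on $(w,i)$ is won by the existential player iff the membership game for $\A$ on $(w,i)$ is won by the universal player, i.e.\ iff $(w,i)\notin\Lang_p(\A)$. This yields $\Lang_p(\dual{\A}) = (2^{\AP})^{\omega}\times\N\setminus\Lang_p(\A)$, as required.

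The main obstacle, and the point deserving the most care, is the determinacy step: one must confirm that the relevant acceptance condition is genuinely a parity (memoryless-determined) condition on each infinite path, so that the clean duality of strategies applies. This is exactly where the special ``hesitant'' structure of two-way $\HAA$ is used: the guarantee that every infinite path is eventually forward-only and trapped in one B\"{u}chi/coB\"{u}chi component is what reduces the path condition to an index-2 parity condition, for which determinacy and the complementation-by-dualization principle are classical. The possibility of backward moves, which might a priori threaten the finiteness-of-alternation argument underlying the game's $\omega$-regular winning condition, is precisely neutralized by the constraint that self-loops within B\"{u}chi/coB\"{u}chi components are forward, so no infinite path can oscillate forever via backward moves inside an accepting component. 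Once this is in place, the rest is the routine application of positional determinacy, and I would keep that part brief, deferring to \cite{Zielonka98} for the standard machinery.
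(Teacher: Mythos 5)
Your overall route---a syntactic check that dualization preserves the hesitant strata structure, followed by the standard game-duality/determinacy argument---is exactly the paper's route (the paper itself gives no more detail than ``by standard arguments, see e.g.\ Zielonka''). But your reconstruction of $\dual{\A}$ has a genuine gap: you never treat the \emph{backward acceptance condition} $F_-$. In the paper's definition, $\dual{\A}=\tpl{Q,q_0,\dual{\delta},Q\setminus F_-,\dual{\FamStratum}}$: besides dualizing $\delta$ and swapping B\"{u}chi/coB\"{u}chi strata, the set $F_-$ is \emph{complemented}. This is not optional. A run of a two-way $\HAA$ has, in addition to its infinite (eventually strictly-forward) paths, finite paths that fall off the left end of the word at vertices $(-1,q)$, and acceptance requires $q\in F_-$ for every such reachable vertex. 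In the membership game these are finite maximal plays, and complementation-by-dualization requires dualizing the winning condition on \emph{all} maximal plays, not only the infinite ones; your B\"{u}chi$\leftrightarrow$coB\"{u}chi swap handles the infinite paths but leaves the condition on the finite plays untouched. With $F_-$ unchanged the lemma is simply false: take $\A$ with a single negative stratum over $Q=\{q_0\}$, $\delta(q_0,\sigma)=(\leftarrow,q_0)$ for every $\sigma$, and $F_-=\emptyset$. Every pointed word $(w,i)$ has the unique run $(i,q_0),(i-1,q_0),\ldots,(-1,q_0)$, which is rejecting since $q_0\notin F_-$, so $\Lang_p(\A)=\emptyset$; dualizing the single atom $(\leftarrow,q_0)$ changes nothing, so your version of $\dual{\A}$ also accepts nothing, whereas the complement is the set of all pointed words. (With the backward condition replaced by $Q\setminus F_-=\{q_0\}$, as in the paper, $\dual{\A}$ correctly accepts everything.)

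Once you add the missing clause---$F_-$ becomes $Q\setminus F_-$, and a maximal finite play ending in a vertex $(-1,q)$ is won by the existential player iff $q$ lies in the automaton's backward condition---the rest of your argument goes through and matches the paper: dualizing $\delta$ swaps the two players' roles, the strata constraints guarantee that every infinite play is eventually forward-only and trapped in a single B\"{u}chi or coB\"{u}chi stratum, so its condition is an index-2 parity condition whose dual is exactly the B\"{u}chi$\leftrightarrow$coB\"{u}chi swap, and determinacy yields $\Lang_p(\dual{\A})=$ complement of $\Lang_p(\A)$. Your syntactic part is otherwise fine; for completeness you should also note that transient and negative strata (which carry $F_i=\emptyset$) are left unchanged by dualization, but that is immediate.
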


Second,
by a non-trivial variation of the   method used in \cite{DaxK08}
to convert parity two-way alternating word automata  into  equivalent B\"{u}chi $\NWA$, we obtain the following result.

\newcounter{theo-FromHAAtoSNPA}
\setcounter{theo-FromHAAtoSNPA}{\value{theorem}}

\begin{theorem}
  \label{theorem:FromHAAtoSNPA}
  For a two-way $\HAA$ $\A$ with $n$ states, one can construct ``on the fly'' and in singly
  exponential time a B\"{u}chi $\SNWA$  accepting $\Lang_p(\A)$ with
  $2^{O(n\cdot \log(n))}$ states.
\end{theorem}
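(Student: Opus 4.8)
The plan is to convert a two-way $\HAA$ $\A$ into an equivalent B\"uchi $\SNWA$ by adapting the Dax--Kissig removal of two-wayness \cite{DaxK08}, but respecting the additional hesitant structure so that the output is a genuine $\SNWA$ (i.e., splits into a forward copy on the suffix $w[i,\infty]$ and a backward copy on the prefix $w[0,i]$). First I would recall the run-tree/strategy view of an alternating two-way automaton over a pointed word: a run is an annotated tree whose nodes carry a state and a word-position, and acceptance is the index-2 parity condition induced by the $\HAA$ components. The classical way to eliminate backward moves is to summarize, for each state $q$ and each position, the set of ``return'' obligations, i.e.\ which states the automaton can be in when it first comes back (from the left or the right) to the boundary it departed from. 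Concretely I would introduce \emph{profiles} (finite relations over $Q$) that record, for a segment of the word, the down-closure of reachable return states together with the best (in the $\{\text{B\"uchi},\text{coB\"uchi}\}$ sense) component visited along the way; these profiles are composable and there are at most $2^{O(n\log n)}$ of them.

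The key technical steps, in order, are: (1) fix the boundary position $i$ of the input pointed word and observe that by the hesitancy constraints every infinite path of an accepting run eventually uses only forward moves and gets trapped in a single B\"uchi/coB\"uchi component $Q_j$; this is exactly what lets us separate the behaviour on the backward prefix from the behaviour on the forward suffix. (2) Build a one-way (forward) B\"uchi $\NWA$ $\A_{\text{fwd}}$ over the suffix by the standard ``forward guess, left-summary'' construction, where a macrostate is a set of pairs (state, pending left-profile) and left excursions are replaced by on-the-fly guessed profiles whose correctness is checked lazily; the acceptance condition tracks, via the component ranking $j$ and the flags $F_j$, whether the trapped component is a B\"uchi (infinitely often $F_j$) or a coB\"uchi (finitely often $\overline{F_j}$) component, which is encodable as a single B\"uchi condition after the usual index-2 bookkeeping. (3) Symmetrically build a backward B\"uchi $\NWA$ $\A_{\text{bwd}}$ over the prefix $w[0,i]$, reading it from position $i$ down to $0$, using right-profiles to summarize forward excursions that return to the boundary; since the prefix is \emph{finite}, its copy need only verify a safety/co-B\"uchi-on-finite-word condition, matching the $\SNWA$ requirement. (4) Glue the two copies at the boundary $i$ by requiring their guessed profiles at $i$ to be mutually consistent, and declare the resulting machine a B\"uchi $\SNWA$. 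The singly-exponential state bound $2^{O(n\log n)}$ comes from the profile sets and the usual parity-index-2 ranking, and ``on the fly'' follows because every macrostate and every transition is computable from local data without materializing the whole automaton.

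I would then argue correctness in two directions: soundness (every accepting run of $\A$ on $(w,i)$ induces consistent forward/backward profiles, hence an accepting split run of the $\SNWA$) and completeness (a consistent pair of profiles can be stitched into a genuine two-way run tree of $\A$, using the hesitancy property to guarantee that the reconstructed infinite paths get trapped correctly and satisfy the index-2 condition). The delicate point is the acceptance-preservation: profiles must carry not merely reachability but also the \emph{best component index and B\"uchi/coB\"uchi status} visited during each summarized excursion, so that the parity-2 condition of $\A$ is faithfully reflected by the B\"uchi condition of the $\SNWA$; this is where the Dax--Kissig argument needs its ``non-trivial variation,'' because their method targets general parity two-way automata whereas here I must additionally preserve the layered component structure that defines an $\SNWA$.

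The hard part will be precisely this correctness of the profile composition under the index-2 acceptance condition, in particular showing that summarizing an excursion by its optimal visited component does not lose or spuriously add accepting runs. Managing the interaction between infinitely many forward re-entries into a trapped B\"uchi component and the finitely many excursions allowed on the backward (finite) prefix is the main obstacle; handling it cleanly is exactly what forces the two-copy $\SNWA$ shape rather than a single two-way-to-one-way translation, and it is what keeps the blow-up at $2^{O(n\log n)}$ rather than a double exponential.
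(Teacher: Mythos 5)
Your overall architecture (two one-way copies glued at the boundary position) matches the target shape of a B\"uchi $\SNWA$, but the core of your construction has a genuine gap: you never actually remove the \emph{alternation}, and the phrase ``usual index-2 bookkeeping'' is precisely where the hard content lives. A subset-of-(state,profile) construction handles nondeterminism, but an accepting run of a two-way $\HAA$ is a DAG in which universal branching imposes the B\"uchi/coB\"uchi requirement on \emph{all} infinite paths simultaneously; a coB\"uchi obligation on all paths of an alternating run cannot be turned into a nondeterministic B\"uchi condition by tracking a ``best visited component''---it requires the Kupferman--Vardi odd-ranking machinery, and the universal B\"uchi obligations require a Miyano--Hayashi breakpoint set. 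The paper's proof is built exactly on these two tools: it characterizes $(w,\ell)\in\Lang_p(\A)$ by the existence of a per-position sequence of \emph{regions} $(R_i,O_i,f_i)$ (state set, breakpoint set, rank assignment) satisfying purely local consistency conditions, and the $\SNWA$ simply guesses this sequence, with the forward copy checking it on the suffix and the backward copy on the prefix down to a designated stop region at position $-1$ (which is also where your $F_-$ obligation is discharged). Note that the paper does \emph{not} summarize excursions at all: because consecutive regions constrain each other in both directions, backward moves are handled symmetrically by the same local checks, and no composition lemma for profiles is needed. Your excursion-summary route would additionally have to cope with the fact that, under alternation, an excursion into the prefix is a subtree (with branches returning at different times, branches dying at position $-1$ in $F_-$, and branches crossing into the suffix), not a single return path; your profile notion does not account for this. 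Incidentally, the acceptance data you worry about preserving is largely vacuous here: by the hesitancy requirements, within-stratum moves of B\"uchi/coB\"uchi strata are forward-only, so any backward excursion exits such a stratum permanently and finite excursions contribute nothing to $\Inf(\pi)$---a sign that the profile design is not aligned with the $\HAA$ acceptance structure.

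There is also a quantitative gap in your state counting. Profiles that are ``relations over $Q$'' number $2^{\Theta(n^2)}$, not $2^{O(n\log n)}$, and your macrostates---\emph{sets} of pairs (state, pending left-profile)---number $2^{n\cdot 2^{\Theta(n^2)}}$, i.e., the construction as written is doubly exponential, contradicting the theorem's bound. The $2^{O(n\cdot\log n)}$ figure comes specifically from rank \emph{functions} $f:Q\rightarrow[1,2n]$ (there are $(2n)^n = 2^{O(n\log n)}$ of them), combined with the two subsets $R$ and $O$; the $\SNWA$ state is then just a pair of adjacent regions. To repair your proof you would need to (i) replace the profile relations by a single per-position annotation shared by all active states (at which point you have essentially rediscovered the paper's region sequence), and (ii) make the coB\"uchi-to-B\"uchi conversion explicit via odd rankings rather than deferring it to bookkeeping.
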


The proof of  Theorem~\ref{theorem:FromHAAtoSNPA} is in Appendix~\ref{APP:FromHAAtoSNPA}. 
Finally, by using the complementation lemma for two-way $\HAA$ and Theorem~\ref{theorem:FromHAAtoSNPA},   we establish the following Theorem~\ref{theorem:FromQPTLtoSNWA} (whose proof is in Appendix~\ref{APP:FromQPTLtoSNWA}), from which
the upper bounds of Theorem~\ref{theorem:QPTLsatisfiability} directly follow (note that  B\"{u}chi $\SNWA$ $\A$ can be trivially converted into  B\"{u}chi $\NWA$ accepting the set of infinite words $w$ such that $(w,0)\in \Lang_p(\A)$, and for B\"{u}chi $\NWA$ checking nonemptiness is in \NLOGSPACE).
For a $\QPTL$ formula $\varphi$ in positive normal form, if there is a universal quantified subformula $\forall p.\, \psi$ of $\varphi$ such that
$\sad(\forall p.\, \psi)=\sad(\varphi)$,
 we say that $\varphi$ is a \emph{first-level universal} formula; otherwise, we say that $\varphi$ is a  \emph{first-level existential} formula.

\newcounter{theo-FromQPTLtoSNWA}
\setcounter{theo-FromQPTLtoSNWA}{\value{theorem}}

\begin{theorem}\label{theorem:FromQPTLtoSNWA}
Let $\varphi$ be a first-level existential (resp., first-level universal) $\QPTL$ formula  and $h=\sad(\varphi)$. Then, one can construct ``on the fly'' a B\"{u}chi $\SNWA$ $\A_\varphi$ accepting $\Lang_p(\varphi)$   in time $\Tower(h,O(|\varphi|))$ (resp., $\Tower(h+1,O(|\varphi|))$).
\end{theorem}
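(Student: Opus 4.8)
The plan is to build $\A_\varphi$ compositionally, following the positive normal form of $\varphi$ and using two-way $\HAA$ as the intermediate formalism, flattening to a Büchi $\SNWA$ (via Theorem~\ref{theorem:FromHAAtoSNPA}) exactly \emph{once per strong-alternation level}. The invariant I would maintain, by induction on $h=\sad(\psi)$ over subformulas $\psi$ in positive normal form, is: if $\psi$ is first-level existential I directly obtain a Büchi $\SNWA$ for $\Lang_p(\psi)$ of size $\Tower(h,O(|\psi|))$; if $\psi$ is first-level universal I obtain a two-way $\HAA$ for $\Lang_p(\psi)$ of size $\Tower(h,O(|\psi|))$, and one last application of Theorem~\ref{theorem:FromHAAtoSNPA} yields the $\SNWA$ of size $\Tower(h+1,O(|\psi|))$. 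The whole pipeline is performed ``on the fly'': atoms, Boolean connectives, temporal-modality additions, dualizations (Complementation Lemma) and projections are all lazy, and the single expensive step per level, Theorem~\ref{theorem:FromHAAtoSNPA}, is itself on the fly.

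The base case and the treatment of a single level are where the definition of strong alternation length is mirrored by the $\HAA$ machinery. Within one level I build a single $\HAA$ by structural recursion: an atom $p$ (resp. $\neg p$) is the obvious one-state test at the current position; $\wedge/\vee$ take the alternating product/union; the future modalities $\Next,\Until,\Release$ introduce states in Büchi (for the eventualities $\Until,\Eventually$) or coBüchi (for the safety operators $\Release,\Always$) components whose self-loops are forward, while the past modalities $\PNext,\PUntil,\PRelease$ introduce states in \emph{negative} components with backward self-loops. Arbitrary nesting of such operators merely adds states inside the existing hesitant partition, so it never forces a flattening --- precisely the clause of the strong alternation length that collapses a chain of temporal modalities (and of same-polarity quantifier chains) to one. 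The leaves of this recursion are the maximal subformulas at which the strong alternation length drops, which have the opposite first-level polarity and $\sad\leq h-1$; by the induction hypothesis they are delivered, in the first-level-existential case, as $\HAA$ of size at most $\Tower(h-1,O(|\psi|))$, and in the first-level-universal case as $\SNWA$ of the same size, which are re-read as $\HAA$. In both cases the level-block is assembled as one $\HAA$ of size at most $\Tower(h-1,O(|\psi|))$.

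The inductive step discharges the level's quantifiers with a single exponential. For the existential case I keep the block's existentially quantified propositions in the input alphabet, flatten the block $\HAA$ to a nondeterministic $\SNWA$ through Theorem~\ref{theorem:FromHAAtoSNPA} (one exponential, giving $\Tower(h,O(|\psi|))$), and then project those propositions away at once, which is free on a nondeterministic automaton; this yields the $\SNWA$ for a first-level existential $\psi$ directly. For the universal case I use the duality $\forall p.\chi \equiv \neg\exists p.\neg\chi$: I dualize the block $\HAA$ (free, by the Complementation Lemma), flatten and project exactly as above, and dualize the resulting $\SNWA$ once more as an $\HAA$ (again free), obtaining a two-way $\HAA$ for $\psi$ of size $\Tower(h,O(|\psi|))$. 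This explains the asymmetry: an existential top block ends in a projection and is therefore \emph{already} an $\SNWA$ (height $h$), whereas a universal top block ends in a dualization and is an $\HAA$, so it needs one extra conversion to an $\SNWA$ (height $h+1$). The general shorthands $\GForall$ and $\Eventually$/$\Always$ are handled by this same scheme, since they reduce to the primitive operators.

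The step I expect to be the main obstacle is proving soundness of discharging an \emph{entire} same-polarity block with a single flatten-and-project. Semantically, an $\exists p$ nested under temporal modalities re-chooses the $p$-labelling \emph{locally} at the position where it is evaluated, so a priori distinct evaluation positions could demand distinct words. The point to establish is that, because all these quantifiers sit at the same (existential) polarity, their local re-choices merge into a \emph{single} global guess carried by the flattened nondeterministic $\SNWA$: distinct positions impose independent constraints on one letter each, and one input word records them all simultaneously. This is exactly what fails when the polarity alternates --- conflicting universal demands at a single position cannot be merged --- which is why a genuine alternation forces a dualization and a fresh flattening, and why $\sad$ (collapsing temporal nesting and same-polarity quantifier chains to one) is the right parameter. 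Technically I must also verify that flattening, dualization and projection preserve the hesitant negative/Büchi/coBüchi partition over pointed words $(w,i)$, so that the output is again a genuine two-way $\HAA$/$\SNWA$ and Theorem~\ref{theorem:FromHAAtoSNPA} remains applicable at the next level; the size bounds then follow by unwinding the induction, the tower heights $h$ and $h+1$ being exactly the number of flattenings performed in the two cases.
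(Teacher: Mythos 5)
Your overall pipeline is the paper's: two-way $\HAA$ as the intermediate formalism, one application of Theorem~\ref{theorem:FromHAAtoSNPA} per level, the linear $\SNWA$-to-$\HAA$ embedding plus the Complementation Lemma to discharge universal quantifiers, projection on the nondeterministic side for existential ones, and the same accounting of the $h$ versus $h+1$ asymmetry. The genuine gap is exactly the step you flag as the main obstacle: discharging an \emph{entire} same-polarity block with a single flatten-and-project is unsound, and your merging argument (``distinct positions impose independent constraints on one letter each'') fails for the blocks that $\sad$ actually creates. Because consecutive temporal modalities never increase the strong alternation length (first clause in the definition of $\ell$), a first-level existential formula can contain a same-level $\exists p$ sitting \emph{below an operator with universal force}: for instance $\sad(\Always\Eventually\exists p.\,\chi)=\sad(\exists p.\,\chi)$, yet $\Always\Eventually\exists p.\,\chi\not\equiv\exists p.\,\Always\Eventually\chi$ --- the subformula $\exists p.\,\chi$ is evaluated at infinitely many positions, each entitled to its own relabelling, so one global guess on the input word is strictly weaker (with $\chi:=\neg p\wedge\PNext\PAlways\, p$ the left formula is valid while the right one is unsatisfiable; the same failure arises for $(\Eventually\exists p.\,\chi_1)\Until\chi_2$). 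Quantifier extrusion is valid only when every operator between the block root and the deferred quantifier lies in $\{\Eventually,\PEventually,\Next,\PNext,\vee,\exists\}$, which your block decomposition does not guarantee. (A smaller inaccuracy: the leaves where $\sad$ drops need not have opposite polarity --- an $\exists$ under the right argument of an $\Until$ drops the level without flipping polarity --- but this is harmless since your induction covers both polarities.)

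The paper never commutes a quantifier across its temporal context. Each quantified subformula is compiled into an automaton \emph{at its own position in the formula}, with the projection performed right there; quantified subformulas of strictly smaller $\sad$ are then treated as atomic pointed-language tests plugged (via the $\SNWA$-to-$\HAA$ embedding and, for universal ones, dualization) into the $\HAA$ built for the $\PLTL$ skeleton. For the one configuration in which an existential subformula shares the top level --- under an $\Eventually/\PEventually$ root, which is the only shape the paper's Case~1 admits --- the paper proves a dedicated claim that is the idea missing from your proposal: from a B\"{u}chi $\SNWA$ $\A$ one can build \emph{in linear time} $\SNWA$ $\A_+$ and $\A_-$ accepting $\{(w,i)\mid \text{for some }j\geq i,\ (w,j)\in\Lang_p(\A)\}$ and its past counterpart, by guessing the state of the backward copy at the current position, simulating the backward run in forward mode up to a guessed switch point, and then behaving as $\A$. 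This $\Eventually$-closure is applied to the \emph{already projected} automaton, so $\Eventually\exists p$ chains are handled without re-flattening and without deferring any projection. To repair your proof you would need this lemma (or restrict your merge to purely $\{\Eventually,\PEventually,\Next,\PNext,\vee,\exists\}$-contexts and, like the paper, argue that mixed configurations such as $\Always\Eventually\exists p$ do not occur at the top level); as written, your construction outputs an automaton for the wrong language on instances covered by the theorem.
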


\noindent \textbf{Optimal bounds for model-checking of $\MHCTLStar$.} 
By giving linear-time reductions from/to satisfiability of $\QPTL$ and by exploiting Theorem~\ref{theorem:QPTLsatisfiability}, we provide   optimal bounds on the complexity of the finite-state model-checking problem of $\MHCTLStar$ in terms of the \emph{strong alternation depth}
of a $\MHCTLStar$ formula, which is defined as the homonym notion for $\QPTL$ formulas. In particular, the linear time reduction to satisfiability of $\QPTL$ generalizes the one given in~\cite{ClarksonFKMRS14} for the   model checking of $\HCTLStar$ (for details, see Appendix~\ref{APP:MHCTLModelChecking}).


\newcounter{theo-MHCTLModelChecking}
\setcounter{theo-MHCTLModelChecking}{\value{theorem}}

\begin{theorem}\label{theorem:MHCTLModelChecking}
For all $h\geq 1$ and $\MHCTLStar$ sentences $\varphi$ with strong alternation depth at most $h$, model-checking against $\varphi$ is  $h$-\EXPSPACE-complete, and  $(h-1)$-\EXPSPACE-complete in case $\varphi$ is \emph{existential} (even if the allowed temporal modalities are in $\{\Next,\PNext,\Eventually,\PEventually,\Always,\PAlways\}$).
\end{theorem}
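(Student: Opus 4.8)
The plan is to establish both bounds through linear-time reductions between model-checking of $\MHCTLStar$ and satisfiability of full $\QPTL$ that preserve the strong alternation depth (and the existential/restricted-modality shape), and then to invoke Theorem~\ref{theorem:QPTLsatisfiability}. Throughout I use the tree-model property $K\models\varphi$ iff $\textit{Unw}(K)\models\varphi$, so that path quantifiers range over initial paths of the computation tree.

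For the upper bound I would reduce model-checking to $\QPTL$-satisfiability, generalizing the reduction of~\cite{ClarksonFKMRS14} to handle past modalities and the general quantifier $\GExists$. Fix a finite $K=\tpl{S,s_0,E,V}$ and encode each state of $S$ by a fixed block of fresh propositions; to each path variable $x$ I associate its own such block $P_x$, so that a valuation of an infinite word over these propositions records, position by position, the state traversed by the path bound to $x$. Then I translate: an atomic formula $p[x]$ into a purely Boolean formula over $P_x$ (asserting that the state encoded at the current position has $p$ in its $V$-label), each temporal modality into itself, a general quantifier $\GExists x.\varphi$ into $\exists P_x.\,(\mathit{init}_x\wedge\mathit{valid}_x\wedge\tau(\varphi))$, and a regular quantifier $\exists x.\varphi$ into $\exists P_x.\,(\mathit{init}_x\wedge\mathit{valid}_x\wedge\mathit{agree}_{x,y}\wedge\tau(\varphi))$, where $y$ is the (syntactically tracked) current variable. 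Here $\mathit{init}_x$ and $\mathit{valid}_x$ are $\PAlways/\Always$-formulas forcing $P_x$ to encode an initial path of $K$, and $\mathit{agree}_{x,y}:=\PAlways(\bigwedge(\text{state of }x\leftrightarrow\text{state of }y))$ enforces the memoryful constraint $\pi[0,i]=\Pi(y)[0,i]$; dropping $\mathit{agree}_{x,y}$ is exactly what distinguishes $\GExists$ from $\exists$. The resulting $\QPTL$ formula is satisfiable iff $K\models\varphi$, it is produced in linear time, and — the crucial point — it has the same strong alternation depth as $\varphi$.

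For the lower bound I would reduce $\QPTL$-satisfiability to model-checking. I take the fixed two-state Kripke structure over $2^{\{q\}}$ with a complete transition relation, whose initial paths realize exactly all infinite $0/1$ sequences. Given a closed $\QPTL$ formula $\phi$ over $\AP=\{p_1,\dots,p_m\}$ (closing open instances by a prefix of existential quantifiers preserves satisfiability), I replace each atom $p_j$ by $q[x_{p_j}]$, each temporal modality by itself, and each propositional quantifier $\exists p_j$ (resp.\ $\forall p_j$) by the path quantifier $\GExists x_{p_j}$ (resp.\ $\GForall x_{p_j}$); picking an arbitrary path for $x_{p_j}$ is precisely re-choosing the $0/1$-valuation of $p_j$ along the word. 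This is a linear-time translation in which $\sad$ is preserved verbatim, since quantifiers map to quantifiers of the same polarity and temporal operators map to themselves. Combining the two reductions with Theorem~\ref{theorem:QPTLsatisfiability} yields $h$-\EXPSPACE-completeness, and the existential clause follows because an existential $\MHCTLStar$ formula using only modalities in $\{\Next,\PNext,\Eventually,\PEventually,\Always,\PAlways\}$ translates to, and is obtained from, an existential $\QPTL$ formula over the same restricted modality set.

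The hard part will be verifying that the forward translation preserves the strong alternation depth. The danger is that the auxiliary formulas $\mathit{init}_x,\mathit{valid}_x,\mathit{agree}_{x,y}$ introduce $\Always,\PAlways$ (and, after putting universal path quantifiers into positive normal form, $\Eventually,\PEventually$) that could spuriously count as alternations. The key observations I would formalize are: these auxiliary formulas contain no further quantifiers, so each contributes strong alternation length at most one and is dominated; and their temporal polarity is aligned with the quantifier that introduces them — an existential block $\exists P_x$ is followed by a conjoined $\Always/\PAlways$-constraint, and a universal block $\forall P_x$ by a disjoined $\Eventually/\PEventually$-constraint. By the tuning of the definition of $\ell(\cdot)$ — where $\exists,\Eventually,\PEventually$ form one family and $\forall,\Always,\PAlways$ the other, and passing from a quantifier to a temporal modality never increases the length — these constraints are absorbed, so that only the genuine alternations between $\exists$- and $\forall$-quantifiers of the original formula are counted. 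This is exactly the point at which the nonstandard measure $\sad$, rather than ordinary quantifier alternation, is forced upon us.
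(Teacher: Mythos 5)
Your upper bound is, in substance, the paper's own proof: one block of fresh propositions per path variable encoding both the state and the $\AP$-labels, a $\PAlways/\Always$-constraint forcing each block to trace an initial path of $K$, a $\PAlways$-agreement conjunct for the regular quantifier $\exists x$ whose omission is exactly what implements $\GExists$, and the same polarity-alignment argument (quantifier-free auxiliary constraints, $\exists$ grouped with $\Eventually,\PEventually$ and $\forall$ with $\Always,\PAlways$) showing that $\sad$ is preserved. That half is fine.

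The lower bound is where you genuinely diverge from the paper, and it contains a concrete flaw. In this paper a Kripke structure has a single fixed initial state $s_0$, so your two-state complete structure over $2^{\{q\}}$ does \emph{not} have ``exactly all infinite $0/1$ sequences'' as traces of initial paths: the bit at position $0$ is pinned to $V(s_0)$ on every path. As stated the reduction is therefore unsound: if the root is labelled $\{q\}$, the satisfiable sentence $\exists p.\,\neg p$ translates to $\GExists x_p.\,\neg q[x_p]$, which fails in your model; if the root is labelled $\emptyset$, then $\exists p.\,p$ breaks dually. The standard repair --- a dummy root state with a one-position shift --- then invalidates your other key claim, that ``each temporal modality maps to itself'': the past modalities $\PNext,\PUntil,\PAlways,\PEventually$ can now reach the dummy position, violating $\QPTL$'s boundary semantics at word position $0$ (e.g., $(w,0)\models\PNext\psi$ is false, whereas the shifted path position $1$ does have a predecessor). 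The past modalities must be relativized --- position $0$ is definable by $\neg\PNext\top$, so one can set $\PNext\psi\mapsto \PNext(\PNext\top\wedge\psi')$, $\psi_1\PUntil\psi_2\mapsto \psi_1'\,\PUntil\,(\psi_2'\wedge\PNext\top)$, and analogously for $\PAlways,\PEventually$ --- and a top-level dummy $\GExists$ (with one $\Next$ to reach the shifted origin) is needed so that the result is a $\MHCTLStar$ \emph{sentence} when the $\QPTL$ sentence has outermost temporal modalities. These guards are quantifier-free, stay within $\{\Next,\PNext,\Eventually,\PEventually,\Always,\PAlways\}$, and do not affect $\sad$, so the reduction is repairable without new ideas.

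For comparison, the paper sidesteps the boundary issue differently: it keeps a single path variable per ``copy'' of the word, stretches each letter of a $2^{\AP}$-word into an $(n+1)$-block over a $\pTag$-marked structure $K_\AP$ of size linear in $|\AP|$, relativizes $\Until$ and $\PUntil$ to the $\pTag$-positions, and simulates $\exists p_k$ by $\GExists x_k$ constrained via $\PEventually((\neg\PNext\top)\wedge\Always\cdots)$ to agree with the current path on all propositions except $p_k$. Your one-path-variable-per-proposition scheme eliminates those agreement constraints and uses a constant-size structure, which --- once the initial-state and past-boundary problems above are fixed --- is a genuinely simpler route to the same hardness result.
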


\section{Discussion}
\label{sec-conclusion}

We plan to extend this work in many directions. We expand a few.
First, we  intend to   identify
tractable fragments of
$\MHCTLStar$ and to investigate their synthesis problem; note that
satisfiability of $\HCTLStar$ is already undecidable
\cite{ClarksonFKMRS14}. Second, we should extend the proposed
framework in order to deal with asynchronicity, as this would allow us
to considering more realistic information-flow security
requirements. 
In the same line, we would like to investigate the possibility of extending the verification 
of flow-information requirements to relevant classes of infinite-state systems such as the class of pushdown systems,
a model extensively investigated in
  software verification. 

\putbib
\end{bibunit}


\newpage

\appendix
\begin{bibunit}

\begin{LARGE}
  \noindent\textbf{Appendix}
\end{LARGE}

\newcounter{aux}

\section{Proofs from Section~\ref{sec:ExpressivenessIssues}}\label{APP:ExpressivenessIssues}

\subsection{Proof of Theorem~\ref{theorem:MainTheoremExpressivenessHyper}}\label{APP:MainTheoremExpressivenessHyper}

In this Subsection, we prove the following result, where for the fixed $n>1$, $K_n$ and $M_n$ are the regular tree structures
over $2^{\{p\}}$ defined in Subsection~\ref{SubSec:FromHyperToKnoweldge}.

\setcounter{aux}{\value{theorem}}
\setcounter{theorem}{\value{theo-MainTheoremExpressivenessHyper}}

\begin{theorem}
Let $\psi$ be a \emph{balanced} $\KCTLStarS$ sentence such that $|\psi|<n$. Then, for all observation maps
 $\Obs$,
\[
   (K_n,\Obs)\models \psi \Leftrightarrow (M_n,\Obs)\models \psi
\]
\end{theorem}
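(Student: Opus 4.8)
The plan is to reduce the sentence-level equivalence to a statement about the satisfaction relation at low positions, and then to establish the genuinely hard part by an induction governed by a family of equivalence relations indexed by formula size. Since $\psi$ is a sentence, $(K_n,\Obs)\models\psi$ holds iff $\pi,0\models_{K_n,\Obs}\psi$ for any initial path $\pi$, and likewise for $M_n$; as $0\in[0,\ell_n]$, it suffices to prove the generalized claim that for every subformula $\varphi$ of $\psi$, every initial path $\pi$, and every position $i\in[0,\ell_n]$ we have $\pi,i\models_{K_n,\Obs}\varphi\iff\pi,i\models_{M_n,\Obs}\varphi$. I would prove this by induction on $\varphi$. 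The key structural fact is that $K_n$ and $M_n$ are identical except at the single node $\pi(\xi_n)(\ell_n+1+n)$, whose distance from the root exceeds $\ell_n$; consequently all atomic labels and all $\Obsequiva$-equivalence classes of prefixes of length at most $\ell_n+1$ coincide in the two structures. Thus the Boolean, $\exists$, and $\Know_\agent$ cases at positions $i\le\ell_n$ reduce directly to the induction hypothesis (the quantified sets of paths being determined by prefixes up to $i\le\ell_n$, hence shared by both structures), and the only way a formula can witness the modification is by a temporal modality $\Next$ or $\Until$ pushing the evaluation into the boundary region of positions in $[\ell_n+1,\ell_n+2n]$.

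The core of the argument is therefore a main lemma handling that boundary region. First I would record the elementary observation that in $M_n$ the traces of $\pi(\xi_n)[\ell_n+1,\infty]$ and $\pi(\xi_{n-1})[\ell_n+1,\infty]$ coincide, so on the $M_n$ side the suffixes rooted at these two nodes are locally indistinguishable for free, independently of $|\psi|$. On the $K_n$ side the two suffixes first differ at relative offset $n$ (where $\pi(\xi_n)$ carries $\{p\}$ while $\pi(\xi_{n-1})$ carries $\emptyset$), and the content of the theorem is that a balanced formula of size $<n$ cannot detect a difference located $n$ steps ahead. To make this precise I would define, for each budget $h\in[1,n]$, an equivalence relation $\approx_h$ on the nodes at distances in $[\ell_n+1,\ell_n+2n]$, designed so that $\pi(\xi_n)(\ell_n+1)\approx_{n-1}\pi(\xi_{n-1})(\ell_n+1)$ and so that $\approx_h$ is preserved by the modalities when the budget is decremented appropriately. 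Balancedness is used exactly here: requiring $|\psi_1|=|\psi_2|$ for every $\psi_1\Until\psi_2$ guarantees that descending into either side of an until genuinely lowers the budget $h$ in step with the geometry of the $\emptyset^{2n-k}\{p\}^{k}$ patterns, so that size faithfully bounds the resolving power of the formula.

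I would then chain the two observations to bridge $K_n$ and $M_n$: for a subformula $\varphi$ of size $h\le n-1$ evaluated at $\pi(\xi_n)$ in the boundary region, satisfaction in $M_n$ equals satisfaction at $\pi(\xi_{n-1})$ in $M_n$ (identical suffix), which equals satisfaction at $\pi(\xi_{n-1})$ in $K_n$ (this path is untouched by the modification), which by $\approx_h$ equals satisfaction at $\pi(\xi_n)$ in $K_n$; every other initial path is literally identical in the two structures. This transfer, combined with the induction hypothesis for positions $\le\ell_n$, closes the induction and yields the theorem.

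The step I expect to be the main obstacle is the $\Know_\agent$ case inside the boundary region, because knowledge is a global operator: at a position $d>\ell_n$ it quantifies over all initial paths whose $\Obsequiva$-prefix up to $d$ matches, and such a prefix can straddle the modified node. Here one cannot treat the two suffixes in isolation, so $\approx_h$ must be shown to be a congruence for $\Know_\agent$ simultaneously for every observation map. Since $\AP=\{p\}$, each agent satisfies either $\Obs(\agent)=\emptyset$, so that $\Know_\agent$ collapses to universal quantification over all $2n+1$ initial paths, or $\Obs(\agent)=\{p\}$, so that $\Know_\agent$ restricts to paths with an identical trace prefix; I would verify the congruence by a finite case analysis over these two regimes, using that the $2n+1$ initial paths are fixed and that the modification affects the observable prefix of only $\pi(\xi_n)$, whose suffix behaviour already mimics that of $\pi(\xi_{n-1})$.
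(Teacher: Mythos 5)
Your proposal is correct and follows essentially the same route as the paper's proof: the same reduction to positions in $[0,\ell_n]$, the same budget-indexed equivalence relations on the boundary (main) nodes — the paper's relations $R(h)$, with the key compatibility of $\pi(\xi_n)(\ell_n+1)$ and $\pi(\xi_{n-1})(\ell_n+1)$ at budget $n-1$ — the same $K_n$-indistinguishability lemma in which balancedness is used precisely to halve the budget at each until, and the same three-way bridge through $\pi(\xi_{n-1})$ using the suffix-trace coincidence of $\pi(\xi_n)$ and $\pi(\xi_{n-1})$ in $M_n$, with the knowledge case split on $p\in\Obs(\agent)$ versus $p\notin\Obs(\agent)$ exactly as in the paper (your middle link ``$K_n$ at $\pi(\xi_{n-1})$ equals $M_n$ at $\pi(\xi_{n-1})$'' is, as you anticipate, not immediate from the labels being untouched but is proved by the same simultaneous induction, since $\Know_\agent$ can see the modified path). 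The one ingredient your sketch does not name is that the converse knowledge direction also uses the coincidence, in $K_n$, of the traces of $\pi(\xi_n)[i,\infty]$ and $\pi(\eta)[i,\infty]$ for $i\geq \ell_n+1$ — a fact your two-regime case analysis would naturally uncover.
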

\setcounter{theorem}{\value{aux}}

In order to prove Theorem~\ref{theorem:MainTheoremExpressivenessHyper}, first, we give some definitions and preliminary results which capture some crucial properties of the regular tree structure $K_n$.

Recall that the sets of nodes of the regular tree structures $K_n$ and
$M_n$ coincide. Thus, in the following, for node, we mean a node of
$K_n$ (or, equivalently, $M_n$).  We denote by $\preceq$ the partial
order over the set of nodes defined as: $\tau \preceq \tau'$ iff there
is path from $\tau$ visiting $\tau'$. We write $\tau \prec \tau'$ to
mean that $\tau \preceq \tau'$ and $\tau\neq \tau'$. For nodes $\tau$
and $\tau'$, $\Nodes(\tau,\tau')$ denotes the set of nodes $\tau''$
such that $\tau \preceq \tau''\preceq \tau'$. A \emph{descendant} of a
node $\tau$ is a node $\tau'$ such that $\tau'\succeq \tau$.  By
construction of $K_n$ and $M_n$, for each non-root node $\tau$, there
is a unique initial path visiting $\tau$. Such a path will be denoted
by $\pi(\tau)$.  In particular, $\tau$ has a unique successor which is
denoted by $\SUCC(\tau)$.  For all observation maps $\Obs$,
$\KCTLStarS$ formulas $\psi$, and non-root nodes $\tau$, we write
$\tau\models_{K_n,\Obs}\psi$ (resp., $\tau\models_{M_n,\Obs}\psi$) to
mean that $\pi(\tau),|\tau|\models_{K_n,\Obs}\psi$ (resp.,
$\pi(\tau),|\tau|\models_{M_n,\Obs}\psi$). Recall that $|\tau|$ is the
distance of $\tau$ from the root.

Given an observation map $\Obs$ and an agent $\agent$, we say that two
nodes $\tau$ and $\tau'$ are $\Obs_\agent$-equivalent in $K_n$ (resp.,
$M_n$) if the traces of the unique finite paths from the root to
$\tau$ and $\tau'$, respectively, are $\Obs_\agent$-equivalent.

\begin{definition}[Main nodes]

A \emph{main position} is a position in $[\ell_n+1,\ell_n+2n]$.\footnote{Recall that $\ell_n$ is the common length of the words $w_0,w_1,\ldots,w_{2n}$ labeling $\pi(\eta)[1,\ell_n]$, $\pi(\xi_1)[1,\ell_n],\ldots, \pi(\xi_{2n})[1,\ell_n]$, respectively.} A \emph{main node} is a non-root node $\tau$ such that $\pi(\tau)$ visits $\tau$ at a main position (i.e., the distance $|\tau|$ of $\tau$ from the root is in $[\ell_n+1,\ell_n+2n]$).\footnote{See Fig.~\ref{fig-RegularTreeKNForHyper} for clarity.}   A \emph{main $p$-node} (resp., \emph{main $\emptyset$-node})  is a main node whose label in $K_n$ is $\{p\}$ (resp., $\emptyset$).
 For a $\emptyset$-main node $\tau$, we denote by
  $p(\tau)$ the \emph{smaller descendant} $\tau'$ of $\tau$ in $K_n$ (with respect to the partial order $\preceq$) such that $\tau'$ is a $p$-main node. Note that by construction $p(\tau)$ is always defined. Moreover, for a main node $\tau$, let $D(\tau)$ be the number of descendants of $\tau$ which are main nodes.
  The \emph{order of a $\emptyset$-main node $\tau$} is the number of   descendants of $\tau$ in $K_n$ which are
$\emptyset$-main nodes.
\end{definition}

Since the traces of $\pi(\eta)[1,\ell_n]$, $\pi(\xi_1)[1,\ell_n],\ldots, \pi(\xi_{2n})[1,\ell_n]$ are distinct, and the labels of $K_n$ and $M_n$ are in $2^{\{p\}}$, by construction, the following holds.

\begin{remark}\label{remark:ObservationMapsExpressivenessHyper}
For all observation maps $\Obs$ and agents $\agent$ such that $p\in \Obs(\agent)$, two main nodes $\tau$ and $\tau'$ are
$\Obs_\agent$-equivalent in $K_n$ (resp., $M_n$) iff $\tau=\tau'$.
\end{remark}

 Now, for each $h\in [1,n]$, we introduce the crucial notion of $h$-compatibility  between main nodes.  Intuitively, this notion allows to capture the properties
 which make two main nodes  indistinguishable from balanced $\KCTLStarS$ formulas of size at most $h$ when evaluated on the regular tree structure $K_n$.

\begin{definition}[$h$-Compatibility]Let $h\in [1,n]$. Two main nodes $\tau$ and $\tau'$ are \emph{$h$-compatible} if one of the following holds:
\begin{compactitem}
  \item $\tau$ and $\tau'$ are $p$-main nodes, and \emph{either} $D(\tau)=D(\tau')$, \emph{or} $D(\tau)\geq 2h$, $D(\tau')\geq 2h$, and
  $|D(\tau)-D(\tau')|=1$;
  \item  $\tau$ and $\tau'$ are $\emptyset$-main nodes, and one of the following holds,  where $o(\tau)$ and $(\tau')$ are the orders of $\tau$ and $\tau'$:
  \begin{compactitem}
  \item $o(\tau)=o(\tau')$ and $D(\tau)=D(\tau')$;
  \item $o(\tau)=o(\tau')$, $D(p(\tau))\geq h$,  $D(p(\tau'))\geq h$, $D(\tau)\geq 2h$, $D(\tau')\geq 2h$, $|D(\tau)-D(\tau')|=1$;
  \item  $o(\tau)\geq h$, $ o(\tau')\geq h$, $|o(\tau)-o(\tau')|=1$, $D(\tau)\geq 2h$, $D(\tau')\geq 2h$, $|D(\tau)-D(\tau')|=1$.
  \end{compactitem}
\end{compactitem}
We denote by $R(h)$ the  binary relation over the set of main nodes such that $(\tau,\tau')\in R(h)$ iff $\tau$ and $\tau'$ are
 $h$-compatible.
\end{definition}

\begin{remark}For all $h\in [1,n]$, $R(h)$ is an equivalence relation.
\end{remark}

The following two Propositions~\ref{prop:StructuralPropertiesRkOne} and~\ref{prop:StructuralPropertiesRkTwo} capture some crucial properties of the equivalence relation $R(h)$. They are used in the next Lemma~\ref{lemma:mainLemmaExpressivenessHyper}, two show that two $h$-compatible main nodes are indistinguishable from balanced $\KCTLStarS$ formulas of size at most $h$ when evaluated on the regular tree structure $K_n$.

\begin{proposition}\label{prop:StructuralPropertiesRkOne} Let $h\in [2,n ]$,  $(\tau,\tau')\in R(h)$, and $\Obs$ be an observation map.  Then, for all agents $\agent$ and nodes $\tau_1$ such that  $\tau$ and $\tau_1$ are $\Obs_\agent$-equivalent in $K_n$, there exists a node $\tau'_1$ such that
  $\tau'$ and $\tau'_1$ are $\Obs_\agent$-equivalent in $K_n$ and $(\tau_1,\tau'_1)\in R(h-1)$.
\end{proposition}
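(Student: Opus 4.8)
The plan is to split on whether the agent $\agent$ observes $p$, and then to reduce the hard case to a finite combinatorial check on main nodes lying at two adjacent levels. Two preliminary observations organize everything. First, \emph{compatibility is monotone}, i.e.\ $R(h)\subseteq R(h-1)$: each threshold appearing in the definition of $h$-compatibility ($D\geq 2h$, $D(p)\geq h$, $o\geq h$) entails the corresponding threshold for $h-1$, while the threshold-free clauses (``$D$ equal'', ``$o$ equal'') are unaffected. Second, \emph{$D$ is a function of the level}: for a main node $\tau$ at level $\ell_n+s$ (with $s\in[1,2n]$), the unique path through $\tau$ contains exactly one main node at each of the levels $\ell_n+s,\dots,\ell_n+2n$, so $D(\tau)=2n+1-s$; in particular all main nodes at a common level share the same value of $D$. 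Using the staircase shape of the traces (path $\pi(\xi_k)$ carries $\emptyset^{2n-k}\{p\}^{k}$ over the main region, and $\pi(\eta)$ carries $\emptyset^{n}\{p\}^{n}$), the $\emptyset$-main nodes at level $\ell_n+s$ realize exactly the orders $o\in[1,2n-s]$, each satisfying $D(p(\tau))=D(\tau)-o(\tau)$, whereas $\pi(\xi_{2n})$ is a $p$-main node at \emph{every} main level.

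If $p\in\Obs(\agent)$, then since $\Obs_\agent$-equivalent traces have equal length, $\tau_1$ sits at the level of $\tau$ and is therefore a main node; Remark~\ref{remark:ObservationMapsExpressivenessHyper} forces $\tau_1=\tau$, so I take $\tau'_1:=\tau'$ and conclude from $(\tau,\tau')\in R(h)\subseteq R(h-1)$. If $p\notin\Obs(\agent)$, then as all labels lie in $2^{\{p\}}$ every $\Obs_\agent$-observed trace is a block of empty sets, so $\Obs_\agent$-equivalence coincides with being at the same level. Thus $\tau_1$ ranges over all main nodes at the level of $\tau$, and I must exhibit $\tau'_1$ at the level of $\tau'$ with $(\tau_1,\tau'_1)\in R(h-1)$. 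When $D(\tau)=D(\tau')$ (equal levels) I simply take $\tau'_1:=\tau_1$ and invoke reflexivity of $R(h-1)$.

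The decisive case is $|D(\tau)-D(\tau')|=1$, i.e.\ two adjacent levels $\ell_n+s$ and $\ell_n+s'$. Here I first note that, whichever clause witnesses $(\tau,\tau')\in R(h)$ at adjacent levels, it forces $D(\tau),D(\tau')\geq 2h$, hence both order-ranges $[1,2n-s]$ and $[1,2n-s']$ have length at least $2h-1$. I then branch on the type of $\tau_1$. If $\tau_1$ is $p$-main I let $\tau'_1$ be any $p$-main node at level $\ell_n+s'$ (which exists), and the bounds $D(\tau_1),D(\tau'_1)\geq 2h\geq 2(h-1)$ yield $p$-main $(h-1)$-compatibility. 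If $\tau_1$ is $\emptyset$-main of order $o$, I look for $\tau'_1$ of order $o$ (matching the second $\emptyset$-clause) when the surviving values $D(\tau)-o$ and $D(\tau')-o$ both stay $\geq h-1$, and otherwise of order $o\pm1$ (matching the third $\emptyset$-clause), selecting the sign with the help of $o\geq h-1$; the $D$-thresholds $\geq 2(h-1)$ then hold for free.

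The main obstacle is this last $\emptyset$-main verification: I must check that the guaranteed room $2n-s,\,2n-s'\geq 2h-1$ always lets me land on a legal order at the adjacent level while simultaneously meeting the $(h-1)$-thresholds $D(p)\geq h-1$ and $o\geq h-1$. Concretely this is a short case analysis over the direction $s'=s\pm1$ and over whether $o$ is near the top of its range $[1,2n-s]$ (small $D(p)$) or not; this is precisely where the constants $2h$ and $h$ in the definition of compatibility are tuned so that decrementing to $h-1$ never drives any quantity below its bound. Everything else is bookkeeping.
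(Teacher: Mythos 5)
Your proposal is correct and takes essentially the same route as the paper's proof: the same split on whether $p\in\Obs(\agent)$ (the observable case settled by Remark~\ref{remark:ObservationMapsExpressivenessHyper}), the identification of $\Obs_\agent$-equivalence with equality of levels when $p\notin\Obs(\agent)$, the trivial equal-$D$ case, and, at adjacent levels, producing a witness at the level of $\tau'$ that matches $\tau_1$ either in order (second $\emptyset$-clause) or with order shifted by one (third clause), exploiting $D(\tau),D(\tau')\geq 2h$ exactly as the paper does. Your explicit coordinatization ($D=2n+1-s$ at level $\ell_n+s$, order spectrum $[1,2n-s]$, and $D(p(\tau))=D(\tau)-o(\tau)$) merely makes rigorous what the paper invokes ``by construction'', and it additionally lets you treat the case of $\emptyset$-main $\tau,\tau'$ uniformly (which the paper omits as similar); the threshold check you leave as routine does go through — e.g.\ for $s'=s+1$, if $o\leq D(\tau)-h$ the same-order witness satisfies both $D(p(\cdot))\geq h-1$ bounds, and otherwise $o\geq D(\tau)-h+1\geq h+1$, so order $o-1$ is realizable at the target level and meets the third clause.
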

\begin{proof}
Fix an observation map $\Obs$.
 Let $h\in [2,n ]$, $(\tau,\tau')\in R(h)$,  and $\tau_1$ be a node such that $\tau$ and $\tau_1$ are $\Obs_\agent$-equivalent in $K_n$.
 We prove that there exists a  main node $\tau'_1$ such that
  $\tau'$ and $\tau'_1$ are $\Obs_\agent$-equivalent in $K_n$ and $(\tau_1,\tau'_1)\in R(h-1)\cup R(h)$. Thus, since
  $R(h)\subseteq R(h-1)$, the result follows. Note that $\tau_1$ is a main node. If $p\in \Obs(\agent)$, by Remark~\ref{remark:ObservationMapsExpressivenessHyper}, $\tau$ is the unique node which is $\Obs_\agent$ equivalent
  to $\tau$ itself. Hence, $\tau_1=\tau$, and by setting $\tau'_1=\tau'$, the result follows.

Now, assume that $p\notin \Obs(\agent)$. Hence, two nodes  are $\Obs_\agent$-equivalent in $K_n$  iff they have the same distance from the root.
 We assume that $\tau$ is a $p$-main node, hence, $\tau'$ is a $p$-main node as well. The case where $\tau$ is a $\emptyset$-main node is similar, and we omit the details here. In the rest of the proof, for a $\emptyset$-main node $\tau''$, we denote by $o(\tau'')$ the order of $\tau''$.

The case where $D(\tau)=D(\tau')$ is trivial (note that in this case, by construction, the main nodes $\tau$ and $\tau'$ have the same distance from the root). Now, assume that
$D(\tau)\neq D(\tau')$. If $\tau_1$ is a $p$-main node by setting $\tau'_1=\tau'$, by construction, the result easily follows. Otherwise,  $\tau_1$ is a $\emptyset$-main node and $|\tau_1|=|\tau|$. Let $\tau_*$ be the node of $\pi(\tau_1)$ having the same distance from the root as $\tau'$.
If $(\tau_1,\tau_*)\in R( h-1)$, then  by setting $\tau'_1=\tau_*$, the result follows. Otherwise, since
$(\tau,\tau')\in R(h)$, by construction, $D(\tau)\geq 2h$, $D(\tau')\geq 2h$, $|D(\tau)-D(\tau')|=1$, and one of the following holds:
\begin{compactitem}
  \item $D(\tau)>D(\tau')$, $\tau_*=\SUCC(\tau_1)$, and \emph{either} $\tau_*=p(\tau_1)$, \emph{or} $\tau_*$ is a $\emptyset$-main node, and $o(\tau_*)<h-1$: since $D(\tau_*)=D(\tau')$ ($\tau_*$ and $\tau'$ have the same distance from the root), we deduce that
  $D(p(\tau_1))>2h -(h-1)\geq h+1$. By construction,
  there exists a $\emptyset$-main node $\tau'_1$ at the same distance from the root as $\tau_*$ such that $o(\tau'_1)=o(\tau_1)$ and
  $D(p(\tau'_1))=D(p(\tau_1))-1$. Since $D(\tau_1)=D(\tau)$ and $D(\tau'_1)=D(\tau')$, we obtain that $(\tau_1,\tau'_1)\in R(h)$ and the result follows.
  \item $D(\tau)<D(\tau')$, $\tau_1=\SUCC(\tau_*)$,  and $o(\tau_1)<h-1$: since $D(\tau_1)=D(\tau)$ ($\tau$ and $\tau_1$ have the same distance from the root), we deduce that
  $D(p(\tau_1))>2h -(h-1)\geq h+1$. Since $o(\tau_*)\geq 2$, by construction,
  there exists a  $\emptyset$-main node $\tau'_1$ at the same level as $\tau_*$ such that $o(\tau'_1)=o(\tau_1)$,
 and $D(p(\tau'_1))=D(p(\tau_1))+1$. Thus, since $D(\tau_1)=D(\tau)$ and $D(\tau'_1)=D(\tau')$, we obtain that $(\tau_1,\tau'_1)\in R(h)$ and the result follows.
\end{compactitem}
\end{proof}

For a real number $r$, $\lfloor r\rfloor$ denotes the integral part of $r$.

\begin{proposition}\label{prop:StructuralPropertiesRkTwo} Let $h\in [2,n]$, $(\tau,\tau')\in R(h)$, and $\tau_2$ be a  main node such that $\tau_2\succeq \tau$.  Then, the following holds:
\begin{compactenum}
  \item either $\SUCC(\tau)$ and $\SUCC(\tau')$ are not main nodes, or $(\SUCC(\tau),\SUCC(\tau'))\in R(h-1)$;
  \item  there exists a main node $\tau'_2\succeq \tau'$ such that
   $(\tau_2,\tau'_2\,)\in R(\lfloor \frac{h}{2}\rfloor)$
  and  the restriction of $R(\lfloor \frac{h}{2}\rfloor)$
  to $\Nodes(\tau,\tau_2)\times \Nodes(\tau',\tau'_2)$ is total.\footnote{Recall that a binary relation $R\subseteq S\times S'$ is total if for all $s\in S$
  (resp., $s'\in S'$), there exists $s'\in S'$ (resp., $s\in S$) such that $(s,s')\in R$.}
\end{compactenum}
\end{proposition}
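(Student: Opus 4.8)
The plan is to prove both items by tracking how the three quantities $D(\tau)$, $o(\tau)$ and $D(p(\tau))$ evolve as one moves down a branch, and then matching the resulting parameters against the clauses defining $h$-compatibility. The identities I will use repeatedly are: one step down decreases $D$ by one, $D(\SUCC(\tau))=D(\tau)-1$; within the $\emptyset$-block the order also decreases by one, $o(\SUCC(\tau))=o(\tau)-1$, while the node $p(\cdot)$ (and hence $D(p(\cdot))$, which equals the index of the branch) is unchanged; and $o(\tau)=D(\tau)-D(p(\tau))$. Recall also that $D$ is exactly the distance-from-root coordinate, so $\SUCC(\tau)$ is a main node iff $D(\tau)\geq 2$.

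For item~(1) I would argue by a direct case analysis over the clause of $h$-compatibility witnessing $(\tau,\tau')\in R(h)$. First, since $h\geq 2$, in every clause the two values $D(\tau),D(\tau')$ are either equal or both at least $2h\geq 4$; hence $\SUCC(\tau)$ is a main node iff $\SUCC(\tau')$ is, which settles the first alternative. When both successors are main, I transfer the witnessing clause down to $h-1$: in the equal-$D$ (or equal-$o$, equal-$D$) clauses the successors again carry equal parameters, and in the ``$|D-D'|=1$'' clauses I use $2h-1\geq 2(h-1)$ and $h\geq h-1$ to see that the decremented values still clear the thresholds $2(h-1)$ and $h-1$ required for $R(h-1)$. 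The only delicate point is the clause where $\tau$ is the last $\emptyset$-node, $o(\tau)=1$, so that $\SUCC(\tau)=p(\tau)$ crosses into the $p$-block; here one notes $D(\SUCC(\tau))=D(p(\tau))$, so the successors become $p$-main nodes whose $D$-values inherit the ``equal'' or ``off-by-one and large'' relation, giving an $R(h-1)$-pair.

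For item~(2) I would construct $\tau'_2$ together with the total relation by descending simultaneously from $\tau$ and $\tau'$ and pairing nodes through a controlled coordinate match: while both paths are in the $\emptyset$-block I pair by the order $o$ (the coordinate governing the future trace $\emptyset^{o}\{p\}^{\omega}$), and once in the $p$-block I pair by $D$. As long as the governing coordinate stays above the thresholds for $R(\lfloor h/2\rfloor)$ — which holds initially because $(\tau,\tau')\in R(h)$ forces values $\geq 2h$ and $\geq h$ — I may keep the two sides off by one, which is precisely an $R(\lfloor h/2\rfloor)$-pair; the single unit of length discrepancy between the two branches is absorbed by one duplication step (a many-to-one match, legitimate since only totality, not a bijection, is required), after which the two sides run in lock-step at equal coordinate and are compatible at every level. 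Taking $\tau'_2$ to be the node on the $\tau'$-side reached opposite $\tau_2$ then yields $(\tau_2,\tau'_2)\in R(\lfloor h/2\rfloor)$, with the restricted relation total by construction.

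The main obstacle is the passage across the $\emptyset$-to-$p$ boundary in item~(2): when $\tau$ and $\tau'$ lie on branches with different indices $D(p(\tau))\neq D(p(\tau'))$, the two paths leave the $\emptyset$-block at different depths, so a naive lock-step pairing would momentarily match an $\emptyset$-node against a $p$-node, which are never compatible. Handling this requires re-phasing the matching exactly at the boundary, and this is where the slack between $h$ and $\lfloor h/2\rfloor$ is spent: the factor-two gap is what guarantees that the off-by-one pairing remains admissible long enough on both sides of the transition to splice the $\emptyset$-phase and the $p$-phase matchings into a single total relation. I expect the bulk of the work, and the most careful bookkeeping, to lie in verifying that this splice keeps every matched pair within $R(\lfloor h/2\rfloor)$.
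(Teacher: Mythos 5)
Your proposal is correct and takes essentially the same route as the paper: the same lock-step matched descent with a single many-to-one duplication step (the paper packages the $p$-block phase as its preliminary Claim~1), and the same quantitative crux of spending the factor two exactly at the $\emptyset$-to-$p$ boundary, where $D(p(\tau)),D(p(\tau'))\geq h$ and off by one certify only that $(p(\tau),p(\tau'))\in R(\lfloor h/2\rfloor)$, after which the spliced matching is total by construction. The only difference is presentational --- the paper wraps your direct phase-by-phase construction (pairing by $o$ in the $\emptyset$-block, by $D$ in the $p$-block) in an induction on $D(\tau)$, which changes nothing of substance.
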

\begin{proof}
We use the following preliminary result.\vspace{0.2cm}

\noindent \emph{Claim~1:} Let $h\in [1,n]$, $(\tau,\tau')\in R(h)$ such that $\tau$ is a $p$-main node,
and $\tau_2$ be a  main node such that $\tau_2\succeq \tau$. Then, there exist a main node $\tau'_2\succeq \tau'$ such that
   $(\tau_2,\tau'_2\,)\in R(h)$
  and  the restriction of $R(h)$
  to $\Nodes(\tau,\tau_2)\times \Nodes(\tau',\tau'_2)$ is total.

\noindent \emph{Proof of Claim~1:}
   Assume that $D(\tau)\neq D(\tau')$ (the other case being trivial).
  Since $(\tau,\tau')\in R(h)$, $\tau'$ is a $p$-main node as well. Moreover, $|D(\tau)-D(\tau')|=1$, and
  $D(\tau)\geq 2h$ and $D(\tau)\geq 2h$. We focus on the case $D(\tau')=D(\tau)+1$ (the other case when  $D(\tau)=D(\tau')+1$ being similar).
By construction every main node which is a descendent of either $\tau$ or $\tau'$ is a $p$-main node.
If $\tau_2=\tau$, then by setting $\tau'_2=\tau'$, the result trivially follows.
Otherwise, let $\tau'_1=\SUCC(\tau')$. Note that $\tau'_1$ is a $p$-main node and $D(\tau)=D(\tau'_1)$. Hence, the restriction of
$R(h)$ to $\{\tau\}\times \{\tau',\tau'_1\}$ is total. Thus, by definition of $R(h)$, the result easily follows.
\qed\vspace{0.2cm}

\noindent Now, we prove Proposition~\ref{prop:StructuralPropertiesRkTwo}.

Let $h\in [2,n]$, $(\tau,\tau')\in R(h)$, and $\tau_2$ be a  main node such that $\tau_2\succeq \tau$.  We prove Properties~1 and~2 by induction on $D(\tau)$.

For the base case, $D(\tau)=1$. By  definition of $R(h)$, we deduce that $D(\tau')=1$ as well, hence, Properties~1 and~2
   easily follow.

For the induction step, assume that $D(\tau)>1$. Hence,  $D(\tau')>1$ as well. Since $(\tau,\tau')\in R(h)$, only the following two cases can occur:
\begin{compactitem}
  \item $\tau$ and $\tau'$ are $p$-main nodes: Property~2 directly follows from Claim 1 and the fact that  $R(h)\subseteq R(\lfloor\frac{h}{2}\rfloor)$. Moreover,  since $D(\tau)>1$, $D(\tau')>1$, and $(\tau,\tau')\in R(h)$, by definition of $R(h)$, Property~1 easily follows.
\item   $\tau$ and $\tau'$ are $\emptyset$-main nodes: Property~1 easily follows. Now, let us consider Property~2.
 Let $o(\tau)$ and $o(\tau')$ be the orders of $\tau$ and $\tau'$. We distinguish two cases:
    \begin{compactitem}
       \item $o(\tau)\neq o(\tau')$:  since $(\tau,\tau')\in R(h)$,  $|o(\tau)-o(\tau')|=1$,
       $o(\tau)\geq h$,  $o(\tau')\geq h$,
        $D(\tau)\geq 2h$, $D(\tau')\geq 2h$, and $p(\tau)$ and $p(\tau')$ have the same distance from the root. Assume that
       $o(\tau)=o(\tau')+1$ (the other case being similar).  If $\tau_2=\tau$, then by setting $\tau'_2=\tau'$, the result trivially follows.
Otherwise, let $\tau_1=\SUCC(\tau)$. Note that $\tau_1$ is a $\emptyset$-main node and $D(\tau_1)=D(\tau')$. Hence, the restriction of
$R(h)$ to $\{\tau,\tau_1\}\times \{\tau'\}$ is total. Thus, by definition of $R(h)$ and the fact that $R(h)\subseteq R(\lfloor\frac{h}{2}\rfloor)$, the result easily follows.
      \item $o(\tau)= o(\tau')$:
if $D(\tau)=D(\tau')$, the result easily follows. Otherwise, since $(\tau,\tau')\in R(h)$,        $|D(p(\tau))-D(p(\tau'))|=1$,
      $D(p(\tau))\geq h$ and $D(p(\tau'))\geq h$. Hence, $(p(\tau),p(\tau'))\in R(\lfloor \frac{h}{2} \rfloor)$.
      By construction, it easily follows that for each main node $\tau_1 \in \Nodes(\tau,p(\tau))$, there exists
       $\tau'_1 \in \Nodes(\tau',p(\tau'))$ such that $(\tau_1,\tau'_1)\in R(\lfloor \frac{h}{2} \rfloor)$ and the restriction of
       $R(\lfloor \frac{h}{2} \rfloor)$ to $\Nodes(\tau,\tau_1)\times \Nodes(\tau',\tau'_1)$ is total.
       Thus, since $p(\tau)$ and $p(\tau')$ are $p$-main nodes, by Claim~1, Property~2 follows.
    \end{compactitem}
\end{compactitem}
This concludes the proof of Proposition~\ref{prop:StructuralPropertiesRkTwo}.
\end{proof}

\begin{lemma}\label{lemma:mainLemmaExpressivenessHyper} Let $\psi$ be a \emph{balanced}  $\KCTLStarS$ formula such that $|\psi|\leq n$, $\Obs$ be an observation map, and
$(\tau,\tau')\in R(|\psi|)$. Then,
    \[
    \tau\models_{K_n,\Obs} \psi \Leftrightarrow \tau'\models_{K_n,\Obs} \psi
    \]
  \end{lemma}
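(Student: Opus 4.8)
The plan is to prove the statement by structural induction on $\psi$, exploiting that $R(h)$ is decreasing in $h$ (i.e.\ $R(h)\subseteq R(h')$ whenever $h'\le h$), so that reducing the formula size is always compatible with the available level of $h$-compatibility. The base case $\top$ is immediate, and for an atom $p$ the definition of $|\psi|$-compatibility forces $\tau$ and $\tau'$ to be of the same kind (both $p$-main nodes or both $\emptyset$-main nodes), whence their labels in $K_n$ agree. The Boolean cases are routine: negation preserves the equivalence, and for a disjunction $\varphi_1\vee\varphi_2$ each disjunct has size strictly less than $|\psi|$, so $(\tau,\tau')\in R(|\psi|)\subseteq R(|\varphi_i|)$ and the induction hypothesis applies to each. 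The existential path quantifier collapses at a main node: since every non-root node carries a unique initial path, the only initial path agreeing with $\pi(\tau)$ on $\pi(\tau)[0,|\tau|]$ is $\pi(\tau)$ itself, so $\tau\models_{K_n,\Obs}\exists\varphi$ iff $\tau\models_{K_n,\Obs}\varphi$, and the claim follows from the induction hypothesis with $(\tau,\tau')\in R(|\varphi|)$.

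For the knowledge modality $\Know_\agent\varphi$ I would use Proposition~\ref{prop:StructuralPropertiesRkOne}. To prove, say, $\tau\models_{K_n,\Obs}\Know_\agent\varphi\Rightarrow\tau'\models_{K_n,\Obs}\Know_\agent\varphi$, take any node $\tau'_1$ that is $\Obs_\agent$-equivalent to $\tau'$ in $K_n$; applying Proposition~\ref{prop:StructuralPropertiesRkOne} to the pair $(\tau',\tau)\in R(|\psi|)$ (using symmetry of $R$) yields a node $\tau_1$ that is $\Obs_\agent$-equivalent to $\tau$ with $(\tau_1,\tau'_1)\in R(|\psi|-1)=R(|\varphi|)$. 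Since $\tau\models_{K_n,\Obs}\Know_\agent\varphi$ forces $\tau_1\models_{K_n,\Obs}\varphi$, the induction hypothesis gives $\tau'_1\models_{K_n,\Obs}\varphi$; as $\tau'_1$ was arbitrary, $\tau'\models_{K_n,\Obs}\Know_\agent\varphi$, and the converse direction is symmetric. For $\Next\varphi$ I would invoke Property~1 of Proposition~\ref{prop:StructuralPropertiesRkTwo}: either $\SUCC(\tau)$ and $\SUCC(\tau')$ are both non-main nodes, in which case both successors lie in the homogeneous $\{p\}^{\omega}$ tail and a separate observation that tail nodes at the same level evaluate all formulas uniformly applies (note that $\SUCC(\tau)$ non-main means $D(\tau)=1$, and $(\tau,\tau')\in R(|\psi|)$ then forces $D(\tau')=1$, so both successors sit at the same tail level); or else $(\SUCC(\tau),\SUCC(\tau'))\in R(|\psi|-1)=R(|\varphi|)$ and the induction hypothesis applies directly.

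The main obstacle is the until case $\varphi_1\Until\varphi_2$, which is exactly where balancedness and Property~2 of Proposition~\ref{prop:StructuralPropertiesRkTwo} are needed. Because $\psi$ is balanced, $|\varphi_1|=|\varphi_2|$, so $|\psi|=2|\varphi_2|+1$ and hence $|\varphi_1|=|\varphi_2|=\lfloor |\psi|/2\rfloor$, matching precisely the compatibility level $R(\lfloor |\psi|/2\rfloor)$ produced by Property~2. To transfer a witness from $\tau$ to $\tau'$, suppose $\tau\models_{K_n,\Obs}\varphi_1\Until\varphi_2$ with a \emph{least} main witness node $\tau_2\succeq\tau$ satisfying $\varphi_2$, all strictly intermediate nodes in $\Nodes(\tau,\tau_2)\setminus\{\tau_2\}$ satisfying $\varphi_1$. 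Property~2 supplies a main node $\tau'_2\succeq\tau'$ with $(\tau_2,\tau'_2)\in R(\lfloor |\psi|/2\rfloor)$, giving $\tau'_2\models_{K_n,\Obs}\varphi_2$ by the induction hypothesis, together with totality of $R(\lfloor |\psi|/2\rfloor)$ on $\Nodes(\tau,\tau_2)\times\Nodes(\tau',\tau'_2)$, which lets me match each intermediate node $\tau'_1\prec\tau'_2$ with some $\tau_1\in\Nodes(\tau,\tau_2)$ and conclude $\tau'_1\models_{K_n,\Obs}\varphi_1$. The delicate points here—that the witness may fall in the $\{p\}^{\omega}$ tail (handled by observing that the last main nodes $\pi(\tau)(\ell_n+2n)$ and $\pi(\tau')(\ell_n+2n)$ always have $D=1$, hence lie in every $R(h)$, and carry identical futures, reducing the tail case to tail uniformity) and that the correspondence must respect the path order so that strictly intermediate nodes map to nodes genuinely satisfying $\varphi_1$ rather than to the endpoint $\tau_2$—are where the fractal structure encoded in $R(h)$ and the precise statement of Property~2 do the real work, and constitute the technical heart of the argument.
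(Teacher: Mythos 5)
Your proof is correct and follows essentially the same route as the paper's: induction on $|\psi|$ using monotonicity of $R(h)$, Proposition~\ref{prop:StructuralPropertiesRkOne} for $\Know_\agent$, Proposition~\ref{prop:StructuralPropertiesRkTwo}(1) plus tail uniformity (the paper's Claim~1) for $\Next$, and Proposition~\ref{prop:StructuralPropertiesRkTwo}(2) with balancedness ($|\psi_1|=|\psi_2|=\lfloor|\psi|/2\rfloor$) for $\Until$, with the tail-witness case handled via the last main node exactly as the paper does via the greatest main-node ancestor $\tau_*$ and $\SUCC(\tau'_*)$. You even explicitly flag the endpoint-matching subtlety in the totality argument (an intermediate $\tau'_1$ matched to $\tau_2$ itself), which the paper passes over with ``easily follows'' and which is resolved by taking the minimal node in $\Nodes(\tau',\tau'_2)$ satisfying $\psi_2$.
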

  \begin{proof} Fix an observation map $\Obs$.
  We use the following fact that directly follows from the semantics of $\KCTLStarS$ and the fact that in $K_n$, for  every  node $\tau$  such that $\tau$ is not a main node, and $\tau$ is a descendant of some main node, the trace of the unique path from $\tau$ is $\{p\}^{\omega}$.\vspace{0.2cm}

\noindent \emph{Claim 1.} Let $\tau$ and $\tau'$ be descendants of main nodes such that $\tau$ and $\tau'$ are not main nodes. Then, for each $\KCTLStarS$ formula,
\[
    \tau\models_{K_n,\Obs} \psi \Leftrightarrow \tau'\models_{K_n,\Obs} \psi
\]

Now, we prove Lemma~\ref{lemma:mainLemmaExpressivenessHyper}. Let $\psi$ be a balanced $\KCTLStarS$ formula such that $|\psi|\leq n$  and
$(\tau,\tau')\in R(|\psi|)$. We need to show that
    \[
    \tau\models_{K_n,\Obs} \psi \Leftrightarrow \tau'\models_{K_n,\Obs} \psi
    \]
The proof is by induction on $|\psi|$. The cases for the boolean connectives $\neg$ and $\wedge$, and the existential path quantifier $\exists$, directly follow from the inductive hypothesis and the fact that $R(h)\subseteq R(k)$ for all $h,k\in [1,n]$ such that $h\geq k$. For the other cases, we proceed as follows.
  \begin{itemize}
    \item Case $\psi=p'$ for some $p'\in\AP$: since $(\tau,\tau')\in R(|\psi|)$, $\tau$ and $\tau'$ have the same label in $K_n$. Hence, the result follows.
    \item Case $\psi=\Next\psi'$. If $\SUCC(\tau)$ and $\SUCC(\tau')$ are not main nodes, the result directly follows from Claim~1. Otherwise, by applying Proposition~\ref{prop:StructuralPropertiesRkTwo}(1), we obtain that    $(\SUCC(\tau),\SUCC(\tau'))\in R(|\psi'|)$. Hence, in this case, the result directly follows from the induction hypothesis.
    \item $\psi=\psi_1\Until\psi_2$:  we focus on the implication
     $\tau\models_{K_n,\Obs} \psi \Rightarrow \tau'\models_{K_n,\Obs} \psi$ (the converse implication being symmetric).
 Let $\tau\models_{K_n} \psi$.   Hence,
        there exists $\tau_2 \succeq \tau$ such that $\tau_2\models_{K_n,\Obs} \psi_2$ and $\tau_1\models_{K_n,\Obs} \psi_1$ for all nodes $\tau_1$ such that
        $\tau\preceq \tau_1\prec \tau_2$. We need to prove that $\tau'\models_{K_n,\Obs} \psi$.
        We distinguish two cases:
        \begin{compactitem}
          \item $\tau_2$ is a main node: since $(\tau,\tau')\in R(|\psi|)$, by applying Proposition~\ref{prop:StructuralPropertiesRkTwo}(2), there exists a main node $\tau'_2 \succeq \tau'$ such that the restriction of $R(\lfloor\frac{|\psi|}{2}\rfloor)$ to
              $\Nodes(\tau,\tau_2)\times \Nodes(\tau',\tau'_2)$ is total and $(\tau_2,\tau'_2)\in R(\lfloor\frac{|\psi|}{2}\rfloor)$. Since $\psi$ is balanced, $|\psi_1|=|\psi_2|$. Hence, for all $h=1,2$,
              $|\psi_h| \leq \lfloor\frac{|\psi|}{2}\rfloor$, and in particular, $R(|\psi_h|)\supseteq R( \lfloor\frac{|\psi|}{2}\rfloor)$. Hence, by applying the induction hypothesis, the result easily follows.
          \item $\tau_2$ is not a main node: let $\tau_*$ be the greatest (with respect to $\preceq$) ancestor of $\tau_2$ which is a main node. Note that
          $\tau_* \succeq \tau$. By reasoning as in the previous case, there exists $\tau'_* \succeq \tau'$ such that
          for all $\tau_1\in \Nodes(\tau',\tau'_*)$, $\tau_1\models_{K_n,\Obs} \psi_1$ and $(\tau_*,\tau'_*)\in R(\lfloor\frac{|\psi|}{2}\rfloor)$.
          We show that $\SUCC(\tau'_*)\models_{K_n,\Obs} \psi$, hence, the result follows. Since $\SUCC(\tau_*)$ is not a main node, by Proposition~\ref{prop:StructuralPropertiesRkTwo}(1), $\SUCC(\tau'_*)$ is not a main node as well. Thus, since
          $\tau_2\models_{K_n,\Obs} \psi$, by applying Claim~1, the result follows.
        \end{compactitem}
     \item  $\psi=\Know_\agent\,\psi_1$. We focus on the implication
     $\tau\models_{K_n,\Obs} \psi \Rightarrow \tau'\models_{K_n,\Obs} \psi$ (the converse implication being symmetric). Assume that
     $\tau\models_{K_n,\Obs} \psi$. Let $\tau'_1$ be a node such that  $\tau'_1$ and $\tau'$ are $\Obs_\agent$-equivalent in $K_n$.
     We need to show that $\tau'_1\models_{K_n,\Obs} \psi_1$.
Since $(\tau,\tau')\in R(|\psi|)$ and $R(|\psi|)$ is an equivalence relation, by applying Proposition~\ref{prop:StructuralPropertiesRkOne}, there exists
     a main node $\tau_1$ such that $\tau$ and $\tau_1$ are $\Obs_\agent$-equivalent in $K_n$, and $(\tau_1,\tau'_1)\in R(|\psi_1|)$.
      Since $\tau\models_{K_n,\Obs} \psi$, it holds that $\tau_1\models_{K_n,\Obs} \psi_1$. Thus,
       by applying the induction hypothesis, the result follows.
       \end{itemize}
 \end{proof}

Now, we can prove the crucial lemma from which Theorem~\ref{theorem:MainTheoremExpressivenessHyper} directly follows.

 \begin{lemma}\label{lemma:FinalLemmaExpressivenessHyper}  Let $\psi$ be a \emph{balanced} $\KCTLStarS$ formula such that $|\psi|<n$ and $\Obs$ be an observation map. Then, for all initial paths $\pi$ of $K_n$ (or, equivalently, $M_n$)  the following holds:
 \begin{compactenum}
   \item $\pi,0\models_{K_n,\Obs}\psi$ $\Leftrightarrow$ $\pi,0\models_{M_n,\Obs}\psi$.
   \item $\pi,i\models_{K_n,\Obs}\psi$ $\Leftrightarrow$ $\pi,i\models_{M_n,\Obs}\psi\quad$ for all $i\in [1,\ell_n]$.
      \item if $\pi$ does not visit node $\xi_n$ (i.e., $\pi\neq \pi(\xi_n)$), then for all $i\geq \ell_n+1$,
      \[\pi,i \models_{K_n,\Obs}\psi \Leftrightarrow \pi,i\models_{M_n,\Obs}\psi
      \]
 \end{compactenum}
\end{lemma}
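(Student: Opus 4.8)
The plan is to prove parts (1)--(3) simultaneously by structural induction on the balanced sentence $\psi$ (with $|\psi|<n$), fixing an observation map $\Obs$ throughout, and to strengthen the induction with an auxiliary statement linking the modified path to an unmodified one: for all $i\geq\ell_n+1$, $\pi(\xi_n),i\models_{M_n,\Obs}\psi \Leftrightarrow \pi(\xi_{n-1}),i\models_{K_n,\Obs}\psi$. This auxiliary is natural because, by construction, the traces of $\pi(\xi_n)[\ell_n+1,\infty]$ in $M_n$ and of $\pi(\xi_{n-1})[\ell_n+1,\infty]$ in $K_n$ coincide, while in $K_n$ the traces of $\pi(\eta)[\ell_n+1,\infty]$ and $\pi(\xi_n)[\ell_n+1,\infty]$ coincide. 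Two structural facts drive the easy cases: the unique label on which $K_n$ and $M_n$ differ sits at $\pi(\xi_n)(\ell_n+n+1)$; and every non-root node carries a unique initial path, so for $i\geq 1$ the quantifier $\exists x$ trivializes (only the current path shares the prefix), and for $i\leq\ell_n$, or for any $\pi\neq\pi(\xi_n)$, the labels along $\pi$ agree in $K_n$ and $M_n$.

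For atoms and for the Boolean and temporal operators the two truth values propagate through the induction hypothesis; the only delicate crossings arise when a $\Next$ or $\Until$ evaluated at a position $\leq\ell_n$ enters the region $\geq\ell_n+1$. When $\pi\neq\pi(\xi_n)$ this is covered by part (3); when $\pi=\pi(\xi_n)$ I would route through the trace equalities together with the auxiliary, reducing the $K_n$-value along $\pi(\xi_n)$ to the value along $\pi(\eta)$ and the $M_n$-value along $\pi(\xi_n)$ to the value along $\pi(\xi_{n-1})$. For the knowledge modality I would invoke Remark~\ref{remark:ObservationMapsExpressivenessHyper}: since $\AP=\{p\}$, either $p\in\Obs(\agent)$, in which case $\Know_\agent\chi$ localizes to the current node (each main node being $\Obs_\agent$-equivalent only to itself) and reduces directly to the induction hypothesis, or $\Obs(\agent)=\emptyset$, in which case $\Know_\agent\chi$ at any $(\pi,i)$ collapses to the level-global conjunction $\bigwedge_{\pi'}[\pi',i\models\chi]$ over all initial paths $\pi'$, independent of the current path.

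The heart of the argument is this $p$-blind case: I must show that for $i\geq\ell_n+1$ and $|\chi|<n$ the global conjunctions over all initial paths agree in $K_n$ and $M_n$. Only $\pi(\xi_n)$ changes, so it suffices to see that its contribution is masked on both sides. On the $M_n$ side, the auxiliary identifies the $\pi(\xi_n)$-conjunct with the $\pi(\xi_{n-1})$-conjunct of $K_n$, which is already present; on the $K_n$ side, Lemma~\ref{lemma:mainLemmaExpressivenessHyper} combined with Propositions~\ref{prop:StructuralPropertiesRkOne} and~\ref{prop:StructuralPropertiesRkTwo} shows that $\pi(\xi_n)(i)$ is $R(|\chi|)$-compatible in $K_n$ with an unmodified sibling at the same level (for instance $\pi(\eta)(\ell_n+1)$, which shares its order and its $D$-value), so its conjunct is redundant. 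Hence both conjunctions carry identical content, the two global values coincide, and the blind-agent knowledge case closes, which in turn closes the remaining recursion through part (2).

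I expect the principal difficulty to be exactly this glob-agreement step, since it is where the whole $R(|\chi|)$-compatibility apparatus must be deployed to absorb the single-node change. A secondary but genuine point is closing part (2) along $\pi=\pi(\xi_n)$: a balanced formula of size $<n$ must be shown unable to detect a one-node modification lying $\geq n+1$ steps ahead --- equivalently, unable to distinguish an absent-$p$ block of length $n$ from one of length $n+1$ --- a non-counting bound that I would establish using the same compatibility relations together with the two trace equalities recorded above.
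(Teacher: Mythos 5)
Your proposal is correct and follows essentially the same route as the paper's proof: a simultaneous induction on the balanced formula over the three properties, in which your cross-structure auxiliary ($\pi(\xi_n),i\models_{M_n}\psi \Leftrightarrow \pi(\xi_{n-1}),i\models_{K_n}\psi$) is exactly the paper's same-structure trace-equality claim composed with Property~3, the crossing of the modified region along $\pi(\xi_n)$ in part~(2) is absorbed by the $R(\cdot)$-compatibility machinery of Lemma~\ref{lemma:mainLemmaExpressivenessHyper} (the paper chains $\tau_n$ to $\tau_{n-1}$ through $\tau'_{n-1}$ via $R(n-1)$), and the knowledge modality is split by Remark~\ref{remark:ObservationMapsExpressivenessHyper} into the localized case $p\in\Obs(\agent)$ and the $p$-blind level-global case, just as in the paper. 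The only (presentational) deviation is that in the blind-knowledge masking step you invoke $R(|\chi|)$-compatibility of $\pi(\xi_n)(i)$ with a sibling, where plain trace equality with $\pi(\eta)(i)$ in $K_n$ (respectively with $\pi(\xi_{n-1})(i)$ in $M_n$) already suffices --- the compatibility apparatus is genuinely needed only for part~(2) along $\pi(\xi_n)$, precisely as you flag.
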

\begin{proof} First, we make the following observation which directly follows from the semantics of $\KCTLStarS$, Remark~\ref{remark:ObservationMapsExpressivenessHyper}, and the fact that for each observation map $\Obs$ and agent $\agent$ such that $p\notin \Obs(\agent)$, two nodes $\tau$ and $\tau'$ are $\Obs_\agent$-equivalent in $K_n$ (resp., $M_n$) iff $|\tau|=|\tau'|$ (i.e., $\tau$ and $\tau'$ have the same distance from the root).\vspace{0.2cm}

\noindent \emph{Claim 1.} Let $K\in \{K_n,M_n\}$, $\tau$ and $\tau'$ be two non-root nodes such that   $|\tau|=|\tau'|\geq \ell_n+1$ and
in $K$, the traces of the unique paths starting from $\tau$ and $\tau'$, respectively,   coincide. Then, for all $\KCTLStarS$ formulas $\psi$ and observation maps $\Obs$:
\[
      \tau\models_{K,\Obs}\psi \Leftrightarrow \tau'\models_{K,\Obs}\psi
\]

 Now, we prove Properties~1--3 of Lemma~\ref{lemma:FinalLemmaExpressivenessHyper}.
  Fix an observation map $\Obs$. Let $\psi$ be a \emph{balanced} $\KCTLStarS$ formula such that $|\psi|<n$ and $\pi$ be an initial path of $K_n$ (or, equivalently, $M_n$). The proof of Properties~1--3 is by induction on $|\psi|$. The case for atomic propositions directly follows from construction. The cases for negation, conjunction, and existential path quantifier directly follows from the induction hypothesis (recall that for each non-root node $\tau$ there is exactly one initial path visiting $\tau$). For the remaining case, we proceed as follows.
 \begin{itemize}
   \item Cases $\psi=\Next\psi'$ or $\psi=\psi_1\Until \psi_2$:  assume that $\psi=\psi_1\Until \psi_2$ (the case where $\psi=\Next\psi'$ being similar).
   For Property~1, we apply the induction hypothesis for Property~1, and Property~2 for the considered case.
           For Property~3, we apply the induction hypothesis on Property~3. Now, let us consider Property~2.
       The case where $\pi$ does not visit $\xi_n$ directly follows from the induction hypothesis on Properties~2 and~3.
           Now, assume that  $\pi$ visits node $\xi_n$, i.e. $\pi=\pi(\xi_n)$.  Let $\tau_n$ be the first main node visited by $\pi(\xi_n)$. Note that $\tau_n$ is a $\emptyset$-main node and $\tau_n=\pi(\xi_n)(\ell_n+1)$. Since $\tau_n \succ \xi_n$, by the semantics of the until modality and applying the induction hypothesis on Property~2, it suffices to show that
           \[
            \tau_n\models_{K_n,\Obs}\psi \Leftrightarrow \tau_n\models_{M_n,\Obs}\psi
           \]
           Let $\tau_{n-1}$ be the first main node visited by $\pi(\xi_{n-1})$, and $\tau'_{n-1}=\SUCC(\tau_{n-1})$.
           Note that $\tau_{n-1}$ is a $\emptyset$-main node and $\tau_{n-1}=\pi(\xi_{n-1})(\ell_n+1)$. By construction, we have that $(\tau_n,\tau'_{n-1})\in R(n-1)$ and $(\tau_{n-1},\tau'_{n-1})\in R(n-1)$. Since $R(n-1)\subseteq R(|\psi|)$ (recall that $|\psi|<n$), by applying twice Lemma~\ref{lemma:mainLemmaExpressivenessHyper}, we obtain
         \[
          \tau_n\models_{K_n,\Obs}\psi \Leftrightarrow \tau_{n-1}\models_{K_n,\Obs}\psi
         \]
         Moreover, since in $M_n$, the traces of  the paths starting from $\tau_n$ and $\tau_{n-1}$ coincide, $\tau_n$ and $\tau_{n-1}$ have the same distance from the root, and such a distance is $\ell_n+1$, by Claim~1
         \[
          \tau_n\models_{M_n,\Obs}\psi \Leftrightarrow \tau_{n-1}\models_{M_n,\Obs}\psi
         \]
         By applying Property~3 for the considered case, we have that
         \[
          \tau_{n-1}\models_{K_n,\Obs}\psi \Leftrightarrow \tau_{n-1}\models_{M_n,\Obs}\psi
         \]
         Hence, the result follows.

   \item Case $\psi: \Know_\agent\psi'$: Properties~1 and~2 directly follow from the induction hypothesis and the fact that for all $i\in [0,\ell_n]$, the two traces of $\pi[0,i]$ in $K_n$ and $M_n$ coincide. Now, we prove Property~3.
       If $p\in \Obs(a)$, then since $i\geq \ell_n+1$, by Remark~\ref{remark:ObservationMapsExpressivenessHyper}, $\pi(i+1)$ is the unique node of $K_n$ (resp., $M_n$) which is $\Obs_\agent$-equivalent to $\pi(i+1)$ itself. Hence, in this case, the result directly follows from the induction hypothesis.

   Now, assume that $p\notin \Obs(a)$. Hence, two nodes are $\Obs_\agent$-equivalent if they have the same distance from the root.
   First, we consider the implication $\pi,i \models_{K_n,\Obs}\psi \Rightarrow \pi,i\models_{M_n,\Obs}\psi$.
   Assume that $\pi,i \models_{K_n,\Obs}\psi$. Let $\pi'$ be an initial path. We need to show that
   $\pi',i \models_{M_n,\Obs}\psi$. Since $\pi,i \models_{K_n,\Obs}\psi$, it holds that $\pi',i \models_{K_n,\Obs}\psi'$.
   Thus, if $\pi'$ does not visit $\xi_n$, then the result directly follows from the induction hypothesis on Property~3.
   Otherwise, since $i\geq \ell_n+1$, by construction, in $M_n$, the traces of $\pi(\xi_n)[i,\infty]$ and $\pi( \xi_{n-1})[i,\infty]$ coincide. Thus, since
   $i\geq \ell_n+1$, by Claim~1, $\pi(\xi_n),i \models_{M_n,\Obs}\psi' \Leftrightarrow \pi(\xi_{n-1}),i \models_{M_n,\Obs}\psi'$. Since
   $\pi(\xi_{n-1}),i \models_{K_n,\Obs}\psi'$, by applying the induction hypothesis on Property~3, the result follows.
   The converse implication $\pi,i \models_{M_n,\Obs}\psi \Rightarrow \pi,i\models_{K_n,\Obs}\psi$ is similar, but we use the fact that
   in $K_n$, for each $i\geq \ell_n+1$, the traces of $\pi(\xi_n)[i,\infty]$ and $\pi(\eta)[i,\infty]$ coincide.
 \end{itemize}
\end{proof}

\subsection{Proof of Theorem~\ref{theorem;MainResultExpressivenessKCTL}}\label{APP;MainResultExpressivenessKCTL}

In this Subsection, we prove the following result, where for the fixed $n>1$, $K_n$ and $M_n$ are the regular tree structures
over $2^{\{p\}}$ defined in Subsection~\ref{SubSec:FromKnoweldgeToHyper}.

\setcounter{aux}{\value{theorem}}
\setcounter{theorem}{\value{theo-MainResultExpressivenessKCTL}}

\begin{theorem}   For all $\HCTLStar$ sentences $\psi$  such that $|\psi|<n$,
$K_n\models \psi \Leftrightarrow M_n\models \psi$.
\end{theorem}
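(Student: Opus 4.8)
The plan is to mirror the structure of the companion argument for Theorem~\ref{theorem:MainTheoremExpressivenessHyper}, but to exploit a feature peculiar to $\HCTLStar$. First I would record a structural simplification: since $K_n$ and $M_n$ are \emph{bouquets} (the root has the $n+1$ successors $\eta,\xi_1,\dots,\xi_n$, each carrying a unique initial path, so distinct branches diverge already at position $1$), the regular quantifier semantics, which requires a freshly bound path $\pi$ to satisfy $\pi[0,i]=\Pi(y)[0,i]$, can switch branch \emph{only} at position $0$; at any position $i\geq 1$ a quantifier $\exists x$ can only rebind the current branch. (Note this uses that $\HCTLStar$ has no general quantifier $\GExists$.) Consequently the set of branches a sentence can actually compare is fixed at the root and, as each bound branch costs at least one quantifier, has size $|\Pi|\le|\psi|<n$. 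Since the single label on which $K_n$ and $M_n$ differ lies on $\xi_1$ at position $i_\Alert$, this already settles the easy case: if $\xi_1$ is never bound then every label read through an atom $p[x]$ agrees in $K_n$ and $M_n$, so the two structures are trivially indistinguishable. In the remaining case $\xi_1$ is bound, and because fewer than $n$ of the $n$ indices $1,\dots,n$ are bound there is a \emph{free} index $h_*\in[2,n]$ with $\pi(\xi_{h_*})$ unbound.

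Second, I would formalise the fractal self-similarity as a family of equivalence relations on main positions, parameterised by the free index $h_*$ and a level $\EmphInt\in[0,n]$, playing the role of the relations $R(h)$ in the proof of Theorem~\ref{theorem:MainTheoremExpressivenessHyper}. The defining requirement is that $i_\Alert$ be equivalent to the main $p$-positions $j$ of type $\xi_{h_*}$ lying sufficiently near $i_\Alert$: with respect to the \emph{bound} branches, both $i_\Alert$ (in $M_n$, where the $\{p\}$ on $\xi_1$ has been erased) and such a $j$ (where, by Remark~\ref{remark:fractalRequirement}, $p$ holds only along the free branch $\xi_{h_*}$) present the all-$\emptyset$ label tuple. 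I would then prove two structural lemmas analogous to Propositions~\ref{prop:StructuralPropertiesRkOne} and~\ref{prop:StructuralPropertiesRkTwo}: (i) equivalence is preserved under the successor map $\Next$ with a controlled decrease of the level $\EmphInt$; and (ii) a \emph{totality} property guaranteeing that, for an equivalent pair and any later witness used by an $\Until$, there is a matching later position on the other side with the intervening segment matched pointwise at a halved level. The quantitative slack built into the $n$-fractal requirement, namely $m_1=n+4$ blocks together with the fact that at least $n$ type-$\xi_1$ $p$-positions follow $i_\Alert$, is exactly what makes the level budget suffice for formulas of size $<n$.

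Third, I would prove the main lemma by induction on $|\psi|$: two configurations that agree on the bound branches and whose positions are related at level $\ge|\psi|$ (and with $|\Pi|+|\psi|<n$) satisfy the same $\HCTLStar$ formulas of size $\le|\psi|$, one evaluated in $K_n$ and one in $M_n$. The atomic case $p[x]$ reads the label of a bound branch and is exactly where the $i_\Alert$-versus-twin equivalence is invoked; the Boolean cases are immediate; $\Next$ and $\Until$ use properties (i) and (ii); and the crux is $\exists x.\varphi$. By the first step, at a position $\ge1$ this only copies the current branch, leaving the branch set and the free index untouched, so the induction applies directly; at position $0$ it may bind a new branch, possibly $\xi_1$ or the current free branch $\xi_{h_*}$, but since $|\Pi|+|\varphi|<n$ still holds after binding, a new free index $h_*'\in[2,n]$ can be reselected, and I would use a short ``change of free index'' argument to transport the equivalence. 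Instantiating the lemma at the empty assignment and position $0$, where $|\Pi|=0$ and $|\psi|<n$, yields $K_n\models\psi\Leftrightarrow M_n\models\psi$.

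The step I expect to be the main obstacle is the $\Until$ case, that is, making property (ii) cooperate with the level budget: one must verify that the halving of $\EmphInt$ incurred at each $\Until$ (and the decrease at each $\Next$) never lets $\EmphInt$ fall below the value needed to keep $i_\Alert$ indistinguishable from its $\xi_{h_*}$-twins, through the fewer than $n$ nested modalities and bindings that $\psi$ affords. This is where the precise fractal parameters pay off, and where the ``change of free index'' under quantification must be shown to interact coherently with the pointwise matching of entire $\Until$-segments rather than single positions.
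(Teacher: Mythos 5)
Your overall architecture is the paper's: extract a free index $h_*\in[2,n]$ from the budget $|\Pi|+|\psi|<n$, exploit the bouquet shape to make quantification at positions $\geq 1$ trivial, define equivalence relations $R(h_*,\EmphInt)$ on main positions that make $i_\Alert$ equivalent to nearby main $p$-positions of type $\xi_{h_*}$, prove successor/totality lemmas, and induct on $|\psi|$. But two steps of your plan, as stated, would fail. First, the halving of the level $\EmphInt$ at each $\Until$, imported from the companion $\KCTLStarS$ argument, does not work here: in an unbalanced $\psi_1\Until\psi_2$ the subformula $\psi_2$ may have size up to $|\psi|-1>\lfloor|\psi|/2\rfloor$, so after halving the induction hypothesis (which needs the positions related at level $\geq|\psi_2|$) no longer applies. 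The $\KCTLStarS$ proof bought halving by \emph{balancing} the formula, a normalization that squares its size and is not available if you want the theorem literally as stated ($|\psi|<n$, no balancedness hypothesis). The paper's Proposition~\ref{prop:Specularity} instead achieves a decrease of the level by exactly \emph{one} per temporal step ($R(h_*,\EmphInt)\rightsquigarrow R(h_*,\EmphInt-1)$, with the intervening segment matched totally), which is precisely what the slack in the $n$-fractal parameters ($m_1=n+4$ blocks, padding of length $\geq n+2$) is engineered to deliver; with decrement-by-one, $|\psi_i|\leq|\psi|-1$ suffices and no balancing is needed.

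Second, your single cross-structure main lemma --- relating a position of $K_n$ to a \emph{different} position of $M_n$ via the free-index relation --- breaks at the atomic case in exactly the interesting situation your own first step isolates, namely when $\xi_1$ is bound: in $K_n$ the position $i_\Alert$ satisfies $p[x]$ for the variable bound to $\pi(\xi_1)$, whereas at your proposed $\xi_{h_*}$-twins in $M_n$ every bound branch shows $\emptyset$, so the two sides disagree on an atom. The ``all-$\emptyset$ tuple'' justification is an $M_n$-side phenomenon only; on the $K_n$ side $i_\Alert$ is an ordinary type-$\xi_1$ $p$-position and needs a different relation. The paper resolves this with a two-tier architecture you would have to reconstruct: an \emph{intra}-structure lemma (Lemma~\ref{lemma:specularityForFormulasSatisfaction}) in two modes --- $R(1,\cdot)$ on $K_n$, valid for \emph{all} assignments (no free path needed, since $K_n$ is uniformly fractal), and $R(h_*,\cdot)$ on $M_n$, valid when $\pi(\xi_{h_*})$ is unbound --- followed by a \emph{same-position} cross-structure induction for $\ell\leq i_F$ (Lemma~\ref{lemma:MainLemmaExpressivenessKCTL}), whose $\Next$/$\Until$ cases shuttle $i_F\rightsquigarrow i_N$ (the fourth type-$\xi_1$ $p$-position, beyond $i_\Alert$) separately inside $K_n$ and inside $M_n$, and cross over at $i_N>i_\Alert$ where the two structures' futures literally coincide. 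Your ``change of free index'' worry then dissolves: $h_*$ is not part of the induction statement but is chosen afresh, locally, in each temporal case, and rebinding at position $0$ is absorbed by the invariant $|\Pi[y\leftarrow\pi]|+|\psi'|<n$.
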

\setcounter{theorem}{\value{aux}}

In order to prove Theorem~\ref{theorem;MainResultExpressivenessKCTL},
first, we give some definitions and preliminary results which capture the recursive structure of $K_n$ and $M_n$.
In the following, for path assignment $\Pi$, we mean a path assignment of $K_n$.
Since $K_n$ and $M_n$ coincide but for the labeling (in particular, the labeling of the path $\pi[\xi_1]$ at position $i_\Alert$), a path assignment of $K_n$ is a path assignment of $M_n$ as well, and vice versa.

For the nodes $\xi_h$ and $\xi_k$ with $h,k\in [1,n]$, we write $\xi_h\preceq \xi_k$ to mean that $h\leq k$.
Recall that $\ell_n$ is the greatest main position and by construction, $\ell_n$ is a main $p$-position of type $\xi_1$.
For a main position $i$, $p(i)$ denotes the smallest main $p$-position $j$ such that $j\geq i$.
 A main position which is not a $p$-main position is called a \emph{main $\emptyset$-position}.

Fix $h_*\in [1,n]$ (representing node $\xi_{h*}$).

\begin{definition}[$h_*$-types and $h_*$-macro-blocks]
Let $i$ be a main $p$-position. The \emph{$h_*$-type of $i$}  is the type of $i$ if either $h_*=1 $, or $i\neq i_\Alert$; otherwise, the $h_*$-type of $i$ is $\xi_{h*}$.

An  \emph{$h_*$-macro-block $\Bl$} is a set of main positions of the form  $[i,j]$ such that $i<j$, $i$ and $j$ are main $p$-positions  having the same $h_*$-type $\xi_k$, and there is \emph{no} main $p$-position in  $[i+1,j-1]$ with $h_*$-type $\xi_r\preceq \xi_k$.

A \emph{pure  macro-block} is a  $1$-macro-block. For an $h*$-macro-block $\Bl=[i,j]$,
the \emph{$h_*$-type of $\Bl$} is the common $h_*$-type of $i$ and $j$.
\end{definition}

\begin{remark} For each main $p$-position $i$, there is at most one $h_*$-macro-block $\Bl$ whose first position is $i$.
\end{remark}

 When $h_*\neq 1$, intuitively, the main $p$-position
$i_\Alert$ is ``considered" a main $p$-position associated to the path $\pi(\xi_{h_*})$. More precisely, if we consider $h_*$-macro-blocks $\Bl=[i,j]$, where one bound is $i_\Alert$ and the other one has type $\xi_{h_*}$ (by construction, there are exactly two of such macro-blocks), then as we will prove,
positions $i$ and $j$ are indistinguishable by   $\HCTLStar$ formulas of size at most $n-1$ which are evaluated on $M_n$ with respect to  path assignments where \emph{$\pi(\xi_{h_*})$ is not bound.}

\begin{definition}[$h_*$-low-ancestors and $h_*$-orders]
Let $i$ be a main $p$-position of $h_*$-type  $\xi_k$. The \emph{$h_*$-low-ancestor of $i$} is the smallest main $p$-position $j>i$ whose $h_*$-type $\xi_r$ satisfies $\xi_r \prec \xi_k$, if such a position exists; otherwise the $h_*$-low-ancestor of $i$ is undefined.

Let $\Bl$ and $\Bl'$ be two $h*$-macro-blocks:  \emph{$\Bl'$ is the $h_*$-successor of $\Bl$}   if $\Bl$ and $\Bl'$ are of the forms $[i,j]$ and $[j,k]$, respectively. The \emph{$h_*$-order of  $\Bl$} is the length $\ell\geq 1$ of the maximal sequence $\Bl_1,\ldots,\Bl_\ell$ of $h_*$-macro-blocks
 such that $\Bl_1=\Bl$ and $\Bl_k$ is the $h_*$-successor of $\Bl_{k-1}$ for all $k\in [2,\ell]$. The \emph{$h_*$-order of a main $p$-position $i$} is the $h_*$-order of the $h_*$-macro-block having $i$ as first position if such a $h_*$-macro-block exists; otherwise, the $h_*$-order of $i$ is $0$.
\end{definition}

\begin{remark} For a main $p$-position $i$, either the $h_*$-type of $i$ is $\xi_1$, or the $h_*$-low-ancestor of $i$ is defined.
\end{remark}

 Now, for each $\EmphInt\in [0,n]$, we introduce the crucial notion of $(h_*,\EmphInt)$-compatibility  between main positions.  Intuitively, this notion allows to capture the properties
 which make two main positions  indistinguishable from $\HCTLStar$ sentences of size at most $\EmphInt$ when evaluated on $K_n$ (resp., $M_n$)
 and in case $h_*=1$ (resp., $h_*\neq 1$).

\begin{definition}[$(h_*,\EmphInt)$-Compatibility] Let $\EmphInt\in [0,n]$.
Two main $p$-positions $i$ and $j$ are \emph{$(h_*,\EmphInt)$-compatible} if  the following conditions are inductively satisfied, where $o(i)$ and $o(j)$ are the $h_*$-orders of $i$ and $j$:
\begin{compactitem}
  \item $i$ and $j$ have the same $h_*$-type;
\item either $o(i)=o(j)$, or $o(i)>\EmphInt$ and $o(j)>\EmphInt$;
  \item either  the common $h_*$-type of $i$ and $j$ is  $\xi_1$, or the $h_*$-low-ancestors of $i$ and $j$  are $(h_*,\EmphInt)$-compatible.
\end{compactitem}
\vspace{0.1cm}

Two main $\emptyset$-positions $i$ and $j$ are \emph{$(h_*,\EmphInt)$-compatible} if the following holds:
\begin{compactitem}
  \item $p(i)$ and $p(j)$ are $(h_*,\EmphInt)$-compatible;
  \item   \emph{either}
  $p(i)-i=p(j)-j$, \emph{or} $p(i)-i>\EmphInt$ and $p(j)-j>\EmphInt$.
\end{compactitem}
\vspace{0.2cm}

We denote by $R(h_*,\EmphInt)$ the binary relation over main positions defined as: $(i,j)\in R(h_*,\EmphInt)$ iff either $i$ and $j$ are
$(h_*,\EmphInt)$-compatible main $p$-positions, or $i$ and $j$ are $(h_*,\EmphInt)$-compatible main $\emptyset$-positions.
 \vspace{0.1cm}

 Two $h_*$-macro-blocks $[i,j]$ and $[i',j']$ are \emph{$(h_*,\EmphInt)$-compatible} if $(i,i')\in R(h_*,\EmphInt)$ and $(j,j')\in R(h_*,\EmphInt)$.
\end{definition}

\begin{remark}$R(h_*,\EmphInt)$ is an equivalence relation.
\end{remark}

The following two Propositions~\ref{prop:SpecularityPreliminary} and~\ref{prop:Specularity}  capture some crucial properties of the equivalence relation $R(h_*,\EmphInt)$. They are used in the next Lemma~\ref{lemma:specularityForFormulasSatisfaction} to show that two $(h_*,\EmphInt)$-compatible main positions  are indistinguishable from $\HCTLStar$ formulas of size at most $\EmphInt$ when evaluated on the regular tree structure $K_n$ whenever $h_*=1$, and are indistinguishable  from $\HCTLStar$ formulas of size at most $\EmphInt$ when evaluated on the regular tree structure $M_n$ whenever $h_*\neq 1$ and the initial path $\pi(\xi_{h_*})$ is \emph{not bound by the given path assignment}.

\begin{proposition}\label{prop:SpecularityPreliminary} Let $h_*\in [1,n]$, $\EmphInt\in [0,n]$, $i_L$ and $j_L$ be
two main $p$-positions of $h_*$-type $\xi_h$ such that $(i_L,j_L)\in R(h_*,\EmphInt)$, and $i_U$ and $j_U$ be two main $p$-positions of $h_*$-type $\xi_k$ such that $i_U>i_L$, $j_U>j_L$ and $(i_U,j_U)\in R(h_*,\EmphInt)$. If there is  \emph{no} main $p$-position in $[i_L+1,i_U-1]\cup [j_L+1,j_U-1]$ of $h_*$-type $\xi_r$ such that $\xi_r \preceq \xi_h$ and $\xi_r\preceq \xi_k$, then, the following holds:
\begin{enumerate}
  \item the restriction of $R(h_*,\EmphInt)$ to $[i_L,i_U]\times [j_L,j_U]$ is total and $(i_L+1,j_L+1)\in R(h_*,\EmphInt)$.
  \footnote{Recall that a binary relation $R\subseteq S\times S'$ is total if for all $s\in S$ (resp., $s\in S'$), there is $s\in S'$ (resp., $s\in S$) such that  $(s,s')\in R$.}
  \item for each $\wp\in [i_L,i_U-1]$, there exists $\wp'\in [j_L,j_U-1]$, such that $(\wp,\wp')\in R(h_*,\EmphInt)$ and the restriction of $R(h_*,\EmphInt)$ to $[i_L,\wp-1]\times [j_L,\wp'-1]$ is total.
\end{enumerate}
\end{proposition}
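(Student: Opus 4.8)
The plan is to prove both parts simultaneously by induction on $n-\min(h,k)$, i.e.\ on the number of strictly finer $\preceq$-levels lying below the coarser of the two bounding types $\xi_h,\xi_k$, the base case being $\min(h,k)=n$ (where the intervals contain no inner main $p$-position and consist of the two endpoints together with runs of main $\emptyset$-positions). The first thing I would record is that the hypothesis forbidding any main $p$-position of $h_*$-type $\xi_r$ with $\xi_r\preceq\xi_h$ and $\xi_r\preceq\xi_k$ inside $[i_L+1,i_U-1]\cup[j_L+1,j_U-1]$ forces each interval to be a concatenation of consecutive $h_*$-macro-blocks, every inner boundary position having type strictly deeper than $\xi_{\min(h,k)}$. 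By the $n$-fractal requirement the smallest type occurring strictly inside is uniform, so $[i_L,i_U]$ decomposes as a maximal chain $\Bl_1,\dots,\Bl_s$ of $h_*$-successive macro-blocks and $[j_L,j_U]$ as $\Bl'_1,\dots,\Bl'_t$, with all inner boundaries of a common deeper type; this is the skeleton on which the induction operates.

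The engine of the argument is a matching lemma: from $(i_L,j_L),(i_U,j_U)\in R(h_*,\EmphInt)$ one builds a total, $R(h_*,\EmphInt)$-respecting correspondence between the two macro-block chains. The order clause of compatibility is what drives this. I would split according to whether the $h_*$-orders of $i_L$ and $j_L$ are equal or both exceed $\EmphInt$. If $o(i_L)=o(j_L)$, then $s=t$ and I match $\Bl_a$ with $\Bl'_a$: the inner boundary positions are of strictly deeper type, their $h_*$-low-ancestors are exactly the already-matched coarser boundaries (ultimately $i_U,j_U$), so compatibility of those low-ancestors is available, and the remaining order condition at the deeper level is the count of sub-blocks inside corresponding parent blocks, which the uniform value $m_1=n+4$ keeps equal. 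The inductive hypothesis, applied to each matched pair of corresponding sub-intervals (whose $\min$ type is strictly larger, hence $n-\min$ strictly smaller), then yields totality within them. The hard case, and the main obstacle, is when the orders differ and are therefore both $>\EmphInt$: here $s,t>\EmphInt$ and the counts of sub-blocks at every inner level also exceed $\EmphInt$, so exact counts are irrelevant to $R(h_*,\EmphInt)$, and I must construct a correspondence that is surjective in both directions while keeping every matched pair compatible, absorbing the length discrepancy and the $\pm1$ slack permitted by the compatibility clauses. This saturation step is exactly where the numerical margins of the fractal requirement (block length at least $n+2$ and $n+4$ blocks per level, both above $\EmphInt\leq n$) are indispensable.

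Granting the matching lemma, Part~1 follows by refining the macro-block correspondence down to individual positions. For main $\emptyset$-positions I use the clause on $p(\cdot)$ together with the distance condition $p(i)-i=p(j)-j$ or both $>\EmphInt$, which is available because inside a block the run of empties before the next main $p$-position has length at least $n+2>\EmphInt$. The auxiliary assertion $(i_L+1,j_L+1)\in R(h_*,\EmphInt)$ is the first step of this refinement: by the fractal structure the $p$ of a block is always followed by an $\emptyset^{*}$-run of length at least $n+2$, so $i_L+1$ and $j_L+1$ are both main $\emptyset$-positions, their distances to $p(i_L+1)$ respectively $p(j_L+1)$ are equal or both above threshold, and compatibility reduces to that of $p(i_L+1)$ and $p(j_L+1)$, which the matching lemma supplies.

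Finally, Part~2 is a prefix-truncated reading of the same position-by-position walk. Given $\wp\in[i_L,i_U-1]$, I follow the correspondence built from $i_L$ up to $\wp$ and take $\wp'$ to be the matched position; both the membership $(\wp,\wp')\in R(h_*,\EmphInt)$ and the totality of $R(h_*,\EmphInt)$ on the truncated rectangle $[i_L,\wp-1]\times[j_L,\wp'-1]$ are then immediate from the portion of the correspondence already constructed, since the walk is total and order-respecting at every step. Thus both parts follow once the matching lemma, and in particular its saturation case, is in place.
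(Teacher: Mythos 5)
Your overall architecture mirrors the paper's (induction on the depth of the bounding types, decomposition of the interiors into a chain of macro-blocks of the next deeper type $\xi_{\ell+1}$ with $\ell=\max(h,k)$, and a block-matching that exploits the order slack above $\EmphInt$), but two steps fail concretely. First, the regions between the outer endpoints and the inner chain are not what you assume. By the $n$-fractal requirement every $n$-block of $\join(w_1,\ldots,w_k)$ is refilled at level $k+1$ and recursively at all deeper levels, so the flank $[i_L,f_i]$ between $i_L$ and the first inner boundary $f_i$ of type $\xi_{\ell+1}$ contains main $p$-positions of types $\xi_{\ell+2},\ldots,\xi_n$ (except in the degenerate case $\ell+1=n$); it is not ``the two endpoints together with runs of main $\emptyset$-positions''. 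Consequently your derivation of $(i_L+1,j_L+1)\in R(h_*,\EmphInt)$ is unsound: $p(i_L+1)$ is the next main $p$-position, which has type $\xi_n$, not a boundary of your level-$(\ell+1)$ chain, so compatibility of $p(i_L+1)$ and $p(j_L+1)$ is \emph{not} supplied by your matching lemma --- it requires the full low-ancestor recursion through the flank. The paper handles this by treating each flank $[i_L,f_i]$ and $[l_i,i_U]$ as a fresh instance of the proposition and invoking the inductive hypothesis on it. That, in turn, exposes the flaw in your measure: a flank keeps one endpoint at the coarse type, so $\min$ of its two type indices equals $\min(h,k)$ on at least one side and $n-\min(h,k)$ does not decrease; your induction is not well-founded as stated. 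The paper inducts on $2n-(h+k)$, which strictly decreases on the flanks as well as on matched inner block pairs, since one endpoint index rises to $\ell+1>\max(h,k)$.

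Second, the ``saturation step'' that you yourself flag as the main obstacle is asserted rather than constructed, and it is precisely the content of the statement in the unequal-order case. The paper's device is short and makes your case split on $o(i_L)=o(j_L)$ unnecessary (together with your unjustified claim $s=t$: the chain lengths $m_i$ and $m_j$ need not be equal and are never used beyond $m_i,m_j>n+3$): pair the last $n+1$ inner blocks index-by-index counting from the upper end, so that matched blocks have equal $h_*$-orders $1,\ldots,n+1$, and declare every earlier block on one side related to every earlier block on the other, which is legitimate because all their orders exceed $n+1>n\geq\EmphInt$. Totality of $R(h_*,\EmphInt)$ on the full rectangle then follows by applying the inductive hypothesis (Property~1) inside each matched pair of blocks and inside the two flanks. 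Note also that Part~2 is not just a ``prefix truncation'' of a single walk: since the correspondence is a total relation rather than a monotone bijection, one must split on whether $\wp$ lies in a flank, in one of the last $n+1$ blocks, or in an earlier block, and re-apply the inductive hypothesis for Property~2 in each case, as the paper does.
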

\begin{proof}
We prove Properties~1  and~2 by induction on $2n-(h+k)$. \vspace{0.1cm}

 \noindent \emph{Base case: $2n-(h+k)=0$.} Hence, $k=h=n$. By hypothesis and construction, the sets $[i_L+1,i_U-1]$ and $[j_L+1,j_U-1]$ contain only main $\emptyset$-positions, and they have cardinality at least $n$+1. Thus, since $\EmphInt\in [0,n]$, by definition of $(h_*,\EmphInt)$-compatibility, the result easily follows.\vspace{0.1cm}

 \noindent \emph{Base case: $2n-(h+k)>0$.} Let $\ell=\max{(h,k)}$. If $\ell=n$, by hypothesis $[i_L+1,i_U-1]$ and $[j_L+1,j_U-1]$ contain only main $\emptyset$-positions. Thus, by reasoning as in the base case, the result follows. Now, assume that $\ell<n$. Then,  by hypothesis and construction, it follows that there must be $m_i,m_j>n+3$ and $m_i+m_j$  $h_*$-macro-blocks of $h_*$-type $\xi_{\ell+1}$
 \[
 \Bl_{m_i}^{i},\ldots,\Bl_{1}^{i}, \Bl_{m_j}^{j},\ldots,\Bl_{1}^{j}
 \]
 such that the following holds, where $f_i$ (resp., $f_j$) is the first position of $\Bl_{m_i}^{i}$ (resp., $\Bl_{m_j}^{j}$), and
$l_i$ (resp., $l_j$) is the last position of $\Bl_{1}^{i}$ (resp., $\Bl_{1}^{j}$):
\begin{compactitem}
  \item for all $1\leq r\leq m_i$, $\Bl_{r}^{i}\subseteq [i_L+1,i_U-1]$ and $\Bl_{r}^{i}$ is the $h_*$-successor of
     $\Bl_{r+1}^{i}$  if $r<m_i$.
  \item for all $1\leq r\leq m_j$, $\Bl_{r}^{j}\subseteq [j_L+1,j_U-1]$ and $\Bl_{r}^{j}$ is the $h_*$-successor of
     $\Bl_{r+1}^{j}$  if $r<m_j$.
  \item There is no main $p$-position in $[i_L+1,f_i-1]\cup [j_L+1,f_j-1]\cup  [l_i+1,i_U-1]\cup [l_j+1,j_U-1]$ of $h_*$-type $\xi_r$
  such that $\xi_r \preceq \xi_{\ell+1}$.
\end{compactitem}
\vspace{0.2cm}

Hence, we also deduce that (recall that $\EmphInt\in [0,n]$).
\begin{compactitem}
\item $(f_i,f_j)\in R(h_*,\EmphInt)$ and $(l_i,l_j)\in R(h_*,\EmphInt)$;
  \item for all $1\leq r\leq n+1$, $\Bl_{r}^{i}$ and $\Bl_{r}^{j}$ are $(h_*,\EmphInt)$-compatible $h_*$-macro-blocks;
  \item for all $ n+1<r\leq m_i$ and $ n+1<s\leq m_j$, $\Bl_{r}^{i}$ and $\Bl_{s}^{j}$ are $(h_*,\EmphInt)$-compatible $h_*$-macro-blocks.
\end{compactitem}
\vspace{0.2cm}

Since $\ell+1\geq \max(h,k)+1$, by applying the induction on Property~1, we obtain that:
\begin{compactitem}
\item[(I)] the restriction of $R(h_*,\EmphInt)$ to $[i_L,f_i]\times [j_L,f_j]$ is total and $(i_L+1,j_L+1)\in R(h_*,\EmphInt)$;
\item[(II)] the restriction of $R(h_*,\EmphInt)$ to $[l_i,i_U]\times [l_j,j_U]$ is total;
\item[(III)] for all $1\leq r\leq n+1$, the restriction of  $R(h_*,\EmphInt)$ to $\Bl_{r}^{i}\times \Bl_{r}^{j}$ is total;
  \item [(IV)] for all $ n+1<r\leq m_i$ and $ n+1<s\leq m_j$, the restriction of  $R(h_*,\EmphInt)$ to $\Bl_{r}^{i}\times \Bl_{s}^{j}$ is total.
\end{compactitem}
\vspace{0.2cm}

Hence, since $m_i>n+3$ and $m_j>n+3$, Property~1 follows. Now, we prove Property~2. We distinguish, four cases:
\begin{compactitem}
\item  $\wp\in [i_L,f_i-1]$. Recall that $(f_i,f_j)\in R(h_*,\EmphInt)$, $f_i$ and $f_j$ have $h_*$-type $\xi_{\ell+1}$, and there is no main $p$-position in $[i_L+1,f_i-1]\cup [j_L+1,f_j-1]$ of type $\xi_r$
  such that $\xi_r \preceq \xi_{\ell+1}$. Thus, since $\ell=\max(h,k)$, we can apply the induction hypothesis on Property~2, and the result follows.
\item  $\wp\in [l_i,i_U-1]$. This case is similar to the previous one.
\item there is $1\leq r\leq n+1$ such that $\wp\in \Bl_{r}^{i}$ and $\wp$ is \emph{not} the last position of $\Bl_{r}^{i}$. Let
 $\Bl_{r}^{i}=[f_{r}^{i},l_{r}^{i}]$ and $\Bl_{r}^{j}=[f_{r}^{j},l_{r}^{j}]$.
 Recall that $\Bl_{r}^{i}$ and $\Bl_{r}^{j}$ are $(h_*,\EmphInt)$-compatible
  $h_*$-macro-blocks of  $h_*$-type $\xi_{\ell+1}$. Hence,
  $(f_{r}^{i},f_{r}^{i}),(l_{r}^{i},l_{r}^{i})\in R(h_*,\EmphInt)$, $f_{r}^{i}$, $f_{r}^{j}$, $l_{r}^{i}$, $l_{r}^{j}$ have $h_*$-type $\xi_{\ell+1}$, and there is no main $p$-position in $[f_{r}^{i}+1,l_{r}^{i}-1] \cup [f_{r}^{j}+1,l_{r}^{j}-1]$ of type $\xi_r$
  such that $\xi_r \preceq \xi_{\ell+1}$. Moreover, by Conditions (I), (III) and (IV) above, the restriction of
  $R(h_*,\EmphInt)$ to $[i_L, f_{r}^{i}]\times [j_L, f_{r}^{j}]$ is total. Thus, since
  $\wp\in [f_{r}^{i},l_{r}^{i}-1]$,
    by applying the induction hypothesis on Property~2, the result follows.
 \item there is   $ n+1<r\leq m_i$ such that  $\wp\in \Bl_{r}^{i}$ and $\wp$ is \emph{not} the last position of $\Bl_{r}^{i}$.
 Recall that for all  $ n+1<s\leq m_j$, $\Bl_{r}^{i}$ and $\Bl_{s}^{j}$ are $(h_*,\EmphInt)$-compatible $h_*$-macro-blocks of $h_*$-type $\xi_{\ell+1}$. Choice
 $n+1<s\leq m_j$ such that $s=m_j$ iff $r=m_i$. Note that such a $s$ exists since $m_j>n+3$.
 Let  $\Bl_{r}^{i}=[f_{r}^{i},l_{r}^{i}]$ and $\Bl_{s}^{j}=[f_{s}^{j},l_{s}^{j}]$. By
  Conditions (I), (III) and (IV) above, the restriction of
  $R(h_*,\EmphInt)$ to $[i_L, f_{r}^{i}]\times [j_L, f_{s}^{j}]$ is total.  Since
  $\wp\in [f_{r}^{i},l_{r}^{i}-1]$, by reasoning as in the previous case, Property~2 follows.
\end{compactitem}

\end{proof}

\begin{proposition}\label{prop:Specularity} Let $h_*\in [1,n]$,  $\EmphInt\in [1,n]$, $(\ell,\ell')\in R(h_*,\EmphInt)$,
and $\wp$ be a main position such that $\wp\geq \ell$.  Then,  the following holds:
\begin{enumerate}
  \item either $\ell=\ell'$ or $(\ell+1,\ell'+1)\in R(h_*,\EmphInt-1)$;
  \item there is a main position  $\wp'\geq \ell'$
such that  $(\wp,\wp')\in R(h_*,\EmphInt-1)$ and the restriction of
    $R(h_*,\EmphInt-1)$ to
$[\ell,\wp-1]\times [\ell',\wp'-1]$ is total.
\end{enumerate}
\end{proposition}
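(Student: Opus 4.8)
The plan is to prove both properties by first reducing to the case where $\ell$ and $\ell'$ are main $p$-positions (replacing an $\emptyset$-position $i$ by $p(i)$ and treating the intervening $\emptyset$-run separately), and then using Proposition~\ref{prop:SpecularityPreliminary} as the workhorse: it already supplies local totality and the successor compatibility $(i_L+1,j_L+1)\in R(h_*,\EmphInt)$ within a single macro-block. The recurring reason for the weakening from $R(h_*,\EmphInt)$ to $R(h_*,\EmphInt-1)$ is that moving one position forward decreases either a gap $p(i)-i$ or an $h_*$-order by exactly one, so a quantity that was $>\EmphInt$ survives only as $>\EmphInt-1$.

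For Property~1 I would do a case analysis on the nature of $\ell$ (hence of $\ell'$, which has the same type and $p/\emptyset$-status, since $(\ell,\ell')\in R(h_*,\EmphInt)$). If $\ell,\ell'$ are main $\emptyset$-positions, I distinguish whether $\ell+1$ is still an $\emptyset$-position or equals $p(\ell)$. In the first case $p(\ell+1)=p(\ell)$, so both gaps drop by one; since $\EmphInt\geq 1$ and the old gaps are equal or both $>\EmphInt$, the new ones are equal or both $>\EmphInt-1$, and $p(\ell+1),p(\ell'+1)$ stay compatible, yielding $(\ell+1,\ell'+1)\in R(h_*,\EmphInt-1)$. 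In the second case the gap is $1$, which cannot be $>\EmphInt\geq 1$, so it forces $\ell'+1=p(\ell')$ and $(\ell+1,\ell'+1)=(p(\ell),p(\ell'))\in R(h_*,\EmphInt)$. If $\ell,\ell'$ are main $p$-positions, I instantiate Proposition~\ref{prop:SpecularityPreliminary}(1) with $i_L=\ell$, $j_L=\ell'$ and $i_U,j_U$ the $h_*$-low-ancestors of $\ell,\ell'$ (or, when $\ell$ has type $\xi_1$, the next type-$\xi_1$ boundary); the hypotheses hold because the low-ancestor is the first position of strictly lower type, and the conclusion $(i_L+1,j_L+1)\in R(h_*,\EmphInt)$ (only $R(h_*,\EmphInt-1)$ for the $\xi_1$ crossing, where the order drops by one) gives what is wanted. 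The boundary case $\ell=\ell_n$ is handled separately: $\ell_n$ is the unique type-$\xi_1$ position of order $0$, so compatibility forces $\ell'=\ell_n=\ell$ and the first disjunct holds.

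For Property~2 I would, after reducing to $p$-positions, walk up the $h_*$-low-ancestor chain $\ell=a_0\prec a_1\prec\cdots\prec a_s$ of strictly decreasing types terminating at type $\xi_1$, together with the parallel chain $a'_0,\ldots,a'_s$; by definition of $R(h_*,\EmphInt)$ each pair $(a_t,a'_t)$ is $(h_*,\EmphInt)$-compatible, and by the low-ancestor property each segment $[a_t,a_{t+1}]\times[a'_t,a'_{t+1}]$ meets the hypotheses of Proposition~\ref{prop:SpecularityPreliminary}. If $\wp\leq a_s$ it lies in one such segment, so Proposition~\ref{prop:SpecularityPreliminary}(2) produces $\wp'$ with partial totality at level $\EmphInt$, while Proposition~\ref{prop:SpecularityPreliminary}(1) yields full totality of the earlier segments; concatenating these totalities proves the claim at level $\EmphInt$ (hence $\EmphInt-1$). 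If $\wp>a_s$ I continue at the top ($\xi_1$) level along the successive pure-macro-block boundaries: when $o(a_s)=o(a'_s)$ the two boundary sequences stay in lockstep and the previous argument applies verbatim at level $\EmphInt$; when $o(a_s)\neq o(a'_s)$ (necessarily both $>\EmphInt$) I would exploit that all type-$\xi_1$ positions of order $>\EmphInt$ are mutually $(h_*,\EmphInt)$-compatible to match $\wp$'s enclosing boundary, and the common tail of order-$\leq\EmphInt$ macro-blocks, which is structurally identical on both sides, to finish, the single order drop accounting for the passage to $R(h_*,\EmphInt-1)$.

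The hard part will be this last case of Property~2: matching $\wp$ across a region where the two order-counts differ while certifying totality of $R(h_*,\EmphInt-1)$ in both directions over intervals of different lengths. The facts that make it work are (i) the fractal self-similarity encoded by the $n$-fractal requirement (Remark~\ref{remark:fractalRequirement}), which makes macro-blocks of order $>\EmphInt$ internally indistinguishable and mutually compatible, and (ii) that the order-$\leq\EmphInt$ suffixes of the two runs coincide structurally, so the matching there can be made positional. Getting the totality bookkeeping right—covering every position of the longer interval by a compatible position of the shorter one and conversely—is where the bulk of the careful case analysis lies, and it is exactly here that the weakening to $\EmphInt-1$ becomes unavoidable.
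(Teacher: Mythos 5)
Your proposal is correct, and it rests on the same two pillars as the paper's own proof --- reduction of main $\emptyset$-positions via $p(\cdot)$, and Proposition~\ref{prop:SpecularityPreliminary} as the workhorse --- but its architecture is genuinely different. The paper proves both properties by induction on $\ell_n-\ell$: at each step it moves from $(\ell,\ell')$ either to the low-ancestor pair (when $o(\ell)=0$) or to the endpoints $(i_U,i'_U)$ of the enclosing $h_*$-macro-blocks, and when the endpoint orders come out as $\{\EmphInt,\EmphInt+k\}$ it resynchronizes \emph{locally} by skipping one successor block on the larger-order side (the pair $(i_U,i''_U)$), restoring full level-$\EmphInt$ compatibility so the recursion never degrades below $\EmphInt-1$. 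You instead avoid block-by-block stepping: you jump along the low-ancestor chain in whole segments $[a_t,a_{t+1}]\times[a'_t,a'_{t+1}]$, letting Proposition~\ref{prop:SpecularityPreliminary}(1) absorb all unequal-block-count resynchronization at intermediate types (its proof already performs the all-pairs matching when $m_i\neq m_j$), and you pay for the order mismatch only once, at the top $\xi_1$ level, via the observation --- which is indeed immediate from the definition --- that all type-$\xi_1$ positions of $h_*$-order $>\EmphInt$ are mutually $(h_*,\EmphInt)$-compatible, combined with lockstep matching of the order-$\leq\EmphInt$ tails. This checks out: the segment hypotheses of Proposition~\ref{prop:SpecularityPreliminary} hold along the chain because no position strictly between $a_t$ and its low-ancestor has type $\preceq$ that of $a_{t+1}$, and in the unequal-order region the surplus high-order segments can be related all-to-one at level $\EmphInt-1$ while the tails are matched positionally, which delivers the required totality in both directions. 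Two small corrections: ``structurally identical tails'' is more than you need or can claim --- what holds and suffices is that corresponding tail boundaries have equal $h_*$-orders, hence are level-$\EmphInt$ compatible; and the drop to $\EmphInt-1$ is not intrinsically a $\xi_1$ phenomenon (in the paper's block-by-block organization it can occur at any type) --- it is your low-ancestor shortcut that confines it to the top level. What your route buys is a transparent, essentially non-inductive account of where the single level drop occurs; what the paper's induction on $\ell_n-\ell$ buys is purely local bookkeeping, with no need to manage concatenation of totality across a global decomposition.
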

\begin{proof}
Let $h_*\in [1,n]$, $\EmphInt\in [1,n]$, $(\ell,\ell')\in R(h_*,\EmphInt)$ and $\wp$ be a main position such that $\wp\geq \ell$.
Recall that $\ell_n$ is the greatest main position. Moreover, by construction, $\ell_n$ is a main $p$-position having $h_*$-type $\xi_1$
 and $h_*$-order $0$.

We prove Properties~1 and~2   by induction on $\ell_n-\ell$. For the base case, $\ell_n-\ell=0$. Since
$(\ell,\ell')\in R(h_*,\EmphInt)$, by definition of $(h_*,\EmphInt)$-compatibility, $\ell_n-\ell'=0$ as well. Hence, Properties~1 and~2 follows.

For the induction step, assume that $\ell_n-\ell>0$. Hence, $\ell_n-\ell'>0$ as well.
First, we consider the case when $\ell$ is a main $p$-position (hence, $\ell'$ is a main $p$-position as well).
If $\ell=\ell'$, Properties~1 and~2 trivially hold. Now, assume that $\ell\neq \ell'$.
 Let $o(\ell)$ and $o(\ell')$
be the $h_*$-orders of $\ell$ and $\ell'$. We distinguish two cases:
\begin{itemize}
  \item $o(\ell)=0$: since $(\ell,\ell')\in R(h_*,\EmphInt)$, it holds that $o(\ell')=0$. Note that $\ell_n$ is the unique
  main $p$-position having $h_*$-type $\xi_1$ and $h_*$-order $0$. Thus, since $\ell\neq \ell_n$, $\ell' \neq \ell_n$,
  and $(\ell,\ell')\in R(h_*,\EmphInt)$,
  the $h_*$-low-ancestors  $a(\ell)$ and $a(\ell')$ of $\ell$ and $\ell'$ are defined and $(a(\ell),a(\ell'))\in R(h_*,\EmphInt)$.
  Moreover, denoting $\xi_h$ (resp., $\xi_k$) the common $h_*$-type of $\ell$ and $\ell'$ (resp., $a(\ell)$ and $a(\ell')$), there is no main $p$-position
in $[\ell,a(\ell)]\cup [\ell',a(\ell')]$ having $h_*$-type $\xi_r$ such that $\xi_r\preceq \xi_h$ and $\xi_r\preceq \xi_k$. Hence, by Proposition~\ref{prop:SpecularityPreliminary}(1), $(\ell+1,\ell'+1)\in R(h_*,\EmphInt)$. Thus, since $R(h_*,\EmphInt)\subseteq R(h_*,\EmphInt-1)$, Property~1 follows.
Now, we prove Property~2.
     We distinguish two cases:
  \begin{compactitem}
    \item $\wp\in [\ell,a(\ell)-1]$. By applying Proposition~\ref{prop:SpecularityPreliminary}(2), there exists
    $\wp'\in [\ell',a(\ell')-1]$ such that $(\wp,\wp')\in R(h_*,\EmphInt)$ and the restriction of
    $R(h_*,\EmphInt)$ to
$[\ell,\wp-1]\times [\ell',\wp'-1]$ is total. Thus, since $R(h_*,\EmphInt-1)\supseteq R(h_*,\EmphInt)$, Property~2 follows.
    \item $\wp\geq a(\ell)$.
  Since $R(h_*,\EmphInt-1)\supseteq R(h_*,\EmphInt)$,  by applying Proposition~\ref{prop:SpecularityPreliminary}(1),  the restriction of
    $R(h_*,\EmphInt-1)$ to
$[\ell,a(\wp)]\times [\ell',a(\wp)]$ is total. Thus, since
$(a(\ell),a(\ell'))\in R(h_*,\EmphInt)$, $\wp\geq a(\ell)$ and $a(\ell)>\ell$, by applying the induction hypothesis, Property~2 follows.
  \end{compactitem}
  \item $o(\ell)>0$: since $(\ell,\ell')\in R(h_*,\EmphInt)$, it holds that $o(\ell')>0$. Hence, there exist two $h_*$-macro-blocks of the form
  $\Bl=[\ell,i_U]$ and  $\Bl=[\ell',i_U']$. Let $o(i_U)$ and $o(i'_U)$ be the $h_*$-orders of $i_U$ and $i'_U$.
  Note that $o(i_U)=o(\ell)-1$ and $o(i'_U)=o(\ell')-1$. Since $(\ell,\ell')\in R(h_*,\EmphInt)$, either $o(\ell)=o(\ell')$, or $o(\ell), o(\ell')>\EmphInt$. Hence, only the following two cases are possible:
   \begin{compactitem}
    \item $o(i_U)=o(i'_U)$, or $o(i_U), o(i'_U)>\EmphInt$.
   Since the $h_*$-low-ancestor of $i_U$ (resp., $i'_U$) coincides with the $h_*$-low-ancestor of $\ell$ (resp., $\ell'$), we have that $(i_U,i'_U)\in R(h_*,\EmphInt)$.
    By reasoning as for the case $o(\ell)=0$ (we just replace $a(\ell)$ and $a(\ell')$ with $i_U$ and $i_U'$, respectively), Property~1 and~2 follow.
    \item there exists $k\geq 1$ such that $\{o(i_U),o(i'_U)\}= \{\EmphInt,\EmphInt+k\}$. It follows that $(i_U,i'_U)\in R(h_*,\EmphInt-1)$.
     Since $(\ell,\ell')\in R(h_*,\EmphInt)$, $(\ell,\ell')\in R(h_*,\EmphInt-1)$. Thus, by applying Proposition~\ref{prop:SpecularityPreliminary}(1), we obtain that $(\ell+1,\ell'+1)\in R(h_*,\EmphInt-1)$, and Property~1  follows. Now, we prove Property~2.
     First, assume that $\wp\in [\ell,i_U-1]$. Applying
    Proposition~\ref{prop:SpecularityPreliminary}(2), there exists $\wp'\in [\ell',i'_U-1]$ such that
    $(\wp,\wp')\in R(h_*,\EmphInt-1)$ and the restriction of
    $R(h_*,\EmphInt-1)$ to
$[\ell,\wp-1]\times [\ell',\wp'-1]$ is total. Hence, Property~2 follows.

Now, assume that $\wp\geq i_U$. By hypothesis, $\{o(i_U),o(i'_U)\}= \{\EmphInt,\EmphInt+k\}$ for some $k\geq 1$. Assume that $o(i_U)=\EmphInt$ and $o(i'_U)=\EmphInt+k$ (the other case being similar). Additionally, for simplicity, we also assume that $k=1$ (the general case can be handled in a similar way). Let $\Bl''=(i'_U,i''_U)$ be the $h_*$-macro-block which is the $h_*$-successor of $\Bl'$ (note that $\Bl''$ exists since $o(i'_U)=\EmphInt+1$).
Then, the $h_*$-order $o(i''_U)$ of $i''_U$ is $\EmphInt$ and $(i_U,i''_U)\in R(h_*,\EmphInt)$, hence, $(i_U,i''_U)\in R(h_*,\EmphInt-1)$ as well. Since
     $(i_U,i'_U)\in R(h_*,\EmphInt-1)$,  by  Proposition~\ref{prop:SpecularityPreliminary}(1),  the restriction of $R(h_*,\EmphInt-1)$ to $\Bl\times \Bl'$ (resp., $\Bl\times \Bl''$) is total. Hence, the restriction of $R(h_*,\EmphInt-1)$ to $[\ell,i_U]\times [\ell',i''_U]$ is total.
     Thus, since
$(i_U,i''_U)\in R(h_*,\EmphInt)$, $\wp\geq i_U$ and $i_U>\ell$, by applying the induction hypothesis, the result follows.
  \end{compactitem}
\end{itemize}

\bigskip

It remains to consider the case when $\ell$ is a $\emptyset$-main position. Since $(\ell,\ell')\in R(h_*,\EmphInt)$, $\ell'$ is a main $\emptyset$-position as well. Moreover, $(p(\ell),p(\ell'))\in R(h_*,\EmphInt)$ and \emph{either}
  $p(\ell)-\ell=p(\ell')-\ell'$, \emph{or} $p(\ell)-\ell>\EmphInt$ and $p(\ell')-\ell'>\EmphInt$. Thus, since we have already proved that Properties~1 and~2 hold when
$\ell$ is a main $p$-position, by applying the induction hypothesis to the main $p$-position $p(\ell)$, the result easily follows.
\end{proof}

\begin{lemma}\label{lemma:specularityForFormulasSatisfaction} Let $h_*\in [1,n]$, $\psi$ be an $\HCTLStar$ formula such that $|\psi|\leq n$, and
$(\ell,\ell')\in R(h_*,|\psi|)$.
   \begin{compactenum}
   \item If $h_*=1$, then for all path assignments $\Pi$ and $x\in \Var$,
    \[
    \Pi,x,\ell\models_{K_n} \psi \Leftrightarrow \Pi,x,\ell'\models_{K_n} \psi   \]
    \item    If $h_*\in [2,n]$, then for all path assignments $\Pi$ such that \emph{$\pi(\xi_{h_*})$ is not bound by $\Pi$}, and $x\in \Var$
    \[
    \Pi,x,\ell\models_{M_n} \psi \Leftrightarrow \Pi,x,\ell'\models_{M_n} \psi   \]
  \end{compactenum}
  \end{lemma}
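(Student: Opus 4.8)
The plan is to prove both parts simultaneously by induction on the size $|\psi|$, since the arguments for $K_n$ (part~1, where the $h_*$-type is the actual type) and for $M_n$ (part~2, where $i_\Alert$ is re-typed as $\xi_{h_*}$) differ \emph{only} in the atomic base case; every inductive step relies purely on the combinatorial Propositions~\ref{prop:SpecularityPreliminary} and~\ref{prop:Specularity} about $R(h_*,\EmphInt)$ together with the monotonicity $R(h_*,\EmphInt)\subseteq R(h_*,\EmphInt')$ for $\EmphInt\geq\EmphInt'$, which is immediate from the definition. First I would record two structural observations about $K_n$ and $M_n$ (which share the same underlying tree): every non-root node lies on a \emph{unique} initial path, so at any position $i\geq 1$ the only initial path agreeing with $\Pi(x)$ up to $i$ is $\Pi(x)$ itself. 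Consequently, for a main position $\ell\in[1,\ell_n]$ the quantifier clause collapses to $\Pi,x,\ell\models\GExists z.\psi'$ being equivalent to $\Pi[z\leftarrow\Pi(x)],z,\ell\models\psi'$; in particular the descent never enlarges the set of bound paths, so in part~2 the path $\pi(\xi_{h_*})$ stays unbound throughout the recursion. This makes the existential case routine: apply the induction hypothesis to $\psi'$ under $\Pi[z\leftarrow\Pi(x)]$ (whose range equals that of $\Pi$), using $(\ell,\ell')\in R(h_*,|\psi|)\subseteq R(h_*,|\psi'|)$.

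For the atomic case $p[z]$ (with $z$ bound) I would read off the labelling at a main position from Remark~\ref{remark:fractalRequirement} and the fact that $w_0$ is the dual of $\join(w_1,\dots,w_n)$: at a $p$-position of actual type $\xi_k$, $p$ holds among all paths exactly along $\pi(\xi_k)$ and never along $\pi(\eta)$, while at an $\emptyset$-position $p$ holds exactly along $\pi(\eta)$. In part~1 (and in part~2 at any position other than $i_\Alert$, where $M_n$ and $K_n$ agree) this makes $p[z]$ hold iff $\Pi(z)=\pi(\xi_k)$ at a $p$-position and iff $\Pi(z)=\pi(\eta)$ at an $\emptyset$-position, a value depending only on the type, so compatible positions agree. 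The delicate point is part~2 at $\ell=i_\Alert$: in $M_n$ the label of $\pi(\xi_1)$ at $i_\Alert$ is switched to $\emptyset$ while $\pi(\eta)$ is untouched, so $p$ holds along \emph{no} path at $i_\Alert$; and at a genuine type-$\xi_{h_*}$ position (necessarily $\neq i_\Alert$, as $h_*\geq 2$) $p$ holds only along the \emph{unbound} $\pi(\xi_{h_*})$. Hence at both kinds of position $p[z]$ is false for every bound $z$, which is exactly why re-typing $i_\Alert$ as $\xi_{h_*}$ is sound, and two compatible positions again agree.

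The two temporal cases are driven by Proposition~\ref{prop:Specularity}. For $\psi=\Next\psi'$, Proposition~\ref{prop:Specularity}(1) gives either $\ell=\ell'$ (trivial) or $(\ell+1,\ell'+1)\in R(h_*,|\psi|-1)=R(h_*,|\psi'|)$, and the induction hypothesis closes the case. For $\psi=\psi_1\Until\psi_2$ I would prove the forward implication (the converse being symmetric, as $R(h_*,\EmphInt)$ is an equivalence relation). Given a witness $j\geq\ell$ for $\Pi,x,\ell\models\psi$, I split on whether $j\leq\ell_n$. If $j$ is itself a main position, Proposition~\ref{prop:Specularity}(2) supplies a main position $j'\geq\ell'$ with $(j,j')\in R(h_*,|\psi|-1)$ and $R(h_*,|\psi|-1)$ total on $[\ell,j-1]\times[\ell',j'-1]$; the induction hypothesis then transfers $\psi_2$ to $j'$ (via $R(h_*,|\psi|-1)\subseteq R(h_*,|\psi_2|)$) and, using totality, transfers $\psi_1$ to every $k'\in[\ell',j'-1]$, so $j'$ witnesses $\Pi,x,\ell'\models\psi$.

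If instead $j>\ell_n$, I would peel off the tail: then $\Pi,x,\ell_n+1\models\psi_1\Until\psi_2$ is a fact not depending on $\ell$, and $\psi_1$ holds throughout $[\ell,\ell_n]$. Applying Proposition~\ref{prop:Specularity}(2) with $\wp=\ell_n$ and using that $\ell_n$ is the \emph{unique} main $p$-position of $h_*$-type $\xi_1$ and $h_*$-order $0$ (so its match must again be $\ell_n$), I obtain totality of $R(h_*,|\psi|-1)$ on $[\ell,\ell_n]\times[\ell',\ell_n]$; the induction hypothesis yields $\psi_1$ on all of $[\ell',\ell_n]$, and combined with the position-independent tail fact $\Pi,x,\ell_n+1\models\psi_1\Until\psi_2$ this gives $\Pi,x,\ell'\models\psi$. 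The boolean cases are immediate from the induction hypothesis and the monotonicity of $R(h_*,\EmphInt)$. I expect the atomic analysis of part~2 at $i_\Alert$, and the size/index bookkeeping in the $\Until$ case (ensuring $|\psi_1|,|\psi_2|\leq|\psi|-1$ so that the relations line up and Proposition~\ref{prop:Specularity} applies with $\EmphInt=|\psi|\in[1,n]$), to be the steps needing the most care.
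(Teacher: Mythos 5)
Your proposal is correct and follows essentially the same route as the paper's proof: induction on $|\psi|$ with the boolean and quantifier cases handled via monotonicity of $R(h_*,\cdot)$ and uniqueness of initial paths through non-root nodes, the atomic case by exactly the label analysis at $i_\Alert$ versus genuine type-$\xi_{h_*}$ positions, and the temporal cases driven by Proposition~\ref{prop:Specularity} (your deduction that $\ell_n$ can only be matched to itself, via its unique $h_*$-type $\xi_1$ and $h_*$-order $0$, replaces the paper's equivalent argument that $(\ell_n+1,j+1)$ cannot lie in the relation because $\ell_n$ is the greatest main position). One cosmetic slip: in the quantifier case you wrote $\GExists z$, but the collapse to $\Pi[z\leftarrow\Pi(x)],z,\ell\models\psi'$ is the semantics of the $\HCTLStar$ quantifier $\exists z$ (which is all the lemma needs, since $\psi$ is an $\HCTLStar$ formula); for $\GExists$ the collapse would fail and would also break your invariant that $\pi(\xi_{h_*})$ stays unbound.
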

  \begin{proof} We prove Property~2 (Property~1 being similar).
  Let $h_*\in [2,n]$, $\psi$ be an $\HCTLStar$ formula such that $|\psi|\leq n$, and   $\Pi$ be an assignment path such that $\pi(\xi_{h_*})$ is not bound by $\Pi$. We need to prove that for all $x\in \Var$ and  $(\ell,\ell')\in R(h_*,|\psi|)$,
    \[
    \Pi,x,\ell\models_{M_n} \psi \Leftrightarrow \Pi,x,\ell'\models_{M_n} \psi   \]
  The proof  is by induction on $|\psi|$. The cases for the boolean connectives $\neg$ and $\wedge$ directly follow from the inductive hypothesis and the fact that $R(h_*,\EmphInt)\subseteq R(h_*,\EmphInt')$ for all $\EmphInt,\EmphInt'\in [0,n]$ such that $\EmphInt\geq \EmphInt'$. For the other cases, we proceed as follows.
  \begin{itemize}
    \item Case $\psi=p'[y]$ for some $p'\in\AP$ and $y\in\Var$: we show that the labels of $\Pi(y)(\ell)$ and $\Pi(y)(\ell')$ in $M_n$ coincide.
    Since $(\ell,\ell')\in R(h_*,|\psi|)$, either $\ell$ and $\ell'$ are both main $\emptyset$-positions, or
    $\ell$ and $\ell'$ are both main $p$-positions.

   If $\Pi(y)$ is the initial path  visiting node $\eta$ (i.e., $\Pi(y)=\pi(\eta)$), then by construction, $\Pi(y)(\ell)$ and $\Pi(y)(\ell')$ have the same label in $M_n$, and the result follows. Otherwise, $\Pi(y)=\pi(\xi_k)$ for some $k\in [1,n]$.
       If $\ell$ and $\ell'$ are   main $\emptyset$-positions, then by construction, $\Pi(y)(\ell)$ and $\Pi(y)(\ell')$ have both empty label in $M_n$, and the result follows.
    Now, assume that $\ell$ and $\ell'$ are   main $p$-positions.
    By hypothesis, $k\neq h_*$ and $h_*\in [2,n]$. Since $(\ell,\ell')\in R(h_*,|\psi|)$,
by construction, \emph{either}
    $\ell$ and $\ell'$ have the same type $\xi_h$   and $\ell,\ell'\neq i_\Alert$, \emph{or} $\ell=i_\Alert$ (resp., $\ell'=i_\Alert$) and $\ell'$ has type $\xi_{h_*}$ (resp., $\ell$ has type $\xi_{h_*}$). Thus, since $k\neq h_*$, by construction of $M_n$, the result follows.
    \item Case $\psi=\Next\psi'$. Since $(\ell,\ell')\in R(h_*,|\psi|)$ and $|\psi'|=|\psi|-1$, by Proposition~\ref{prop:Specularity}(1), either $\ell=\ell'$, or $(\ell+1,\ell'+1)\in R(h_*,|\psi'|)$. In the first case, the result trivially follows.
        In the second case, since $(\ell+1,\ell'+1)\in R(h_*,|\psi'|)$, by applying the induction hypothesis, we have
      \[
      \Pi,x,\ell+1\models_{M_n} \psi' \Leftrightarrow \Pi,x,\ell'+1\models_{M_n} \psi'
      \]
    Hence, the result follows.
    \item $\psi=\psi_1\Until\psi_2$: we prove the direction $\Pi,x,\ell\models_{M_n} \psi \Rightarrow \Pi,x,\ell'\models_{M_n} \psi$ (the converse direction being symmetric). Let $\Pi,x,\ell\models_{M_n} \psi$. By hypothesis,
        there exists $\wp\geq \ell$ such that $\Pi,x,\wp\models_{M_n} \psi_2$ and $\Pi,x,i\models_{M_n} \psi_1$ for all $i\in [\ell,\wp-1]$.
       We will prove that $\Pi,x,\ell'\models_{M_n} \psi$. We distinguish two cases:
        \begin{compactitem}
          \item $\wp$ is a main position. Since $(\ell,\ell')\in R(h_*,|\psi|)$, by applying Proposition~\ref{prop:Specularity}(2), there exists
          $\wp'\geq \ell'$ such that $(\wp,\wp')\in R(h_*,|\psi|-1)$ and the restriction of $R(h_*,|\psi|-1)$ to
          $[\ell,\wp-1]\times [\ell',\wp'-1]$ is total.
          Since $R(h_*,|\psi|-1)\subseteq R(h_*,|\psi_2|)$, $(\wp,\wp')\in R(h_*,|\psi|-1)$,  and $\Pi,x,\wp\models_{M_n} \psi_2$, by applying the induction hypothesis, we obtain that $\Pi,x,\wp'\models_{M_n} \psi_2$. Moreover, since
          $R(h_*,|\psi|-1)\subseteq R(h_*,|\psi_1|)$, $\Pi,x,i\models_{M_n} \psi_1$ for all $i\in [\ell,\wp-1]$, and the restriction of $R(h_*,|\psi|-1)$ to
          $[\ell,\wp-1]\times [\ell',\wp'-1]$ is total, by applying the induction hypothesis, we obtain that $\Pi,x,i\models_{M_n} \psi_1$ for all $i\in [\ell',\wp'-1]$. Hence, $\Pi,x,\ell'\models_{M_n} \psi$ and the result follows.
          \item $\wp$ is not a main position. Hence, $\wp>\ell_n$ and $\Pi,x,\ell_n\models_{M_n} \psi$ (recall that $\ell_n$ is the greatest main position).
           By applying Proposition~\ref{prop:Specularity}(2), there exists
          $j\geq \ell'$ such that $(\ell_n,j)\in R(h_*,|\psi|-1)$ and the restriction of $R(h_*,|\psi|-1)$ to
          $[\ell,\ell_n]\times [\ell',j]$ is total. Since $|\psi|-1\geq 1$ and $(\ell_n,j)\in R(h_*,|\psi|-1)$, by  Proposition~\ref{prop:Specularity}(1), either $(\ell_n+1,j+1)\in R(h_*,|\psi|-1)$ or $j=\ell_n$. Since $\ell_n$ is the greatest
          main position, we deduce that $j=\ell_n$.
          Hence, since $R(h_*,|\psi|-1)\subseteq R(h_*,|\psi_1|)$ and $\Pi,x,i\models_{M_n} \psi_1$ for all $i\in [\ell,\ell_n]$, by applying
          the induction hypothesis, we have that $\Pi,x,i\models_{M_n} \psi_1$ for all $i\in [\ell',\ell_n]$. Thus, since $\Pi,x,\ell_n\models_{M_n} \psi$, the result follows.
         \end{compactitem}
     \item  $\psi=\exists y.\,\psi'$. Since $\ell$ and $\ell'$ are main positions, $\ell,\ell'>0$. By construction, for each initial path $\pi$ of $M_n$ and for each position $i >0$, $\pi$ is the unique path having $\pi[0,i]$ as a prefix. Hence, for all $i\in \{\ell,\ell'\}$,
         $\Pi,x,i\models_{M_n} \psi  \Leftrightarrow \Pi,x,i\models_{M_n} \psi'$. Hence, the result directly follows from the induction hypothesis.
  \end{itemize}

 \end{proof}

Now, we  prove the crucial lemma from which Theorem~\ref{theorem;MainResultExpressivenessKCTL} directly follows.
Let $\Bl_\Alert$ be the pure macro-block of type $t_1$ whose last position is $i_\Alert$. The size $|\Pi|$ of a path assignment is the number of initial paths of $K_n$ (or equivalently, $M_n$) which are bound by $\Pi$.

\begin{lemma}\label{lemma:MainLemmaExpressivenessKCTL} Let $\psi$ be an $\HCTLStar$ formula and $\Pi$ be a path assignment  such that
$|\Pi|+|\psi|<n$. Moreover, let $i_F$ be the first position of $\Bl_\Alert$.  Then, for all  $\ell\leq i_F$ and $x\in \Var$,
\[
\Pi,x,\ell\models_{K_n} \psi \Leftrightarrow \Pi,x,\ell\models_{M_n}\psi
\]
\end{lemma}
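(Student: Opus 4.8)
The plan is induction on $|\psi|$, the governing principle being that $K_n$ and $M_n$ differ only in the label of the single node $\pi(\xi_1)(i_\Alert)$, so an evaluation can tell them apart only by reading an atom $p[z]$ at position $i_\Alert$ along a copy of $\pi(\xi_1)$. I first record two structural facts. (a) In $K_n$ and $M_n$ every non-root node has a unique successor, so the clause for $\exists z$ at any position $i\ge 1$ forces the quantified path to equal the current path; new initial paths are therefore bound only at the root, and at every configuration $\Pi'$ occurring in the recursion one has $|\Pi'|\le|\Pi|+|\psi|<n$. (b) Since $|\Pi'|\le n-2$ while there are $n-1$ paths $\pi(\xi_2),\dots,\pi(\xi_n)$, at least one index $h_*\in[2,n]$ is \emph{free}, i.e.\ $\pi(\xi_{h_*})$ is not bound by $\Pi'$; this is exactly the side condition of Lemma~\ref{lemma:specularityForFormulasSatisfaction}(2).

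The base case $\psi=p'[z]$ holds because $\ell\le i_F<i_\Alert$, so $\Pi(z)(\ell)$ carries the same label in both structures; the Boolean cases are immediate; and $\exists z.\psi'$ follows from the induction hypothesis, using that $K_n$ and $M_n$ have the same underlying tree and hence the same domain of path quantification, and that passing the quantifier preserves $|\Pi'|+|\psi'|<n$. When $\pi(\xi_1)$ happens never to be bound, $K_n$ and $M_n$ are literally indistinguishable and nothing further is needed; the interesting situation is when it is bound.

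The temporal cases are the heart of the matter and are handled with two ``reflections'' of $i_\Alert$ produced by the specularity lemmas. In $K_n$: the positions $i_F$ and $i_\Alert$ are the second and third main $p$-positions of type $\xi_1$, their $\xi_1$-macro-blocks have $1$-order at least $n$ because at least $n$ type-$\xi_1$ positions follow $i_\Alert$, and $|\psi|<n$; hence $(i_F,i_\Alert)\in R(1,|\psi|)$ and, by Lemma~\ref{lemma:specularityForFormulasSatisfaction}(1), $i_F$ and $i_\Alert$ satisfy in $K_n$ exactly the same subformulas of $\psi$. In $M_n$: the $h_*$-type of $i_\Alert$ is $\xi_{h_*}$, and the $n$-fractal requirement makes $i_\Alert$ $(h_*,|\psi|)$-compatible with a genuine main $p$-position $j$ of type $\xi_{h_*}$ lying near $i_\Alert$; since $j\ne i_\Alert$ the structures coincide at $j$, and by Lemma~\ref{lemma:specularityForFormulasSatisfaction}(2), $i_\Alert$ and $j$ satisfy in $M_n$ the same subformulas. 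For $\psi=\Next\psi'$ with $\ell<i_F$ the successor position stays $\le i_F$ and the induction hypothesis applies directly; the remaining temporal sub-cases are treated by transporting the computation across the single discrepancy through these two reflections together with Proposition~\ref{prop:Specularity}(2), which moves an $\Until$-witness between $R(h_*,\cdot)$-related positions while keeping the intermediate $\psi_1$-segment total.

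I expect the genuine difficulty to be the $\Until$ case when the witness $\wp$ overshoots $i_F$, so that the certifying segment $[\ell,\wp]$ crosses $i_\Alert$. The argument must (i) use the induction hypothesis on $\psi_1$ and $\psi_2$ on the part of the segment contained in $[\ell,i_F]$, where $K_n$ and $M_n$ agree, and (ii) replace the crossing of $i_\Alert$ by a crossing of a position where the two structures coincide, invoking the $K_n$-reflection $i_F\sim i_\Alert$ and the $M_n$-reflection $i_\Alert\sim j$. The delicate bookkeeping is to verify that every $1$-order, $h_*$-order and block length invoked really exceeds $|\psi|$ (this is where $m_1=n+4$, the ``at least $n$ type-$\xi_1$ positions after $i_\Alert$'', and $|\psi|<n$ enter), and that each reflection is available for all subformula sizes, which holds since $R(h_*,m)\subseteq R(h_*,m')$ whenever $m\ge m'$.
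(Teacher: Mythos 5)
Your framework is the right one, and most of the scaffolding matches the paper's proof: induction on $|\psi|$ with the invariant $|\Pi|+|\psi|<n$; the pigeonhole choice of a free index $h_*\in[2,n]$ with $\pi(\xi_{h_*})$ unbound (the paper's Claim~2); the atom, Boolean and $\exists$ cases (using uniqueness of the path through any non-root node); the reduction of the $\Until$ case to the single equivalence $\Pi,x,i_F\models_{K_n}\psi\Leftrightarrow\Pi,x,i_F\models_{M_n}\psi$; and the $K_n$-side relation $(i_F,i_\Alert)\in R(1,n-1)\subseteq R(1,|\psi|)$ with Lemma~\ref{lemma:specularityForFormulasSatisfaction}(1). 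The problem is the transport chain itself.

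Your two reflections, $i_F\sim i_\Alert$ in $K_n$ and $i_\Alert\sim j$ in $M_n$, do not compose into the needed equivalence at $i_F$: after the reduction, the evaluation point in $M_n$ is $i_F$, and your $M_n$-reflection is anchored at $i_\Alert$. Nothing in your toolkit relates $i_F$-in-$M_n$ to anything else, and it cannot be patched by compatibility with $i_\Alert$, since $(i_F,i_\Alert)\notin R(h_*,\cdot)$: the $h_*$-type of $i_F$ is $\xi_1$ while that of $i_\Alert$ is, by definition, $\xi_{h_*}$. The paper's proof supplies exactly the missing anchor: with $i_N$ the \emph{fourth} main $p$-position of type $\xi_1$ (so $i_N>i_\Alert$), its Claim~3 records $(i_F,i_\Alert),(i_\Alert,i_N)\in R(1,n-1)$ \emph{and} $(i_F,i_N)\in R(h_*,n-1)$ — the latter holds precisely because both $i_F$ and $i_N$ keep $h_*$-type $\xi_1$ while the intervening $i_\Alert$ is reclassified as type $\xi_{h_*}$; two applications of Lemma~\ref{lemma:specularityForFormulasSatisfaction}(1) give $i_F\sim_{K_n} i_N$, one application of part~(2) gives $i_F\sim_{M_n} i_N$, and a separate satisfaction-level fact (the paper's Claim~1: for all $\ell>i_\Alert$, $\Pi,x,\ell\models_{K_n}\psi\Leftrightarrow\Pi,x,\ell\models_{M_n}\psi$, valid since $\HCTLStar$ here is future-only and the labelings agree strictly beyond $i_\Alert$) closes the chain at $i_N$. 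You never state such a satisfaction-level agreement — you only observe that labels coincide at $j$, which does not lift to formulas if $j<i_\Alert$, and "near $i_\Alert$" is left ambiguous. Relatedly, your step "(ii) replace the crossing of $i_\Alert$ by a crossing of a coinciding position" is witness surgery that the cited lemmas do not support and that is unsound as a local operation: the truth of $\psi_1$ at positions \emph{before} $i_\Alert$ already depends on the label at $i_\Alert$ through nested future modalities, which is exactly why the paper transports whole-formula satisfaction between $R$-related positions instead of splicing witnesses (Proposition~\ref{prop:Specularity}(2) lives inside Lemma~\ref{lemma:specularityForFormulasSatisfaction} and does not act across the two structures). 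The same missing $M_n$-anchored relation at $i_F$ also blocks your $\Next$ case at $\ell=i_F$, which the paper handles with the further Claim~3 facts $(1,i_F)\in R(1,n-1)$ and $(1,i_F)\in R(h_*,n-1)$, pushing the evaluation back to position $1$ and re-entering the induction at position $2\leq i_F$.
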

 \begin{proof}
 Let $\psi$, $\Pi$, $i_F$, $x$ and $\ell$ as in the statement of the lemma.  First, we make some crucial observations.
 The first one (Claim~1) directly follows from construction and the semantics of $\HCTLStar$.\vspace{0.2cm}

\noindent \emph{Claim 1:} for all $\ell>i_\Alert$, $\Pi,x,\ell\models_{K_n} \psi \Leftrightarrow \Pi,x,\ell\models_{M_n}\psi$\vspace{0.2cm}

Moreover, since $|\Pi|<n-1$, there must be $h_*$ such that \vspace{0.2cm}

\noindent \emph{Claim 2:} $h_*\in [2,n]$ and the path $\pi(t_{h_*})$ is \emph{not bound} by $\Pi$.\vspace{0.2cm}

Let $i_N$ be the fourth (in increasing order) main $p$-position of type $t_1$. Since $i_\Alert$ is the third (in increasing order) main $p$-position of type $1$, by construction, the $1$-order and the $h_*$-order of $i_N$ are both $n$. Hence, by definition of $(1,n-1)$-compatibility and
$(h_*,n-1)$-compatibility, the following holds. \vspace{0.2cm}

\noindent \emph{Claim 3:} $(1,i_F)\in R(1,n-1)$, $(i_F,i_\Alert)\in R(1,n-1)$, and $(i_\Alert,i_N)\in R(1,n-1)$. Moreover, $(1,i_F)\in R(h_*,n-1)$ and $(i_F,i_N) \in R(h_*,n-1)$. \vspace{0.2cm}

Now, we prove the lemma  by induction on $|\psi|$. The cases for the boolean connectives
  directly follows from the induction hypothesis. For the other cases, we proceed as follows.
\begin{itemize}
  \item $\psi= p'[y]$ for some $p'\in \AP$ and $y\in \Var$: by hypothesis $\ell\leq i_F$ and $i_F<i_\Alert$.  By construction, it follows that the labels
    of $\Pi(y)(\ell)$ in $K_n$ and $M_n$  coincide. Hence, the result follows.
  \item $\psi=\Next\psi'$. If $\ell<i_F$, we apply the induction hypothesis on $\psi'$ with respect to position $\ell+1\leq i_F$, and the result follows.

  Now, assume that $\ell=i_F$. By Claim~3, $(1,i_F)\in R(1,n-1)$. Moreover, since $|\psi|<n$,   $R(1,n-1)\subseteq R(1,|\psi|)$. Thus, by Lemma~\ref{lemma:specularityForFormulasSatisfaction}(1), we have that
  \[
     \Pi,x,1\models_{K_n} \psi \Leftrightarrow \Pi,x,i_F\models_{K_n}\psi
  \]
 By Claim~2 and~3,
 $(1,i_F)\in R(h_*,n-1)$, $h_*\in [2,n]$, and \emph{$\pi(\xi_{h_*})$ is not bound by the path assignment $\Pi$}.  Since,   $R(h_*,n-1)\subseteq R(h_*,|\psi|)$,  by Lemma~\ref{lemma:specularityForFormulasSatisfaction}(1), we have
  \[
     \Pi,x,1\models_{M_n} \psi \Leftrightarrow \Pi,x,i_F\models_{M_n}\psi
  \]
Thus, since $\psi=\Next\psi'$, by applying the induction hypothesis on $\psi'$ with respect to position $2\leq i_F$, the result follows.
  \item $\psi=\psi_1\Until\psi_2$.  By applying the induction hypothesis on $\psi_1$ and $\psi_2$ and by the semantics of the until modality, it suffices to
  show that
  \[
     \Pi,x,i_F\models_{K_n} \psi \Leftrightarrow \Pi,x,i_F\models_{M_n}\psi
  \]
  By Claim~3, $(i_F,i_\Alert)\in R(1,n-1)$ and $(i_\Alert,i_N)\in R(1,n-1)$. Moreover, since $|\psi|<n-1$,   $R(1,n-1)\subseteq R(1,|\psi|)$. Thus, by applying twice Lemma~\ref{lemma:specularityForFormulasSatisfaction}(1), we obtain
  \[
     \Pi,x,i_F\models_{K_n} \psi \Leftrightarrow \Pi,x,i_N\models_{K_n}\psi
  \]
By Claim~2 and~3,
 $(i_F,i_N)\in R(h_*,n-1)$, $h_*\in [2,n]$, and \emph{$\pi(\xi_{h_*})$ is not bound by the path assignment $\Pi$}.  Since,   $R(h_*,n-1)\subseteq R(h_*,|\psi|)$,  by applying Lemma~\ref{lemma:specularityForFormulasSatisfaction}(2), we obtain
  \[
     \Pi,x,i_F\models_{M_n} \psi \Leftrightarrow \Pi,x,i_N\models_{M_n}\psi
  \]
Since $i_N>i_\Alert$, by Claim~1,
  \[
     \Pi,x,i_N\models_{K_n} \psi \Leftrightarrow \Pi,x,i_N\models_{M_n}\psi
  \]
Hence, the result follows.
  \item $\psi = \exists y.\psi'$. By hypothesis $|\Pi|+|\psi|<n$. Hence, for each initial path $\pi$ of $K_n$ (or equivalently $M_n$) and for each $y\in\Var$,
  $|\Pi[y\leftarrow \pi]| +|\psi'|<n $. By applying the induction hypothesis, we have that
   $\Pi[y\leftarrow \pi],y,\ell\models_{K_n} \psi' \Leftrightarrow \Pi[y\leftarrow \pi],y,\ell\models_{M_n}\psi'$. Hence,
    $\Pi,x,\ell\models_{K_n} \psi \Leftrightarrow \Pi,x,\ell\models_{M_n}\psi $, and the result follows.
\end{itemize}

\end{proof}

\subsection{Proof of Theorem~\ref{theo:FromKCTLStarTOMemoryfulHyper}}\label{APP:FromKCTLStarTOMemoryfulHyper}

\setcounter{aux}{\value{theorem}}
\setcounter{theorem}{\value{theo-FromKCTLStarTOExtHyper}}

\begin{theorem} Given a $\KCTLStarS$ sentence $\psi$ and an observation map $\Obs$, one can construct in \emph{linear time} a  $\MHCTLStar$ sentence $\varphi$ with just two path variables such that for each Kripke structure $K$, $K\models \varphi$ $\Leftrightarrow$ $(K,\Obs)\models \psi$.
\end{theorem}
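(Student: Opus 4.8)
The plan is to give a compositional, \emph{variable-alternating} translation. Fix the observation map $\Obs$ and let $\Var=\{x,y\}$. For an agent $\agent$ and two variables $z,z'$, I abbreviate by $\mathsf{eq}_\agent(z,z')$ the $\MHCTLStar$ formula $\PAlways\bigwedge_{p\in\Obs(\agent)}\bigl(p[z]\leftrightarrow p[z']\bigr)$, which, evaluated at position $i$, holds exactly when the prefixes of the paths bound to $z$ and $z'$ up to $i$ are $\Obs_\agent$-equivalent (this is immediate from the semantics of $\PAlways$). I would then define, for each ``current'' variable $z\in\{x,y\}$ with ``companion'' $\bar z$ the other variable, a map $\mathsf{tr}_z$ from $\KCTLStarS$ formulas to $\MHCTLStar$ formulas that is the identity on $\top$, commutes with $\neg$, $\vee$, $\Next$ and $\Until$, sends $p$ to $p[z]$, sends the $\CTLStar$ quantifier $\exists\,\chi$ to $\exists z.\,\mathsf{tr}_z(\chi)$, and sends $\Know_\agent\chi$ to $\GForall \bar z.\,\bigl(\mathsf{eq}_\agent(z,\bar z)\rightarrow \mathsf{tr}_{\bar z}(\chi)\bigr)$. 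Finally I set $\varphi:=\GExists x.\,\mathsf{tr}_x(\psi)$.

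Two features drive the construction. First, the knowledge modality is the only place where the current path changes: it is turned into a \emph{general} (unconstrained) universal quantifier $\GForall\bar z$ whose range is cut down to the observationally equivalent paths by the guard $\mathsf{eq}_\agent(z,\bar z)$, after which the subformula is evaluated with $\bar z$ as the new current path. Using $\GForall$ rather than the regular $\forall$ is essential, since knowledge ranges over \emph{all} initial paths, not only those agreeing with the current one up to $i$. Second---and this is why two variables suffice---both the $\CTLStar$ quantifier and the knowledge modality \emph{discard} the previous current path in $\KCTLStarS$ semantics, so at any moment one needs only the current path together with a single transient companion for the comparison; the roles of $x$ and $y$ merely swap at each knowledge modality, while the $\CTLStar$ quantifier rebinds the current variable in place.

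Correctness I would establish through the following lemma, proved by a routine structural induction on the $\KCTLStarS$ formula $\chi$: for every subformula $\chi$ of $\psi$, every $z\in\{x,y\}$, every position $i$, every initial path $\pi$, and every path assignment $\Pi$ with $\Pi(z)=\pi$, one has $\pi,i\models_{(K,\Obs)}\chi$ iff $\Pi,z,i\models_K \mathsf{tr}_z(\chi)$ (the value $\Pi(\bar z)$ is irrelevant to the right-hand side, since $\bar z$ is always rebound before being read). The atomic, Boolean, $\Next$ and $\Until$ cases are immediate, as the translation is homomorphic and keeps $z$ as the current path. The $\exists$ case matches the regular quantifier clause of $\MHCTLStar$, whose side condition $\pi'[0,i]=\Pi(z)[0,i]$ is exactly the $\CTLStar$ requirement that the new path agree with the current one up to $i$. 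The statement for $\psi$ then follows by wrapping with $\GExists x$ at position $0$, where regular and general quantification coincide, using that $\psi$ is a sentence so that the truth value is independent of the chosen initial path.

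The main obstacle is the knowledge case, which must be verified in both directions and for both choices of current variable. Here one uses the observation about $\mathsf{eq}_\agent$ to see that $\GForall\bar z.\,(\mathsf{eq}_\agent(z,\bar z)\rightarrow \mathsf{tr}_{\bar z}(\chi))$ quantifies over precisely the initial paths $\pi'$ whose prefix up to $i$ is $\Obs_\agent$-equivalent to $\pi[0,i]$, and then invokes the induction hypothesis with $\bar z$ as current variable to transfer $\Pi[\bar z\leftarrow\pi'],\bar z,i\models_K \mathsf{tr}_{\bar z}(\chi)$ to $\pi',i\models_{(K,\Obs)}\chi$. A point to check carefully is that, although $\mathsf{eq}_\agent$ and the whole construction use the \emph{past} operator $\PAlways$, this remains sound: in $\KCTLStarS$ every subformula is evaluated at positions $\ge i$ and knowledge inspects the full prefix of the current path, which the regular quantifier faithfully preserves since the newly chosen path agrees with the old current path on $[0,i]$. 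Finally, each knowledge modality expands to a formula of size $O(|\AP|)$ and every other construct to one of bounded size, so the translation is computable in linear time and uses only the two variables $x,y$.
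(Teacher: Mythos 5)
Your proposal is correct and is essentially the paper's own proof: the paper defines the same two-variable map $f_\Obs(\cdot,h)$ over $\{x_0,x_1\}$ that is homomorphic on Boolean and temporal operators, rebinds the current variable in place for the $\CTLStar$ quantifier ($\exists x_h$), and translates $\Know_\agent\chi$ as $\GForall x_{1-h}.\bigl((\PAlways\bigwedge_{p\in\Obs(\agent)}(p[x_h]\leftrightarrow p[x_{1-h}]))\rightarrow f_\Obs(\chi,1-h)\bigr)$, proving correctness by exactly the structural induction you state (for all $\Pi$ with $\Pi(x_h)=\pi$: $\Pi,x_h,i\models_K f_\Obs(\chi,h)\Leftrightarrow \pi,i\models_{(K,\Obs)}\chi$). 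Your only deviation, wrapping the result in a top-level $\GExists x$ to make it a syntactic sentence rather than relying, as the paper does, on the truth value at position $0$ being independent of the assignment, is cosmetic.
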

\setcounter{theorem}{\value{aux}}
\begin{proof}
Let $x_0,x_1\in \Var$ with $x_0\neq x_1$ and $\Obs$ be an observation map. We inductively define a mapping $f_\Obs: \KCTLStarS\times \{0,1\} \rightarrow \MHCTLStar$ as follows, where $h\in \{0,1\}$:
\begin{compactitem}
  \item $f_\Obs(\top,h) = \top$
  \item $f_\Obs(p,h)=p[x_h]$ for all $p\in\AP$;
  \item $f_\Obs(\neg \psi,h)= \neg f_\Obs(\psi,h)$;
   \item $f_\Obs(\psi_1\wedge \psi_2,h)= f_\Obs(\psi_1,h)\wedge f_\Obs(\psi_2,h)$;
   \item $f_\Obs(\Next \psi,h)=\Next f_\Obs(\psi,h)$;
   \item  $f_\Obs(\psi_1\,\Until\, \psi_2,h)= f_\Obs(\psi_1,h)\,\Until\, f_\Obs(\psi_2,h)$;
   \item  $f_\Obs(\exists\psi,h)= \exists x_h. f_\Obs(\psi,h)$;
   \item $f(\Know_\agent\psi,h)= \GForall x_{1-h}.\,  \bigl(\,\, (\PAlways\displaystyle{\bigwedge_{p\in \Obs(\agent)}} (p[x_h] \leftrightarrow p[x_{h-1}])) \longrightarrow  f_\Obs(\psi,x_{1-h})\bigr) $.
\end{compactitem}

By construction, for all $h=1,2$ and $\KCTLStarS$ sentences $\psi$, $f_\Obs(\psi,h)$ is a $\MHCTLStar$ sentence of size linear in $|\psi|$. Thus, Theorem~\ref{theo:FromKCTLStarTOMemoryfulHyper} directly follows from the following claim.\vspace{0.2cm}

\noindent \emph{Claim: } let $K=\tpl{S,s_0,E,V}$ be a Kripke structure and $\pi$ be an initial path of $K$. Then, for all $\KCTLStarS$ formulas $\psi$, $h=0,1$, $i\geq 0$, and path assignment $\Pi$ such that $\Pi(x_h)=\pi$, the following holds:
\[
 \Pi,x_h,i \models_K f_\Obs(\psi,h) \Leftrightarrow \pi,i\models_{(K,\Obs)}\psi
\]
\noindent \emph{Proof of the claim:} the proof is by induction on $|\psi|$.
\begin{itemize}
  \item $\psi =\top$: trivial;
  \item $\psi=p$ with $p\in \AP$: by construction, $f_\Obs(p,h)=p[x_h]$. Thus, since $\Pi(x_h)=\pi$, the result follows.
  \item $\psi=\neg \psi'$ or $\psi =\psi_1\wedge\psi_2$ or $\psi=\Next\psi'$ or $\psi=\psi_1\Until\psi_2$: by construction, the result directly follows from the induction hypothesis.
  \item $\psi=\exists\psi'$: then, $ \Pi,x_h,i \models_K f_\Obs(\psi,h)$ $\Leftrightarrow$ (by construction)
   $\Pi,x_h,i \models_K \exists x_h. f_\Obs(\psi',h)$  $\Leftrightarrow$ (by the semantics of $\MHCTLStar$)
   there exists an initial path $\pi'$ of $K$  such that $\pi'[0,i]=\Pi(x_h)[0,i]$ and
   $\Pi[x_h \leftarrow \pi'],x_h,i \models_K f_\Obs(\psi',h)$ $\Leftrightarrow$ (by the induction hypothesis and since $\Pi(x_h)=\pi$)
   there exists an initial path $\pi'$ of $K$ such that $\pi'[0,i]=\pi[0,i]$ and
   $\pi',i \models_{(K,\Obs)}\psi'$ $\Leftrightarrow$ (by the semantics of $\KCTLStarS$)
   $\pi,i\models_{(K,\Obs)}\psi$. Hence, the result follows.
    \item $\psi=\Know_\agent\psi'$: then, $ \Pi,x_h,i \models_K f_\Obs(\Know_\agent\psi,h)$ $\Leftrightarrow$ (by construction and the semantics of $\MHCTLStar$)
  for all initial paths $\pi'$ of $K$,
  \[
  \Pi[x_{1-h} \leftarrow \pi'],x_{1-h},i\models_K (\PAlways\displaystyle{\bigwedge_{p\in \Obs(\agent)}} (p[x_h] \leftrightarrow p[x_{h-1}])) \longrightarrow  f_\Obs(\psi',x_{1-h})
  \]
  $\Leftrightarrow$ (since $\Pi[x_{1-h} \leftarrow \pi'](x_h)=\Pi(x_h)=\pi$) for all initial paths $\pi'$ of $K$  such that
  $V(\pi'[0,i])$ and $V(\pi[0,i])$ are $\Obs_\agent$-equivalent, $\Pi[x_{1-h} \leftarrow \pi'],x_{1-h},i\models_K f_\Obs(\psi',x_{1-h})$
  $\Leftrightarrow$ (by the induction hypothesis) for all initial paths $\pi'$ of $K$  such that
  $V(\pi'[0,i])$ and $V(\pi[0,i])$ are $\Obs_\agent$-equivalent, $\pi',i\models_{(K,\Obs)} \psi'$ $\Leftrightarrow$ (by the semantics of $\KCTLStarS$)
  $\pi,i\models_{(K,\Obs)} \Know_\agent\psi'$. Hence, the result follows.
\end{itemize}
\end{proof}

\newpage

\section{Proofs from Section~\ref{sec-modelchecking}}\label{APP:sec-modelchecking}

\subsection{Formal definitions of B\"{u}chi $\SNWA$ and two-way $\HAA$}\label{APP:SNWAandHAA}

\noindent \textbf{B\"{u}chi $\SNWA$.} A  B\"{u}chi $\SNWA$ over an input alphabet $\Sigma$
is a tuple $\A=\tpl{Q,Q_0,\rho,F_-,F_+}$, where  $Q$ is a finite set of states, $Q_0\subseteq Q$ is a set of initial states, $\rho:Q\times \{\rightarrow,\leftarrow\}\times \Sigma\rightarrow 2^Q$ is a transition function, and $F_-$ and $F_+$ are sets  of accepting states.
Intuitively, the symbols $\rightarrow$
and $\leftarrow$ are used to denote forward and backward  moves.
 A run of $\A$
over a pointed word $(w,i)$ is a pair $r=(r_\leftarrow,r_\rightarrow)$ such that $r_\rightarrow=q_i,q_{i+1}\ldots$ is an infinite sequence of states, $r_\leftarrow=p_i,p_{i-1}\ldots p_0 p_{-1}$ is a finite sequence of states, and: (i) $q_i=p_i\in Q_0$;
(ii) for each  $h\geq i$, $q_{h+1}\in \rho(q_h,\rightarrow,w(h))$; and (iii)
 for each $h\in [0,i]$, $p_{h-1}\in \rho(p_{h},\leftarrow,w(h))$.

Thus, starting from the initial position $i$ in the input pointed word $(w,i)$, the automaton splits in two copies: the first one moves forwardly along the suffix of $w$ starting from position $i$ and the second one moves backwardly along the prefix $w(0)\ldots w(i)$.
The run $r=(r_\leftarrow,r_\rightarrow)$ is \emph{accepting} if $p_{-1}\in F_-$ and $r_\rightarrow$ visits infinitely often some state in $F_+$. A pointed word $(w,i)$ is accepted by $\A$ if there is an accepting run of $\A$ over $(w,i)$. We denote by $\Lang_p(\A)$ the set of pointed words accepted by $\A$ and by $\Lang(\A)$ the set of infinite words $w$ such that $(w,0)\in\Lang_p(\A)$.\vspace{0.2cm}

\noindent \textbf{Two-way $\HAA$. }
For a set $X$, $\PosBool(X)$ denotes the set of positive Boolean
formulas over $X$ built from elements in $X$ using $\vee$ and $\wedge$   (we also allow the formulas $\true$ and $\false$).
 For a formula
$\theta\in\PosBool(X)$, a \emph{model} $Y$ of $\theta$ is a subset $Y$
of $X$ which satisfies $\theta$. The model $Y$ of $\theta$ is minimal
if no strict subset of $Y$ satisfies $\theta$.

A two-way $\HAA$ $\A$  over an alphabet
$\Sigma$ is a tuple $\A=\tpl{ Q,q_0,\delta, F_-,\FamStratum}$, where $Q$ is a
finite set of states, $q_0\in Q$ is the initial state,  $\delta:
Q\times \Sigma \rightarrow \PosBool(\{\rightarrow,\leftarrow\}\times Q)$ is a transition
function, $F_-\subseteq Q$ is the \emph{backward acceptance condition}, and $\FamStratum$ is a \emph{strata family} encoding a particular kind of parity acceptance condition and imposing some syntactical constraints on the transition function $\delta$.
Before defining  $\FamStratum$,
 we give the notion of run which is independent of $\FamStratum$ and $F_-$. We restrict ourselves to
  \emph{memoryless} runs, in which the behavior of the automaton
  depends only on the current input position and current state. Since
  later we will deal only with parity acceptance conditions,
  memoryless runs are sufficient (see e.g.~\cite{Zielonka98}).\footnote{See references for the Appendix.}
Formally, given a pointed word $(w,i)$ on $\Sigma$
and a state $p\in Q$, a \emph{$(i,p)$-run of $\A$ over $w$} is a
directed graph $\tpl{V,E,v_0}$ with set of vertices $V\subseteq
(\N\cup\{-1\})\times Q$ and initial vertex $v_0=(i,p)$. Intuitively, a vertex $(j,q)$ describes a copy of the automaton which is in state $q$ and reads the $j^{th}$ input position. Additionally, we require that the set of edges $E$ is consistent with the transition function $\delta$. Formally,
 for every vertex $v=(j,q)\in V$ such that $j\geq 0$, there is a \emph{minimal} model
$\{(\dir_1,q_1),\ldots,(\dir_n,q_n)\}$ of $\delta(q,w(j))$ such that the set of successors of $v=(j,q)$ is $\{(j_1,q_1),\ldots,(j_n,q_n)\}$ and for all $k\in [1,n]$, $j_k=j+1$ if $\dir_k= \rightarrow$, and $j_k=j-1$ otherwise.

\noindent An infinite path $\pi$ of a run is \emph{eventually strictly-forward} whenever $\pi$ has
a suffix of the form $(i,q_1),(i+1,q_2),\ldots$ for some $i\geq 0$.

Now, we formally define $\FamStratum$ and give the semantic notion of acceptance.
 $\FamStratum$ is a \emph{strata family} of the
form
$\FamStratum=\{\tpl{\rho_1,Q_1,F_1},\ldots,\tpl{\rho_k,Q_k,F_k}\}$,
where $Q_1,\ldots,Q_k$ is a partition of the set of states $Q$ of $\A$, and for all $i\in [1,k]$,
$\rho_i\in \{-,\Trans,\Bu,\Co\}$ and $F_i\subseteq Q_i$, such that
$F_i=\emptyset$ whenever $\rho_i\in\{\Trans,-\}$. A stratum
$\tpl{\rho_i,Q_i,F_i}$ is called a \emph{negative} stratum if $\rho_i=-$, a
\emph{transient} stratum if $\rho_i=\Trans$, a B\"{u}chi 
stratum
(with B\"{u}chi acceptance condition $F_i$) if $\rho_i=\Bu$, and a
coB\"{u}chi 
stratum (with coB\"{u}chi acceptance condition
$F_i$) if $\rho_i=\Co$.  Additionally, 
there is a
partial order $\leq$ on the sets $Q_1,\ldots,Q_k$ such that the
following holds:

\newcommand{\Req}[1]{\ensuremath{\mathsf{R#1}}\xspace}
\newcommand{\ReqOne}{\Req{1}}
\newcommand{\ReqTwo}{\Req{2}}

\begin{itemize}
\item[\ReqOne.] Moves from states in $Q_i$ lead to states in
  components $Q_j$ such that
  $Q_j\leq Q_i$;  additionally, if $Q_i$ belongs to a transient stratum, there are no
  moves from $Q_i$ leading to $Q_i$.
\item[\ReqTwo.] For all moves $(\dir,q')$ from states  $q\in Q_i$ such that $q'\in Q_i$ as well, the following holds:
 $\dir\in \{\leftarrow\}$ if the
    stratum of $Q_i$   is negative, and $\dir\in
    \{\rightarrow\}$ otherwise.
\end{itemize}
\ReqOne is 
the \emph{stratum order requirement} and it ensures
that every infinite path $\pi$ of a run gets trapped in the component
$Q_i$ of some  stratum.  \ReqTwo is 
the
\emph{eventually syntactical requirement} and it ensures that $Q_i$
belongs to a B\"{u}chi or coB\"{u}chi stratum and that $\pi$ is
eventually strictly-forward.

Now we define when a run is accepting.  Let $\pi$ be an infinite path
 of  a  run,  $\tpl{\rho_i,Q_i,F_i}$ be
the B\"{u}chi or coB\"{u}chi stratum in which $\pi$ gets trapped, and
 $\Inf(\pi)$ be the states from $Q$ that occur infinitely many
times in $\pi$. The path $\pi$ is \emph{accepting} whenever
$\Inf(\pi)\cap F_i\neq \emptyset$ if $\rho_i=\Bu$ and $\Inf(\pi)\cap
F_i= \emptyset$ otherwise (i.e. $\pi$ satisfies the corresponding
B\"{u}chi or coB\"{u}chi requirement).
 A run is
\emph{accepting} if: (i) all its infinite paths are accepting and (ii) for each
vertex $(-1,q)$ reachable from the initial vertex, it holds that $q\in F_-$ (recall that $F_-$ is the backward acceptance condition of $\A$).
 The
\emph{$\omega$-pointed language} $\Lang_{p}(\A)$ of $\A$ is the set
of pointed words $(w,i)$ over $\Sigma$ such that there is an
accepting $(i,q_0)$-run of $\A$ on $w$.

The \emph{dual automaton} $\dual{\A}$ of a two-way $\HAA$ $\A=\tpl{Q,q_0,\delta,F_-,\FamStratum}$ is defined
as $\dual{\A}=\tpl{q,q_0,\dual{\delta},Q\setminus F_-,\dual{\FamStratum}}$, where
$\dual{\delta}(q,\sigma)$ is the dual formula of $\delta(q,\sigma)$ (obtained from $\delta(q,\sigma)$ by
 switching $\vee$ and $\wedge$, and switching $\true$ and $\false$),
and $\dual{\FamStratum}$ is obtained from $\FamStratum$ by converting
a B\"{u}chi stratum $\tpl{\Bu,Q_i,F_i}$ into the coB\"{u}chi stratum
$\tpl{\Co,Q_i,F_i}$ and a coB\"{u}chi stratum $\tpl{\Co,Q_i,F_i}$ into
the B\"{u}chi stratum $\tpl{\Bu,Q_i,F_i}$.  By construction the dual automaton $\dual{\A}$ of $\A$ is
still a two-way $\HAA$. Following standard arguments (see
e.g. \cite{Zielonka98}),   the dual automaton $\dual{\A}$ of a two-way $\HAA$ $\A$ is a two-way $\HAA$ accepting the complement of
 $\Lang_p(\A)$.

\subsection{Proof of Theorem~\ref{theorem:FromHAAtoSNPA}}\label{APP:FromHAAtoSNPA}

In this section we give the details of the translation from two-way $\HAA$ into
B\"{u}chi $\SNWA$ as captured by Theorem~\ref{theorem:FromHAAtoSNPA} (see Appendix~\ref{APP:SNWAandHAA} for a formal definition of B\"{u}chi $\SNWA$ and two-way $\HAA$).
 The proposed  construction is based on a preliminary result.  By using the notion
of odd ranking function for standard coB\"{u}chi alternating
 automata \cite{KupfermanV01} \footnote{See references for the Appendix.}(which intuitively, allows to
convert a coB\"{u}chi acceptance condition into a B\"{u}chi-like
acceptance condition) and a non-trivial generalization of the Miyano-Hayashi
construction \cite{MiyanoH84}, we give a characterization of the pointed words
in $\Lang_p(\A)$ in terms of infinite sequences of finite sets (called
\emph{regions})  satisfying determined requirements which can be easily
checked by  B\"{u}chi $\SNWA$.

 Fix a two-way  $\HAA$
$\A=\tpl{Q,q_{0},\delta,F_-,\FamStratum}$ over an alphabet $\Sigma$.
First, as anticipated above, we give a characterization of the
fulfillment of the acceptance condition for a   coB\"{u}chi
stratum along a run in terms of the existence of an \emph{odd ranking
  function}.

\begin{definition}
  \label{definition:Ranking}
  Let $\Stratum=\tpl{\Co,P,F}$ be a   coB\"{u}chi stratum
  of $\A$ and $n=|P|$ (\emph{the size of the stratum}). For an
  infinite word $w$ on $\Sigma$ and a run $G=\tpl{V,E,v_0}$ of $\A$
  over $w$, a \emph{ranking function of the stratum $\Stratum$ for
    the run $G$} is a function $f_\Stratum: V \rightarrow \{1,\ldots,2n\}$
  satisfying the following:
  \begin{compactenum}
  \item for all $(j,q)\in V$ such that $q\in F$, $f_\Stratum(j,q)$ is even;
  \item for all $(j,q),(j',q')\in V$ such that $(j',q')$ is a successor
    of $(j,q)$ in $G$ and $q,q'\in P$, it holds that
    $f_\Stratum(j',q')\leq f_\Stratum(j,q)$.
  \end{compactenum}
\end{definition}
Thus, since the image of $f_\Stratum$ is bounded, for every infinite path
$\pi=v_0,v_1,\ldots$ of $G$ that get trapped in  the  coB\"{u}chi stratum $\Stratum$, $f_\Stratum$ converges to
a value: there is a number $l$ such that $f_\Stratum(v_{l'})=f_\Stratum(v_l)$ for all
$l'\geq l$. We say that $f_\Stratum$ is \emph{odd} if for all such infinite
paths $\pi$ of $G$, $f_\Stratum$ converges to an odd value (or, equivalently,
any of such paths $\pi$ visits infinitely many times vertices $v$ such
that $f_\Stratum(v)$ is odd). Note that if $f_\Stratum$ is odd, then $\pi$ is
accepting. The following lemma whose proof is a straightforward generalization of the
results in~\cite{KupfermanV01} (regarding coB\"{u}chi alternating finite-state automata),
asserts that the existence of
an odd ranking function is also a necessary condition for a run to be
accepting.

\begin{lemma}
  \label{lemma:Ranking}
  Let $G$ be a run of $\A$ over an infinite word $w$. $G$ is
  accepting \emph{iff}
  \begin{compactenum}
  \item for every  co-B\"{u}chi stratum
    $\Stratum=\tpl{\Co,P,F}$, there is an odd ranking function of $\Stratum$
    for the run $G$;
  \item every infinite path of $G$ which get trapped in the component
    of a B\"{u}chi stratum $\Stratum=\tpl{\Bu,P,F}$  satisfies the
    B\"{u}chi acceptance condition $F$;
  \item for each vertex $(-1,q)$ reachable from the initial vertex, $q\in F_-$ (recall that $F_-$ is the backward acceptance condition of $\A$).
  \end{compactenum}
\end{lemma}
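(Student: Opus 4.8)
The plan is to prove the two implications separately, concentrating all the real work in the construction of odd ranking functions for the forward direction; the reverse direction is essentially a reformulation of the acceptance definition.

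For the ``if'' direction I would first observe that condition~3 is literally the backward requirement (ii) in the definition of an accepting run, so only the infinite paths remain to be handled. By the stratum order requirement \ReqOne\ every infinite path $\pi$ of $G$ gets trapped in the component $P$ of a single stratum, and by \ReqTwo\ this stratum is Büchi or coBüchi and $\pi$ is eventually strictly-forward. If it is Büchi, condition~2 is exactly the Büchi acceptance of $\pi$. If it is coBüchi, say $\Stratum=\tpl{\Co,P,F}$, I would invoke the odd ranking function $f_{\Stratum}$ given by condition~1: along the suffix of $\pi$ lying in $P$ the values of $f_{\Stratum}$ are non-increasing by clause~2 of Definition~\ref{definition:Ranking} and bounded, hence stabilise, and by oddness they stabilise at an odd value; since every vertex with state in $F$ carries an even rank by clause~1, $\pi$ meets $F$ only finitely often, i.e.\ $\Inf(\pi)\cap F=\emptyset$, which is the coBüchi condition. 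Thus every infinite path is accepting and $G$ is accepting.

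For the ``only if'' direction, conditions~2 and~3 are immediate from the definition of an accepting run, so the entire content is condition~1. Fixing a coBüchi stratum $\Stratum=\tpl{\Co,P,F}$ with $n=|P|$, I would restrict $G$ to the vertices whose state lies in $P$, obtaining a sub-run $G_P$; by \ReqTwo\ every edge internal to $G_P$ is forward, so $G_P$ is a finitely-branching graph in which the input position strictly increases along each edge, and its infinite paths are precisely the tails of the infinite paths of $G$ trapped in $P$, each of which meets $F$ finitely often since $G$ is accepting. On $G_P$ I would then run the Kupferman--Vardi alternating elimination of \cite{KupfermanV01}: put $G_0=G_P$, let $G_{2i+1}$ be $G_{2i}$ with all vertices having only finitely many descendants deleted, and $G_{2i+2}$ be $G_{2i+1}$ with all vertices from which $F$ is unreachable deleted, assigning even ranks to the vertices removed at the finite-vertex steps and odd ranks to those removed at the $F$-unreachability steps, normalised to lie in $\{1,\dots,2n\}$. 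Vertices outside $P$ receive the rank $1$, which is unconstrained. Clauses~1 and~2 of Definition~\ref{definition:Ranking} and the oddness property then follow exactly as in the one-way coBüchi setting: an $F$-vertex is never $F$-unreachable, so it is deleted only at an even step, and monotonicity of ranks along $P$-edges is built into the nested elimination.

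The main obstacle is to show that the elimination halts after at most $2n$ rounds, so that $G_{2n}$ is empty and the ranks genuinely lie in $\{1,\dots,2n\}$. Here I would adapt the König's-lemma argument of \cite{KupfermanV01}: finite branching forces any infinite $G_{2i}$ to contain an infinite path, and after the finite-vertex deletion every remaining vertex of $G_{2i+1}$ has infinitely many descendants; combined with the fact that every infinite path meets $F$ finitely often, this produces in each nonempty $G_{2i+1}$ a vertex all of whose descendants avoid $F$, so the odd step removes a nonempty set and strictly decreases the set of states of $P$ occurring infinitely often on surviving paths. Since that set has at most $n$ elements, the process terminates in at most $2n$ rounds. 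The only deviations from the one-way proof are that positions increase by one rather than indexing a global synchronous level and that $G_P$ may have infinitely many entry points from higher strata, but neither affects the counting argument, which is taken over the $n$ states of $P$ rather than over levels.
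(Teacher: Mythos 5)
Your overall architecture is exactly the one the paper intends: the paper gives no detailed proof of this lemma, appealing instead to a ``straightforward generalization'' of the Kupferman--Vardi ranking construction, and your ``if'' direction, your K\"{o}nig-style extraction of an $F$-free vertex from every nonempty $G_{2i+1}$, and your parity bookkeeping are all sound. The genuine gap is your termination argument. It is simply false that the odd step ``strictly decreases the set of states of $P$ occurring infinitely often on surviving paths''. Counterexample: take $P=\{a,b,c,f\}$, $F=\{f\}$, and let $\sigma(l)$ cycle through $a,b,c$. Build two vertex-disjoint forward tracks, $T_1$ with state $\sigma(l)$ at position $l$ and $T_2$ with state $\sigma(l+1)$ at position $l$, each advancing by one position; from infinitely many $T_1$-vertices add an edge to an $f$-vertex at the next position, and from each $f$-vertex an edge into $T_2$, with no edge from $T_2$ towards $f$. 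Every infinite path visits $f$ at most once, so the run is accepting; no vertex is finite, and the $F$-free vertices of $G_1$ are precisely those of $T_2$ (every $T_1$-vertex still reaches a later $f$). The first odd step therefore removes only $T_2$, after which $a$, $b$, $c$ still occur at infinitely many positions and along the surviving infinite paths of $G_2$, although further rounds remain (the $f$'s become finite sinks, then $T_1$ becomes $F$-free). Chaining $m$ shifted tracks through successive $f$-interfaces keeps the recurring-state set constant for $m-1$ rounds, so no invariant of the form ``each round deletes a state'' can be extracted, and your claimed $2n$ bound is left without justification.

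The fix is to keep Kupferman and Vardi's counting over levels --- the very thing you discarded with the remark that the counting is ``taken over the $n$ states of $P$ rather than over levels''. It transplants verbatim to the two-way setting: within a coB\"{u}chi stratum every internal move is forward by exactly one position, so positions play the role of synchronous levels, with at most $n=|P|$ stratum-vertices per position; the $F$-free vertex of $G_{2i+1}$ has, by finite branching and the ``infinitely many descendants'' property, an infinite strictly-forward path of $F$-free descendants contributing one deleted vertex at every sufficiently large position, so the eventual per-position width drops by at least one per round and the process halts within $n$ rounds. The infinitely many entry points you worried about are harmless here, since the per-position width bound holds regardless of where paths enter the stratum. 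One smaller gloss: the raw construction yields ranks in $\{0,\dots,2n\}$ with $F$-vertices even, and your unspecified normalisation to $\{1,\dots,2n\}$ is not innocent --- a shift by one destroys the parity of $F$-vertices and a shift by two overshoots the range while rank-$1$ vertices may have rank-$0$ successors, so the monotonicity clause blocks a pointwise patch; this accounting deserves a sentence, though it is minor next to the invariant above.
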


Now, based on Lemma~\ref{lemma:Ranking} and the classical breakpoint
construction \cite{MiyanoH84},\footnote{See references for the Appendix.} we give a characterization of the pointed words
$(w,\ell)\in\Lang_p(\A)$ in terms of infinite sequences of finite sets (called
\emph{regions}) satisfying determined requirements which can be easily
checked by B\"{u}chi $\SNWA$.  A coB\"{u}chi state is a state of $\A$ belonging to some  coB\"{u}chi stratum of $\A$. For a coB\"{u}chi state $q$,
\emph{a rank of $q$} is a natural number in
$\{1,\ldots,2n\}$, where $n$ is the size of the stratum of $q$. A
\emph{region of $\A$} is a triple $(R,O,f)$, where $R\subseteq Q$ is
a set of states, $O\subseteq R$, and $f$ is a mapping assigning to
each coB\"{u}chi state $q\in R$ a rank of $q$ such that $f(q)$ is even
if $q\in F$, where $\tpl{\Co,P,F}$ is the  coB\"{u}chi
stratum of $q$. A state $q$ of $\A$ is \emph{accepting} with respect
to $f$ if (1) either $q$ is an accepting state of a B\"{u}chi stratum,
or (2) $q$ is a coB\"{u}chi state and, additionally, $f(q)$
is odd if $q\in R$. The \emph{stop region} is the region
 $(F_-,\emptyset,f)$
where $f: Q \mapsto \{1\}$ and $F_-$ is the backward acceptance condition of $\A$.

For an atom $(\dir,q)$ of $\A$  and a position $i\geq 0$, the \emph{effect of} (the move) \emph{$(\dir,q)$ w.r.t. $i$} is the pair $(j,q)$, where $j=i+1$ if $\dir =\rightarrow$, and $j=i-1$ otherwise.

\noindent Let $(w,\ell)$ be a pointed word over $\Sigma$ and
$\nu=(R_0,O_0,f_0),(R_1,O_1,f_1),\ldots$ be  an infinite sequence of regions.
We say that $\nu$
 is \emph{good with respect to
  $(w,\ell)$} if for all $i\geq 0$, there is a mapping $g_i$ assigning to
each $q\in R_i$ a minimal model of $\delta(q,w(i))$ such that the
following holds, where $\Acc_i$ denotes the set of accepting states of
$\A$ with respect to $f_i$, and $(R_{-1},\emptyset,f_{-1})$ is the stop region:
\begin{itemize}
\item \emph{Initialization.} $q_0\in R_\ell$.
\item \emph{$\delta$-consistency w.r.t. $g_i$.} For all $q\in R_i$ and $(\dir,p)\in g_i(q)$, let $(h,p)$ be the effect of
$(\dir,p)$ w.r.t.~$i$; then, $p\in R_{h}$.
Additionally, if $q$ and $p$ are coB\"{u}chi states belonging to the same stratum, then $f_{h}(p)\leq
  f_i(q)$ (\emph{ranking requirement w.r.t. $g_i$}).
\item \emph{Miyano-Hayashi requirement w.r.t. $g_i$.}
For all $q\in O_i$ and $(\rightarrow,p)\in g_i(q)$ such that $p\in R_{i+1}\setminus \Acc_{i+1}$,
  it holds that $p\in O_{i+1}$.
\end{itemize}

 The infinite sequence of
regions $\nu$ is \emph{accepting} iff there are infinitely many
 positions $i\geq 0$  such that $O_{i}=\emptyset$ and
$O_{i+1}=R_{i+1}\setminus \Acc_{i+1}$ (\emph{acceptance requirement}).

Intuitively, the infinite sequence of regions $\nu$
represents a graph $G=\tpl{V\subseteq (\N\cup \{-1\})\times Q,E,v_0}$ where for
all input positions $i\geq 0$, $R_i$ is the set of vertices of $G$
associated with position $i$. The initialization and
$\delta$-consistency requirement ensure that $G$ is a $(\ell,q_0)$-run of
$\A$ over $w$ and for each  vertex $(-1,q)$ reachable from the initial vertex, $q\in F_-$.
Additionally, the ranking requirement ensures that for each
non-trivial coB\"{u}chi stratum $\Stratum$, there is a ranking
function $f_{\Stratum}$ of $\Stratum$ for the run $G$. By
Lemma~\ref{lemma:Ranking}, the run is accepting if $f_{\Stratum}$
is odd and Condition~2 in Lemma~\ref{lemma:Ranking} holds. This, in
turn, is equivalent to require that every infinite path of $G$ visits
infinitely many vertices in $\Acc$, where $\Acc$ is the set of
$G$-vertices $(i,q)$ such that $q\in\Acc_i$.
This condition is captured by the Miyano-Hayashi and the acceptance requirements on the sets $O_i$.
Formally, the following holds.

\begin{lemma}[Characterization lemma for $\HAA$]
  \label{lemma:characterizationForHAA}
  $(w,\ell)\in\Lang_p(\A)$ iff there is an accepting infinite sequence of
  regions which is good with respect to $(w,\ell)$.
\end{lemma}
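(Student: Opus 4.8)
The plan is to prove both implications by passing through Lemma~\ref{lemma:Ranking}, treating the region data $(R_i,O_i,f_i)$ as an encoding of a run together with a family of ranking functions (one per coB\"uchi stratum, carried by the $f_i$) and a generalized Miyano--Hayashi breakpoint (carried by the $O_i$). The bridge in both directions is the observation that, thanks to the eventually syntactical requirement, acceptance is governed entirely by the \emph{forward} infinite paths, and that the set $\Acc_i$ of accepting states w.r.t.\ $f_i$ simultaneously certifies the B\"uchi condition (via genuine accepting B\"uchi states) and the coB\"uchi condition (via coB\"uchi states of odd rank).

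For the direction $(\Rightarrow)$, fix an accepting memoryless $(\ell,q_0)$-run $G=\tpl{V,E,v_0}$ of $\A$ over $w$. I set $R_i:=\{q\mid (i,q)\in V\}$ for $i\geq 0$, let $g_i(q)$ be the minimal model of $\delta(q,w(i))$ used by $G$ at $(i,q)$, and define $f_i(q):=f_\Stratum(i,q)$ for each coB\"uchi state $q\in R_i$, where $f_\Stratum$ is the odd ranking function granted by Lemma~\ref{lemma:Ranking}(1) for the coB\"uchi stratum $\Stratum$ of $q$. Initialization, $\delta$-consistency and the ranking requirement are then immediate from the definition of $G$ and of a ranking function (Definition~\ref{definition:Ranking}), while Lemma~\ref{lemma:Ranking}(3) makes $R_{-1}=F_-$ the stop region. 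To choose the $O_i$ I first note that in $G$ every infinite path visits $\Acc$ infinitely often: if it is trapped in a B\"uchi stratum this is Lemma~\ref{lemma:Ranking}(2), and if it is trapped in a coB\"uchi stratum its $f_\Stratum$-value converges to an odd value, so it meets odd-rank (hence accepting) coB\"uchi states infinitely often. I then define $O_i$ by the canonical breakpoint recurrence along $g_i$ (reset $O_{i+1}:=R_{i+1}\setminus\Acc_{i+1}$ whenever $O_i=\emptyset$, and otherwise let $O_{i+1}$ be the forward $g_i$-successors of $O_i$ lying in $R_{i+1}\setminus\Acc_{i+1}$), which satisfies the Miyano--Hayashi requirement by construction. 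A K\"onig's lemma argument then shows the breakpoint cannot stay nonempty forever: were $O_i\neq\emptyset$ for all $i\geq m$, the finitely-branching tree of $O$-paths would be infinite and would yield an infinite forward path staying in $R\setminus\Acc$, contradicting the previous observation. Hence $O_i=\emptyset$ infinitely often and the acceptance requirement holds.

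For the direction $(\Leftarrow)$, let a good accepting region sequence with witnessing maps $g_i$ be given, and build $G$ on $V:=\{(i,q)\mid i\geq -1,\ q\in R_i\}$ (with $R_{-1}=F_-$) whose edges are the effects of the atoms of $g_i(q)$. Initialization yields $(\ell,q_0)\in V$, and $\delta$-consistency makes $G$ a legitimate $(\ell,q_0)$-run and forces every reachable $(-1,q)$ to satisfy $q\in F_-$, i.e.\ Lemma~\ref{lemma:Ranking}(3). The decisive step is the easy direction of breakpoint correctness: since $O$ empties infinitely often, an infinite path (eventually strictly-forward, by the eventually syntactical requirement) that avoided $\Acc$ from some point on would, just after the next reset, enter $O$ and by the Miyano--Hayashi requirement remain in $O$ forever, so $O$ could never empty again --- a contradiction. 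Thus every infinite path of $G$ visits $\Acc$ infinitely often. A path trapped in a B\"uchi stratum therefore meets accepting B\"uchi states infinitely often, giving Lemma~\ref{lemma:Ranking}(2); and for a coB\"uchi stratum the $f_i$ form a valid ranking function (monotonicity along edges is the ranking requirement, evenness on $F$-states is built into the notion of region), whose necessarily convergent value must be odd since $\Acc$ is visited infinitely often, giving Lemma~\ref{lemma:Ranking}(1). Hence $G$ is accepting and $(w,\ell)\in\Lang_p(\A)$.

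The main obstacle is the two-way nature of $\A$: the breakpoint sets $O_i$ propagate only along forward moves, whereas runs may oscillate backward. The crux is thus to lean on the eventually-strictly-forward guarantee so that acceptance is decided by forward paths alone, and to combine the Miyano--Hayashi breakpoint (for the B\"uchi part) with the odd-ranking encoding (for the coB\"uchi part) inside a single predicate $\Acc_i$. The K\"onig extraction used in $(\Rightarrow)$, together with the verification that $\Acc_i$ correctly certifies both stratum types at once, is where the real work lies; the remainder is bookkeeping closely mirroring the one-way constructions of~\cite{KupfermanV01,MiyanoH84}.
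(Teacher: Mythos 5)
Your proposal is correct and follows essentially the same route as the paper's proof: both directions encode the run together with the odd ranking functions of Lemma~\ref{lemma:Ranking} (for the coB\"uchi strata) and a generalized Miyano--Hayashi breakpoint (for the sets $O_i$), use the eventually-strictly-forward property to confine the analysis to forward paths, and invoke K\"onig's lemma to guarantee infinitely many breakpoint resets. The only differences are cosmetic bookkeeping --- you define the $O_i$ by the greedy forward recurrence and apply K\"onig afterwards, whereas the paper first extracts the breakpoint positions $h_1<h_2<\cdots$ via K\"onig and defines the $O_i$ relative to them, and in the converse direction you argue positively through Lemma~\ref{lemma:Ranking} where the paper argues by contradiction.
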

\begin{proof} $\Leftarrow$) First, we prove the if direction. Assume
  that there is an accepting infinite sequence of regions
  $\nu=(R_0,O_0,f_0),(R_1,O_1,f_1),\ldots$ which is good with respect
  to the pointed word $(w,\ell)$. We need to show that $(w,\ell)\in\Lang_p(\A)$. For all $i\geq 0$,
  let $\Acc_i$ be the set of accepting states of $\A$ with respect to
  $f_i$, and $g_i$ be the mapping assigning to each $q\in R_i$ a
  minimal model of $\delta(q,w(i))$ such that $\nu$ satisfies the
  $\delta$-consistency requirement, the ranking requirement, and the
  Miyano-Hayashi requirement w.r.t.~$g_i$. Let $P_s$ be the set of states $p\in Q$ such that for some  $q\in R_0$,  $(\leftarrow,p)\in g_0(q)$. Note that the $\delta$-consistency requirement ensures that $P_s$ contains only states belonging to the backward acceptance condition $F_-$ of $\A$.
    We define a graph
  $G=\tpl{V,E,v_0}$ and show that it is an accepting $(\ell,q_0)$-run of
  $\A$ over $w$. The graph $G$ is defined as follows:

  \begin{itemize}
  \item $v_0=(\ell,q_0)$, $V\subseteq (\N\cup \{-1\})\times Q$ such that:
  (i) $(-1,q)\in V$ iff $q\in P_s$ and (ii)
 for all $i\geq 0$, $(i,q)\in V$ iff $q\in R_i$;
  \item there is an edge from $(i,q)$ to $(j,p)$ iff $i\geq 0$ and for
    some  $(\dir,p)\in g_i(q)$, $(j,p)$ is the effect of $(\dir,p)$ w.r.t.~$i$.
  \end{itemize}

  Since the sequence of regions $\nu$ satisfies the initialization
  requirement and the $\delta$-consistency requirement w.r.t.~$g_i$
  for all $i\geq 0$, $G$ is a $(\ell,q_0)$-run of $\A$ over $w$. It
  remains to show that $G$ is accepting. We assume the contrary and
  derive a contradiction. Then, since $\A$ is a two-way $\HAA$ and the acceptance condition for the vertices $(-1,q)$ is satisfied, there must be a strictly-forward infinite path $\pi=(i,q_{i}),(i+1,q_{i+1}),\ldots$ of $G$ for some $i\geq 0$ such that the
  following holds:

  \begin{itemize}
  \item for some B\"{u}chi stratum $\tpl{\Bu,P,F}$, $q_h\in P\setminus
    F$, for all $h\geq i$. Since $q_h\in R_{h}$, we obtain that
    $q_h\in R_{h}\setminus\Acc_{h}$ for all $h\geq i$.
  \item for some coB\"{u}chi stratum $\tpl{\Co,P,F}$,
    $q_h\in P$ for all $h\geq i$, and for infinitely many $k\geq
    i$, $q_k\in F$.
    Since $\nu$ satisfies the ranking requirement w.r.t. $g_{h}$,
    $f_{h+1}(q_{h+1})\leq f_{h}(q_{h})$ for all $h\geq i$. It
    follows that there is $k\geq i$ such that $q_k\in F$ and for all
    $h\geq k$, $f_{h}(q_{h})= f_{k}(q_{k})$. In particular,
    $f_{h}(q_{h})$ is even. Hence, for all $h\geq k$, $q_h\in
    R_{h}\setminus \Acc_{h}$.
  \end{itemize}

  Thus, we obtain that there is an infinite strictly forward path
  $\pi=(k,q_k),(k+1,q_{k+1}),\ldots$ of $G$ such that $q_h\in R_{h}\setminus \Acc_{h}$ for all $h\geq
  k$. Since the sequence of regions $\nu$
  is accepting, there must be $i\geq k$ such that $O_{i}=R_{i}\setminus
  \Acc_{i}\neq \emptyset$. Moreover, since $\nu$ satisfies the
  Miyano-Hayashi requirement w.r.t. the mappings $g_{j}$, we deduce
  that $O_{j}\neq\emptyset$ for all $j>i$, which contradicts the assumption that the sequence of regions $\nu$ is accepting.

   \bigskip

  $\Rightarrow)$ Now, we prove the only if direction. Let $(w,\ell)\in
  \Lang_p(\A)$. Hence, there is an accepting $(\ell,q_0)$-run
  $G=\tpl{V,E,v_0}$ of $\A$ over $w$. By Lemma~\ref{lemma:Ranking},
  for every  coB\"{u}chi stratum $\Stratum$ of $\A$,
  there is an odd ranking function $f_{\Stratum}$ of $\Stratum$
  for the run $G$.  Let $\Acc$ be the set of vertices $(i,q)$ of the
  run $G$ such that (1) either $q$ is an accepting state of a
  B\"{u}chi stratum, or (2) $q$ belongs to a coB\"{u}chi stratum
  $\Stratum$ and $f_{\Stratum}(i,q)$ is odd. Since $G$ is accepting and every infinite path of
  $G$ gets eventually trapped either in a B\"{u}chi stratum or a
  coB\"{u}chi stratum, it holds that every infinite path of $G$ visits
  infinitely many times vertices in $\Acc$. We define an infinite
  sequence of regions $\nu=(R_0,O_0,f_0),(R_1,O_1,f_1),\ldots$ and
  show that it is accepting and good with respect to the pointed word $(w,\ell)$, hence, the
  result follows.  For all $i\geq 0$, $R_i$ and $f_i$ are defined as
  follows:

  \begin{itemize}
  \item $R_i := \{(i,q)\mid (i,q)\in V\}$;
  \item for all  coB\"{u}chi strata
    $\Stratum=\tpl{\Co,P,F}$ and $q\in R_i\cap P$,
    $f_i(q)=f_{\Stratum}(i,q)$.
  \end{itemize}

Note that since $q_0\in R_\ell$, the sequence $\nu$
  (independently of the form of the sets $O_i$) satisfies the
  initialization requirement (w.r.t. $(w,\ell)$).
Let $\Acc_i$ be the set of the accepting states of $\A$ with
  respect to $f_i$. Note that for all $q\in R_i$,  $q\in \Acc_i$ iff
  $(i,q)\in \Acc$.
  Since $G$ is a run over $w$, for all $i\geq 1$, there must be a
  mapping $g_i$ over $R_i$ such that for all $q\in R_i$, $g_i(q)$ is a
  minimal model of $\delta(q,w(i))$ and the sequence $\nu$
  (independently of the form of the sets $O_i$) satisfies the
  $\delta$-consistency requirement w.r.t. $g_i$.  Moreover, since for
  every  coB\"{u}chi stratum $\Stratum$ of $\A$,
  $f_{\Stratum}$ is an odd ranking function of $\Stratum$ for
  the run $G$, the sequence $\nu$ (independently of the form of the
  sets $O_i$) satisfies the ranking requirement w.r.t. $g_i$.   It remains to define the sets $O_i$ and show that
  the resulting sequence is accepting and satisfies the   Miyano-Hayashi requirement as well.  For this,
  we use the following claim.\vspace{0.2cm}

\noindent  \emph{Claim:}  there is an infinite
  sequence $0=h_1<h_2<\ldots$ of positions of $w$ such that
  for all $j\geq 0$ and finite paths of $G$ of the form $\pi=
  (h_{j},p), \ldots,(h_{j+1}-1,q)$, $\pi$ visits some state in $\Acc$.

   \bigskip

  First, we show that the result follows from the claim above and then
  we prove the claim. So, let $0=h_1<h_2<\ldots$ be an infinite sequence
  of  positions of $w$ satisfying the claim above. For
  every $i\geq 0$, let $j\geq 0$ be the unique natural number such that
  $h_{j}\leq i< h_{j+1}$. Then, $O_i$ is defined as
  follows:
  \begin{itemize}
  \item $O_i$ is the set of states $q$ such that there is a finite
    path of $G$ of the form $\pi=(h_{j},p),\ldots,(i,q)$ which does
    \emph{not} visit vertices in $\Acc$.
  \end{itemize}
  Note that $O_i\cap \Acc_i=\emptyset$ and $O_i\subseteq R_i$.
  By construction and the claim above, we have that
  for all $j> 0$, $O_{h_j-1}=\emptyset$
  and $O_{h_{j}}=R_{h_{j}}\setminus \Acc_{h_{j}}$. Hence, the
  infinite sequence of regions
  $\nu=(R_0,O_0,f_0),(R_1,O_1,f_1),\ldots$ is accepting.  For the
  Miyano-Hayashi requirement w.r.t. $g_i$, let $q\in O_i$ and $(\rightarrow,p)\in g_i(q)$  such that  $p\notin \Acc_{i+1}$ (hence, $(i+1,p)\notin
  \Acc$). We need to show that $p\in O_{i+1}$. Let $j\geq 0$ such that
  $h_j\leq i< h_{j+1}$. Since $O_i\neq \emptyset$ and $O_{h_{j+1}-1}= \emptyset$, we have that $i<h_{j+1}-1$. Hence, $i+1 <
  h_{j+1}$. Thus, since $q\in O_i$ and $(i+1,p)$ is a successor of
  $(i,q)$ in $G$ which is not in $\Acc$, we obtain that $p\in
  O_{i+1}$. Therefore, $\nu=(R_0,O_0,f_0),(R_1,O_1,f_1),\ldots$ is an accepting
  infinite sequence of regions which is good w.r.t. the pointed word $(w,\ell)$. It remains to
  prove the claim.\vspace{0.2cm}

\noindent  \emph{Proof of the claim:}
  fix $k\geq 0$.  For each $i\geq 0$, let $T_i$ be the set of states $q\in Q$ such
  that there is a finite path of $G$ of the form $(k,p),\ldots,(i,q)$
  which does \emph{not} visit $\Acc$-vertices. Since $k$ is arbitrary,
  in order to prove the claim, it suffices to show that there is a
  position $m>k$ such that $T_{m-1}=\emptyset$. Let $H=\{(i,q)\in
  \N\times Q\mid q\in T_i\}$.  Note that $H\cap \Acc=\emptyset$.
  First, we prove that the set $H$ is finite. We assume the contrary and
  derive a contradiction. Let $G_H$ be the subgraph of $G$ given by the
  restriction of $G$ to the set of vertices $H$.  Note that by
  construction, every vertex in $G_H$ is reachable in $G_H$ from a
  vertex of the form $(k,p)$. Moreover, each vertex of $G_H$ has only
  finitely many successors. Since $G_H$ is infinite and the set of
  vertices of the form $(k,p)$ is finite, by K\"{o}nig's Lemma, $G_H$
  contains an infinite path $\pi$. This is a contradiction since $\pi$
  does not visit vertices in $\Acc$ and $\pi$ is also an infinite path
  of $G$. Thus, the set $H=\{(i,q)\in \N\times Q\mid q\in T_i\}$ is
  finite. It follows that there is $j\geq 0$ such that for all $i\geq
  j$, $T_j=\emptyset$. Hence, the result follows, which concludes the proof
  of the claim and the lemma as well.
\end{proof}

Now, we can prove Theorem~\ref{theorem:FromHAAtoSNPA}.

\setcounter{aux}{\value{theorem}}
\setcounter{theorem}{\value{theo-FromHAAtoSNPA}}

\begin{theorem} For a two-way $\HAA$ $\A$ with $n$ states, one can construct ``on the fly'' and in singly
  exponential time a B\"{u}chi $\SNWA$  accepting $\Lang_p(\A)$ with
  $2^{O(n\cdot \log(n))}$ states.
\end{theorem}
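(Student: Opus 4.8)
The plan is to read the automaton off directly from the Characterization Lemma (Lemma~\ref{lemma:characterizationForHAA}): $(w,\ell)\in\Lang_p(\A)$ iff there is an accepting infinite sequence of regions that is good with respect to $(w,\ell)$. The key observation I would exploit first is that, in the definition of a good sequence, the witnessing mappings $g_i$ are quantified \emph{locally and independently} at each position, and every constraint imposed on $g_i$ (the $\delta$-consistency, ranking, and Miyano--Hayashi requirements) refers only to the three consecutive regions at positions $i-1,i,i+1$ together with the single letter $w(i)$. Hence ``there exists a mapping $g_i$ witnessing goodness at position $i$'' is a predicate $P$ on a window of three consecutive regions and one input symbol. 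I would isolate $P$ and note that $P$, plus the acceptance requirement on the sets $O_i$, is all that a B\"uchi $\SNWA$ must verify.

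I would then take the states of the B\"uchi $\SNWA$ to be tuples of consecutive regions (a triple at the split position, pairs elsewhere), the mapping $g_i$ itself never being stored but guessed inside each transition and checked on the fly against $P$. Following the split semantics of a B\"uchi $\SNWA$, the forward copy (using the $\rightarrow$-transitions of $\rho$) guesses the regions $(R_\ell,O_\ell,f_\ell),(R_{\ell+1},O_{\ell+1},f_{\ell+1}),\dots$ and, on reading $w(h)$ for $h\ge\ell$, verifies $P$ for the window centred at $h$ (using the region it retained from $h-1$, the current region, and the region it guesses for $h+1$) together with the Miyano--Hayashi update of $O_h$; the backward copy symmetrically guesses $(R_{\ell-1},O_{\ell-1},f_{\ell-1}),\dots,(R_{-1},O_{-1},f_{-1})$ and checks $P$ for windows centred at positions $h<\ell$. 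The shared initial state $q_\ell=p_\ell\in Q_0$ encodes the window $(R_{\ell-1},R_\ell,R_{\ell+1})$ of regions around $\ell$ (with their $O,f$ components), which lets the forward copy verify $P$ at the split position $\ell$ itself, while the backward copy merely propagates the already-fixed $R_{\ell-1},R_\ell$ and checks windows strictly below $\ell$. For acceptance I would set $F_+$ to flag exactly the forward transitions witnessing a breakpoint ($O_h=\emptyset$ and $O_{h+1}=R_{h+1}\setminus\Acc_{h+1}$); since only finitely many positions lie below $\ell$, the acceptance requirement on the region sequence holds iff the forward copy visits $F_+$ infinitely often. Dually, $F_-$ is the set of states encoding the stop region $(F_-,\emptyset,f)$ of $\A$, so that the last backward state $p_{-1}$ forces $R_{-1}=F_-$, and the backward-move checks at position $0$ guarantee that backward targets land in $F_-$.

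Correctness would then follow by matching accepting runs of the constructed $\SNWA$ with accepting good region sequences through Lemma~\ref{lemma:characterizationForHAA}: a run of the $\SNWA$ spells out exactly such a sequence (the two copies agreeing at $\ell$ because of the shared initial state), and conversely every good sequence yields a run. For the size bound, a region $(R,O,f)$ is given by two subsets of $Q$ and a rank function into $\{1,\dots,2n\}$ on the coB\"uchi states, so there are $2^{n}\cdot 2^{n}\cdot (2n)^{n}=2^{O(n\log n)}$ regions, and hence $2^{O(n\log n)}$ states even after passing to windows of bounded width; the transition relation, together with the (exponentially many) guesses of $g_i$ and the decidable check of $P$, is computable in singly exponential time and can be generated ``on the fly''.

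The step I expect to be the main obstacle is the coordination at the split position $\ell$. Because the existence of $g_\ell$ is a \emph{joint} condition relating the forward region $R_{\ell+1}$ and the backward region $R_{\ell-1}$, checking ``forward moves fit'' and ``backward moves fit'' \emph{separately} is a strict over-approximation, as a disjunctive transition $(\rightarrow,p_1)\vee(\leftarrow,p_2)$ already shows; so the two copies must not guess $g_\ell$ independently. Carrying the full window $(R_{\ell-1},R_\ell,R_{\ell+1})$ in the single shared initial state, and evaluating $P$ at $\ell$ inside one copy only, is what resolves this while keeping the state count at $2^{O(n\log n)}$ --- in particular one must never record a mapping $g_i$ in a state, since the number of such mappings is $2^{\Theta(n^2)}$ and would destroy the bound.
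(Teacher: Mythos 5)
Your proposal is correct and essentially reproduces the paper's own construction: the paper likewise reads the B\"{u}chi $\SNWA$ directly off the characterization lemma, taking as states pairs of adjacent regions (plus \textit{in}-tagged initial states and a \textit{stop} state), guessing the local $(\Region,\sigma)$-model inside each transition rather than storing it and checking its $\rightarrow$- and $\leftarrow$-consistency \emph{jointly}, flagging the breakpoint pattern $O=\emptyset$, $O'=R'\setminus\Acc'$ in $F_+$, and deriving the same $2^{O(n\log n)}$ bound from the count of regions. Even your split-position fix coincides with the paper's device: there the shared initial state stores the regions at positions $\ell-1$ and $\ell$, and the window at $\ell$ is verified by a single joint model in the forward copy's first transition while the backward copy merely drops the tag --- exactly your ``evaluate $P$ at $\ell$ in one copy only,'' with the third region of the window guessed in that transition instead of carried in the state.
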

\setcounter{theorem}{\value{aux}}
\begin{proof}
  For the fixed two-way $\HAA$ $\A=\tpl{Q,q_{0},\delta,F_-,\FamStratum}$ over $\Sigma$, we construct a
  B\"{u}chi $\SNWA$ $\A_N=\tpl{P,P_0,\rho,F'_-,F_+}$ over $\Sigma$ accepting
  $\Lang_p(\A)$ with $2^{O(|Q|\cdot \log(|Q|))}$ states.    We construct the
  B\"{u}chi $\SNWA$ $\A_N$ in such a way that given a pointed word $(w,i)$ over $\Sigma$,  $\A_N$ accepts $(w,i)$ iff there is an
  \emph{accepting} infinite sequence of regions of $\A$ which is good
  w.r.t. $(w,i)$.
  At each step, the forward (resp., backward) copy of the automaton keeps tracks in its control
  state of the guessed region associated with the current input
  position  and the guessed region associated with the previous (resp., next) input
  position.  Note that in this way, the automaton
  can check locally (i.e., by its transition function) that the
  guessed infinite sequence of regions satisfies the
  $\delta$-consistency requirement, the ranking requirement, and the
  Miyano-Hayashi requirement. Finally, the B\"{u}chi acceptance
  condition of $\A_N$ is used to check that the guessed sequence
  of regions is accepting.

  In order to simplify the formal definition
  of $\A_N$, we introduce additional notation.  For a region
  $\Region=(R,O,f)$ and $\sigma\in\Sigma$, a
  \emph{$(\Region,\sigma)$-model} is a mapping assigning to each $q\in
  R$, a minimal model of $\delta(q,\sigma)$. For a direction
  $\dir\in\{\rightarrow,\leftarrow\}$, two
  regions $\Region=(R,O,f)$ and $\Region_\dir=(R_\dir,O_\dir,f_\dir)$,
  and a $(\Region,\sigma)$-model $g$ for some $\sigma\in \Sigma$, we
  say that \emph{$\Region$ is $\dir$-consistent
    w.r.t. $g$ and $\Region_\dir$} if the following holds:
\begin{itemize}
\item \emph{$\delta$-consistency requirement.} For all $q\in R$ and $(\dir,p)\in g(q)$, $p\in
  R_{\dir}$. If, additionally, $p$ and $q$ are coB\"{u}chi states belonging to the same stratum, then
  $f_{\dir}(p)\leq f(q)$ (\emph{Ranking requirement}).
\item \emph{Miyano-Hayashi requirement.} If $\dir = \rightarrow$, then for all
 $q\in O$ and
  $(\dir,p)\in g(q)$, whenever  $p$ is \emph{not} accepting w.r.t. $f_\dir$, then $p\in O_\dir$.
\end{itemize}

Formally, the B\"{u}chi $\SNWA$ $\A_N=\tpl{P,P_0,\rho,F'_-,F_+}$ is
defined as follows:
\begin{itemize}
\item $P= (\Regions\times
  \Regions)\cup (\textit{in}\times\Regions\times
  \Regions)\cup \{\textit{stop}\}$, where $\Regions$ is the set of regions.
\item $P_0$ is the set of states of the form $(\textit{in},\Region,(R,O,f))$
  such that $q_0\in R$.
\item the transition function $\rho$ is defined as follows, where $\Region_s$ is the stop region:
\begin{compactitem}
\item \emph{Forward transitions:} $p'\in \rho(p,\sigma,\rightarrow)$ iff
 (\emph{either} $p=(\Region_-,\Region)$ \emph{or} $p=(\textit{in},\Region_-,\Region)$), $p'= (\Region,\Region_+)$ and there is a $(\Region,\sigma)$-model $g$ such that
  $\Region$ is $\rightarrow$-consistent  w.r.t. $g$ and
    $\Region_+$ and $\leftarrow$-consistent  w.r.t. $g$ and
    $\Region_-$.
 \item \emph{Backward transitions:} $p'\in \rho(p,\sigma,\leftarrow)$ iff one of the following holds:
 \begin{compactitem}
   \item $p'=\textit{stop}$, and either $p=(\textit{in},\Region_s,\Region)$ or $p=(\Region_s,\Region)$;
   \item $p=(\textit{in},\Region,\Region_+)$ and $p'=(\Region,\Region_+)$;
   \item  $p=(\Region,\Region_+)$, $p'= (\Region_-,\Region)$ and there is a $(\Region,\sigma)$-model $g$ such that
  $\Region$ is $\rightarrow$-consistent  w.r.t. $g$ and
    $\Region_+$ and $\leftarrow$-consistent  w.r.t. $g$ and
    $\Region_-$.
 \end{compactitem}
 \end{compactitem}
\item $F'_-=\{\textit{stop}\}$.
\item $F_+$  consists of the states  of the form
  $((R_-,\emptyset,f_-),(R,O,f))$ such that $O=R\setminus\Acc$,
  where $\Acc$ is the set of accepting states of $\A$ w.r.t. $f$.
\end{itemize}

By construction, it easily follows that $(w,i)\in\Lang_p(\A_N)$ \emph{iff}
there is an accepting infinite sequence of regions which is good w.r.t.~$(w,i)$. By
Lemma~\ref{lemma:characterizationForHAA}, it follows that
$\Lang_p(\A_N)=\Lang_p(\A)$.  Since the number of regions is at most
$2^{2|Q|} \cdot 2^{|Q|\cdot\log(2|Q|)}$,
Theorem~\ref{theorem:FromHAAtoSNPA} follows.
\end{proof}

\subsection{Proof of Theorem~\ref{theorem:FromQPTLtoSNWA}}\label{APP:FromQPTLtoSNWA}

In this Subsection we provide a proof Theorem~\ref{theorem:FromQPTLtoSNWA} (see Appendix~\ref{APP:SNWAandHAA} for a formal definition of B\"{u}chi $\SNWA$ and two-way $\HAA$). We will use the following trivial result.

\begin{proposition}\label{remark:FromSNWAtoHAA} A B\"{u}chi $\SNWA$ $\A$ can be  converted ``on the fly'' in linear time into a two-way $\HAA$ accepting
$\Lang_p(\A)$.
\end{proposition}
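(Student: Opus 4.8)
The plan is to give a direct simulation. A Büchi $\SNWA$ is a nondeterministic device whose run splits, at the distinguished position, into a single infinite forward branch and a single finite backward branch; I would therefore build a two-way $\HAA$ that uses disjunction to mimic the nondeterministic choices and a single top-level conjunction to perform the split, encoding the Büchi condition on the forward branch through a Büchi stratum and the condition on the finite backward branch through a negative stratum together with the backward acceptance set.

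Concretely, given $\A=\tpl{Q,Q_0,\rho,F_-,F_+}$, I would take the $\HAA$ state set to be $\{q_0\}\uplus Q^\rightarrow\uplus Q^\leftarrow$, where $Q^\rightarrow=\{q^\rightarrow : q\in Q\}$ and $Q^\leftarrow=\{q^\leftarrow : q\in Q\}$ are two disjoint tagged copies of $Q$. The initial state performs the split: reading $\sigma$, I set $\delta(q_0,\sigma)=\bigvee_{s\in Q_0}\big[(\bigvee_{q'\in\rho(s,\rightarrow,\sigma)}(\rightarrow,{q'}^{\rightarrow}))\wedge(\bigvee_{p'\in\rho(s,\leftarrow,\sigma)}(\leftarrow,{p'}^{\leftarrow}))\big]$, so that a minimal model picks an initial state $s\in Q_0$ and launches exactly one forward copy and one backward copy, matching condition (i) of a run of $\A$ together with the first forward and backward moves. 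The forward and backward copies then use pure disjunctions, $\delta(q^\rightarrow,\sigma)=\bigvee_{q'\in\rho(q,\rightarrow,\sigma)}(\rightarrow,{q'}^{\rightarrow})$ and $\delta(q^\leftarrow,\sigma)=\bigvee_{p'\in\rho(q,\leftarrow,\sigma)}(\leftarrow,{p'}^{\leftarrow})$; each minimal model here is a singleton, so every forward (resp.\ backward) copy has a unique successor, reproducing the single infinite sequence $r_\rightarrow$ and the single finite sequence $r_\leftarrow$ of $\A$.

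The heart of the argument is the stratum classification. I would place $\{q_0\}$ in a transient stratum (it has no self-loop) at the top of the order, $Q^\rightarrow$ in a Büchi stratum with acceptance $F^\rightarrow=\{q^\rightarrow : q\in F_+\}$, and $Q^\leftarrow$ in a negative stratum, both strictly below $\{q_0\}$ and mutually incomparable. Requirement \ReqOne\ holds because forward (resp.\ backward) copies move only within their own component and $q_0$ never re-enters itself; requirement \ReqTwo\ holds because forward copies make only $\rightarrow$-moves, as a Büchi stratum demands of its self-loops, while backward copies make only $\leftarrow$-moves, as a negative stratum demands. Since the negative stratum forbids forward self-loops and the input is bounded below by position $-1$, no infinite path can be trapped in $Q^\leftarrow$: every backward branch terminates at a vertex $(-1,p^\leftarrow)$, and by setting the $\HAA$ backward condition to $F_-^{\HAA}=\{p^\leftarrow : p\in F_-\}$ I enforce exactly $p_{-1}\in F_-$. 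Consequently the only infinite paths live in $Q^\rightarrow$, where the Büchi requirement $F^\rightarrow$ states precisely that $r_\rightarrow$ visits $F_+$ infinitely often.

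Putting these together, accepting $(i,q_0)$-runs of the $\HAA$ correspond bijectively to accepting runs $(r_\leftarrow,r_\rightarrow)$ of $\A$ over $(w,i)$, whence $\Lang_p(\A_{\HAA})=\Lang_p(\A)$; since the state set has size $2|Q|+1=O(|Q|)$ and $\delta$ is read off from $\rho$ entry by entry, the construction is linear and is produced on the fly. I do not expect a genuine obstacle, as the result is essentially bookkeeping, but the one point needing care is recognizing that the finiteness of the backward run of a $\SNWA$ is exactly what the negative-stratum discipline provides, so that the backward branch contributes no trapped infinite path and its acceptance is discharged solely through $F_-^{\HAA}$ at position $-1$.
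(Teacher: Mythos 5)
Your construction is correct and matches the argument the paper intends: the paper states this proposition without proof (explicitly calling it trivial), and your tagged two-copy automaton --- a transient stratum $\{q_0\}$ performing the conjunctive split, a B\"{u}chi stratum $Q^\rightarrow$ with acceptance set induced by $F_+$, and a negative stratum $Q^\leftarrow$ whose finite backward branches are discharged at position $-1$ through the backward acceptance set induced by $F_-$ --- is precisely the canonical simulation, with the stratum requirements \ReqOne\ and \ReqTwo\ checked correctly. The only cosmetic point is the convention that an empty disjunction denotes $\false$ (needed when $\rho$ offers no successor), under which your run correspondence and the linear-time, on-the-fly bounds go through exactly as stated.
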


\setcounter{aux}{\value{theorem}}
\setcounter{theorem}{\value{theo-FromQPTLtoSNWA}}

\begin{theorem}  Let $\varphi$ be a first-level existential (resp., first-level universal) $\QPTL$ formula  and $h=\sad(\varphi)$. Then, one can construct ``on the fly'' a B\"{u}chi $\SNWA$ $\A_\varphi$ accepting $\Lang_p(\varphi)$   in time $\Tower(h,O(|\varphi|))$ (resp., $\Tower(h+1,O(|\varphi|))$).
\end{theorem}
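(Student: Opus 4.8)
The plan is to argue by induction on $h=\sad(\varphi)$, processing the positive normal form of $\varphi$ from the leaves upward while maintaining, for each relevant subformula, either a two-way $\HAA$ or a B\"uchi $\SNWA$ for its pointed language. The construction relies on five reusable primitives. (a) A \emph{linear} hesitant translation of any quantifier-free $\PLTL$-style fragment (temporal modalities and Boolean connectives only, over a possibly extended alphabet) into a two-way $\HAA$: $\Until,\Eventually$ yield B\"uchi strata, $\Release,\Always$ yield coB\"uchi strata, $\Next,\PNext$ yield transient strata with forward resp.\ backward moves, and the stratification follows the subformula order. (b) Existential propositional quantification as \emph{free} alphabet projection on a B\"uchi $\SNWA$, legitimate precisely because the automaton is nondeterministic and the forward/backward split survives relabelling. (c) The Complementation Lemma, dualizing a two-way $\HAA$ into a two-way $\HAA$ for the complement at no cost. (d) Theorem~\ref{theorem:FromHAAtoSNPA}, converting a two-way $\HAA$ into a B\"uchi $\SNWA$ with a \emph{single} exponential blow-up, on the fly. (e) Proposition~\ref{remark:FromSNWAtoHAA}, converting a B\"uchi $\SNWA$ back into a two-way $\HAA$ linearly.

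The first reduction I would make is to dispose of the first-level universal case using the first-level existential one. Since $\sad$ is invariant under negation and $\neg\varphi$ is first-level existential whenever $\varphi$ is first-level universal, with $\sad(\neg\varphi)=h$, I would first build a B\"uchi $\SNWA$ for $\Lang_p(\neg\varphi)$ of size $\Tower(h,O(|\varphi|))$ by the existential case, and then complement it: read it as a two-way $\HAA$ by (e), dualize by (c), and reconvert by (d). This single extra $\HAA\to\SNWA$ conversion is exactly what raises $\Tower(h,\cdot)$ to $\Tower(h+1,\cdot)$, matching the claimed asymmetry.

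For the first-level existential case I would partition the tree encoding of $\varphi$ into maximal \emph{same-flavour blocks}, a block being a maximal region on which the strong alternation length $\ell$ does \emph{not} increment; the existential-flavour group $\{\exists,\Eventually,\PEventually\}$ and the universal-flavour group $\{\forall,\Always,\PAlways\}$ are the poles, and the neutral operators $\Until,\Release$ are assigned to blocks exactly as the three clauses in the definition of $\ell$ prescribe. The leaves of the outermost (existential) block are maximal first-level-universal subformulas of strong alternation depth $h-1$, for which the induction hypothesis supplies $\SNWA$ of size $\Tower(h,\cdot)$; I would turn each into a two-way $\HAA$ by (e), assemble the whole block's temporal skeleton on top as one additional layer of strata via (a) while keeping the block's quantified propositions as auxiliary alphabet letters, apply (d) \emph{once}, and finally project those auxiliary letters by (b). The key quantitative fact, which the collapsing-to-one feature of $\sad$ is designed to force, is that an entire same-flavour block — arbitrarily deeply nested $\Always/\Eventually/\Until/\Release$ included — is absorbed into a single stratum layer and hence costs only one $\HAA\to\SNWA$ conversion; the existential quantifiers of the block piggyback on that one conversion via deferred projection rather than triggering fresh determinizations. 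Summing one exponential over the $h$ blocks on the deepest root-to-leaf path yields $\Tower(h,O(|\varphi|))$.

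The main obstacle is precisely this tight bookkeeping of the tower height against $\sad$. I expect the delicate work to be threefold. First, I must show that the three clauses of $\ell$ correspond faithfully to the three situations in which adjacent operators may or may not share a stratum layer, so that nesting of temporal modalities (and alternations between $\Always$ and $\Eventually$ within a quantifier-free region) never forces an extra conversion, whereas each flavour-breaking quantifier boundary forces exactly one. Second, I must verify that propositional quantifiers nested \emph{inside} temporal modalities are still captured by a single global nondeterministic guess — this is what makes projection by (b) sound and lets each block's existential quantifiers commute to the front of their block — and that the forward/backward run shape of a B\"uchi $\SNWA$ lets the past modalities $\PEventually,\PUntil$ be handled by guessing a backward run along a forward pass, so that $\Lang_p$, not merely $\Lang$, is preserved throughout. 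Third, I must check that plugging the sub-block $\SNWA$ into the enclosing $\HAA$ layer, together with the on-the-fly character of (d), keeps every intermediate automaton within the claimed tower and produces the final $\SNWA$ in the stated time. Once these are established, the $h$ versus $h+1$ dichotomy drops out of whether the outermost operation is a free projection or a complementation, completing the induction.
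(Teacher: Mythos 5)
Your overall architecture matches the paper's in several respects: the first-level-universal case is reduced to the existential one by negation plus the zero-cost \HAA\ dualization followed by one application of Theorem~\ref{theorem:FromHAAtoSNPA} (this is indeed exactly where $h$ becomes $h+1$); existential propositional quantifiers are handled by cheap projection on B\"{u}chi \SNWA; and quantifier-free temporal skeletons are compiled linearly into \HAA\ strata with quantified subformulas plugged in as oracles. But there is a genuine gap in the quantitative heart of your existential case. You propose to handle a first-level-universal leaf $\psi$ with $\sad(\psi)=h-1$ by invoking the induction hypothesis, which for a \emph{universal} formula yields a B\"{u}chi \SNWA\ of size $\Tower((h-1)+1,O(|\psi|))=\Tower(h,O(|\psi|))$, then converting it to a two-way \HAA\ via your primitive (e), assembling the block skeleton on top, and applying (d) once. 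That conversion is singly exponential in the state space of the assembled \HAA, so the block's \SNWA\ has size $\Tower(h+1,O(|\varphi|))$, not the claimed $\Tower(h,O(|\varphi|))$, and the induction does not close. The universal clause of the induction hypothesis already charges one exponential for the complementation-reconversion, and your plan pays a second exponential on top of it; your closing tally ``one exponential per block'' is inconsistent with the leaf sizes you actually feed into each block.

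The paper's proof avoids precisely this by never materializing the \SNWA\ of a universal subformula that is about to be absorbed into a larger automaton: for each maximal quantified subformula $\forall q.\,\xi$ of $\varphi$ with smaller $\sad$, it builds by induction the \SNWA\ for the dual \emph{existential} formula $\exists q.\,\widetilde{\xi}$ --- same strong alternation depth, first-level existential, hence of size $\Tower(h-1,O(|\varphi|))$ --- converts it linearly into a two-way \HAA\ (Proposition~\ref{remark:FromSNWAtoHAA}), and dualizes it at zero cost via the Complementation Lemma, obtaining an \HAA\ for $\forall q.\,\xi$ of size $\Tower(h-1,\cdot)$; the single conversion of the enclosing skeleton then lands at $\Tower(h,\cdot)$. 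In short, complementation must be kept \emph{lazy}, at the \HAA\ level, whenever the complemented automaton feeds into an enclosing \HAA; you possess the needed primitives (c) and (e) but apply them in the wrong order, paying the reconversion eagerly. Two lesser remarks: first, the leaves of your outermost block are not only first-level-universal subformulas --- a flavour break also occurs at junctures such as $\exists$ directly under $\Until$ or $\Next$, producing first-level-\emph{existential} leaves of smaller $\sad$, which are harmless but must be listed. Second, your absorption of $\Eventually/\PEventually$ into the same block as an $\exists$ below them is a legitimate substitute for the paper's explicit linear-time constructions of the automata $\A_+$ and $\A_-$ on \SNWA\ (the Claim in Case~1 of the paper's proof), which the paper needs because for $\varphi=\Eventually\varphi'$ with $\sad(\varphi')=\sad(\varphi)$ it applies the induction hypothesis to $\varphi'$ directly at the \SNWA\ level; either route works, but only once the leaf accounting above is repaired.
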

\setcounter{theorem}{\value{aux}}
\begin{proof}
The proof is by induction on $|\varphi|$. The base case $|\varphi|=1$ is trivial. Now, assume that $|\varphi|>1$. We distinguish four cases depending on the type of root operator of $\varphi$ (either temporal modality, or existential quantifier, or universal quantifier, or  boolean connective). \vspace{0.2cm}

\noindent \textbf{Case 1:} the root operator of $\varphi$ is a temporal modality.
Let $h=\sad(\varphi)$ and
\[
P := \{\exists p_1.\, \theta_1,\ldots,\exists p_n.\, \theta_n,\forall q_1.\,\xi_1,\ldots,\forall q_k.\,\xi_k\}
\]
be the set of quantified subformulas of $\varphi$  which do not occur in the scope of a quantifier. If $P=\emptyset$, then
$\varphi$ is a $\PLTL$ formula. In this case, by a straightforward adaptation of the standard translation of $\LTL$ into B\"{u}chi word automata \cite{Var88},\footnote{See references for the Appendix.} one can construct a B\"{u}chi $\SNWA$  of size  $2^{O(|\varphi|)}$ accepting $\Lang_p(\varphi)$. Hence, the result follows.

Assume now that $P\neq \emptyset$. Then, $\varphi$ can be viewed as a $\PLTL$ formula in positive normal form, written $\PLTL(\varphi)$, over the set of atomic proposition given by $P$.

We first, assume that for each $\psi\in P$, $\sad(\psi)<\sad(\varphi)$. Hence, for all $\psi\in P$,
$\sad(\psi)\leq h-1$ and $h>1$. Moreover, in this case,
$\varphi$ must be a first-level existential formula.
 For all $1\leq j\leq k$, let $\widetilde{\xi}_j$ be the positive normal form of
$\neg\xi_j$. Note that $\sad(\forall q_i.\xi_i)=\sad(\neg\exists q_i.\widetilde{\xi}_i)$ and
$\Lang_p(\forall q_i.\xi_i)=\Lang_p(\neg\exists q_i.\widetilde{\xi}_i)$. Thus,    by applying the induction hypothesis, Proposition \ref{remark:FromSNWAtoHAA} and the complementation lemma for two-way $\HAA$, it follows that for each $\psi\in P$, one can construct ``on the fly'' in time at most $\Tower(h-1,O(|\varphi|))$, a two-way $\HAA$ $\A_\psi$ accepting $\Lang_p(\psi)$. Then, by an easy generalization of the standard linear-time translation of $\LTL$ formulas into B\"{u}chi alternating word automata and by using the two-way $\HAA$ $\A_\psi$ with $\psi\in P$, one can construct ``on the fly'',
 in time $\Tower(h-1,O(|\varphi|))$, a two-way $\HAA$ $\A_\varphi$ accepting $\Lang_p(\varphi)$.
 Intuitively, given an input pointed word, each copy of  $\A_\varphi$ keeps track of the current subformula of $\PLTL(\varphi)$ which needs to be evaluated. The evaluation simulates the semantics of $\PLTL$ (in positive normal form) by using universal and existential branching, but when the current subformula $\psi$ is in $P$, then the current copy of $\A_\varphi$ activates a copy of $\A_\psi$ in the initial state.

 Formally, for each $\psi\in P$, let $\A_\psi=\tpl{Q_\psi,q_\psi,\delta_\psi, F_\psi^{-}, \FamStratum_\psi}$. Without loss of generality, we assume that the state sets of the two-way $\A_\psi$ are pairwise distinct. Then, $\A_\varphi=\tpl{Q,q_0,\delta,F_-,\FamStratum}$, where
 \begin{compactitem}
   \item $Q = \displaystyle{\bigcup_{\psi\in P}Q_\psi}\cup \Sub(\varphi)$, where $\Sub(\varphi)$ is the set of subformulas of $\PLTL(\varphi)$;
   \item $q_0=\varphi$;
   \item The transition function $\delta$ is defined as follows: $\delta(q,\sigma) = \delta_\psi(q,\sigma)$ if $q\in Q_\psi$ for some $\psi\in P$.
   If instead  $ q\in \Sub(\varphi)$, then $\delta(q,\sigma)$ is defined by induction on the structure of  $q$ as follows:
   \begin{itemize}
     \item $\delta(p,\sigma) = \true$ if $p\in\sigma$, and $\delta(p,\sigma) = \false$ otherwise (for all $p\in\AP\cap \Sub(\varphi)$);
     \item $\delta(\neg p,\sigma) = \false$ if $p\in\sigma$, and $\delta(\neg p,\sigma) = \true$ otherwise (for all $p\in\AP\cap \Sub(\varphi)$);
     \item $\delta(\phi_1\wedge \phi_2,\sigma)=\delta(\phi_1,\sigma)\wedge\delta(\phi_2,\sigma)$ and $\delta(\phi_1\vee \phi_2,\sigma)=\delta(\phi_1,\sigma)\vee\delta(\phi_2,\sigma)$;
     \item $\delta(\Next\phi,\sigma)=(\rightarrow,\phi)$ and $\delta(\PNext\phi,\sigma)=(\leftarrow,\phi)$;
      \item $\delta(\phi_1\Until \phi_2,\sigma)=\delta(\phi_2,\sigma)\vee (\delta(\phi_1,\sigma)\wedge (\rightarrow,\phi_1\Until \phi_2))$;
       \item $\delta(\phi_1\PUntil \phi_2,\sigma)=\delta(\phi_2,\sigma)\vee (\delta(\phi_1,\sigma)\wedge (\leftarrow,\phi_1\PUntil \phi_2))$;
       \item $\delta(\phi_1\Release \phi_2,\sigma)=\delta(\phi_2,\sigma)\wedge (\delta(\phi_1,\sigma)\vee (\rightarrow,\phi_1\Release \phi_2))$;
           \item $\delta(\phi_1\PRelease \phi_2,\sigma)=\delta(\phi_2,\sigma)\wedge (\delta(\phi_1,\sigma)\vee (\leftarrow,\phi_1\PRelease \phi_2))$;
      \item for each $\psi\in P$, $\delta(\psi,\sigma)=\delta(q_\psi, \sigma)$.
   \end{itemize}
   \item $F_-=\displaystyle{\bigcup_{\psi\in P}}F_\psi^{-}$
   \item $\FamStratum = \displaystyle{\bigcup_{\psi\in P}\FamStratum_\psi}\vee \bigcup_{\phi\in \Sub(\varphi)}\mathcal{S}_\phi$, where for each
   $\phi\in \Sub(\varphi)$, $\mathcal{S}_\phi$ is defined as follows:
   \begin{compactitem}
     \item if $\phi$ has as root a past temporal modality, then $\mathcal{S}_\phi$ is the negative stratum $(\{\phi\},-,\emptyset)$;
     \item if $\phi$ has as root the (future) until modality, then $\mathcal{S}_\phi$ is the B\"{u}chi stratum $(\{\phi\},\Bu,\emptyset)$;
          \item if $\phi$ has as root the (future) release modality, then $\mathcal{S}_\phi$ is the coB\"{u}chi stratum              $(\{\phi\},\Co,\emptyset)$;
     \item otherwise, $\mathcal{S}_\phi$ is the transient stratum given by $(\{\phi\},\Trans,\emptyset)$.
   \end{compactitem}
 \end{compactitem}
 Finally, since $h>1$ and the size of  the two-way $\HAA$ $\A_\varphi$ is $\Tower(h-1,O(|\varphi|))$,
by applying Theorem~\ref{theorem:FromHAAtoSNPA}, one can construct ``on the fly'' a B\"{u}chi $\SNWA$ accepting $\Lang_p(\varphi)$ of size
$\Tower(h,O(|\varphi|))$. Hence, the result follows.\vspace{0.2cm}

Now, assume that for some $\psi\in P$, $\sad(\psi)= \sad(\varphi)$. Let $h=\sad(\varphi)$. There are two cases:
\begin{compactitem}
  \item $\psi=\exists p.\, \psi'$. Since the root of $\varphi$ is a temporal modality, by definition of strong alternation depth, either $\varphi=\PEventually \varphi'$ or $\varphi=\Eventually \varphi'$ (and $\psi$ is a subformula of $\varphi'$). Moreover, $\varphi$ and $\varphi'$ must be  first-level existential formulas. Hence, by applying the induction hypothesis, the result directly follows from the following claim.\vspace{0.2cm}

      \noindent \emph{Claim.} Given a B\"{u}chi  $\SNWA$ $\A$, one can construct ``on the fly'' and in linear time two B\"{u}chi $\SNWA$ $\A_+$
      and $\A_-$ such that
      \begin{compactitem}
        \item $\Lang_p(\A_+) = \{(w,i)\mid \text{ for some }j\geq i\,\,(w,j)\in \Lang_p(\A)\}$;
        \item $\Lang_p(\A_-) = \{(w,i)\mid \text{ for some }j\leq i\,\,(w,j)\in \Lang_p(\A)\}$.
      \end{compactitem}
      \vspace{0.2cm}

      \noindent \emph{Proof of the Claim.} We illustrate the construction of $\A_+$ (the construction of $\A_-$ being similar).
      Intuitively, given an input pointed word $(w,i)$, $\A_+$ guesses a position $j\geq i$ and checks that $(w,j)\in\Lang_p(\A)$ as follows.
      Initially, $\A_+$ keeps track of a guessed state $q$ of $\A$ which represents the state where the backward copy of $\A$ would be on reading the
      $i^{th}$ position of $w$ in some guessed accepting run of $\A$ over $(w,j)$. If $j=i$, then $q$ needs to be some initial state of $\A$, and $\A_+$ simply simulates the behavior of $\A$ on $(w,i)$. Otherwise, $\A_+$ splits in two copies: the backward copy simulates the backward copy of $\A$, while the forward copy of $\A_+$ behaves as follows. In the first step, the forward copy of $\A$ moves to the same state $q$, and after this step, such a copy starts to simulate  in forward-mode the backward copy of $\A$ until, possibly, a `switch' occurs at the guessed position $j$, where the forward copy of $\A_+$ simulates in a unique step from the current state some initial split of $\A$ in the backward and forward copy.
      After such a switch (if any), the forward copy of $\A_+$ simply simulates the forward copy of $\A$. We use two flags to distinguish the different phases of the simulation (in particular, the initial phase and the switch phase).

      Formally, let
      $\A=\tpl{Q,Q_0,\rho,F_-,F_+}$. Then, $\A_+=\tpl{Q',Q'_0,\rho',F'_-,F'_+}$, where  $Q'=Q\times \{\bot,\top\}\times \{\IN,\NoIN\}$, $Q'_0=Q\times \{\bot\}\times \{\IN\}$,
      $F'_-=F_-\times \{\top\}\times \{\NoIN\}$, $F'_+=F_+\times\{\top\}\times \{\NoIN\}$, and $\rho'$ is defined as follows:
      \begin{compactitem}
        \item Backward moves: $(q',f'_1,f'_2)\in \rho'((q,f_1,f_2),\leftarrow,\sigma)$ iff $f'_1=\top$, $f'_2=\NoIN$,  and $q'\in \rho(q,\leftarrow,\sigma)$;
        \item Forward moves: $(q',f'_1,f'_2)\in \rho'((q,f_1,f_2),\rightarrow,\sigma)$ iff one of the following holds:
        \begin{compactitem}
          \item $f_2=\IN$, $f'_2=\NoIN$, and \emph{either} $q'=q$ and $f'_1=\bot$, \emph{or} $q\in Q_0$, $q' \in \rho(q,\rightarrow,\sigma)$, and $f'_1=\top$ (\emph{initialization});
           \item $f'_2=f_2=\NoIN$, $f'_1=\bot$, and $q \in \rho(q',\leftarrow,\sigma)$ (\emph{simulation of backward moves});
           \item $f'_2=f_2=\NoIN$, $f_1=\bot$, $f'_1=\top$, and there is $q_0\in Q$ such that $q \in \rho(q_0,\leftarrow,\sigma)$ and
           $q' \in \rho(q_0,\rightarrow,\sigma)$ (\emph{switch});
           \item $f'_2=f_2=\NoIN$, $f'_1=f_1=\top$, and
           $q' \in \rho(q,\rightarrow,\sigma)$ (\emph{simulation of the forward moves of $\A$ after the switch}).
      \end{compactitem}
      \end{compactitem}

  \item $\psi=\forall p.\, \psi'$. Since the root of $\varphi$ is a temporal modality and $\sad(\psi)=\sad(\varphi)=h$, by definition of strong alternation depth, either $\varphi=\PAlways \varphi'$ or $\varphi=\Always \varphi'$ (and $\psi$ is a subformula of $\varphi'$). Moreover, $\varphi$ and $\varphi'$ must be  first-level universal formulas and $\sad(\varphi')=h$. Assume that $\varphi=\Always \varphi'$ (the other case being similar).
      Let $\widetilde{\varphi}'$ be the positive normal form of $\neg\varphi'$.
Note that $\sad(\neg \Eventually \widetilde{\varphi}')=h$ and $\Lang_p(\Eventually \widetilde{\varphi}')=\Lang_p(\neg\varphi)$. Hence, by the previous case, one can construct ``on the fly'' a B\"{u}chi  $\SNWA$ $\A_{\neg\varphi}$
  of size $\Tower(h,O(|\varphi|))$ accepting $\Lang_p(\neg\varphi)$. By Proposition~\ref{remark:FromSNWAtoHAA}, the complementation lemma for two-way $\HAA$ and Theorem~\ref{theorem:FromHAAtoSNPA}, it follows that one
  can construct ``on the fly'' a B\"{u}chi  $\SNWA$ $\A_{\varphi}$
  of size $\Tower(h+1,O(|\varphi|))$ accepting $\Lang_p(\varphi)$. Hence, the result follows.
\end{compactitem}

\bigskip

\noindent \textbf{Case 2:} $\varphi$ is an existential quantified formula of the form $\varphi=\exists p.\,\varphi'$.
Hence, in particular, $\varphi$ is a first-level existential formula.
Let $h=\sad(\varphi)$ and $h'=\sad(\varphi')$.
We observe that like B\"{u}chi nondeterministic automata, $\SNWA$ are  efficiently  closed under projection. In particular, given
a   B\"{u}chi  $\SNWA$ $\A$ over $2^{\AP}$ and $p\in \AP$, one can construct ``on the fly'' and in linear time a B\"{u}chi $\SNWA$
accepting the language $\{w\in (2^{\AP})^{\omega}\mid \text{ there is } w'\in\Lang_p(\A) \text{ such that }w'=_{\AP\setminus \{p\}}w\}$.
Thus, by applying the induction hypothesis, it follows that one can construct ``on the fly'' a B\"{u}chi $\SNWA$ accepting $\Lang_p(\varphi)$ of size
$\Tower(h',O(|\varphi'|))$ if $\varphi'$ is a first-level existential formula, and of size $\Tower(h'+1,O(|\varphi'|))$ otherwise.
Since $h'\leq h$, and $h'=h-1$ if $\varphi'$ is a  first-level universal formula, the result follows.

\bigskip

\noindent \textbf{Case 3:} $\varphi$ is an universal quantified formula of the form $\varphi=\forall p.\,\varphi'$.
Hence, in particular, $\varphi$ is a first-level universal formula.
Let $h=\sad(\varphi)$ and $\widetilde{\varphi}'$ be the positive normal form of $\neg\varphi'$.
Note that $\sad(\neg \exists p.\, \widetilde{\varphi}')=h$ and $\Lang_p(\exists p.\, \widetilde{\varphi}')=\Lang_p(\neg\varphi)$. Hence, by Case~2, one can construct ``on the fly'' a B\"{u}chi  $\SNWA$ $\A_{\neg\varphi}$
  of size $\Tower(h,O(|\varphi|))$ accepting $\Lang_p(\neg\varphi)$. By Proposition~\ref{remark:FromSNWAtoHAA}, the complementation lemma for two-way $\HAA$ and Theorem~\ref{theorem:FromHAAtoSNPA}, it follows that one
  can construct ``on the fly'' a B\"{u}chi  $\SNWA$ $\A_{\varphi}$
  of size $\Tower(h+1,O(|\varphi|))$ accepting $\Lang_p(\varphi)$. Hence, the result follows.

 \bigskip

 \noindent \textbf{Case 4:} $\varphi$ is of the form $\varphi=\varphi_1\wedge\varphi_2$ or $\varphi=\varphi_1\vee \varphi_2$. Assume that
  $\varphi=\varphi_1\wedge\varphi_2$ (the other case being similar). Let $h_1=\sad(\varphi_1)$, $h_2=\sad(\varphi_2)$, and $h=\sad(\varphi)$.
  Note that $h=\max(h_1,h_2)$.
  We use the fact that like B\"{u}chi nondeterministic automata, $\SNWA$ are trivially and efficiently  closed under intersection. In particular, given
two   B\"{u}chi  $\SNWA$ $\A_1$ and $\A_2$, one can construct ``on the fly'' and in time $O(|\A_1||\A_2|)$ a B\"{u}chi $\SNWA$
accepting the language $\Lang_p(\A_1)\cap \Lang_p(\A_2)$. We distinguish two cases:
\begin{compactitem}
  \item $\varphi$ is a first-level existential formula: assume that $h=h_1=h_2$ (the other cases, i.e., when either $h=h_1$ and $h_2<h$, or $h=h_2$ and $h_1<h$, are similar). Hence, both $\varphi_1$ and $\varphi_2$ are existential. Since $h=\max(h_1,h_2)$, by applying the induction hypothesis and the closure of $\SNWA$ under intersection, it follows that one can construct ``on the fly''  a  B\"{u}chi $\SNWA$
accepting the language $\Lang_p(\varphi)$ whose size is at most $\Tower(h_1, O(|\varphi_1|))\cdot \Tower(h_2, O(|\varphi_2|))=\Tower(h, O(|\varphi|))$. Hence, in this case, the result follows.
  \item $\varphi$ is a first-level universal formula: hence, there is $j=1,2$ such that $\varphi_2$ is a first-level universal formula and $h_j=h$.
  Since $h=\max(h_1,h_2)$, by applying the induction hypothesis and the closure of $\SNWA$ under intersection, it follows that one can construct ``on the fly''  a  B\"{u}chi $\SNWA$
accepting the language $\Lang_p(\varphi)$ whose  size is at most $\Tower(h_1+1, O(|\varphi_1|))\cdot \Tower(h_2+1, O(|\varphi_2|))=\Tower(h+1, O(|\varphi|))$. Hence,  the result follows.
\end{compactitem}

This concludes the proof of Theorem~\ref{theorem:FromQPTLtoSNWA}.

\end{proof}

\subsection{Lower bounds in Theorem~\ref{theorem:QPTLsatisfiability}}\label{APP:QPTLsatisfiability}

\newcommand{\Mark}{\textit{mark}}
\newcommand{\code}{\textit{cod}}
\newcommand{\Chec}{\textit{check}}
\newcommand{\Inc}{\textit{inc}}
\newcommand{\Succ}{\textit{suc}}
\newcommand{\First}{\textit{first}}
\newcommand{\Last}{\textit{last}}

\newcommand{\Beg}{\textit{beg}}
\newcommand{\End}{\textit{end}}
\newcommand{\Init}{\textit{init}}
\newcommand{\Fair}{\textit{fair}}
\newcommand{\Conf}{\textit{conf}}
\newcommand{\Tag}{\textit{Tag}}
\newcommand{\NNext}{\textit{next}}
\newcommand{\Mac}{{\cal M}}
\newcommand{\Sym}{\ensuremath{\$}}

For each $h\geq 1$, let $\QPTL^{h}$ be the fragment of $\QPTL$ consisting of formulas whose strong alternation depth is at mose
$h$.
In this section, for all $h\geq 1$, we provide the lower bounds for $\QPTL^{h}$ and the existential fragment of $\QPTL^{h}$ as captured by
Theorem~\ref{theorem:QPTLsatisfiability}. We focus on the existential fragment of $\QPTL^{h}$. The proof of $h$-\EXPSPACE-hardness
of unrestricted $\QPTL^{h}$ is simpler.\footnote{Note that by the well-known $h$-\EXPSPACE-hardness of satisfiability of $\QPTL$ formulas in prenex form whose alternation depth of existential and universal quantifiers is at most $h$, we immediately deduce $(h-1)$-\EXPSPACE-hardness for satisfiability of unrestricted $\QPTL^{h}$. One can enforce this result by showing that satisfiability of unrestricted $\QPTL^{h}$ is in fact $h$-\EXPSPACE-hard.}
Therefore, in the rest of this section, we show that
satisfiability for the \emph{existential fragment} of
$\QPTL^{h}$ is $(h-1)$-\EXPSPACE-hard even for formulas using temporal modalities in $\{\Next,\PNext,\Eventually,\PEventually,\Always,\PAlways\}$. This is proved by a reduction from the non-halting problem for $\exp[h-1]$-space bounded deterministic Turing Machines, where $\exp[h-1]$ denotes the class of  functions  $f:\N\rightarrow \N$ such that for some  constant  $c\geq 1$,  $f(n)=\Tower(h-1,n^c)$ for each $n\in\N$.

Let $\AP$ be the infinite set of atomic propositions given by
\[
\AP:= \{0,1\}\cup \{\Sym_1,\Sym_2,\ldots\}
\]
Moreover, for each $h\geq 1$, let $\AP_h$ be the finite subset of $\AP$ given by
\[
\AP_h:= \{0,1\}\cup \{\Sym_1,\ldots,\Sym_h\}
\]
First, for all $n\geq 1$ and $h\geq 1$, we define an encoding of the natural numbers in
$[0,\Tower(h,n)-1]$ by finite words over $\AP_h$, called $(h,n)$\emph{-codes}. In particular, for $h>1$, an $(h,n)$-code encoding a natural number
$m\in [0,\Tower(h,n)-1]$ is a sequence of $\Tower(h-1,n)$ $(h-1,n)$-codes, where the $i^{th}$ $(h-1,n)$-code encodes both the value and (recursively) the position of the $i^{th}$-bit in the binary representation of $m$. Formally, the set of $(h,n)$-codes is
defined by induction on $h$ as follows.\vspace{0.2cm}

\noindent \textbf{Base Step: $h= 1$.} A $(1,n)$-code
is a finite word
$w$ over  $\AP_1$ of the form
$w=\Sym_{1}b b_1\ldots b_n\Sym_1$,  where  $b,b_1,\ldots,b_n\in \{0,1\}$.
The \emph{content} of $w$ is the bit $b$ and the \emph{index} of $w$ is the natural number in
$[0,\Tower(1,n)-1]$ (recall that $\Tower(1,n)=2^n$) whose binary code is
$b_1\ldots b_n$ (we assume that $b_1$ is the least significant bit).\vspace{0.2cm}

\noindent \textbf{Induction Step: let $h\geq 1$.} An $(h+1,n)$-code is a word
$w$ over $\AP_{h+1}$ of the form
\[
\Sym_{h+1}b\Sym_h w_1\Sym_h w_2\Sym_h\ldots \Sym_h w_{\Tower(h,n)}\Sym_h\Sym_{h+1}
\]
 where
$b\in \{0,1\}$  and for all $i\in [1,\Tower(h,n)]$,
$\Sym_h w_i \Sym_h$ is an $(h,n)$-code whose index is $i-1$.
Let $b_i$ be the content of the $(h,n)$-code $\Sym_h w_i \Sym_h$.
Then, the \emph{content}  of $w$ is the bit $b$, and the  \emph{index}
of $w$ is the natural number in $[0,\Tower((h+1)-1]$ whose binary code is given by
$b_1\ldots b_{\Tower(h,n)}$.

\bigskip

Given a finite alphabet $\Sigma$ such that $\AP\cap \Sigma=\emptyset$, we also introduce the notion of \emph{$(h,n)$-block over $\Sigma$} which is defined as an
$(h,n)$-code but we require that the content is a symbol in $\Sigma$. The index of an $(h,n)$-block over $\Sigma$ is defined as the index of an $(h,n)$-code.
Intuitively, $(h,n)$-blocks are used to encode the cells of the configurations reachable by $\exp[h]$-space bounded  deterministic Turing machines on inputs of size $n$.

\begin{example} Let $n=2$ and $h=2$. In this case $\Tower(h,n)=16$ and $\Tower(h-1,n)=4$. Thus, we can encode by $(2,2)$-codes all the integers in $[0,15]$. For example, let us consider the number 14 whose binary code
$($using $\Tower(h-1,n)=4$ bits$)$ is given by $0111$ $($assuming that the first bit is the least significant one$)$. The
$(2,2)$-code with content $0$ encoding number 14 is given by
\[
  \Sym_{2}0\Sym_{1}0\,00\Sym_{1}1\,10\Sym_{1}1\,01\Sym_{1}1\,11\Sym_{1}\Sym_{2}
\]
\noindent Note that we encode also the position of each bit in the binary code of  14.
 \end{example}

Let $\Tag$ be an extra infinite set of atomic propositions disjoint from $\AP$ given by
\[
\Tag:= \{\Bl,\First,\Last\}\cup \{\Beg_1,\End_1,\Beg_2,\End_2,\ldots\}
\]
and for each $h\geq 1$, let $\Tag_h$ be the finite subset of $\Tag$ given by
\[
\Tag_h:= \{\Bl,\First,\Last\}\cup \{\Beg_1,\End_1,\ldots,\Beg_h,\End_h\}
\]

Intuitively, we use the propositions in $\Tag_h$ to mark $(h,n)$-blocks.

For all $h\geq  1$, the lower bound for satisfiability of existential $\QPTL^{h}$ is crucially based on the following Propositions~\ref{pro:encodingForLowerBoundQPTL} and~\ref{pro:encodingForLowerBoundQPTLTwo}.
For a set $P$ and a word $w$ over $2^{P'}$ with $P'\supseteq P$, we say that $w$ is \emph{$P$-simple} if for each position $i$ of $w$, $w(i)\cap P$ is a singleton.

\begin{proposition} \label{pro:encodingForLowerBoundQPTL}
For all $n\geq 1$ and $h\geq 1$, one can construct in time polynomial in $n$ and $h$ three \emph{existential}
$\QPTL^{h}$ formulas  $\psi_\Bl(h,n)$,   $\psi_=(h,n)$, and $\psi_\Inc(h,n)$ over $\AP_h\cup \Tag_h$ using only temporal modalities in $\{\Next,\PNext,\Eventually,\PEventually,\Always,\PAlways\}$ such that
for all $\AP_h$-simple pointed words $(w,i)$, the following holds:
\begin{compactitem}
  \item $(w,i)\models \psi_\Bl(h,n)$ $\Leftrightarrow$ there is $j>i$ such that $w[i,j]$ encodes an $(h,n)$-code.
  \item Let $j>i$ such that
  \begin{compactitem}
 \item the projection of $w[i,j]$ over $\AP_h$ is of the form $\Sym_h w_1\Sym_h w' \Sym_h w_2 \Sym_h$, where  $\Sym_h w_1\Sym_h$ and $\Sym_h w_2\Sym_h$ are $(h,n)$-codes, and
  \item the beginning and the end of $\Sym_h w_1\Sym_h$ and $\Sym_h w_2\Sym_h$ are marked by $\Beg_h$ and $\End_h$, respectively, and
  no other position of $w$ is marked by $\Beg_h$ and $\End_h$.
\end{compactitem}
Then, $(w,i)\models \psi_=(h,n)$ $\Leftrightarrow$ $\Sym_h w_1\Sym_h$ and $\Sym_h w_2\Sym_h$ have the same index.
\item Let $j>i$ such that the projection of $w[i,j]$ over $\AP_h$ has the form $\Sym_h w_1\Sym_h  w_2 \Sym_h$ so that $\Sym_h w_1\Sym_h$ and $\Sym_h w_2\Sym_h$ are $(h,n)$-codes. Then,   $(w,i)\models \psi_\Inc(h,n)$ $\Leftrightarrow$ there is $i\in [0,\Tower(h,n)-2]$ such that the index of $\Sym_h w_1\Sym_h$ is $i$ and the index of $\Sym_h w_2\Sym_h$ is $i+1$
\end{compactitem}
Moreover, each existential quantifier in $\psi_=(h,n)$ is in the scope of some temporal modality.
\end{proposition}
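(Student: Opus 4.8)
The plan is to construct the three formulas, together with two trivial auxiliary predicates locating the opening $\Sym_h$ of a code and testing a code's one--bit content, by a simultaneous induction on $h$, and to verify at each level the four requirements: correctness of the captured index relation, existentiality (no universal \emph{propositional} quantifier in positive normal form), use only of the modalities $\{\Next,\PNext,\Eventually,\PEventually,\Always,\PAlways\}$, and $\sad\le h$ with size polynomial in $n$ and $h$. The single structural fact that makes everything go through is that, by the very definition of an $(h,n)$-code, its sub-codes occur in strictly increasing index order $0,1,\ldots,\Tower(h-1,n)-1$; hence the \emph{positional} order of the sub-codes inside a fixed code coincides with their \emph{index} order, which lets me replace every ``comparison of index positions'' by plain positional navigation with $\Eventually$ and $\PEventually$.

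For the base case $h=1$ all three formulas are quantifier-free. Since a $(1,n)$-code is the fixed pattern $\Sym_1\{0,1\}^{n+1}\Sym_1$, the formula $\psi_\Bl(1,n)$ is a bounded $\Next$-chain. For $\psi_=(1,n)$ and $\psi_\Inc(1,n)$ I compare the two $n$-bit index fields bitwise: for each offset $k\in[1,n]$ I use $\Eventually$ to reach the two $\Beg_1$-marked codes and assert the appropriate relation between their $k$-th index bits (equality for $\psi_=$, and the standard ``there is a lowest flip bit going $0\to1$, all lower bits go $1\to 0$, all higher bits agree'' pattern for $\psi_\Inc$). Each is a conjunction of $O(n)$ constant-size subformulas, uses only $\Next,\PNext,\Eventually,\PEventually$, has no propositional quantifier, and therefore has strong alternation depth $1$.

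For the inductive step $h\to h+1$ I use the level-$h$ formulas as subformulas. To compare two $(h+1,n)$-codes $C_1,C_2$ (already delimited by $\Beg_{h+1},\End_{h+1}$), I let a single $\Always$ range over the positions that start a sub-code of $C_1$; at each such position I existentially re-place the markers $\Beg_h,\End_h$ so that they delimit exactly the current sub-code of $C_1$ and one candidate sub-code of $C_2$ (uniqueness being forced by a nested $\Always$), verify through $\psi_=(h,n)$ that the two marked sub-codes carry the same index, and assert equality of their contents by $\bigl(\Always(\Beg_h\!\to\!\Next 1)\bigr)\vee\bigl(\Always(\Beg_h\!\to\!\Next 0)\bigr)$. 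As every index value occurs exactly once in each code, this one direction already forces bitwise equality of the two indices, so this is $\psi_=(h+1,n)$. The formula $\psi_\Inc(h+1,n)$ has the same skeleton on the two adjacent codes $\Sym_h w_1\cdots\Sym_h w_2\Sym_h$: I additionally guess one ``flip'' sub-code of $C_1$ and, using within-code position as index order, require content $0\to1$ at the flip, $1\to0$ strictly earlier, and equality strictly later, each relation being routed through a same-index partner in $C_2$ located (as above) and certified by $\psi_=(h,n)$. Finally $\psi_\Bl(h+1,n)$ asserts the outer frame $\Sym_{h+1}\{0,1\}\cdots\Sym_{h+1}$, uses an $\Always$ and $\psi_\Bl(h,n)$ to check that every delimited inner segment is a well-formed $(h,n)$-code, forces the first inner code to have index $0$ (all content bits $0$) and the last to have index $\Tower(h,n)-1$ (all content bits $1$), and applies $\psi_\Inc(h,n)$ to consecutive inner codes; injectivity of increment then pins down exactly $\Tower(h,n)$ inner codes in the correct order. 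Every quantifier introduced at this level sits beneath the leading $\Always$ (which gives the concluding clause about $\psi_=(h,n)$), no branch ever uses a universal propositional quantifier (all universality is temporal, via $\Always/\PAlways$), and only the listed modalities occur, so all invariants are preserved.

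The step I expect to be the main obstacle is the strong-alternation-depth accounting. Each recursion level prepends a block of shape $\Always[\,\cdots\to\exists\Beg_h.\exists\End_h.(\,\text{level-}h\text{ formula}\wedge\cdots)\,]$, so along the branch that descends through all levels the operator word in $\Op$ is $\Always\,\exists\,\exists\,\Always\,\exists\,\exists\cdots$. By the definition of $\ell$, the step $\Always\,\exists$ (universal temporal immediately followed by an existential propositional quantifier) is the \emph{only} one that increments the count, whereas $\exists\,\exists$ is free (same flavour), $\exists\,\Always$ is free (a quantifier followed by a temporal modality), and the purely temporal runs occurring in the uniqueness, content and boundary checks are free as well; moreover $\Next,\PNext\notin\Op$ contribute nothing. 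Thus the descent through $h$ levels contributes exactly $h$, giving $\sad(\psi_=(h,n))=\sad(\psi_\Inc(h,n))=h$ and $\sad(\psi_\Bl(h,n))\le h$, and one checks that none of the side branches exceeds this budget since they use only temporal and existential-flavoured operators below the current $\Always$. Polynomiality, which could a priori fail because of the cross-references among the three families, is recovered by the observation that each formula contains each lower-level formula only \emph{once} as a syntactic subformula (namely inside an $\Always$ loop), so the sizes obey a linear recurrence $|\psi_=(h)|,|\psi_\Inc(h)|=\mathrm{poly}(n)+|\psi_=(h-1)|$ and $|\psi_\Bl(h)|=\mathrm{poly}(n)+|\psi_\Bl(h-1)|+|\psi_\Inc(h-1)|$, whose solution is $O(\mathrm{poly}(n)\cdot h^{2})$.
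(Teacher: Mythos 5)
Your proposal is correct and follows essentially the same route as the paper's own proof: an induction on $h$ with quantifier-free $\PLTL$ formulas at level $1$, existentially quantified marker propositions placed beneath a leading $\Always$ to select the current sub-code of the first code and a same-index partner in the second, one-directional index matching routed through $\psi_=(h,n)$ (sufficient because each index occurs exactly once in a well-formed code), the lowest-flip-bit pattern for $\psi_\Inc$, and boundary-plus-increment checks for $\psi_\Bl$. Your bookkeeping also matches the paper's: $\Always\,\exists$ is the only costly transition in the strong-alternation-depth count, and each lower-level formula occurs once as a subformula, giving the stated polynomial size bound.
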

\begin{proof}
Fix $n\geq 1$. For each $h\geq 1$, the construction of formulas $\psi_\Bl(n,h)$, $\psi_=(n,h)$, and $\psi_{\Inc}(h,n)$
is given by induction on $h$. Since $n$ is fixed, for clarity of presentation, we write $\psi_\Bl^{h}$, $\psi_=^{h}$, and $\psi_\Inc^{h}$ instead of
$\psi_\Bl(h,n)$, $\psi_=(h,n)$, and $\psi_\Inc(h,n)$, respectively.\vspace{0.2cm}

\noindent \textbf{Base Step: $h=1$}
\[
\psi^1_{\Bl}:=\Sym_1\,\wedge\,\Next^{n+2}\,\Sym_1\,\,\wedge\,\bigwedge_{i=1}^{n+1}\Next^{i}(0\vee 1)
\]
\[
\psi_=^1:=\bigwedge_{i=1}^n\bigvee_{b\in \{0,1\}}\Next^{i+1}(b\,\wedge\,\Eventually (\End_1\,\wedge\, \Eventually (\Beg_1 \wedge \Next^{i+1}\,b)))
\]
\[
\psi^1_{\Inc}:=\Next\bigvee_{i=1}^n\Bigl([\,\bigwedge_{j=1}^{i-1}\Next^{j}\,(1\,\wedge\,\Next^{n+2}\,0)]\,\,\wedge\,\,\Bigl[\Next^{i}\,(0\,\wedge\,
\Next^{n+2}\,1)]\,\wedge\,
\Bigl[\,\bigwedge_{j=i+1}^n\bigvee_{b\in \{0,1\}}\Next^{j}\,(b\,\wedge\,\Next^{n+2}\,b)\Bigr]\Bigr)
\]
\noindent \textbf{Induction Step: let $h\geq 1$.} In order to construct the formulas $\psi_\Bl^{h+1}$, $\psi_=^{h+1}$, and $\psi_\Inc^{h+1}$, for a  proposition $p$, we use the following $\PLTL$ formulas $\theta(1,p)$ and $\theta(2,p)$, which are satisfied by a pointed word $(w,i)$ iff there are at most one position  and two positions, respectively, along $w$ where $p$ holds.
\begin{eqnarray*}
  \theta(1,p) &:=& \PEventually((\neg\PNext \top) \wedge \Always(\,p\, \rightarrow \Next\Always  \neg\,p ))\\
  \theta(2,p) &:=& \PEventually((\neg\PNext \top) \wedge \Always(\,p\, \rightarrow \Next\Always(\,p\, \rightarrow \Next\Always  \neg\,p )))
\end{eqnarray*}
\noindent \textbf{Definition of formula $\psi^{h+1}_{\Bl}$.}
\[
\psi_{\Bl}^{h+1} := \exists \Bl.\,\exists\First.\,\exists\Last.\,\Bigl(\psi_h^{h+1}\wedge \psi_\First^{h+1}\wedge \psi_\Last^{h+1}\wedge \psi_\Succ^{h+1}\Bigr)
\]
 where
\begin{itemize}
  \item $\psi^{h+1}_h$ is an existential $\QPTL^{h+1}$ formula which uses $\psi^{h}_{\Bl}$ and requires that for the given $\AP_{h+1}$-simple pointed word $(w,i)$, there is $j\geq i$ such that the projection of $w[i,j]$ over $\AP_{h+1}$ is of the form $\Sym_{h+1} b \Sym_h  w_1
   \Sym_h\ldots \Sym_h  w_p \Sym_h  \Sym_{h+1}$, where $b\in \{0,1\}$ and for each $i\in [1,p]$, $\Sym_h w_i \Sym_h$ is an $(h,n)$-code;
   we use existential quantification over $\Bl$ to mark exactly the first and the last position of $w[i,j]$ by proposition $\Bl$.
 \item $\psi^{h+1}_{\First}$ and $\psi^{h+1}_{\Last}$ are $\PLTL$ formulas: the first one  requires that the index of the first $(h,n)$-code $\Sym_h w_1 \Sym_h$ of $w[i,j]$ is $0$, and the second one  requires that the index of the last $(h,n)$-code   $\Sym_h w_p \Sym_h$ of $w[i,j]$ is $\Tower(h,n)-1$;
     we use existential quantification over $\First$ and $\Last$ to mark the first and the last position of $\Sym_h w_1 \Sym_h$ by $\First$, and the first and last position of $\Sym_h w_p \Sym_h$ by $\Last$.
 \item $\psi^{h+1}_{\Succ}$ is an existential  $\QPTL^{h+1}$ formula using $\psi^h_{\Inc}$ and requiring that for consecutive $(h,n)$-codes along $w[i,j]$, the index is incremented.
      \end{itemize}
\begin{eqnarray*}
  \psi_{h}^{h+1} &:=& \theta(2,\Bl) \wedge \Bl \wedge \Sym_{h+1} \wedge \Next\Eventually(\Bl\wedge \Sym_{h+1}) \wedge \Next\Always((\Next\Eventually \Bl)\rightarrow\neg \Sym_{h+1})\wedge \Next(0\vee 1) \wedge \\
  & & \Next^{2}\Sym_h \wedge \Next^{3}\neg\Sym_{h+1}\wedge
  \underbrace{\Next\Always((\Sym_h\wedge \Next( \neg \Sym_{h+1}\wedge \Eventually \Bl)) \rightarrow \psi^h_{\Bl})\Bigr)}_{\text{check that the $\Bl$-marked prefix is a sequence of $(h,n)$-codes}}
  \end{eqnarray*}
  \[
  \psi_{\First}^{h+1}:= \left\{
    \begin{array}{ll}
     \Next^{3} \displaystyle{\bigwedge_{i=1}^{n}}\Next^{i}0
      &    \text{ if  } h=1
      \\
   \theta(2,\First)\wedge \Next^{2}\First\wedge \Next^{3}\Eventually(\First\wedge \Sym_h)\,\,\wedge
     & \\
   \Next^{3}\Always ((\Next\Eventually \First) \rightarrow  \neg\Sym_h) \wedge \Next^{3}\Always ((\Sym_{h-1}\wedge \Next^{2}\Eventually\First) \rightarrow  \Next \,0)
      &    \text{ otherwise }
    \end{array}
  \right.
  \]
\begin{eqnarray*}
  \psi_{\Last}^{h+1} &:=& \theta(2,\Last) \wedge \Eventually(\Last \wedge \Sym_h \wedge \Next\Eventually(\Last \wedge \Sym_h \wedge \Next \Bl)) \wedge \\
 & & \left\{
    \begin{array}{ll}
     \Always\Bigl(( (\PNext\PEventually\,\Last)\wedge \Next^{2}\Eventually\,\Last)\longrightarrow\Next \,1\Bigr)
      &    \text{ if  } h=1
      \\
     \Always\Bigl(( (\PNext\PEventually\,\Last)\wedge \Next\Eventually\,\Last)\longrightarrow (\neg\Sym_h\wedge ((\Sym_{h-1}\wedge \neg\Next \Sym_{h})\rightarrow \Next \,1))\Bigr)
      &    \text{ otherwise }
    \end{array}
  \right.
  \end{eqnarray*}
\[
\psi_\Succ^{h+1} := \underbrace{\Always \Bigl((\Sym_h\wedge \Next\Eventually(\,\Last \wedge\Next\Eventually\,\Last)) \, \longrightarrow \, \psi_{\Inc}^{h}\Bigr)}_{\text{check that for consecutive $(h,n)$-codes the index is incremented}}
\]
\noindent \textbf{Definition of formula $\psi^{h+1}_{=}$.}
\begin{eqnarray*}
  \psi_{=}^{h+1} &:=& \Next\Always \Bigl[ \Bigl(\Sym_h\wedge  \Next^{2}\Eventually (\End_{h+1}\wedge \Eventually\,\Beg_{h+1})\Bigr) \,\,\longrightarrow \,\, \exists \Beg_h.\,\exists\End_h.\, \Bigl(\theta(2,\Beg_h)\,\wedge\, \theta(2,\End_h)\, \wedge \,\\ \\
 & &  \underbrace{\{\Beg_h \wedge \Next\Eventually(\End_h \wedge \Sym_h \wedge \PNext\PAlways((\PNext\PEventually \Beg_h) \rightarrow \neg \Sym_h))\}}_{\text{mark the current $(h,n)$-code of the first $(h+1,n)$-code }} \,\,\,\wedge \,\,\\ \\
 & &  \underbrace{\{\Eventually(\Beg_{h+1}\wedge \Eventually(\Beg_h \wedge \Sym_h \wedge \Next\Eventually (\End_h\wedge \Sym_h\wedge \Eventually \, \End_{h+1}) \wedge \Next\Always ((\Next\Eventually\, \End_h) \rightarrow \neg \Sym_h) )
)\}}_{\text{select an $(h,n)$-code of the second $(h+1,n)$-code }} \,\,\,\wedge \,\, \\ \\
& &  \underbrace{\{\psi_{=}^{h}\,\wedge \,\bigvee_{b\in \{0,1\}} (\Next\, b \wedge \Next\Eventually (\Beg_h\wedge \Next \,b))\}}_{\text{check that the two selected $(h,n)$-codes have the same content and index; note that we use $\psi_{=}^{h}$ }}
\end{eqnarray*}
Note that $\psi_{=}^{h+1}$ ensures by using the always modality  that each $(h,n)$-code of the first $(h+1,n)$-code is selected.\vspace{0.2cm}

\noindent \textbf{Definition of formula $\psi_{\Inc}^{h+1}$. }
Let $(w,i)$ be an $\AP_{h+1}$-simple pointed word and $j\geq i$ such that the projection of $w[i,j]$ over $\AP_{h+1}$ is of the
form $\Sym_{h+1}w_1\Sym_{h+1}w_2\Sym_{h+1}$, where
$\Sym_{h+1}w_1\Sym_{h+1}$ and $\Sym_{h+1} w_2\Sym_{h+1}$ are $(h+1,n)$-codes. Then, the requirement that there is $\ell\in [0,\Tower(h+1,n)-2]$ such that
the index of $\Sym_{h+1}w_1\Sym_{h+1}$  is $\ell$ and the index of $\Sym_{h+1} w_2\Sym_{h+1}$ is $\ell+1$
 is equivalent to the following requirement
\begin{compactitem}
  \item there is a $(h,n)$-code $\Bl$  of $\Sym_{h+1}w_{1}\Sym_{h+1}$ such that denoting with $\Bl'$ the $(h,n)$-code of $\Sym_{h+1}w_{2}\Sym_{h+1}$ having the same index as $\Bl$, it holds that: (1) the content of $\Bl$ is $0$ and the content of each $(h,n)$-code  of $\Sym_{h+1}w_{1}\Sym_{h+1}$ that precedes $\Bl$ is 1, (2) the content of $\Bl'$ is $1$ and the content of each $(h,n)$-code of $\Sym_{h+1}w_{2}\Sym_{h+1}$ that precedes $\Bl'$ is 0, and (3) each $(h,n)$-code $\Bl_s$ of $\Sym_{h+1}w_{1}\Sym_{h+1}$ that follows $\Bl$ has the same  content as the $(h,n)$-code  of $\Sym_{h+1}w_{2}\Sym_{h+1}$ having the same index as $\Bl_s$.
\end{compactitem}\vspace{0.2cm}

\noindent  Thus, formula $\psi^{h+1}_{inc}$ uses $\psi^{h}_{=}$ and is defined as follows. Note that we use existential quantification over $\Bl$ to mark the first position of the (guessed) first $(h,n)$-code of the first $(h+1,n)$-code whose content is $0$. Moreover, we use existential quantification over $\First$ and $\Last$ to mark by $\First$, the first and the last position of the first $(h+1,n)$-code, and by $\Last$, the first and the last position of the second $(h+1,n)$-code.
\[
\psi_{\Inc}^{h+1} := \exists \First.\,\exists \Last.\,\exists\Bl.\,\Bigl(\psi_{\Mark}^{h+1}\wedge \psi_{\Chec}^{h+1}\Bigr)
\]
\begin{eqnarray*}
  \psi_{\Mark}^{h+1} &:=& \theta(2,\First) \wedge \theta(2,\Last) \wedge \theta(1,\Bl) \wedge  \\
 & & \underbrace{\First \wedge \Next\Eventually(\First \wedge \Sym_{h+1}) \wedge \Next\Always ((\Next\Eventually\, \First)\rightarrow \neg \Sym_{h+1})}_{\text{mark with $\First$ the beginning and the end of the first $(h+1,n)$-code}} \wedge\\ \\
 & & \underbrace{\Next\Eventually(\First \wedge  \Last  \wedge \Next\Eventually(\,\Last \wedge \Sym_{h+1})
  \wedge \Next\Always ((\Next\Eventually \,\Last) \rightarrow  \neg \Sym_{h+1}))}_{\text{mark with $\Last$ the beginning and the end of the second $(h+1,n)$-code}} \wedge\\ \\
& & \underbrace{\Next\Eventually(\Bl \wedge \Sym_h \wedge \Next^{2}\Eventually\First)}_{\text{mark with $\Bl$ the beginning of some $(h,n)$-code of the first $(h+1,n)$-code}}
\end{eqnarray*}
\begin{eqnarray*}
  \psi_{\Chec}^{h+1} &:=& \Always\Bigl(\,\, (\Sym_h  \wedge \Next^{2}\Eventually\,\First) \,\, \rightarrow \,\,
                                        \exists \Beg_h.\,\exists \End_h.\, \Bigl\{ \theta(2,\Beg_h)\wedge \theta(2,\End_h)\wedge \\
 & & \underbrace{\Beg_h \wedge  \Next\Eventually(\End_h\wedge \Sym_h \wedge \Next\Eventually\First \wedge  \PNext\PAlways ((\PNext\PEventually\, \Beg_h)\rightarrow \neg \Sym_{h}))}_{\text{mark with $\Beg_h$ and $\End_h$ the current $(h,n)$-code $\code_1$  of the first $(h+1,n)$-code}} \\
 & &  \text{\hspace{5cm}} \wedge \\
  & &  \underbrace{\Next\Eventually(\,\First\wedge \Next\Eventually(\Beg_h\wedge \Sym_h \wedge \Next\Eventually \,\Last \wedge \Next\Eventually(\End_h \wedge \Sym_h) \wedge
   \Next\Always ((\Next\Eventually\, \End_h)\rightarrow \neg \Sym_{h})   ))}_{\text{mark with $\Beg_h$ and $\End_h$ some $(h,n)$-code $\code_2$ of the second $(h+1,n)$-code}}  \\
   & &  \text{\hspace{5cm}} \wedge \\
  & & \text{\hspace{2.4cm}}\underbrace{\psi_{=}^{h}}_{\text{check that $\code_1$ and $\code_2$ have the same index}}  \\
  & &  \text{\hspace{5cm}} \wedge \\
    & & \underbrace{(\,\Bl \rightarrow (\Next 0\wedge \Next\Eventually(\Beg_h\wedge \Next 1) ))}_{\text{if $\code_1$ is the $(h,n)$-code marked by $\Bl$, the contents of $\code_1$ and $\code_2$ are $0$ and $1$}}  \\
    & &  \text{\hspace{5cm}} \wedge \\
       & & \text{\hspace{-0.5cm}}\underbrace{(\Next\Eventually\,\Bl \rightarrow (\Next 1\wedge \Next\Eventually(\Beg_h\wedge \Next 0) ))}_{\text{if $\code_1$ precedes the $(h,n)$-code marked by $\Bl$, the contents of $\code_1$ and $\code_2$ are $1$ and $0$}}  \\
    & &  \text{\hspace{5cm}} \wedge \\
   & & \underbrace{(\PNext\PEventually\,\Bl \rightarrow \displaystyle{\bigvee_{b\in\{0,1\}}}(\Next \,b\wedge \Next\Eventually(\Beg_h\wedge \Next \,b) ))\,\,\,\Bigr\}}_{\text{if $\code_1$ follows the $(h,n)$-code marked by $\Bl$, the contents of $\code_1$ and $\code_2$ coincide}}
\end{eqnarray*}
By construction, it easily follows  that the sizes of $\psi^h_{\Bl}$, $\psi^h_{=}$,
 $\psi^h_{\Inc}$ are polynomial in $n$ and $h$, $\psi^h_{\Bl}$, $\psi^h_{=}$, and
 $\psi^h_{\Inc}$ are $\QPTL^{h}$ formulas, and each existential quantifier in $\psi_=^{h}$ is in the scope of some temporal modality. This concludes the proof of Proposition~\ref{pro:encodingForLowerBoundQPTL}.
\end{proof}

By a straightforward adaptation of the proof of Proposition~\ref{pro:encodingForLowerBoundQPTL}, we obtain the following result.

 \begin{proposition} \label{pro:encodingForLowerBoundQPTLTwo}
For all $n\geq 1$, $h\geq 1$, and finite alphabets $\Sigma$, one can construct in time polynomial in $n$, $h$, and $\Sigma$ three \emph{existential}
$\QPTL^{h}$ formulas  $\psi_\Bl(h,n,\Sigma)$,   $\psi_=(h,n,\Sigma)$, and $\psi_\Inc(h,n,\Sigma)$ over $\AP_h\cup \Tag_h\cup \Sigma$ using only temporal modalities in $\{\Next,\PNext,\Eventually,\PEventually,\Always,\PAlways\}$ such that
for all $(\AP_h\cup \Sigma)$-simple pointed words $(w,i)$, the following holds:
\begin{compactitem}
  \item $(w,i)\models \psi_\Bl(h,n,\Sigma)$ $\Leftrightarrow$ there is $j>i$ such that $w[i,j]$ encodes an $(h,n)$-block over $\Sigma$.
  \item Let $j>i$ such that
  \begin{compactitem}
 \item the projection of $w[i,j]$ over $\AP_h\cup \Sigma$ is of the form $\Sym_h w_1\Sym_h w' \Sym_h w_2 \Sym_h$, where  $\Sym_h w_1\Sym_h$ and $\Sym_h w_2\Sym_h$ are $(h,n)$-blocks over $\Sigma$, and
  \item the beginning and the end of $\Sym_h w_1\Sym_h$ and $\Sym_h w_2\Sym_h$ are marked by $\Beg_h$ and $\End_h$, respectively, and
  no other position of $w$ is marked by $\Beg_h$ and $\End_h$.
\end{compactitem}
Then, $(w,i)\models \psi_=(h,n,\Sigma)$ $\Leftrightarrow$ $\Sym_h w_1\Sym_h$ and $\Sym_h w_2\Sym_h$ have the same index.
\item Let $j>i$ such that the projection of $w[i,j]$ over $\AP_h\cup \Sigma$ has the form $\Sym_h w_1\Sym_h  w_2 \Sym_h$ so that $\Sym_h w_1\Sym_h$ and $\Sym_h w_2\Sym_h$ are $(h,n)$-blocks over $\Sigma$. Then,   $(w,i)\models \psi_\Inc(h,n,\Sigma)$ $\Leftrightarrow$ there is $i\in [0,\Tower(h,n)-2]$ such that the index of $\Sym_h w_1\Sym_h$ is $i$ and the index of $\Sym_h w_2\Sym_h$ is $i+1$.
\end{compactitem}
Moreover, each existential quantifier in $\psi_=(h,n,\Sigma)$ is in the scope of some temporal modality.
\end{proposition}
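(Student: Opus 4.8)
The plan is to reuse the construction of Proposition~\ref{pro:encodingForLowerBoundQPTL} almost verbatim, exploiting the single structural difference between the two notions: an $(h,n)$-block over $\Sigma$ differs from an $(h,n)$-code only in its outermost \emph{content}, which is a symbol of $\Sigma$ instead of a bit in $\{0,1\}$, while the entire \emph{index} apparatus — the nested $(h',n)$-codes that recursively encode bit positions — is left completely unchanged. Consequently, every formula of the previous proposition that inspects only indices transfers with no modification, and the only place needing an edit is the content check inside $\psi_\Bl$.

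First I would define $\psi_\Bl(h,n,\Sigma)$ by induction on $h$, copying the definition of $\psi_\Bl(h,n)$ and replacing, at the unique point where the top-level content is forced to be a bit, the disjunction $0\vee 1$ by $\bigvee_{\sigma\in\Sigma}\sigma$. Concretely, in the base case $h=1$ the $i=1$ conjunct of $\psi^1_\Bl$ (checking that position $1$ carries a bit) becomes $\Next\bigl(\bigvee_{\sigma\in\Sigma}\sigma\bigr)$, while positions $2,\dots,n{+}1$ still carry index bits; in the inductive step the analogous conjunct $\Next(0\vee 1)$ of $\psi^{h+1}_{h}$ is replaced in the same way. Crucially, the inner markers $\Sym_h w_i\Sym_h$ are still required to be $(h,n)$-\emph{codes} with bit contents, since they only enumerate positions; hence the recursion still bottoms out in the code-level gadgets $\psi_\Bl(h{-}1,n)$ and $\psi_\Inc(h{-}1,n)$ supplied by Proposition~\ref{pro:encodingForLowerBoundQPTL}, and no further change propagates downward.

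Next I would set $\psi_=(h,n,\Sigma):=\psi_=(h,n)$ and $\psi_\Inc(h,n,\Sigma):=\psi_\Inc(h,n)$ verbatim. The point to verify is that both formulas compare only \emph{indices}, and that the index of an $(h,n)$-block is computed exactly as for an $(h,n)$-code — directly from the index bits when $h=1$, and from the bit-contents of its constituent $(h,n)$-codes when $h>1$ — never from the top-level $\Sigma$-content. Thus, on any $(\AP_h\cup\Sigma)$-simple pointed word matching the stated block patterns, these formulas behave exactly as on the corresponding code patterns, and the two equivalences claimed for $\psi_=(h,n,\Sigma)$ and $\psi_\Inc(h,n,\Sigma)$ follow directly from the ones already established in Proposition~\ref{pro:encodingForLowerBoundQPTL}.

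Finally I would check that all side conditions survive: the edit only adds $|\Sigma|$ disjuncts at a single spot, so the sizes stay polynomial in $n$, $h$, and $|\Sigma|$; it introduces no new quantifier, so the strong alternation depth is unchanged and the formulas remain in $\QPTL^{h}$ using only modalities from $\{\Next,\PNext,\Eventually,\PEventually,\Always,\PAlways\}$; and since $\psi_=(h,n,\Sigma)=\psi_=(h,n)$, each of its existential quantifiers still lies within the scope of a temporal modality. The closest thing to an obstacle is purely the bookkeeping confirmation that, at every recursion level, neither $\psi_=$ nor $\psi_\Inc$ ever reads the top-level content of a code when forming its index; this is immediate from the definitions but must be asserted level by level, so that replacing that single content bit by a $\Sigma$-symbol provably leaves the index-comparing formulas correct.
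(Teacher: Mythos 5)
Your proposal is correct and takes essentially the same route as the paper, whose entire proof of this proposition is the remark that it follows ``by a straightforward adaptation'' of Proposition~\ref{pro:encodingForLowerBoundQPTL}: the one genuine edit is replacing the top-level content check $0\vee 1$ by $\bigvee_{\sigma\in\Sigma}\sigma$ in $\psi_\Bl$ (at the outermost level only, since the inner position-encoding $(h',n)$-codes keep bit contents), while $\psi_=$ and $\psi_\Inc$ carry over verbatim because they only ever inspect the index apparatus. Your level-by-level check that neither $\psi_=$ nor $\psi_\Inc$ ever reads the outermost content position --- and that this position occupies a single letter whether it holds a bit or a $\Sigma$-symbol, so all $\Next$-offsets are preserved --- is exactly the bookkeeping the paper leaves implicit.
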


Now, we can establish for each $h\geq 1$, the lower bound for the existential fragment of $\QPTL^{h}$.

\begin{theorem}
For each $h\geq 1$, satisfiability for the \emph{existential fragment} of
$\QPTL^{h}$ is $(h-1)$-\EXPSPACE-hard even for formulas whose temporal modalities are in $\{\Next,\PNext,\Eventually,\PEventually,\Always,\PAlways\}$.
\end{theorem}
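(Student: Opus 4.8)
The plan is to reduce, in polynomial time, the non-halting problem for $\exp[h-1]$-space bounded deterministic Turing machines to satisfiability of existential $\QPTL^{h}$ formulas using only the modalities $\{\Next,\PNext,\Eventually,\PEventually,\Always,\PAlways\}$. Fix such a machine $M$, with state set $Q$, tape alphabet $\Gamma$ and halting state $q_{\mathit{halt}}$, and fix an input $x$ with $n=|x|$; let $S=\Tower(h-1,n^{c})$ be the space bound, so every reachable configuration uses exactly $S$ cells. I would encode one cell by an $(h-1,n^{c})$-block over the finite alphabet $\Sigma=\Gamma\cup(\Gamma\times Q)$ (a symbol in $\Gamma\times Q$ records that the head sits on the cell and names the current state), the index of the block (ranging over $[0,S-1]$) being the position of the cell. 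A configuration is then a sequence of blocks of indices $0,1,\dots,S-1$, and a computation is an infinite word listing the configurations $C_0,C_1,C_2,\dots$ consecutively, boundaries being detected by the index resetting from $S-1$ to $0$. Because $M$ is deterministic, such an infinite word exists iff the unique run of $M$ on $x$ never enters $q_{\mathit{halt}}$, i.e.\ iff $M$ does not halt on $x$.

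The formula would be $\varphi_{M,x}=\exists\bar r.\,(\varphi_{\mathit{wf}}\wedge\varphi_{\mathit{init}}\wedge\varphi_{\mathit{step}}\wedge\varphi_{\mathit{safe}})$, where $\bar r$ existentially guesses the configuration-level marking propositions. All conjuncts invoke Proposition~\ref{pro:encodingForLowerBoundQPTLTwo} at level $h-1$ with parameter $n^{c}$ and alphabet $\Sigma$. The well-formedness part $\varphi_{\mathit{wf}}$ uses $\Always$ together with $\psi_\Bl(h-1,n^{c},\Sigma)$ to force every block to be a genuine $(h-1,n^{c})$-block over $\Sigma$, and $\psi_\Inc(h-1,n^{c},\Sigma)$ to force the index to increase by one between consecutive blocks of a configuration (resetting at boundaries). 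The initialization part $\varphi_{\mathit{init}}$ pins the first $n$ cells of $C_0$ to spell $x$ with the head in the initial state on cell $0$ and forces the remaining cells of $C_0$ blank; since $n$ is the input size this is expressible in polynomial size. The safety part $\varphi_{\mathit{safe}}:=\Always(\neg\,\mathit{halt})$ forbids any cell from ever carrying a symbol of $\Gamma\times\{q_{\mathit{halt}}\}$.

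The heart of the construction, and the hard part, will be the step conjunct $\varphi_{\mathit{step}}$ encoding that $C_{t+1}$ follows from $C_t$ under $M$: the content of a cell of index $\wp$ in $C_{t+1}$ is determined by the cells of indices $\wp-1,\wp,\wp+1$ in $C_t$. Relating a cell to the cells of the \emph{same} and neighbouring indices in the previous configuration requires testing index equality across configurations, and this is precisely where the level-$(h-1)$ equality formula $\psi_=(h-1,n^{c},\Sigma)$ (of strong alternation depth $h-1$) is invoked. Concretely $\varphi_{\mathit{step}}$ would have the shape
\[
\Always\Bigl[(\text{cell of a non-initial configuration})\longrightarrow \exists\,\Beg_{h-1}.\,\exists\,\End_{h-1}.\,\bigl(\mu\wedge\psi_=(h-1,n^{c},\Sigma)\wedge\lambda\bigr)\Bigr],
\]
where $\mu$ marks, with $\Beg_{h-1},\End_{h-1}$, exactly the current cell and a guessed predecessor cell of the previous configuration, and $\lambda$ (built from $\psi_\Inc$ for the indices $\wp\pm1$ and from a Boolean table for $M$'s local rule) checks that the three predecessor cells justify the current cell; the boundary cells of index $0$ and $S-1$ are treated separately.

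The decisive point will be the bookkeeping of the strong alternation depth. By Proposition~\ref{pro:encodingForLowerBoundQPTLTwo} every existential quantifier of $\psi_=(h-1,n^{c},\Sigma)$ lies in the scope of a temporal modality, so its positive normal form begins with a universal temporal modality and has $\sad=h-1$. Prefixing it, inside $\varphi_{\mathit{step}}$, by the existential block search produces an $\exists$ immediately followed by that leading temporal operator, which by the definition of $\ell$ is a quantifier-before-temporal step and costs nothing; the single added alternation comes only from the outer $\Always$ sitting just above the search existentials ($\Always$ before $\exists$ costs $+1$). Hence $\sad(\varphi_{\mathit{step}})=1+(h-1)=h$, while the other conjuncts invoke only $\psi_\Bl$ and $\psi_\Inc$ under one $\Always$ and so have strong alternation depth at most $h$, and none uses a universal quantifier. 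Consequently $\varphi_{M,x}$ is an \emph{existential} $\QPTL^{h}$ formula over $\{\Next,\PNext,\Eventually,\PEventually,\Always,\PAlways\}$, constructible in time polynomial in $n$, and satisfiable iff $M$ does not halt on $x$; since the latter is $(h-1)$-\EXPSPACE-hard, the theorem follows. The main obstacle throughout is exactly keeping this depth at $h$ rather than $h+1$: the cross-configuration index matching must be phrased as one universal pass over the next configuration combined with an existential search whose only alternation-increasing contact with $\psi_=$ is the leading $\Always$-to-$\exists$ step.
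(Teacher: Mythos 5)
Your proposal is correct and essentially reproduces the paper's own proof: the same polynomial-time reduction from non-halting of space-bounded deterministic Turing machines, the same encoding of tape cells as indexed blocks over $A\cup(Q\times A)$ via the three auxiliary formulas of the proposition ($\psi_\Bl$, $\psi_=$, and the index-increment formula), the same existentially quantified markers and local three-cell window for the successor check, and---decisively---the same $\sad$ accounting in which the lone $+1$ comes from the $\Always$-to-$\exists$ step while $\exists$-to-temporal steps are free (precisely why the proposition guarantees that every existential quantifier of $\psi_=$ occurs under a temporal modality). The only point you leave implicit is the base case $h=1$, where your level-$(h-1)$ block machinery is undefined and the paper instead invokes \PSPACE-hardness of $\PLTL$ restricted to $\{\Next,\PNext,\Eventually,\PEventually,\Always,\PAlways\}$; this is a trivial patch.
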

\begin{proof}

It is well-known that satisfiability of $\PLTL$ is \PSPACE-complete  even if the unique allowed temporal modalities are in $\{\Next,\PNext,\Eventually,\PEventually,\Always,\PAlways\}$ \cite{Var88}.\footnote{See references for the Appendix.} Since $\QPTL^{1}$ subsumes $\PLTL$, the result
for $h=1$ follows.

Now, we prove the result for  $h+1$ with $h\geq 1$ by a polynomial time reduction from the non-halting problem of
$\exp[h]$-space bounded  deterministic Turing Machines (TM, for short). Fix such a TM
 $\Mac=\tpl{A,Q,q_0,\delta}$ over the input alphabet $A$, and let $c\geq 1$
be a constant such that for each  $\alpha\in A^*$, the
 space needed by $\mathcal{M}$ on input $\alpha$ is bounded by
 $\Tower(h,|\alpha|^c)$. Fix an input $\alpha\in A^*$ and let $n=|\alpha|^c$.
 Note that any reachable configuration of $\Mac$
over $\alpha$ can be seen as a word $\alpha_1\cdot (q,a)\cdot \alpha_2$ in $A^*\cdot(Q\times A)\cdot
A^*$ of length  $\Tower(h,n)$, where $\alpha_1\cdot a\cdot \alpha_2$ denotes  the tape content, $q$ the current state, and the reading head is at position $|\alpha_1|+1$.
 If $\alpha=a_1\ldots a_{r}$ (where $r=|\alpha|$),
then the initial configuration is given by $(q_0,a_1)a_2\ldots
a_r\underbrace{\#\#\ldots\#}_{\Tower(h,n)-r}$, where $\#$ is the blank symbol.   Let $C=u_1\ldots u_{\Tower(h,n)}$ be a TM configuration. For  $1\leq i\leq \Tower(h,n)$, the value $u'_i$ of the $i^{th}$ cell of the  $\Mac$-successor of $C$ is completely determined by the values
         $u_{i-1}$, $u_i$ and $u_{i+1}$ (taking $u_{i+1}$ for $i=\Tower(h,n)$ and
         $u_{i-1}$ for $i=1$ to be some special symbol, say $\bot$). Let
         $\NNext(u_{i-1},u_{i},u_{i+1})$ be our
         expectation for $u'_i$ (this function can be trivially obtained from the transition function  $\delta$ of
         $\Mac$).

Let $\Sigma=A\cup (Q\times A)$. We build in time polynomial in $\mathcal{M}$ and $n$ an existential $\QPTL^{h+1}$ formula
 $\varphi_{\Mac,\alpha}$ over
 $\Sigma\cup \AP_{h+1}\cup \Tag_{h}$ which is satisfiable iff $\Mac$ does not halt on the input $\alpha$.
 Moreover, $\varphi_{\Mac,\alpha}$ uses only temporal modalities in $\{\Next,\PNext,\Eventually,\PEventually,\Always,\PAlways\}$.
  Hence, the theorem follows.

\noindent  A TM configuration $C=u_1\ldots u_{\Tower(h,n)}$ is encoded
 by the word over $\Sigma \cup \AP_{h+1}$ given by
 \[\Sym_{h+1}\Sym_h w_1\Sym_h\ldots
\Sym_h w_{\Tower(h,n)}\Sym_h\Sym_{h+1}\]
where for each $i\in [1,\Tower(n,h)]$,
$\Sym_h w_i\Sym_h$ is an $(h,n)$-block whose content is $u_i$ (the $i^{th}$ symbol of $C$) and whose index is $i-1$.

Then, the formula $\varphi_{\Mac,\alpha}$ uses the existential $\QPTL^{h}$ formulas  $\psi_\Bl(h,n,\Sigma)$, $\psi_=(h,n,\Sigma)$, and
 $\psi_\Inc(h,n,\Sigma)$ of Proposition~\ref{pro:encodingForLowerBoundQPTLTwo}, and is given by
 \[
 \varphi_{\Mac,\alpha}= \Always(\bigvee_{p\in \Sigma\cup \AP_{h+1}}(\,p\,\wedge \,\bigwedge_{p'\in (\Sigma\cup \AP_{h+1})\setminus \{p\}}\,\neg\,p'))\,\wedge\, \varphi_{\Conf}\,\wedge\,\varphi_{\Init}\,\wedge\,\varphi_{\Fair}
\]
 where: (i) the first conjunct checks that the given word is $\Sigma\cup \AP_{h+1}$-simple, (ii) the second conjunct $\varphi_{\Conf}$  checks that the projection of the given word  over $\Sigma\cup \AP_{h+1}$ is an infinite sequence of TM configuration codes, (iii) the third conjunct ensures that the first TM configuration is the initial one,
 and (iv) the last conjunct guarantees that the sequence of  TM configuration codes is faithful to the evolution of $\Mac$. The construction of $\varphi_\Init$ is straightforward.
    Thus, we focus on $\varphi_{\Conf}$ and $\varphi_{\Fair}$, which are existential $\QPTL^{h+1}$ formulas. In the construction, we also use
 the $\PLTL$ formulas $\theta(1,p)$ and $\theta(2,p)$ (for an atomic proposition $p$) in the proof of
 Proposition~\ref{pro:encodingForLowerBoundQPTL}, which are satisfied by a pointed word $(w,i)$ iff there are at most one position  and two positions, respectively, along $w$ where $p$ holds.
    The existential $\QPTL^{h+1}$ formula  $\varphi_{\Conf}$  uses the existential $\QPTL^{h}$ formulas  $\psi_\Bl(h,n,\Sigma)$ and $\psi_\Inc(h,n,\Sigma)$, and is defined as follows. We assume that $h>1$ (the case for $h=1$ is simpler).
\begin{eqnarray*}
  \varphi_{\Conf} &:=&  \Sym_{h+1} \wedge \Always\Eventually \Sym_{h+1} \wedge \Always(\Sym_{h+1}\rightarrow (\Next\Sym_h \wedge \Next^{2}\neg\Sym_{h+1}))
      \wedge \Next\Always(\Sym_{h+1}\rightarrow \PNext\Sym_h) \wedge\\
  & &      \underbrace{\Always ((\Sym_h\wedge \neg \Next \Sym_{h+1}) \rightarrow \psi_\Bl(h,n,\Sigma))}_{\text{for every subword of the form $\Sym_{h+1}w\Sym_{h+1}$, $w$ is a sequence of $(h,n)$-blocks}}\\
   & &  \text{\hspace{6cm}} \wedge \\
  & & \Always \Bigl((\Sym_h\wedge \neg \Next \Sym_{h+1}) \rightarrow \\
  & & \underbrace{(\psi_\Inc(h,n,\Sigma)\vee \exists\,\Last.\,[ \theta(1,\Last)
   \wedge \Next\Eventually(\Last\wedge \Sym_h\wedge \Next \Sym_{h+1}) \wedge \Next\Always((\Next\Eventually\,\Last)\rightarrow \neg\Sym_h)])\Bigr)}_{\text{for consecutive $(h,n)$-blocks the index is incremented}}\\
   & &  \text{\hspace{6cm}} \wedge \\
   & & \Always \Bigl((\Sym_h\wedge \PNext \Sym_{h+1}) \rightarrow \exists\,\First.\,[ \theta(1,\First)\wedge\\
  & & \underbrace{
   \Next\Eventually(\First\wedge \Sym_h) \wedge \Next\Always((\Next\Eventually\,\First)\rightarrow \neg\Sym_h)\wedge
   \Always((\Sym_{h-1}\wedge\Next^{2}\Eventually\,\First)\rightarrow \Next\,0)]\Bigr)}_{\text{the first $(h,n)$-block of a TM configuration code has index $0$}}\\
   & &  \text{\hspace{6cm}} \wedge \\
   & & \Always \Bigl((\Sym_h\wedge \Next \Sym_{h+1}) \rightarrow \exists\,\Last.\,[ \theta(1,\Last)
   \wedge \\
  & & \underbrace{ \PNext\PEventually(\Last\wedge \Sym_h) \wedge \PNext\PAlways((\PNext\PEventually\,\Last)\rightarrow \neg\Sym_h)\wedge
   \PNext\PNext\PAlways((\Sym_{h-1}\wedge  \PEventually\,\Last)\rightarrow  \Next\,1)]\Bigr)}_{\text{the last $(h,n)$-block of a TM configuration code has index $\Tower(h,n)$-1}}
\end{eqnarray*}
 \noindent Finally, we define the formula $\varphi_\Fair$, which uses the existential  $\QPTL^{h}$ formula $\psi_=(h,n,\Sigma)$ of Proposition~\ref{pro:encodingForLowerBoundQPTLTwo}.
  For a  word $w$ encoding a sequence of TM configurations,
 we have to require that for each subword $\Sym_{h+1}w_1\Sym_{h+1}w_2\Sym_{h+1}$, where $\Sym_{h+1}w_1\Sym_{h+1}$ and $\Sym_{h+1}w_2\Sym_{h+1}$ encode two TM configurations $C_1$ and $C_2$, $C_2$ is the TM successor of $C_1$, i.e.,
 for each $(h,n)$-block $\Bl'$ of $\Sym_{h+1}w_2\Sym_{h+1}$, the content $u'$ of $\Bl'$
 satisfies $u'=\NNext_{\Mac}(u_p,u,u_s)$, where $u$ is the content of the $(h,n)$-block $\Bl$ of $\Sym_{h+1}w_1\Sym_{h+1}$
         having the same index as $\Bl'$, and
         $u_p$ (resp., $u_s$) is the content of the $(h,n)$-block  of $\Sym_{h+1}w_1\Sym_{h+1}$, if any,
         that precedes (resp., follows) $\Bl$. Note that $u_p=\bot$ (resp., $u_s=\bot$) iff $\Bl$ is the first (resp., the last) $(h,n)$-block of
  $\Sym_{h+1}w_1\Sym_{h+1}$.\footnote{Since the first configuration is the initial one (this is ensured by $\varphi_{\Init}$), $\varphi_{\Fair}$ also ensures that for each TM configuration code $C$, there is exactly one $(h,n)$-block of $C$ whose content is in $Q\times A$.}
\[
\varphi_{\Fair} := \bigwedge_{u\in\Sigma}\Always\Bigl(\,\, (\Sym_h  \wedge \Next\,u) \,\, \rightarrow \,\,\bigvee_{u_p,u_s\in \Sigma\cup\{\bot\}}\,
                                       \phi_{u_p,u,u_s}\Bigr)
\]
where  $\phi_{u_p,u,u_s}$ uses  $\psi_=(h,n,\Sigma)$ and is defined as follows. Here, we only consider the case where $u_p\neq \bot$ and $u_s\neq \bot$
(the other cases being similar).
\begin{eqnarray*}
  \phi_{u_p,u,u_s} &:=&      \exists \Beg_h.\,\exists \End_h.\, \exists \First.\,\exists \Last\, \Bigl\{   \theta(2,\Beg_h)\wedge \theta(2,\End_h)\wedge \theta(1,\First)\wedge \theta(1,\Last)\wedge \\
 & &  \text{\hspace{2cm}}\underbrace{\Eventually(\Sym_{h+1}\wedge \First) \wedge \Always((\Next\Eventually\First) \rightarrow \neg\Sym_{h+1})}_{\text{mark with $\First$ the end of the current TM configuration}} \\
 & &  \text{\hspace{5.5cm}} \wedge \\
  & &  \text{\hspace{1.5cm}}\underbrace{\Eventually(\First \wedge \Next\Eventually(\Last \wedge \Sym_{h+1} \wedge \PNext\PAlways((\PNext\PEventually\First) \rightarrow \neg\Sym_{h+1})))}_{\text{mark with $\Last$ the end of the next TM configuration}} \\
 & &  \text{\hspace{5.5cm}} \wedge \\
 & & \text{\hspace{0.5cm}}\underbrace{\Beg_h \wedge  \Next\Eventually(\End_h\wedge \Sym_h  \wedge  \PNext\PAlways ((\PNext\PEventually\, \Beg_h)\rightarrow \neg \Sym_{h}))}_{\text{mark with $\Beg_h$ and $\End_h$  the current $(h,n)$-block $\Bl$ of the current TM configuration}} \\
 & &  \text{\hspace{5.5cm}} \wedge \\
  & &  \underbrace{\Next\Eventually(\,\First\wedge \Next\Eventually(\Beg_h\wedge \Sym_h \wedge \Next\Eventually \,\Last \wedge \Next\Eventually(\End_h \wedge \Sym_h) \wedge
   \Next\Always ((\Next\Eventually\, \End_h)\rightarrow \neg \Sym_{h})   ))}_{\text{mark with $\Beg_h$ and $\End_h$ some $(h,n)$-block $\Bl'$ of the next TM configuration}}  \\
   & &  \text{\hspace{5.5cm}} \wedge \\
  & & \text{\hspace{0.5cm}} \underbrace{\psi_{=}(h,n,\Sigma)\wedge \Next\Eventually(\Beg_h\wedge \Next\,\NNext(u_p,u,u_s))}_{\text{check that $\Bl$ and $\Bl'$ have the same index and the content of $\Bl'$ is $\NNext(u_p,u,u_s)$}}  \\
  & &  \text{\hspace{5.5cm}} \wedge \\
    & & \text{\hspace{1.7cm}} \underbrace{\PNext\PEventually(\Sym_h \wedge (\Next \,u_p)  \wedge \Next\Always(\Next\Eventually(\Beg_h\wedge \Next\Eventually \Beg_h) \rightarrow \neg\Sym_h))}_{\text{check that the $(h,n)$-block preceding $\Bl$ has content $u_p$}}  \\
    & &  \text{\hspace{5.5cm}} \wedge \\
       & & \text{\hspace{2.0cm}} \underbrace{ \Eventually(\End_h \wedge (\Next \,u_s) \wedge \Next\Eventually \Beg_h )}_{\text{check that the $(h,n)$-block following $\Bl$ has content $u_s$}}\quad \Bigr\}
\end{eqnarray*}
By Proposition~\ref{pro:encodingForLowerBoundQPTLTwo}, each existential quantifier in $\psi_=(h,n,\Sigma)$ is in the scope of some temporal modality. Hence, by construction, $\varphi_{\Fair}$ is an existential $\QPTL^{h+1}$ formula. This concludes the proof of the theorem.
\end{proof}

\subsection{Proof of Theorem~\ref{theorem:MHCTLModelChecking}}\label{APP:MHCTLModelChecking}

For a $\QPTL$ formula $\varphi$ and $\AP'\subseteq \AP$ with $\AP' =\{p_1,\ldots, p_n\}$, we write $\exists \AP'.\varphi$ to mean
$\exists p_1.\ldots \exists p_n.\, \varphi$. A $\QPTL$ sentence is a $\QPTL$ formula such that each proposition $p$ occurs in the scope of a quantifier binding $p$.\vspace{0.2cm}

\setcounter{aux}{\value{theorem}}
\setcounter{theorem}{\value{theo-MHCTLModelChecking}}

\begin{theorem}  For all $h\geq 1$ and $\MHCTLStar$ sentences $\varphi$ with strong alternation depth at most $h$, model-checking against $\varphi$ is  $h$-\EXPSPACE-complete, and  $(h-1)$-\EXPSPACE-complete in case $\varphi$ is \emph{existential} (even if the allowed temporal modalities are in $\{\Next,\PNext,\Eventually,\PEventually,\Always,\PAlways\}$).
\end{theorem}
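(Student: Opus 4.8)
The plan is to sandwich model-checking of $\MHCTLStar$ between satisfiability of $\QPTL$ by two linear-time reductions, each preserving the strong alternation depth \emph{exactly}, and then to invoke Theorem~\ref{theorem:QPTLsatisfiability}. The upper bound comes from a reduction of model-checking to $\QPTL$ satisfiability that generalizes the one of~\cite{ClarksonFKMRS14} for $\HCTLStar$; the matching lower bound comes from the reverse reduction of $\QPTL$ satisfiability to model-checking over a fixed two-state Kripke structure. Combined with the two-sided bounds of Theorem~\ref{theorem:QPTLsatisfiability}, this yields both completeness results.

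\noindent\textbf{Upper bound (model-checking $\rightarrow$ $\QPTL$ satisfiability).} Given a finite Kripke structure $K=\tpl{S,s_0,E,V}$ and an $\MHCTLStar$ sentence $\varphi$, I would associate with every path variable $x\in\Var$ a fresh block of propositions encoding, at each position, the state of $K$ visited by $x$; an atomic formula $p[x]$ is then rendered as the quantifier-free disjunction over the states $s$ with $p\in V(s)$ of ``$x$ is at $s$''. A quantifier-free $\PLTL$ predicate $\mathrm{path}_x$ asserts that the block of $x$ labels a genuine \emph{initial} path of $K$ (correct initial state, $E$ respected by consecutive positions). The general quantifier is translated by $\GExists x.\psi \mapsto \exists \overline{s}^{\,x}.\,(\mathrm{path}_x \wedge \widehat{\psi})$, while the memoryful regular quantifier additionally requires coincidence with the current path $y$ on the prefix up to the current position, which at the pointed position is exactly the past formula $\PAlways\bigl(\text{$x$ and $y$ agree}\bigr)$; the cases $\GForall$, $\forall$ are dual and the temporal modalities are left unchanged. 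Since $\mathrm{path}_x$ and the coincidence predicate are quantifier-free (using only temporal modalities, whose nesting collapses to depth one in $\sad$) and enter only through conjunction/implication, the construction preserves strong alternation depth: $\sad(\widehat{\varphi})=\sad(\varphi)$. Moreover existential $\MHCTLStar$ sentences map to existential $\QPTL$ formulas and the allowed modalities carry over. Applying Theorem~\ref{theorem:QPTLsatisfiability} to $\widehat{\varphi}$ gives the $h$-\EXPSPACE\ (resp.\ $(h-1)$-\EXPSPACE) upper bound.

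\noindent\textbf{Lower bound ($\QPTL$ satisfiability $\rightarrow$ model-checking).} Fix the two-state Kripke structure $K_r$ over the single proposition $r$ whose states $t,f$ (with $V(t)=\{r\}$, $V(f)=\emptyset$) each have edges to both $t$ and $f$, so that its initial paths realize exactly the binary sequences, i.e.\ all valuations of one proposition over time. For a $\QPTL$ formula $\chi$ over $\AP=\{p_1,\dots,p_k\}$ (existentially closing any free propositions, which changes neither $\sad$ nor membership in the existential fragment) I would introduce one path variable $x_p$ per proposition and translate structurally by $\exists p.\,\chi' \mapsto \GExists x_p.\,\varphi_{\chi'}$, $\forall p.\,\chi' \mapsto \GForall x_p.\,\varphi_{\chi'}$, every atom $p\mapsto r[x_p]$, and every temporal modality unchanged. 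Because in $\MHCTLStar$ all bound paths share the single current position and the general quantifier is memoryless, binding $x_p$ to an arbitrary initial path of $K_r$ amounts precisely to choosing an arbitrary truth assignment for $p$ over all positions while leaving the paths $x_q$ ($q\neq p$) untouched; this matches the $\QPTL$ clause for $\exists p$ line for line, so $\chi$ is satisfiable iff $K_r\models\varphi_\chi$. As the translation is a bijection on the operator tree ($\GExists\!\leftrightarrow\!\exists$, $\GForall\!\leftrightarrow\!\forall$, identical modalities), we get $\sad(\varphi_\chi)=\sad(\chi)$ exactly, the existential fragment and the modality restriction of the $\QPTL$ hardness survive, and $K_r$ being fixed makes the reduction linear. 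This transfers the $h$-\EXPSPACE\ (resp.\ $(h-1)$-\EXPSPACE) hardness to $\MHCTLStar$ model-checking.

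\noindent\textbf{Main obstacle.} The delicate point is the upper-bound direction: one must verify that the auxiliary quantifier-free machinery ($\mathrm{path}_x$, and especially the past-coincidence condition simulating the \emph{memoryful} regular quantifier) creates no spurious alternations under the nonstandard definition of strong alternation depth, so that $\sad$ is genuinely \emph{preserved} rather than merely bounded; this requires checking the construction against the clause-by-clause definition of $\ell(\cdot)$. The lower-bound direction is, by contrast, essentially syntactic once the fixed structure $K_r$ is chosen.
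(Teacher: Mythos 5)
Your upper bound is, in essence, the paper's own proof: the same encoding of a path assignment by a fresh block of propositions per variable (one proposition per state of $K$ plus copies of the labels), a quantifier-free $\PLTL$ path predicate anchored at the origin, translation of $\GExists x$ as propositional projection, and simulation of the memoryful $\exists x$ by an additional past requirement. One caution: the agreement formula must be $\PAlways\bigwedge_{s\in S}(s_x\leftrightarrow s_y)$ on the \emph{state-encoding} propositions, not merely on the $\AP$-labels, since the semantics of $\exists x$ requires $\pi'[0,i]=\Pi(y)[0,i]$ as state sequences and distinct state prefixes may carry identical traces. Your worry about $\sad$-preservation resolves exactly as you suspect: $\Next,\PNext\notin\Op$, and a quantifier immediately followed by a temporal modality collapses under the first clause of $\ell(\cdot)$, so the auxiliary conjuncts add nothing.

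Your lower bound genuinely diverges from the paper's, and the idea is cleaner: the paper builds a structure $K_\AP$ of size linear in $|\AP|$ whose initial paths trace block encodings of words over $2^{\AP}$ headed by a tag, shifts $\Next\mapsto\Next^{n+1}$, relativizes $\Until$ and $\PUntil$ to tag positions, and simulates $\exists p_k$ by $\GExists x_k$ \emph{plus} an origin-anchored agreement constraint $\PEventually\bigl((\neg\PNext\top)\wedge\Always\bigwedge_{j\neq k}(p_j[x_h]\leftrightarrow p_j[x_k])\bigr)$, because every path variable carries all propositions at once. Your one-variable-per-proposition scheme eliminates the agreement formula and the positional arithmetic entirely, since rebinding $x_p$ leaves the paths $x_q$ ($q\neq p$) untouched. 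However, as stated your reduction has a concrete defect: Kripke structures here have a single designated initial state, so in your $K_r$ every initial path has its first letter fixed by $V(s_0)$ — the traces of initial paths are \emph{not} all binary sequences, and the value of every quantified $p$ at position $0$ is forced. For instance, with $s_0=f$ the satisfiable sentence $\exists p.(p\wedge\Next\neg p)$ translates to a formula false in $K_r$. The repair is routine — add a dummy initial state, shift the translated formula by one $\Next$, and guard the past modalities so that word position $0$ behaves as an origin (e.g., $\PNext\psi\mapsto\PNext(\PNext\top\wedge\widehat{\psi})$, and relativize $\PUntil$ away from the dummy position) — and, since $\Next$ and $\PNext$ are not counted by $\ell(\cdot)$, it still preserves $\sad$ exactly; but it falsifies your claims that the temporal modalities pass through ``unchanged'' and that the translation is a bijection on the operator tree. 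This off-by-one at the origin is precisely the wrinkle the paper's tag-block encoding absorbs silently, so you should fold the patch into the construction before asserting correctness.
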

\setcounter{theorem}{\value{aux}}
Both the lower bounds and the upper bounds of Theorem~\ref{theorem:MHCTLModelChecking} are based on Theorem~\ref{theorem:QPTLsatisfiability}.
Without loss of generality, we only consider \emph{well-named} $\QPTL$ (resp., $\MHCTLStar$ formulas), i.e.,
$\QPTL$ (resp., $\MHCTLStar$ formulas) where each quantifier introduces a different proposition (resp., path variable). Moreover, note that
Theorem~\ref{theorem:QPTLsatisfiability} holds even if we restrict ourselves to consider $\QPTL$ sentences.

\bigskip

\noindent \textbf{Upper bounds of Theorem~\ref{theorem:MHCTLModelChecking}.}  We show that given a finite Kripke structure $K$ and a  well-named $\MHCTLStar$ sentence $\varphi$,  one can construct in linear time a $\QPTL$ sentence $\varphi'$ such that $\varphi'$ is satisfiable iff $K$ satisfies $\varphi$.
Moreover, $\varphi'$ has the same strong alternation depth as $\varphi$, $\varphi'$ is existential if  $\varphi$ is existential, and
$\varphi'$ uses only temporal modalities in $\{\Next,\PNext,\Eventually,\PEventually,\Always,\PAlways\}$ if the same holds for $\varphi$. Hence, by
Theorem~\ref{theorem:QPTLsatisfiability}, the upper bounds follow. Now, we give the details of the reduction.

Fix a finite Kripke structure $K=\tpl{S,s_0,E,V}$ over $\AP$.  We consider a new finite set $\AP'$ of atomic propositions defined as follows:
\[
\AP':= \bigcup_{x\in \Var}\AP_x \cup S_x \text{ where } \AP_x := \{p_x\mid p\in\AP\} \text{ and }  S_x :=\{s_x\mid s\in S\}
\]
Thus, we associate to each variable $x\in \Var$ and atomic proposition $p\in \AP$, a fresh atomic proposition $p_x$, and to each variable $x\in \Var$ and
state $s$ of $K$, a fresh atomic proposition $s_x$.
For each $x\in \Var$ and initial path $\pi=s_0,s_1,\ldots$ of $K$, we denote by $w(x,\pi)$   the infinite word over
$2^{\AP_x\cup S_x}$, encoding $\pi$, defined as follows: for all $i\geq 0$,
\[
w(x,\pi)(i):=\{(s_i)_x\}\cup \{p_x\mid p\in V(s_i)\}
\]
We encode path assignments $\Pi$  of $K$ (over $\Var$) by infinite words $w(\Pi)$ over $2^{\AP'}$ as follows:
for all $x\in \Var$,  the projection of $w(\Pi)$ over $S_x\cup \AP_x$ is $w(x,\Pi(x))$.

Next, for all $x\in\Var$, we construct in linear time a  $\PLTL$ formula $\theta(x,K)$ over $2^{\AP_x\cup S_x}$
 encoding the initial paths of $K$ as follows:
\begin{eqnarray*}
\theta(x,K)&:=&  \PEventually \Bigl\{ (\neg \PNext \top) \,\wedge\,\\
 & &  (s_0)_x \,\wedge\, \Always \bigwedge_{s\in S} \Bigl(s_x \rightarrow \Bigl[\bigwedge_{p\in V(s)}p_x \wedge
   \bigwedge_{p\in \AP\setminus V(s)}\neg p_x \wedge \bigwedge_{t\in S\setminus\{s\}}\neg t_x \wedge \bigvee_{t\in E(s)}\Next\, t_x \Bigr] \Bigr)\Bigr\}
\end{eqnarray*}
where $E(s)$ denotes the set of successors of $s$ in $K$. By construction, the following holds.\vspace{0.2cm}

\noindent \emph{Claim 1: }
 for all $x\in\Var$ and infinite pointed words $(w,i)$ over $2^{\AP'}$, $(w,i)\models\theta(x,K)$ \emph{iff} there is an initial path $\pi$ of $K$
such that the projection of $w$ over $S_x\cup \AP_x$ is $w(x,\pi)$.\vspace{0.2cm}

Finally, we inductively define a mapping $f$ associating to each pair $(x,\psi)$ consisting of a variable $x\in \Var$ and a \emph{well-named} $\HCTLStar$ formula $\psi$ over $\AP$ and $\Var$ such that there is no quantifier binding $x$ which occurs in $\psi$, a $\QPTL$ formula $f(x,\psi)$ over $2^{\AP'}$ as follows:\footnote{Intuitively, $x$ represents the current quantified path variable.}
\begin{compactitem}
 \item $f(x,\top)=\top$;
  \item $f(x,p[y]) = p_y$ for all $p\in \AP$ and $x,y\in\Var$;
  \item $f(x,\neg \psi)= \neg f(x,\psi)$;
   \item $f(x,\psi_1\wedge \psi_2)= f(x,\psi_1)\wedge f(x,\psi_2)$;
   \item $f(x,\Next \psi)=\Next f(x,\psi)$;
  \item $f(x,\PNext \psi)=\PNext f(x,\psi)$;
   \item  $f(x,\psi_1\,\Until\, \psi_2)= f(x,\psi_1)\,\Until\, f(x,\psi_2)$;
  \item  $f(x,\psi_1\,\PUntil\, \psi_2)= f(x,\psi_1)\,\PUntil\, f(x,\psi_2)$;
  \item  $f(x,\exists^{G} y.\psi)= \exists (\AP_y\cup S_y).\, \Bigl(\theta(y,K)\wedge f(y,\psi)\Bigr)$;
   \item  $f(x,\exists y.\psi)= \exists(\AP_y\cup S_y).\,  \Bigl(\theta(y,K)\wedge f(y,\psi) \wedge \PAlways \displaystyle{\bigwedge_{s\in S}}(s_x \leftrightarrow s_y) \Bigr)$.
\end{compactitem}

By construction, $f(x,\psi)$ has size linear in $\psi$ and has the same strong alternation depth as $\psi$.
Moreover, $f(x,\psi)$ is a $\QPTL$ sentence if $\psi$ is a $\MHCTLStar$ sentence, $f(x,\psi)$ is  existential if  $\psi$ is existential, and
$f(x,\psi)$ uses only temporal modalities in $\{\Next,\PNext,\Eventually,\PEventually,\Always,\PAlways\}$ if the same holds for $\psi$. Hence, by
Theorem~\ref{theorem:QPTLsatisfiability}, the upper bounds of Theorem~\ref{theorem:MHCTLModelChecking} directly follow from the following claim.\vspace{0.2cm}

\noindent \emph{Claim 2:} let $x\in\Var$, $\Pi$ be a path assignment of $K$ and $\psi$ be a well-named $\MHCTLStar$ formula $\psi$ over $\AP$ and $\Var$ such that there is no quantifier binding $x$ which occurs in $\psi$. Then, for all $i\geq 0$:
\[
\Pi,x,i \models_K \psi \Leftrightarrow w(\Pi),i \models f(x,\psi)
\]
\noindent \emph{Proof of Claim~2: }
 Let $x\in\Var$, $\Pi$, and $\psi$ as in the statement of the claim.  The proof is by induction on $|\psi|$. The cases for the boolean connectives and the temporal modalities $\Next$, $\PNext$, $\Until$, and $\PUntil$ easily follow from the induction hypothesis. For the other cases, we proceed as follows:
\begin{itemize}
  \item $\psi = p[y]$ for some $p\in \AP$ and $y\in\Var$: we have that
   $\Pi,x,i \models_K p[y]$ $\Leftrightarrow$ $p\in \Pi(y)(i)$ $\Leftrightarrow$  $p_y\in w(y,\Pi(y))(i)$ $\Leftrightarrow$
   $p_y \in w(\Pi)(i)$ $\Leftrightarrow$ $w(\Pi),i\models f(x,p[y])$. Hence, the result follows.
  \item $\psi= \exists y.\, \psi'$: by hypothesis, $x\neq y$.
For the implication, $\Pi,x,i \models_K \psi \Rightarrow w(\Pi),i \models f(x,\psi)$, assume that
$\Pi,x,i \models_K \psi$. Hence, there exists an initial path $\pi$ of $K$  such that
$\pi[0,i]=\Pi(x)[0,i]$ and $\Pi[y \leftarrow \pi],y,i \models \psi'$. Since $\psi$ is well-named,
there is no quantifier binding $y$ which occurs in $\psi'$. Hence,
by the induction hypothesis,
$w(\Pi[y \leftarrow \pi]),i \models f(y,\psi')$. Moreover, since $x\neq y$, by construction, the projections   of
$w(\Pi[y \leftarrow \pi])$ over $\AP_x\cup S_x$ and $\AP_y\cup S_y$, respectively, are  $w(x,\Pi(x))$ and $w(y,\pi)$. Thus, since
 $\pi[0,i]=\Pi(x)[0,i]$, by Claim~1 in the proof of Theorem~\ref{theorem:MHCTLModelChecking},
 it follows that
 \[
w(\Pi[y \leftarrow \pi]),i \models \theta(y,K)\wedge f(y,\psi') \wedge \PAlways \displaystyle{\bigwedge_{s\in S}}(s_x \leftrightarrow s_y)
 \]
 Since the projections of $w(\Pi[y \leftarrow \pi])$ and $w(\Pi)$ over $\AP'\setminus (S_y\cup \AP_y)$ coincide, we obtain that
 \[
w(\Pi),i \models \exists (\AP_y\cup S_y).\,\Bigl( \theta(y,K)\wedge f(y,\psi') \wedge \PAlways \displaystyle{\bigwedge_{s\in S}}(s_x \leftrightarrow s_y)\Bigr) = f(x,\psi)
 \]
and the result follows.

The converse implication  $w(\Pi),i \models f(x,\psi) \Rightarrow  \Pi,x,i \models_K \psi $ is similar, and we omit the details here. 
 %
 %
\item  $\psi= \exists^{G} y.\, \psi'$: this case is similar to the previous one.
\end{itemize}
This concludes the proof of Claim~2.\qed

\bigskip

\noindent \textbf{Lower bounds of Theorem~\ref{theorem:MHCTLModelChecking}.} We show that given a well-named $\QPTL$ sentence $\varphi$ over $\AP$, one can construct in linear time a finite Kripke structure $K_\AP$ (depending only on $\AP$) and a $\MHCTLStar$ sentence $\varphi'$ such that $\varphi$ is satisfiable iff $K_\AP$ satisfies $\varphi'$.
Moreover, $\varphi'$ has the same strong alternation depth as $\varphi$, $\varphi'$ is existential if  $\varphi$ is existential, and
$\varphi'$ uses only temporal modalities in $\{\Next,\PNext,\Eventually,\PEventually,\Always,\PAlways\}$ if the same holds for $\varphi$. Hence, by
Theorem~\ref{theorem:QPTLsatisfiability}, the result follows. Now, we proceed with the details of the reduction.

Let $\AP'=\AP\cup \{\pTag\}$, where $\pTag$ is a fresh proposition, and fix an ordering $\{p_1,\ldots, p_n\}$ of the propositions in $\AP$.
First, we encode an infinite word $w$ over $2^{\AP}$ by an infinite word $en(w)$ over $2^{\AP'}$ defined as follows: $w=w_0 \cdot w_1 \cdot \ldots$, where
for each $i\geq 0$, $w_i$ (the encoding of the $i^{th}$ symbol of $w$) is the finite word over $2^{\AP'}$ of length $n+1$ given by $\{\pTag\}P_1,\ldots P_n$, where $P_k= \{p_k\}$ if $p_k\in w(i)$, and $P_k=\emptyset$ otherwise (for all $k\in [1,n]$). Then, the finite Kripke structure $K_\AP=\tpl{S,s_0,E,V}$ has size linear in $|\AP|$ and it is constructed in such a way that the set of traces of the initial paths of $K_\AP$   coincides with the set of the encodings $en(w)$ of the infinite words $w$ over $2^{\AP}$. Formally, $K_\AP$ is defined as follows:
\begin{compactitem}
  \item $S=\{p_h,\overline{p}_h\mid h\in\{1,\ldots,n\}\}\cup \{\pTag\}$ and $s_0 = \pTag$;
  \item $E$ consists of the edges $(p_k,p_{k+1})$, $(p_k,\overline{p}_{k+1})$,
  $(\overline{p}_k,p_{k+1})$ and $(\overline{p}_k,\overline{p}_{k+1})$ for all $k\in [1,n-1]$, and the edges
  $(\pTag,p_1)$, $(\pTag,\overline{p}_1)$, $(p_n,\pTag)$, and $(\overline{p}_n,\pTag)$.
  \item $V(\pTag)=\{\pTag\}$ and  $V(p_k)=\{p_k\}$ and $V(\overline{p}_k)=\emptyset$ for all $k\in [1,n]$.
\end{compactitem}

\bigskip

Finally, we inductively define a mapping $g$ associating to each pair $(h,\psi)$ consisting of an index $h\in [1,n]$\footnote{intuitively, $p_h$ represents the current quantified proposition.}
 and a well-named $\QPTL$ formula $\psi$ over $\AP$ such that there is no quantifier in $\psi$ binding proposition $p_h$, a $\MHCTLStar$ formula $g(h,\psi)$ over $\AP'$ and $\Var=\{x_1,\ldots,x_n\}$:
\begin{compactitem}
 \item $g(h,\top)=\top$;
  \item $g(h,p_i) = \Next^{i}\, p_i[x_h]$ for all $p_i\in \AP$;
  \item $g(h,\neg \psi)= \neg g(h,\psi)$;
   \item $g(h,\psi_1\wedge \psi_2)= g(h,\psi_1)\wedge g(h,\psi_2)$;
   \item $g(h,\Next \psi)= \Next^{n+1} g(h,\psi)$;
    \item $g(h,\PNext \psi)= \Next^{-n-1} g(h,\psi)$;
   \item  $g(h,\psi_1\,\Until\, \psi_2)= (\pTag[x_h] \rightarrow g(h,\psi_1))\,\Until\, (g(h,\psi_2)\wedge \pTag[x_h])$;
    \item  $g(h,\psi_1\,\PUntil\, \psi_2)= (\pTag[x_h] \rightarrow g(h,\psi_1))\,\PUntil\, (g(h,\psi_2)\wedge \pTag[x_h])$;
     \item  $g(h,\exists p_k.\psi)= \exists^{G}\,x_k. \,  \Bigl(g(k,\psi)  \wedge \PEventually ((\neg\PNext\top)\wedge \Always \displaystyle{\bigwedge_{j\in [1,n]\setminus \{k\}}}(p_j[x_h] \leftrightarrow p_j[x_k])\Bigr)$.
\end{compactitem}

By construction, $g(h,\psi)$ has size linear in $\psi$ and has the same strong alternation depth as $\psi$.
Moreover, $g(h,\psi)$ is a $\MHCTLStar$ sentence if $\psi$ is a $\QPTL$ sentence, $g(h,\psi)$ is  existential if  $\psi$ is existential, and
$g(h,\psi)$ uses only temporal modalities in $\{\Next,\PNext,\Eventually,\PEventually,\Always,\PAlways\}$ if the same holds for $\psi$. Hence, by
Theorem~\ref{theorem:QPTLsatisfiability}, the lower bounds of Theorem~\ref{theorem:MHCTLModelChecking} directly follow from the following claim,  where
for each $i\geq 0$,   $s(i):= i\cdot(n+1)$. Intuitively, $s(i)$ is the $\pTag$-position associated with the $2^{\AP'}$-encoding of the position $i$ of an infinite word over $2^{\AP}$.\vspace{0.2cm}

\noindent \emph{Claim 3:} Let $\psi$ be a well-named $\QPTL$ formula $\psi$ over $\AP$ and  $h\in [1,n]$
such that there is no quantifier in $\psi$ binding proposition $p_h$. Then, for all pointed words $(w,i)$ over $2^{\AP}$ and assignment maps
 $\Pi$ of $K_\AP$ such that $V(\Pi(x_h)) = en(w)$,
\[
(w,i) \models \psi \Leftrightarrow \Pi,x_h,s(i) \models_{K_\AP} g(h,\psi)
\]
\noindent \emph{Proof of Claim~3:}
 let $\psi$, $h$, $(w,i)$, and $\Pi$ as in the statement of the claim. The proof is by induction on $|\psi|$. The cases for the boolean connectives  easily follow from the induction hypothesis. For the other cases, we proceed as follows:
\begin{itemize}
  \item $\psi = p_j$ for some $p_j\in \AP$: we have that
  $(w,i) \models p_j$  $\Leftrightarrow$ $p_j\in w(i)$ $\Leftrightarrow$  $p_j\in en(w)(s(i)+j)$ $\Leftrightarrow$
  $p_j\in V(\Pi(x_h))(s(i)+j)$ $\Leftrightarrow$ $\Pi,x_h,s(i) \models_{K_\AP} \Next^{j}\, p_j[x_h] $
  $\Leftrightarrow$ $\Pi,x_h,s(i) \models_{K_\AP} g(h,p_j)$.
 Hence, the result follows.
 \item $\psi =\Next \psi'$: we have that
  $(w,i) \models \Next \psi'$  $\Leftrightarrow$ $(w,i+1)\models \psi'$ $\Leftrightarrow$ (by the induction hypothesis)
   $\Pi,x_h,s(i+1) \models_{K_\AP} g(h,\psi')$ $\Leftrightarrow$ (since $s(i+1)= s(i)+n+1$)
   $\Pi,x_h,s(i) \models_{K_\AP} \Next^{n+1} g(h,\psi')$ $\Leftrightarrow$ $\Pi,x_h,s(i) \models_{K_\AP} g(h,\Next\psi')$.
 Hence, the result follows.
 \item $\psi = \PNext \psi'$: similar to the previous case.
 \item $\psi =\psi_1\Until \psi_2$: we have that
  $(w,i) \models \psi_1 \Until \psi_2$  $\Leftrightarrow$
  there is $t\geq i$ such that
 $(w,t)\models \psi_2$ and $(w,r) \models\psi_1$ for all $i\leq r<t$ $\Leftrightarrow$
 (by the induction hypothesis)
   there is $t\geq i$ such that
 $\Pi,x_h,s(t) \models_{K_\AP} g(h,\psi_2)$ and $\Pi,x_h,s(r) \models_{K_\AP} g(h,\psi_1)$ for all $i\leq r<t$ $\Leftrightarrow$
 there is $t'\geq s(i)$ such that $\Pi,x_h,t' \models_{K_\AP} g(h,\psi_2)$ and $\pTag\in V(\Pi(x_h))(t')$, and for all
 $s(i)\leq r'<t'$ such that $\pTag\in V(\Pi(x_h))(r')$, $\Pi,x_h,r' \models_{K_\AP} g(h,\psi_1)$
 $\Leftrightarrow$ $\Pi,x_h,s(i) \models_{K_\AP} g(h,\psi_1\Until\psi_2)$.
  Hence, the result follows.
 \item $\psi =\psi_1\PUntil \psi_2$: similar to the previous case.
  \item $\psi= \exists p_k.\, \psi'$: by hypothesis, $k\neq h$.
For the implication, $(w,i) \models \psi \Rightarrow \Pi,x_h,s(i) \models_{K_{\AP}} g(h,\psi)$, assume that
$(w,i) \models \psi$. Hence, there exists a pointed  word $(w',i)$ such that $w'=_{\AP\setminus\{p_k\}}w$
and $(w',i) \models \psi'$. Let $\pi'$ be the initial path of $K_\AP$  such that $V(\pi')=en(w')$.
Since $\psi$ is well-named, there is no quantifier of $\psi'$ binding proposition $p_k$. Thus, by the induction hypothesis,
\[
\Pi[x_k \leftarrow \pi'], x_k, s(i)\models_{K_\AP} g(k,\psi')
\]
Moreover, since $V(\Pi[x_h])=en(w)$, it holds that for all positions $\ell\geq 0$ and propositions $p_j\in \AP\setminus \{p_k\}$,
$p_j\in V(\Pi(x_h)(\ell))$ iff $p_j\in V(\Pi[x_k \leftarrow \pi'](x_k)(\ell))$. Thus, since $h\neq k$, it holds that
\[
\Pi[x_k \leftarrow \pi'], x_k, s(i)\models_{K_\AP}  \PEventually ((\neg\PNext\top)\wedge \Always \displaystyle{\bigwedge_{j\in [1,n]\setminus \{k\}}}(p_j[x_h] \leftrightarrow p_j[x_k])
\]
By construction, it follows that $\Pi, x_h, s(i)\models_{K_\AP} g(h,\exists p_k.\, \psi')$, and the result follows.

The converse implication  $\Pi, x_h, s(i)\models_{K_\AP} g(h,\psi) \Rightarrow  (w,i) \models \psi $ is similar, and we omit the details here. 
%
%
\end{itemize}\vspace{0.2cm}

This concludes the proof of Claim~3.\qed

\putbib
\end{bibunit}

\end{document}